\newcounter{favoriteitem}
\numberwithin{equation}{section}
\newtheorem{thm}{Theorem}[section]
\newtheorem{lem}[thm]{Lemma}
\newtheorem{prop}[thm]{Proposition}
\newtheorem{RHP}[thm]{RH Problem}
\theoremstyle{remark}
\newtheorem{rmk}{Remark}
\newtheorem{req}{Requirement}
\newcommand{\C}{\tilde{C}}
\newcommand{\compC}{\mathbb{C}}
\newcommand{\E}{\mathbb{E}}
\newcommand{\gfn}{\mathbf{g}}
\newcommand{\gfntilde}{\tilde{\mathbf{g}}}
\newcommand{\Iinv}{\mathbf{I}}
\newcommand{\J}[1][c]{\mathbf{J}_{#1}}
\newcommand{\tildeJ}{\tilde{\mathbf{J}}_{c}}
\newcommand{\Jhat}{\hat{J}}
\newcommand{\Mconst}{\mathbb{M}}
\newcommand{\natN}{\mathbb{N}}
\newcommand{\N}{\mathrm{N}}
\renewcommand{\hat}{\widehat}
\renewcommand{\tilde}{\widetilde}
\renewcommand{\i}{\mathit{i}}
\newcommand{\bigO}{\mathcal{O}}
\newcommand{\smallo}{\mathfrak{o}}
\newcommand{\p}{\widetilde{p}^{(n)}}
\newcommand{\phat}{\hat{p}^{(n)}}
\renewcommand{\P}{\mathcal{P}}
\newcommand{\Pmodel}{\mathsf{P}}
\newcommand{\paraP}{\mathbb{P}}
\newcommand{\q}{\widetilde{q}^{(n)}}
\newcommand{\qhat}{\hat{q}^{(n)}}
\newcommand{\R}{\mathcal{R}}
\newcommand{\realR}{\mathbb{R}}
\newcommand{\SigmaR}{\pmb{\mathit{\Sigma}}}
\newcommand{\intZ}{\mathbb{Z}}
\DeclareMathOperator{\Ai}{Ai}
\DeclareMathOperator{\Be}{Be}
\DeclareMathOperator{\arcosh}{arcosh}
\DeclareMathOperator{\interior}{int}
\DeclareMathOperator{\sol}{sol}
\DeclareMathOperator{\supp}{supp}
\DeclareMathOperator{\Var}{Var}
\DeclareMathOperator{\dist}{dist}
\DeclareMathOperator{\Res}{Res}
\title{Biorthogonal polynomials related to quantum transport theory of disordered wires}
\author{Dong Wang\thanks{School of Mathematical Sciences, University of Chinese Academy of Sciences, Beijing, P.~R.~China 100049 \newline
		email: \href{mailto:wangdong@wangd-math.xyz}{\protect\nolinkurl{wangdong@wangd-math.xyz}}}
\and
	Dong Yao\thanks{School of Mathematics and Statistics/RIMS, Jiangsu Normal University, Xuzhou, P.~R.~China 221116 \newline
		email: \href{dongyao@jsnu.edu.cn }{\protect\nolinkurl{dongyao@jsnu.edu.cn}} } 
}
\begin{document}
	\maketitle
	
	\begin{abstract}
		We consider the Plancherel-Rotach type asymptotics of the biorthogonal polynomials associated with the biorthogonal ensemble having the joint probability density function
		\begin{equation*}
			\frac{1}{C} \prod_{1 \leq i < j \leq n} (\lambda_j -\lambda_i)(f(\lambda_j) - f(\lambda_i)) \prod^n_{j = 1} W^{(n)}_{\alpha}(\lambda_j) d\lambda_j, 
		\end{equation*}
		where
		\begin{align*}
			f(x) = {}& \sinh^2(\sqrt{x}), & W^{(n)}_{\alpha}(x) = {}& x^{\alpha} h(x) e^{-nV(x)}.
		\end{align*}
		In the special case where the potential function $V$ is linear, this biorthogonal ensemble arises in the quantum transport theory of disordered wires. We analyze the asymptotic problem via $2$-component vector-valued Riemann-Hilbert problems and solve it under the one-cut regular with a hard edge condition. We use the asymptotics of the biorthogonal polynomials to establish sine universality for the correlation kernel in the bulk and provide a central limit theorem with a specific variance for holomorphic linear statistics.

		As an application of our theory, we establish Ohm's law \eqref{eq:Ohm_law} and universal conductance fluctuation \eqref{eq:UCF} for the disordered wire model, thereby rigorously confirming predictions from experimental physics \cite{Washburn-Webb86}.
	\end{abstract}
	
	\section{Introduction} \label{sec:Intro}
	
	\subsection{Setup of the model}
 Let $V$ be a real analytic function on $[0, \infty)$, and let $h$ be a positive-valued real analytic function on $[0, \infty)$ satisfying the limit condition
	\begin{equation} \label{eq:growth_at_infty}
		\lim_{x \to +\infty} \frac{V(x)}{\max(\log h(x), \sqrt{x})} = +\infty.
	\end{equation}
	We then denote the weight function $W^{(n)}_{\alpha}(x)$, which depends on the parameter $\alpha > -1$, as
	\begin{equation} \label{eq:general_weight}
		W^{(n)}_{\alpha}(x) = x^{\alpha} h(x) e^{-nV(x)}.
	\end{equation}
	
	Let
	\begin{equation} \label{eq:defn_f}
		f(x) = \frac{1}{4} \left( e^{2\sqrt{x}} - 2 + e^{-2\sqrt{x}} \right) = \sinh^2(\sqrt{x}), \quad x\in [0, \infty),
	\end{equation}
which will be fixed throughout the paper.
	We consider the monic polynomials $p^{(n)}_j(x)$ and $q^{(n)}_j(x)$, each of degree $j \geq 0$, determined by the orthogonality conditions
	\begin{equation} \label{eq:biorthogonality}
		\int_{\realR_+} p^{(n)}_j(x) f(x)^k W^{(n)}_{\alpha}(x) dx = 0, \quad \int_{\realR_+} x^k q^{(n)}_j(f(x)) W^{(n)}_{\alpha}(x) dx = 0, \quad k = 0, 1, \dotsc, j - 1,
	\end{equation}
	and define
	\begin{equation} \label{eq:defn_h}
		h^{(n)}_j = \int_{\realR_+} p^{(n)}_j(x) q^{(n)}_j(f(x)) W^{(n)}_{\alpha}(x) dx.
	\end{equation}
	Since $p^{(n)}_j$ and $q^{(n)}_j$ satisfy the biorthogonality condition \eqref{eq:biorthogonality}, they are called the biorthogonal polynomials with respect to $W^{(n)}_{\alpha}$ (the dependence on $f$ is suppressed because $f$ is fixed by \eqref{eq:defn_f}). 
	
	These biorthogonal polynomials are related to the point process consisting of $n$ particles at $\lambda_1, \dotsc, \lambda_n \in [0, \infty)$, with the joint probability density function (pdf) proportional to
	\begin{equation} \label{eq:jpdf}
		 \prod_{1 \leq i < j \leq n} (\lambda_j -\lambda_i)(f(\lambda_j) - f(\lambda_i)) \prod^n_{j = 1} W^{(n)}_{\alpha}(\lambda_j) d\lambda_j.
	\end{equation}
	Below, we call the point process defined by \eqref{eq:jpdf} the biorthogonal ensemble.

		In this paper, we are concerned with the Plancherel-Rotach type asymptotics of $p^{(n)}_{n + k}(x)$ and $q^{(n)}_{n + k}(f(x))$ as $n \to \infty$, where $k$ is any fixed integer. We will also use such asymptotics to study the limiting distribution of the linear statistics $\sum^n_{j = 1} \phi(\lambda_j)$ for analytic test functions $\phi$, as well as the limiting local distribution of the $\lambda_j$'s, as $n \to \infty$.
	
	The first relation between $p^{(n)}_j, q^{(n)}_j$ and the biorthogonal ensemble \eqref{eq:jpdf} is:
	\begin{prop} \label{prop:char}
		The polynomials $p_n^{(n)}(x)$ and $q_n^{(n)}(f(x))$ have the following representations:
		\begin{align}
			p^{(n)}_n(z) = {}& \E \left[ \prod^n_{j = 1} (z - \lambda_j) \right], & q^{(n)}_n(f(z)) = {}& \E \left[ \prod^n_{j = 1} (f(z) - f(\lambda_j)) \right],
		\end{align}
		where $\E[ \cdot ]$ is taken with respect to the joint pdf \eqref{eq:jpdf}.
	\end{prop}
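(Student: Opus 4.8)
The plan is to show that $P(z):=\E\big[\prod_{j=1}^n(z-\lambda_j)\big]$ is a monic polynomial of degree $n$ in $z$ satisfying the same orthogonality relations as $p^{(n)}_n$ in \eqref{eq:biorthogonality}, and that $Q(z):=\E\big[\prod_{j=1}^n(f(z)-f(\lambda_j))\big]$ equals $\mathfrak q(f(z))$ for a monic polynomial $\mathfrak q$ of degree $n$ satisfying the relations defining $q^{(n)}_n$; uniqueness of the biorthogonal polynomials then finishes the proof. Uniqueness is itself a short observation: if two monic degree-$n$ polynomials are orthogonal to $1,f,\dots,f^{n-1}$ against $W^{(n)}_{\alpha}$, their difference $\sum_{i=0}^{n-1}c_i x^i$ yields the homogeneous linear system $\sum_i c_i\int_{\realR_+}x^i f(x)^k W^{(n)}_{\alpha}(x)\,dx=0$, $k=0,\dots,n-1$, whose matrix has determinant $C/n!>0$ by the Andr\'eief (Heine) identity together with the strict positivity of the integrand of $C$ on the ordered chamber (here $f$ strictly increasing is used); hence $c\equiv 0$. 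The same remark, with the roles of the monomials and the $f$-powers exchanged, handles $q^{(n)}_n$.

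Expanding $\prod_{j=1}^n(z-\lambda_j)=\sum_{m=0}^n(-1)^{n-m}e_{n-m}(\lambda_1,\dots,\lambda_n)\,z^m$ into elementary symmetric polynomials and taking expectations shows $P(z)$ is monic of degree $n$ in $z$; likewise $\prod_{j=1}^n(f(z)-f(\lambda_j))=\sum_{m=0}^n(-1)^{n-m}e_{n-m}(f(\lambda_1),\dots,f(\lambda_n))\,f(z)^m$ exhibits $Q(z)$ as $\mathfrak q(f(z))$ with $\mathfrak q$ monic of degree $n$. It then remains to check $\int_{\realR_+}P(x)f(x)^k W^{(n)}_{\alpha}(x)\,dx=0$ and $\int_{\realR_+}x^k\,\mathfrak q(f(x))W^{(n)}_{\alpha}(x)\,dx=0$ for $k=0,1,\dots,n-1$.

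For the first, I would write out the expectation against \eqref{eq:jpdf}, relabel $x=:\lambda_{n+1}$, and merge Vandermonde factors via $\prod_{1\le i<j\le n}(\lambda_j-\lambda_i)\cdot\prod_{i=1}^n(\lambda_{n+1}-\lambda_i)=\prod_{1\le i<j\le n+1}(\lambda_j-\lambda_i)=:\Delta(\lambda)$, turning the integral into $\frac{1}{C}\int_{\realR_+^{n+1}}\Delta(\lambda)\,\Delta\big(f(\lambda_1),\dots,f(\lambda_n)\big)\,f(\lambda_{n+1})^k\prod_{j=1}^{n+1}W^{(n)}_{\alpha}(\lambda_j)\,d\lambda$. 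Since $\Delta(\lambda)\prod_j W^{(n)}_{\alpha}(\lambda_j)$ is antisymmetric under $S_{n+1}$, one may replace the factor $\Delta(f(\lambda_1),\dots,f(\lambda_n))\,f(\lambda_{n+1})^k$ by $\tfrac{1}{(n+1)!}$ times its sign-weighted symmetrization over $S_{n+1}$; a short permutation count (equivalently, cofactor expansion of $\Delta(\lambda)$ along its last row followed by the Andr\'eief identity) identifies this symmetrized factor with a nonzero multiple of $\det\big(\phi_\ell(\lambda_i)\big)_{i,\ell=1}^{n+1}$, where $(\phi_1,\dots,\phi_{n+1})=(1,f,f^2,\dots,f^{n-1},f^k)$. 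Because $0\le k\le n-1$, two columns of this matrix coincide, so the determinant — hence the integral — is $0$. The argument for $\mathfrak q$ is the mirror image: merge $\prod_{1\le i<j\le n}(f(\lambda_j)-f(\lambda_i))\cdot\prod_{i=1}^n(f(\lambda_{n+1})-f(\lambda_i))=\Delta(f(\lambda_1),\dots,f(\lambda_{n+1}))$, use that this factor times $\prod_j W^{(n)}_{\alpha}(\lambda_j)$ is $S_{n+1}$-antisymmetric (legitimate since $f$ is injective on $[0,\infty)$), symmetrize the leftover $\Delta(\lambda_1,\dots,\lambda_n)\,\lambda_{n+1}^k$, and again land on a determinant with two equal columns. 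Combined with uniqueness, this yields $P=p^{(n)}_n$ and $\mathfrak q=q^{(n)}_n$, which is the claimed representation.

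I expect the computation to be entirely formal; the only place any real care is needed is the permutation/sign bookkeeping in the symmetrization step, and the only structural feature of $f$ genuinely used is that it is strictly increasing, hence injective, on $[0,\infty)$ — this is what makes $C>0$ and what lets $\Delta(f(\lambda_1),\dots,f(\lambda_{n+1}))$ be treated as an $S_{n+1}$-antisymmetric function of $(\lambda_1,\dots,\lambda_{n+1})$.
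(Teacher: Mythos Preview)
Your proposal is correct and follows the standard Heine-type argument that the paper has in mind: the paper does not give its own proof but simply refers to \cite[Proposition~2.1]{Bleher-Kuijlaars04a} and \cite[Proposition~1]{Claeys-Wang11}, whose proofs proceed exactly as you outline (merge the extra linear factor into the enlarged Vandermonde, antisymmetrize, and recognize a determinant with a repeated column). Your explicit handling of uniqueness via the Andr\'eief identity and the strict monotonicity of $f$ is likewise the intended mechanism, and your remark that injectivity of $f$ on $[0,\infty)$ is the only structural property used is accurate.
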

	
	The proof of this proposition is analogous to the proofs of \cite[Proposition 2.1]{Bleher-Kuijlaars04a} and \cite[Proposition 1]{Claeys-Wang11}, so we omit it. The most important consequence of this proposition in our paper is the existence and uniqueness of $p^{(n)}_n, q^{(n)}_n$, as well as $p^{(n)}_j, q^{(n)}_j$ for general $j$. 
	More precisely, by replacing $V$ with $\hat{V}(x)=nV(x)/j$, we obtain the relation
	\begin{equation}\label{eq:wnj}
			W_{\alpha}^{(n)}(x)= x^{\alpha}h(x)e^{-nV(x)}=
		x^{\alpha}h(x)e^{-j\hat{V}(x)}:=\hat{W}^{(j)}_{\alpha}(x).
	\end{equation}
	Applying Proposition \ref{prop:char} to the weight $\hat{W}^{(j)}_{\alpha}$, we obtain the existence and uniqueness of $\hat{p}^{(j)}_j$, which is exactly equal to $p^{(n)}_j$ by \eqref{eq:wnj}.
	
	The next and more important relation between $p^{(n)}_j, q^{(n)}_j$ and the biorthogonal ensemble \eqref{eq:jpdf} is:
	\begin{prop} \label{prop:DPP}
		The biorthogonal ensemble with the joint pdf \eqref{eq:jpdf} is a determinantal point process, and the following $K_n(x, y)$ is its correlation kernel:
		\begin{equation} \label{eq:corr_kernel}
			K_n(x, y) = \sqrt{W^{(n)}_{\alpha}(x) W^{(n)}_{\alpha}(y)} \sum^{n - 1}_{j = 0} \frac{1}{h^{(n)}_j} p^{(n)}_j(x) q^{(n)}_j(f(y)),\quad x,y\geq 0.
		\end{equation}
	\end{prop}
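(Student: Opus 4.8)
The plan is to follow the standard route for biorthogonal ensembles: rewrite the joint pdf \eqref{eq:jpdf} as a product of two $n \times n$ determinants built from the biorthogonal polynomials, and then invoke the classical criterion under which such a structure produces a determinantal point process. First I would use the Vandermonde identities $\prod_{1 \le i < j \le n}(\lambda_j - \lambda_i) = \det[\lambda_i^{j-1}]_{i,j=1}^n$ and $\prod_{1 \le i < j \le n}(f(\lambda_j) - f(\lambda_i)) = \det[f(\lambda_i)^{j-1}]_{i,j=1}^n$. Since $p^{(n)}_j$ is monic of degree $j$, column operations (which leave a determinant unchanged) turn $[\lambda_i^{j-1}]$ into $[p^{(n)}_{j-1}(\lambda_i)]$, and likewise $[f(\lambda_i)^{j-1}]$ into $[q^{(n)}_{j-1}(f(\lambda_i))]$. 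Absorbing one factor of $\sqrt{W^{(n)}_{\alpha}(\lambda_i)}$ into each determinant, \eqref{eq:jpdf} becomes $\frac{1}{C} \det[\phi_{j-1}(\lambda_i)]_{i,j=1}^n \det[\psi_{j-1}(\lambda_i)]_{i,j=1}^n$ with $\phi_j(x) = p^{(n)}_j(x)\sqrt{W^{(n)}_{\alpha}(x)}$ and $\psi_j(x) = q^{(n)}_j(f(x))\sqrt{W^{(n)}_{\alpha}(x)}$. This rearrangement is legitimate because $f$ is strictly increasing on $[0,\infty)$, so the two Vandermonde factors have the same sign and \eqref{eq:jpdf} is a genuine density, and because the polynomials $p^{(n)}_j, q^{(n)}_j$ of every degree $0 \le j \le n-1$ exist with $h^{(n)}_j \neq 0$, as recorded after Proposition \ref{prop:char}.

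Next I would compute the pairing
\[
  \int_{\realR_+} \phi_i(x) \psi_j(x)\, dx = \int_{\realR_+} p^{(n)}_i(x) q^{(n)}_j(f(x)) W^{(n)}_{\alpha}(x)\, dx = h^{(n)}_i \delta_{ij},
\]
which is exactly the biorthogonality \eqref{eq:biorthogonality} together with \eqref{eq:defn_h}. With this in hand the result is the standard biorthogonal-ensemble statement: by the Andr\'eief (Heine--Gram) identity, $C = \int_{\realR_+^n} \det[\phi_{j-1}(\lambda_i)]\det[\psi_{j-1}(\lambda_i)] \prod_i d\lambda_i = n!\, \det\!\big[\int \phi_{i-1}\psi_{j-1}\big]_{i,j=1}^n = n!\prod_{j=0}^{n-1} h^{(n)}_j$, and the $m$-point correlation functions are obtained by integrating out $n-m$ of the variables. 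Each such integration collapses a rank-one block of the Gram-type matrix via $\int \phi_i \psi_j = h^{(n)}_i \delta_{ij}$; the Eynard--Mehta lemma then yields the determinantal form with kernel $\sum_{j=0}^{n-1} (h^{(n)}_j)^{-1} \phi_j(x)\psi_j(y)$. Substituting $\phi_j, \psi_j$ gives precisely \eqref{eq:corr_kernel}. In the write-up I would either cite this general lemma or include the short self-contained induction on $m$.

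The main obstacle is not any single computation but making the setup airtight: the column reductions and the kernel \eqref{eq:corr_kernel} only make sense once one knows that $p^{(n)}_j$ and $q^{(n)}_j$ exist as monic polynomials of exact degree $j$ and that $h^{(n)}_j \neq 0$ for all $0 \le j \le n-1$. This is exactly the existence/uniqueness point flagged after Proposition \ref{prop:char} (obtained from the expectation representation of Proposition \ref{prop:char} applied to the weight $W^{(n)}_{\alpha}$ after a scaling of the variable), and once it is secured the remainder is routine application of the determinantal-process machinery for biorthogonal families.
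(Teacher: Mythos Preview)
Your proposal is correct and follows exactly the standard route the paper has in mind: the paper omits the proof entirely, merely citing the general determinantal-process framework of Tracy--Widom and the biorthogonal kernel discussion in Borodin, which is precisely the Vandermonde/column-reduction/Andr\'eief/Eynard--Mehta argument you outline. Your care about existence of the $p^{(n)}_j, q^{(n)}_j$ and nonvanishing of $h^{(n)}_j$ is well placed and matches the remark the paper makes after Proposition~\ref{prop:char}.
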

	
	The proof of Proposition \ref{prop:DPP} is based on the general theory of determinantal point processes. See \cite{Tracy-Widom98} for the framework and \cite[Section 2]{Borodin99} for a short discussion on the kernel formula of a similar model. We omit the details of the proof.
	
	 Proposition \ref{prop:DPP} sets the stage for computing the local limiting density of particles (Theorem \ref{thm:sinekernel} below) through the Plancherel-Rotach type asymptotics for $p^{(n)}_j$ and $q^{(n)}_j$ (Theorem \ref{thm:main}). See Section \ref{sec:bulk universality} for details. 
	
	\subsection{Motivation} \label{subsec:Motivation}
	
	Our study of the biorthogonal ensemble defined by the general weight function \eqref{eq:general_weight} is inspired by the concrete case where the weight function is specialized as
	\begin{align} \label{eq:special_W}
		V(x) = {}& \frac{x}{\Mconst}, && \alpha =  0, && \text{and} && h(x) = f'(x)^{\frac{1}{2}} = \left( \frac{\sinh(\sqrt{x}) \cosh(\sqrt{x})}{\sqrt{x}} \right)^{\frac{1}{2}}, \quad  x\geq 0,
	\end{align}
where $\Mconst$ is a positive real number.
	This model is proposed in the study of the quantum transport theory of a disordered wire. See \cite[Formula (19)]{Beenakker-Rejaei93} and \cite[Section III A, Formula (3.4)]{Beenakker-Rejaei94} for the physical derivation of the biorthogonal ensemble. (In \cite{Beenakker-Rejaei94}, $\Mconst$ is denoted as $s$ and $n$ is denoted as $N$.) We also refer interested readers to the review article \cite{Beenakker97} for the physical theory that relates the biorthogonal ensemble with the specialization \eqref{eq:special_W}. 
	
	For the purpose of our paper, it suffices to note that the biorthogonal ensemble with the specialization \eqref{eq:special_W} approximates the distribution of particles (representing transmission eigenvalues) in the Dorokhov-Mello-Pereyra-Kumar (DMPK) equation with the parameter $\beta = 2$ and the ballistic initial condition. The DMPK equation is the evolution equation for the density function $P(\tilde{\lambda}_1, \dotsc, \tilde{\lambda}_n; \Mconst)$, where $\Mconst$ is the time parameter:
	\begin{equation} \label{eq:DMPK}
		\frac{\partial P}{\partial \Mconst} = \frac{1}{n} \sum^n_{j = 1} \frac{\partial}{\partial \tilde{\lambda}_j} \left( \tilde{\lambda}_j (1 + \tilde{\lambda}_j) J \frac{\partial}{\partial \tilde{\lambda}_j} \frac{P}{J} \right), \quad J = \prod^n_{i = 1} \prod^n_{j = i + 1} \lvert \tilde{\lambda}_i - \tilde{\lambda}_j \rvert^2.
	\end{equation}
	(See \cite[Equation (145)]{Beenakker97}. Here, we take $\beta = 2$ in that formula and use $\Mconst$ to denote $L/l$ there. In many parts of \cite{Beenakker97}, $L/l$ is denoted as $s$.) The $\tilde{\lambda}_j$ in \eqref{eq:DMPK} corresponds to $\sinh^2(\sqrt{\lambda_j})$ in \eqref{eq:jpdf}. We remind readers that the claim that the joint probability density function of $\{ \sinh^2(\sqrt{\lambda_j}) \}$ approximates $P(\tilde{\lambda}_1, \dotsc, \tilde{\lambda}_n; \Mconst)$ has been justified only in the regime where $\Mconst$ and $n$ are large and $1 \ll \Mconst \ll n$.
	
	The distribution of particles $\tilde{\lambda}_1, \dotsc, \tilde{\lambda}_n$ in \eqref{eq:DMPK} (resp.\@ $\lambda_1, \dotsc, \lambda_n$ in \eqref{eq:jpdf}) represents (resp.\@ approximately represents) the distribution of the transmission eigenvalues of the disordered wire, and the sum
	\begin{equation}\label{eq:conduct}
		C_n(\Mconst) := \sum^n_{j = 1} \frac{1}{\cosh^2(\sqrt{\lambda_j})}
	\end{equation}
	yields the conductance of the disordered wire, at least in the regime $1 \ll \Mconst \ll n$. Results from the experimental physics literature \cite{Washburn-Webb86} (see also \cite{Beenakker97}) indicate the following mathematical results:
	\begin{itemize}
		\item Ohm's law
		\begin{equation} \label{eq:Ohm_law}
			\lim_{\Mconst \to \infty} \Mconst \lim_{n \to \infty} \frac{1}{n} \E[C_n(\Mconst)] = 1.
		\end{equation}
		\item Universal conductance fluctuation
		\begin{equation} \label{eq:UCF}
			\lim_{\Mconst \to \infty} \lim_{n \to \infty} \Var[C_n(\Mconst)] = \frac{1}{15}.
		\end{equation}
	\end{itemize}
We remark that the universal conductance fluctuation \eqref{eq:UCF} has previously only been justified by physical arguments \cite{Mello88}, \cite{Lee-Stone85}, \cite{Mello-Stone91}, while the counterpart of \eqref{eq:UCF} for a simpler model (the quantum dot) has been rigorously proved in \cite{Baranger-Mello94}, \cite{Jalabert-Pichard-Beenakker94}, \cite{Iida-Weidenmuller-Zuk90}, \cite{Savin-Sommers06}; see also \cite{Jiang09a}.  
In this paper, we develop a framework for analyzing biorthogonal polynomials and derive a Plancherel-Rotach type asymptotic result (Theorem \ref{thm:main} below). This enables us to rigorously analyze the limiting distribution of $\lambda_1, \dotsc, \lambda_n$ and prove \eqref{eq:Ohm_law} and \eqref{eq:UCF}. See Theorems \ref{thm:eq_measure_linear} and \ref{thm:UCF1} below.  
	
	\subsection{Main results}
	In this section, we state the main contributions of the paper. To help readers grasp the main results without getting lost in technical details, we have deferred some technical definitions and detailed statements to later sections.
	\subsubsection{Global results: Properties of the equilibrium measure}
	
	Analogously to determinantal point processes associated with orthogonal polynomials (i.e., replacing $f(\lambda_j) - f(\lambda_i)$ with $\lambda_j - \lambda_i$ in the joint pdf \eqref{eq:jpdf}), we define the equilibrium measure supported on $[0, \infty)$ as the minimizer of the functional 
	\begin{equation} \label{eq:energy_functional}
		I_V(\mu) := \frac{1}{2} \iint \log \lvert t - s \rvert^{-1} d\mu(t)d\mu(s) + \frac{1}{2} \iint \log \lvert f(t) - f(s) \rvert^{-1} d\mu(t)d\mu(s) + \int V(s) d\mu(s).
	\end{equation}
	\begin{prop} \label{thm:potential_theory}
		Let $V$ be a continuous function on $[0, \infty)$ satisfying
		$$
		\lim_{x\to +\infty} \frac{V(x)}{\sqrt{x}}=\infty.
		$$
		 Then there exists a unique measure $\mu = \mu_V$ on $\realR_+$ with compact support that minimizes the functional \eqref{eq:energy_functional} among all Borel probability measures on $\realR_+$.
	\end{prop}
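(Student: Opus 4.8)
The plan is to adapt the classical existence–uniqueness argument for weighted equilibrium measures (as in Saff–Totik, or the Deift-school treatment for orthogonal polynomials) to the two-kernel functional $I_V$. The key structural observation is that the interaction part of $I_V(\mu)$ can be written as a single logarithmic energy against the kernel
\[
  \Sigma(t,s) := \tfrac{1}{2}\log\lvert t-s\rvert^{-1} + \tfrac{1}{2}\log\lvert f(t)-f(s)\rvert^{-1},
\]
and since $f$ is continuous and strictly increasing on $[0,\infty)$ (indeed $f(x)=\sinh^2(\sqrt{x})$ is a bijection of $[0,\infty)$ onto itself), the map $x\mapsto(x,f(x))$ is a homeomorphism onto its graph, so $\Sigma$ inherits the good properties of the logarithmic kernel: it is lower semicontinuous on $\realR_+\times\realR_+$ and bounded below on compact sets by a constant (the two logarithmic singularities are on the diagonal only). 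This is what makes the standard machinery go through.

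First I would show the minimization can be restricted to measures supported in a fixed compact set. Using the growth hypothesis \eqref{eq:growth_at_infty} with $h\equiv 1$, namely $V(x)/(\sqrt{x}+1)\to+\infty$, together with the elementary bound $\log\lvert f(t)-f(s)\rvert \le C + \sqrt{t}+\sqrt{s}$ coming from $f(x)\le \tfrac14 e^{2\sqrt{x}}$, one checks that for $\mu$ with mass escaping to infinity the term $\int V\,d\mu$ dominates the (at most linear in $\sqrt{\cdot}$) growth of the interaction energy, so $I_V(\mu)\to+\infty$; hence any minimizing sequence has uniformly tight mass and, after extracting a weakly convergent subsequence, a limit supported on some $[0,R]$. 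Restricting attention to $\mathcal{M}([0,R])$, which is weak-$*$ compact, and using lower semicontinuity of $\mu\mapsto I_V(\mu)$ (the interaction term is l.s.c.\ by Fatou applied to the l.s.c.\ truncations of $\Sigma$, and $\int V\,d\mu$ is continuous since $V$ is continuous on the compact set), a minimizer $\mu_V$ exists. One also notes $I_V$ is not identically $+\infty$ (e.g.\ normalized Lebesgue measure on $[0,R]$ has finite energy), so the minimum is finite.

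For uniqueness I would use strict convexity of the energy functional. The classical fact is that $\nu\mapsto \iint \log\lvert t-s\rvert^{-1}\,d\nu(t)\,d\nu(s)$ is strictly positive on signed measures of total mass zero with finite energy (positivity of the logarithmic kernel); the same holds verbatim for $\iint \log\lvert f(t)-f(s)\rvert^{-1}\,d\nu(t)\,d\nu(s)$ by pushing forward under the homeomorphism $f$, i.e.\ this equals the logarithmic energy of $f_*\nu$, and $f_*\nu$ again has total mass zero. Therefore $I_V$ is strictly convex along the segment between two probability measures of finite energy: if $\mu_0,\mu_1$ are both minimizers then $\nu=\mu_1-\mu_0$ has mass zero and $I_V(\tfrac12(\mu_0+\mu_1)) = \tfrac12 I_V(\mu_0)+\tfrac12 I_V(\mu_1) - \tfrac14\big[\mathcal{E}(\nu)+\mathcal{E}(f_*\nu)\big]$, where $\mathcal{E}$ denotes logarithmic energy; both bracketed terms are $\ge 0$ and vanish only if $\nu=0$, forcing $\mu_0=\mu_1$. (A minor point: one must first argue a minimizer has finite energy, which is immediate since $I_V(\mu_V)<\infty$ and $\int V\,d\mu_V$ is bounded.)

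The main obstacle I anticipate is the compactness-of-support step — specifically, getting a clean two-sided control of the interaction kernel $\Sigma(t,s)$ that is uniform enough to beat $V$. The upper bound on $\log\lvert f(t)-f(s)\rvert$ in terms of $\sqrt{t}+\sqrt{s}$ needs the correct constants (and care near $t=s$, where it is $-\infty$ and hence harmless for an upper bound), and one must also ensure the $-\tfrac12\log\lvert t-s\rvert$ and $-\tfrac12\log\lvert f(t)-f(s)\rvert$ singularities do not conspire to make the energy $-\infty$; this is handled by noting $\lvert f(t)-f(s)\rvert \ge c\,\lvert t-s\rvert$ locally (since $f'>0$ on $(0,\infty)$, with the only degeneration at $0$ where $f'$ blows up, which only helps), so $\Sigma(t,s)\le C - \log\lvert t-s\rvert$ and standard logarithmic potential estimates apply. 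Everything else is a routine transcription of the Saff–Totik/Deift argument.
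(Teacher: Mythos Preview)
Your proposal is correct and follows precisely the standard Saff--Totik/Deift argument that the paper itself defers to (the paper omits the proof entirely, citing \cite{Deift99} and \cite{Claeys-Wang11} for the analogous cases $f(x)=x$ and $f(x)=e^x$). One small slip: $f'(0)=1$, not $+\infty$---in fact $f'(x)=\sinh(2\sqrt{x})/(2\sqrt{x})\ge 1$ for all $x\ge 0$---but this only strengthens your lower bound $\lvert f(t)-f(s)\rvert\ge c\lvert t-s\rvert$ and does not affect the argument.
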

	
	The basic idea of the proof is contained in \cite{Deift99}, which provides a detailed proof of the result if $f(x)$ is replaced by $x$ and the domain of integration is replaced by $\realR$. If $f(x)$ is replaced by $e^x$ and the domain of integration is replaced by $\realR$, an explanation is given in \cite[Section 2]{Claeys-Wang11}. Hence, we omit the proof here.
	
	\begin{rmk}
		Proposition \ref{thm:potential_theory} is a potential-theoretic result concerned only with the existence and uniqueness of $\mu = \mu_V$. To explicitly construct $\mu$ (see Theorem \ref{eq:find_b_psi} below), we instead use a Riemann-Hilbert approach.


	\end{rmk}
	
	\begin{rmk}
	Under the condition \eqref{eq:growth_at_infty}, $\mu$ is the limit of the empirical law of the biorthogonal ensemble \eqref{eq:jpdf} (see \cite[Theorem 2.1]{Borot-Guionnet-Kozlowski13}). It is also possible to employ the argument given in \cite[Chapter 6]{Deift99} to establish the equivalence of $\mu$, the density of the ensemble \eqref{eq:jpdf}, and the zeros of $p^{(n)}_n(z), q^{(n)}_n(f(z))$ (as $n\to\infty$), but we do not pursue this approach in the paper. 
	\end{rmk}
	We say that a potential function $V$ satisfying \eqref{eq:growth_at_infty} is ``one-cut regular with a hard edge'' if its equilibrium measure $\mu$ (given by Proposition \ref{thm:potential_theory}) is absolutely continuous with
	\begin{equation} \label{eq:density_formula}
		d\mu(x) = \psi(x)dx \quad \text{on $\realR_+$},
	\end{equation}
where the function $\psi$ satisfies:
	\begin{req} \label{req:one-cut_reg} \hfill
		\begin{enumerate}
			\item \label{enu:req:one-cut_reg:1}
			$\supp \mu = [0, b]$ for some $b > 0$ that depends on $V$, and $\int d\mu(x) = 1$.
			\item \label{enu:req:one-cut_reg:2}
			$\psi(x)$ is continuous on $(0, b)$ and $\psi(x) > 0$ for all $x \in (0, b)$.
			\item \label{enu:req:one-cut_reg:4}
			For all $x \in [0, b]$, there exists a constant $\ell$ depending on $V$ such that
			\begin{equation} \label{eq:defn_ell}
				\int \log \lvert t - x \rvert^{-1} d\mu(t) + \int \log \lvert f(t) - f(x) \rvert^{-1} d\mu(t) + V(x) + \ell = 0.
			\end{equation}
			\item \label{enu:req:one-cut_reg:5}
			For all $x > b$,
			\begin{equation} \label{eq:defn_ell:2}
				\int \log \lvert t - x \rvert^{-1} d\mu(t) + \int \log \lvert f(t) - f(x) \rvert^{-1} d\mu(t) + V(x) + \ell > 0.
			\end{equation}
			\item \label{enu:req:one-cut_reg:3}
			The two limits
			\begin{align} \label{eq:reg:one-cut_reg:3}
				\psi_0 := {}& \lim_{x \to 0_+} x^{\frac{1}{2}}\psi(x) & & \text{and} & \psi_b := {}& \lim_{x \to b_-} (b - x)^{-\frac{1}{2}} \psi(x)
			\end{align}
			exist and are both positive.
		\end{enumerate}
	\end{req}
	
	It is clear that Item \ref{enu:req:one-cut_reg:1} implies that $\mu$ is a probability measure, and Item \ref{enu:req:one-cut_reg:2} further implies that $d\mu(x) = \psi(x) dx$ is a ``one-cut'' probability measure in the sense that its support is a compact interval and its probability density is positive everywhere in the interior of the support. Items \ref{enu:req:one-cut_reg:4} and \ref{enu:req:one-cut_reg:5} are slightly stronger than the Euler-Lagrange equation of the variational problem \eqref{eq:energy_functional}, so if $\psi(x)$ satisfies all of Items \ref{enu:req:one-cut_reg:1}, \ref{enu:req:one-cut_reg:2}, \ref{enu:req:one-cut_reg:4}, and \ref{enu:req:one-cut_reg:5}, then $d\mu(x) = \psi(x) dx$ is the unique equilibrium measure defined by the minimization of $I_V$, as stated in Proposition \ref{thm:potential_theory}. Finally, Item \ref{enu:req:one-cut_reg:3} means that the equilibrium measure is regular at $0$ (the ``hard edge'') and $b$ (the ``soft edge''). (The regularity of the equilibrium measure also includes Item \ref{enu:req:one-cut_reg:2}, which means it is regular in the interior of the support, and Item \ref{enu:req:one-cut_reg:5}, which means it is regular outside the support.) Throughout this paper, we only consider potential functions $V$ that are one-cut regular with a hard edge.

	Given a potential function $V$, it is generally difficult to determine whether $V$ is one-cut regular with a hard edge. In this paper, the following partial result suffices for our purposes.
	\begin{thm} \label{thm:one-cut_rgular_w_hard_edge}
		If the potential function $V$ is real analytic on $[0, \infty)$, satisfies \eqref{eq:growth_at_infty}, and
		\begin{equation} \label{eq:one-cut_regular_cond}
			U'(x) > 0 \quad \text{for all} \quad x \in (0, \infty), \quad \text{where} \quad U(x) = V'(x) \sqrt{x},
		\end{equation}
		then $V$ satisfies the one-cut regular with a hard edge condition stated in Requirement \ref{req:one-cut_reg}.
	\end{thm}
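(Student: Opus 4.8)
The plan is to remove the factor $f(\lambda_j)-f(\lambda_i)$ by the substitution $\xi=\sqrt{x}$ together with a symmetrization across the origin, turning the variational problem into a standard equilibrium problem on $\realR$ whose external field is \emph{convex} precisely when \eqref{eq:one-cut_regular_cond} holds, and then to run the classical convex one-cut analysis. Using $\sinh^2 a-\sinh^2 b=\sinh(a+b)\sinh(a-b)$ and \eqref{eq:defn_f} one has $t-s=(\sqrt t-\sqrt s)(\sqrt t+\sqrt s)$ and $f(t)-f(s)=\sinh(\sqrt t+\sqrt s)\sinh(\sqrt t-\sqrt s)$. Pushing a probability measure $\mu$ on $[0,\infty)$ forward under $x\mapsto\sqrt x$ and symmetrizing to a measure $\tilde\nu$ on $\realR$, a direct computation gives
\[
  I_V(\mu)=\iint\bigl[\log|\xi-\eta|^{-1}+\log|\sinh(\xi-\eta)|^{-1}\bigr]\,d\tilde\nu(\xi)\,d\tilde\nu(\eta)+\int Q(\xi)\,d\tilde\nu(\xi),\qquad Q(\xi):=V(\xi^2),
\]
and $\mu\mapsto\tilde\nu$ is a bijection between probability measures on $[0,\infty)$ and symmetric probability measures on $\realR$. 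By the uniqueness in Proposition~\ref{thm:potential_theory} (transported through this bijection) and the evenness of $Q$ the minimizer $\tilde\nu$ is symmetric and, by symmetry of the functional, satisfies the full Euler--Lagrange conditions, so it suffices to analyze the right-hand side. Under this dictionary $\supp\mu=[0,b]$ corresponds to $\supp\tilde\nu=[-\beta,\beta]$ with $\beta=\sqrt b$; a regular interior point of $\tilde\nu$ at $\xi=0$ (positive density there) pushes back to the hard-edge law $\psi(x)\sim\psi_0 x^{-1/2}$ with $\psi_0>0$ (Item~\ref{enu:req:one-cut_reg:3}); a soft square-root edge of $\tilde\nu$ at $\beta$ pushes back to $\psi(x)\sim\psi_b(b-x)^{1/2}$ at $b$; Items~\ref{enu:req:one-cut_reg:2}, \ref{enu:req:one-cut_reg:4} and \ref{enu:req:one-cut_reg:5} transform into the analyticity and the Euler--Lagrange (in)equalities for $\tilde\nu$. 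Finally, \eqref{eq:growth_at_infty} with $h\equiv1$ becomes $Q(\xi)/|\xi|\to\infty$, the function $Q$ is real analytic on $\realR$, and, since $U(\xi^2)=\tfrac12 Q'(\xi)$ for $\xi>0$, hypothesis \eqref{eq:one-cut_regular_cond} is exactly $Q''>0$ on $\realR\setminus\{0\}$, i.e.\ $Q$ is strictly convex.

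Next I would establish connectedness of the support. The point of the reduction is that the new interaction kernel $\mathcal{K}(\xi,\eta)=-\log|\xi-\eta|-\log|\sinh(\xi-\eta)|$ is, as a function of $\xi-\eta$, convex on $(0,\infty)$ and on $(-\infty,0)$ --- $-\log|\sinh\zeta|$ has second derivative $1/\sinh^2\zeta>0$ --- exactly like $-\log|\zeta|$, and it is moreover linearly confining at infinity. Together with strict convexity of $Q$ this forces $\supp\tilde\nu$ to be a single interval, hence (by symmetry) $[-\beta,\beta]$; one adapts line by line the classical argument that a convex external field yields connected support (if there were a gap, sliding mass across it strictly lowers the energy), the only new ingredient being the extra everywhere-convex term, which can only help.

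On the single interval I would then produce the density in closed form. Differentiating the Euler--Lagrange equality gives the singular integral equation
\[
  \mathrm{P.V.}\!\int_{-\beta}^{\beta}\Bigl[\frac{1}{\xi-\eta}+\coth(\xi-\eta)\Bigr]\rho(\eta)\,d\eta=\tfrac12 Q'(\xi),\qquad \xi\in(-\beta,\beta),
\]
together with $\int_{-\beta}^{\beta}\rho=1$, where $\rho$ is the density of $\tilde\nu$. The resolvent $G(z)=\int_{-\beta}^{\beta}\bigl(\tfrac1{z-\eta}+\coth(z-\eta)\bigr)\rho(\eta)\,d\eta$ is analytic in a neighbourhood of $[-\beta,\beta]$ off the cut (the poles of $\coth$ only create further cuts on $[-\beta,\beta]+i\pi k$, $k\in\intZ\setminus\{0\}$, at distance $\pi$), with boundary values $G_\pm(\xi)=\tfrac12 Q'(\xi)\mp 2\pi i\,\rho(\xi)$; solving this scalar Riemann--Hilbert problem by the usual $\sqrt{z^2-\beta^2}$ Ansatz yields $\rho(\xi)=\tfrac1{2\pi}\sqrt{\beta^2-\xi^2}\,R(\xi)$ with $R$ an explicit analytic function built from $\tfrac12 Q'$ and from the $\coth$-kernel by a contour integral, and $\beta$ fixed by normalization. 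From this closed form one reads off: $\rho$ is real analytic on $(-\beta,\beta)$; $\rho>0$ on $(-\beta,\beta)$ and $\rho$ vanishes exactly like $(\beta\mp\xi)^{1/2}$ at $\pm\beta$, both by strict monotonicity of $\tfrac12 Q'$ (i.e.\ convexity of $Q$) controlling the sign of $R$ and excluding higher-order vanishing; and the inequality \eqref{eq:defn_ell:2} holds outside $[-\beta,\beta]$, from the sign of $G$ on $\realR\setminus[-\beta,\beta]$ together with convexity of the effective potential. Pushing back through $\xi=\sqrt x$ then yields all of Requirement~\ref{req:one-cut_reg}, with $b=\beta^2$, $\psi_0=\rho(0)>0$ and $\psi_b$ a positive multiple of the edge constant of $\rho$ at $\beta$, so $V$ is one-cut regular with a hard edge.

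The hard part will be the second and third steps, i.e.\ genuinely \emph{using} the hypothesis: showing that strict convexity of $Q$ still yields connected support and a strictly positive, square-root-vanishing density once the Cauchy kernel $\tfrac1{\xi-\eta}$ is perturbed by $\coth(\xi-\eta)$, and carrying the $\coth$-contribution to $G$ through the analysis. The cleanest route is to imitate the corresponding arguments of \cite{Claeys-Wang11}, whose model $f(x)=e^x$ produces an interaction with the identical $\log|\sinh(\cdot)|$ structure (after absorbing a linear term into the external field), checking at each step that the $\coth$-part of the resolvent is globally controlled because $-\log|\sinh\zeta|$ is convex away from the origin and confining at infinity.
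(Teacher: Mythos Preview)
Your symmetrization $\xi=\sqrt{x}$ together with $\sinh^2 a-\sinh^2 b=\sinh(a+b)\sinh(a-b)$ is correct and elegant: it does turn $I_V$ into an energy on $\realR$ with kernel $-\log|\xi-\eta|-\log|\sinh(\xi-\eta)|$ and even external field $Q(\xi)=V(\xi^2)$, and the identification $Q''(\xi)=4\xi\,U'(\xi^2)$ is exactly right. The connectedness argument also goes through: in any gap the effective potential has second derivative $Q''(\xi)+\int(\xi-\eta)^{-2}d\tilde\nu(\eta)+\int\sinh^{-2}(\xi-\eta)\,d\tilde\nu(\eta)>0$, contradicting equality of the potential at the two gap endpoints. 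This is a genuinely different route from the paper, which never symmetrizes but instead works directly with the pair $(G,\tilde G)$ of transforms in \eqref{eq:defn_G}, pulls them back through the algebraic map $\J$ of \eqref{eq:defn_J_c} to a scalar RH problem on a closed Jordan curve $\gamma(x)\subset\compC$, solves that by a single Cauchy integral \eqref{eq:defn_N(s)}, and then reads off the density as the explicit formula \eqref{eq:integral_rep_psi}, whose positivity is immediate from $U'>0$ and the elementary inequality $F_{c'}(u;x)>0$.

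The gap in your plan is the step ``solving this scalar Riemann--Hilbert problem by the usual $\sqrt{z^2-\beta^2}$ Ansatz.'' Your resolvent $G(z)=\int\bigl[(z-\eta)^{-1}+\coth(z-\eta)\bigr]\rho(\eta)\,d\eta$ is analytic only in the strip $|\Im z|<\pi$ and does not decay at real infinity ($G(z)\to\pm1$ as $z\to\pm\infty$), so after dividing by $\sqrt{z^2-\beta^2}$ you cannot invoke Liouville to pin down the quotient; there is no ``$R$ built by a contour integral'' without additional structure replacing the lost point at infinity. Writing $\coth\zeta=\zeta^{-1}+(\text{smooth})$ turns the singular integral equation into the airfoil equation plus a Fredholm perturbation, which is solvable in principle but yields no closed formula from which positivity of $\rho$ or square-root edge behaviour can be read off. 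Your appeal to \cite{Claeys-Wang11} does not close this gap: that paper handles the $\log|\sinh|$ interaction not by a direct strip analysis but by the same conformal-uniformization device the present paper uses (the analogue of $\J$), which is precisely the mechanism your proposal is trying to avoid. If you want to push the symmetrized approach through, you will need either to supply a genuine uniformizing map for the strip-with-cut (at which point you are essentially reconstructing $\J$) or to prove positivity and the edge exponents by softer variational arguments that do not rely on an explicit density formula.
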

	
	The proof of this theorem is given in Section \ref{sec:constr_eq_measure}. We note that Theorem \ref{thm:one-cut_rgular_w_hard_edge} only provides a qualitative description of the equilibrium measure when \eqref{eq:one-cut_regular_cond} is satisfied. In fact, we can also derive an exact formula for the equilibrium measure. Since this involves some notations to be given later, we defer the result to Theorem \ref{eq:find_b_psi} in Section \ref{subsubsec:J_trans}.
 
	\subsubsection{Local results: Plancherel-Rotach asymptotics and their consequences} \label{subsubsec:local_result}
	
        Assuming certain regularity conditions on the potential function $V$, the biorthogonal polynomials $p^{(n)}_j(z)$ and $q^{(n)}_j(f(z))$ have the same type of Plancherel-Rotach asymptotics as orthogonal polynomials of the Laguerre type, as studied in \cite{Vanlessen07}:
        \begin{thm} \label{thm:main_summary}
          Suppose that $V$ is analytic on $[0, \infty)$, satisfies the condition \eqref{eq:growth_at_infty}, and meets Requirement \ref{req:one-cut_reg} (which defines the one-cut regular with a hard edge condition). As $n \to \infty$ and $k \in \intZ$ is fixed, both $p^{(n)}_{n + k}(z)$ and $q^{(n)}_{n + k}(f(z))$ converge to the Bessel function $I_{\alpha}$ as $z \to 0$ and to the Airy function $\Ai$ as $z \to b$, after appropriate rescalings, respectively. Additionally, both $p^{(n)}_{n + k}(z)$ and $q^{(n)}_{n + k}(f(z))$ exhibit sinusoidal limiting behavior for $z \in (0, b)$. 
        \end{thm}
        The precise details of Theorem \ref{thm:main_summary} will be given in Theorem \ref{thm:main}, after the necessary notations are defined.
	The proof of Theorems \ref{thm:main_summary} and \ref{thm:main} is given in Section \ref{sec:proofs_main}, which is based on the Riemann-Hilbert analysis in Sections \ref{sec:asy_p} and \ref{sec:asy_q}.

		Using Theorem  \ref{thm:main_summary}, we can adapt the approach in \cite{Claeys-Wang22} (see Section \ref{subsec:strategypf} for more details) to our current setting and show that the correlation kernel $K_n(x, y)$ in \eqref{eq:corr_kernel} has the sine kernel as its limit in the bulk (the interior $(0,b)$ of the support of $\mu$). 
For $x$ belonging to a neighborhood of the positive real line in $\mathbb{C}$, we define 
	\begin{equation}\label{deflambda}
		\Lambda(x) = \frac{1}{2} \int_0^b \log \left( \frac{f(x) - f(y)}{x - y} \right) d\mu(y).
	\end{equation}
	\begin{thm}\label{thm:sinekernel} 
		 Suppose that $V$ satisfies the same conditions as in Theorem \ref{thm:one-cut_rgular_w_hard_edge}. Fix any $x^*\in (0,b)$ and a compact subset $\Xi \subset \compC$, and let $\xi,\eta \in \Xi$. Set
                 \begin{align}
                   u = {}& u_n(\xi)= x^* + \frac{\xi}{\pi \psi(x^*) n}, & v = {}& v_n(\eta)=  x^* + \frac{\eta}{\pi \psi(x^*) n}.
                 \end{align} 
	 Uniformly over $\xi,\eta \in \Xi$, we have
		\begin{equation} \label{eq:_sine_kernel_pre}
			\lim_{n \to \infty} (f(u ) - f(v)) e^{n\Lambda(u)} K_n(u, v) e^{-n\Lambda(v)} = \frac{f'(x^*)}{\pi} \sin (\xi - \eta).
		\end{equation}
		Consequently, for each fixed $\xi$ and $\eta$, 
		\begin{equation} \label{eq:sine_kernel_nonzero}
			\lim_{n \to \infty} \frac{e^{n(\Lambda(u) - \Lambda(v))}}{\pi \psi(x^*)n} K_n\left(x^* + \frac{\xi}{ \psi(x^*) n}, x^* + \frac{\eta}{\pi \psi(x^*) n} \right)  = \frac{\sin (\xi - \eta)}{\xi - \eta}.
		\end{equation}
	By taking $\xi=\eta=0$ in \eqref{eq:sine_kernel_nonzero}, we verify that the limiting density in the bulk $(0,b)$ is given by $\psi$:
		\begin{equation}\label{eq:limitpsi}
			\lim_{n \to \infty} \frac{1}{ n}K_n(x^*,x^*)=\psi(x^*).
		\end{equation} 
	\end{thm}
Theorem \ref{thm:sinekernel} will be proved in Section \ref{sec:bulk universality}. Note that for 
	 $\xi\neq \eta$, \eqref{eq:sine_kernel_nonzero} directly follows from \eqref{eq:_sine_kernel_pre} via the expansion $f(u)-f(v)=f'(x^*) \frac{\xi-\eta}{n}+\bigO(n^{-2})$. 
	 The validity of \eqref{eq:sine_kernel_nonzero} for $\xi=\eta$ then follows from the analyticity of both sides of \eqref{eq:sine_kernel_nonzero} in $\xi$ and $\eta$. We also use the convention that $\frac{\sin 0}{0}=1$.

	Analogously to Theorem \ref{thm:sinekernel}, we can also show that the rescaled correlation kernel $K_n(x, y)$ in \eqref{eq:corr_kernel} converges to the Airy kernel around $b$ and to the Bessel kernel around $0$, respectively. In fact, the connections between a related model (the Muttalib-Borodin ensemble) and the Meijer-G kernel have been established in
	\cite{Kuijlaars-Stivigny14, Wang-Zhang21, Zhang15}.
	In \cite{Wang-Zhang21}, it is shown that the limit of the Muttalib-Borodin correlation kernel is given by a Meijer-G kernel with a parameter $\theta$. If $\theta = 1$, the Meijer G kernel specializes to the Bessel kernel. 
	
	Existing limit theories \cite{Borot-Guionnet-Kozlowski13, Breuer-Duits13}  for general biorthogonal ensembles provide a central limit theorem for holomorphic linear statistics. (See Propositions \ref{thm:mean}, \ref{prop:clt1} and \ref{thm:var} below.) Building on their results, we can combine  Theorem \ref{thm:main_summary} and a double contour integral representation to obtain a clear expression for the limiting variance of analytic test functions. See Theorem \ref{thm:variance_general} below. Our approach also suggests that analogous formulas  to \eqref{eq:sigform} hold  for other biorthogonal ensembles. See  Section \ref{subsec:strategypf}  and Section \ref{subsec:formula1} below for more details.

	 Let the region $D$ be defined in \eqref{eq:simplified} and the mapping $\J$ be given in \eqref{eq:defn_J_c} and \eqref{eq:determine_c}.
	
	
	\begin{thm} \label{thm:variance_general}
          Suppose $V$ satisfies the same conditions as in Theorem \ref{thm:one-cut_rgular_w_hard_edge}, and assume that $\phi$ is a holomorphic function in a neighborhood of $[0,b]$ in $\mathbb{C}$. Let $\mathcal{C}(D)$ and $\mathcal{C}'(D)$ be two smooth, positively oriented Jordan curves in $\compC \setminus \overline{D}$ that enclose $D$, with $\mathcal{C}(D)$ enclosing $\mathcal{C}'(D)$. Assume that both $\phi(\J(w))$ and\footnote{The function $h$ appears in the definition of the weight in \eqref{eq:general_weight}.} $h(\J(w))$ are analytic in the region between $\mathcal{C}(D)$ and $\partial D$. Moreover, suppose that\footnote{If $\phi$ is a polynomial, the condition $\alpha = 0$ can be removed, by using the result from \cite{Breuer-Duits13}. See Section \ref{subsec:formula1} below for more discussions.} $\alpha = 0$. 
                Then we have 
		\begin{equation} \label{eq:general_CLT}
			\sum_{i=1}^n \phi(\lambda_i) -\mathbb{E} \left(\sum_{i=1}^n \phi(\lambda_i)\right) \to \N(0,\sigma^2(\phi)) \mbox{ in distribution}, \mbox{ as }n\to\infty, 
		\end{equation} 
		where the right-hand side is the Gaussian distribution with mean $0$ and variance 
		\begin{equation}\label{eq:sigform}
			\sigma^2(\phi) = \frac{1}{(2\pi i)^2} \oint_{\mathcal{C}'(D)} du \oint_{\mathcal{C}(D)} dv \phi(\J(u)) \phi(\J(v)) \frac{1}{(u - v)^2}.
		\end{equation}
	\end{thm}
	
	Theorem \ref{thm:variance_general} implies that the limiting variance of linear statistics depends on the potential $V(x)$ only through its support $[0,b]$, similar to the case of orthogonal ensembles. This theorem will be proved in Section \ref{subsec:formula1}. 
	
	\subsubsection{Applications to the conductance of disordered wires} \label{subsubsec:disordered_wires_results}
By providing quantitative descriptions of the equilibrium measure (Theorem \ref{eq:find_b_psi} in Section \ref{subsubsec:J_trans} below), we obtain the following explicit form of the limiting density and the mean conductance when $V$ is linear. In particular, we rigorously verify Ohm's law \eqref{eq:Ohm_law}.
\begin{thm} \label{thm:eq_measure_linear}
	Suppose $V(x)$, $\alpha$, and $h(x)$ are specified in \eqref{eq:special_W}. Then the parameter $c$ defined in \eqref{eq:relation_b_c} is equal to $2\Mconst$, and the endpoint $b$ is given by
	\begin{equation} \label{eq:b_linear_case}
		b = \left( \frac{1}{2} \sqrt{(2\Mconst + 1)^2 - 1} + \frac{1}{2} \arcosh(2\Mconst + 1) \right)^2.
	\end{equation}
	Furthermore, the density function $\psi(x)$ in \eqref{eq:density_formula} is given by
	\begin{equation} \label{eq:psi_linear_case}
		\psi(x) = \frac{1}{\pi} \Im \sqrt{\frac{\Iinv_+(x)}{x}},
	\end{equation}
where $\Iinv_+$ is given by \eqref{eq:Iinv_+}, \eqref{eq:simplified}, and \eqref{eq:determine_c}.
Consequently, the conductance $C_n(\Mconst)$ defined in \eqref{eq:conduct} satisfies  
\begin{equation}\label{eq:meanconduc}
	\lim_{n \to \infty} \frac{1}{n} \E[C_n(\Mconst)] = \frac{1}{\Mconst+1},
\end{equation}
which implies Ohm's law \eqref{eq:Ohm_law} by taking $\Mconst\to\infty$.
\end{thm}
Equations \eqref{eq:b_linear_case} and \eqref{eq:psi_linear_case} of Theorem \ref{thm:eq_measure_linear} will be proved in Section \ref{sec:constr_eq_measure}. Equation \eqref{eq:meanconduc} will be shown in Section \ref{subsubsec:meanconduc}. 
Readers may compare \eqref{eq:b_linear_case} and \eqref{eq:meanconduc} to \cite[Equation (205)]{Beenakker97} and \cite[Equation (176)]{Beenakker97}, respectively. 
From our formula \eqref{eq:psi_linear_case}, we obtain the following limit results. In the limit as $\Mconst \to 0$, the support of the equilibrium measure is $[0, (4+\smallo(1))\Mconst ]$, and the density function satisfies the limiting formula that for all $\epsilon > 0$,
\begin{gather}
	\psi(x) = \frac{1}{2\pi \Mconst} \sqrt{\frac{4\Mconst - x}{x}} (1 + \bigO(\Mconst)), \quad \mbox{uniformly over } x \in (\epsilon\Mconst, (4 - \epsilon)\Mconst), \label{eq:MP_law_limit} \\
	\begin{aligned}
		\psi_0 = {}& \frac{\Mconst^{-\frac{1}{2}}}{\pi} (1 + \bigO(\Mconst)), & \psi_b = {}& \frac{\Mconst^{-\frac{3}{2}}}{4\pi} (1 + \bigO(\Mconst)).
	\end{aligned}
\end{gather}
In the limit as $\Mconst \to \infty$, the support of the equilibrium measure is $[0, \Mconst^2 + \bigO(\Mconst)]$, and the density function satisfies the limiting formula that for all $\epsilon > 0$,
\begin{gather}
	\psi(x) = \frac{1}{2\Mconst \sqrt{x}} (1 + \bigO(\Mconst^{-1})),  \quad \mbox{uniformly over } x \in (\epsilon \Mconst^2, (1 - \epsilon) \Mconst^2), \label{eq:classical_limit} \\
	\begin{aligned}
		\psi_0 = {}& \frac{1}{2\Mconst} (1 + \bigO(\Mconst^{-1})), & \psi_b = {}& \frac{1}{\sqrt{2} \pi \Mconst^{\frac{5}{2}}} (1 + \bigO(\Mconst^{-1})).
	\end{aligned}
\end{gather}
Here, we note that \eqref{eq:MP_law_limit} is comparable to the Mar\v{c}enko-Pastur law \cite[Chapter 3]{Bai-Silverstein10}, and \eqref{eq:classical_limit} is comparable to \cite[Equation (191)]{Beenakker97}. Moreover, Theorem \ref{thm:sinekernel} implies that the local statistics of the $\lambda_i$'s in the bulk exhibit sine universality.

Using Theorem \ref{thm:variance_general}, we can also prove the universal conductance fluctuation \eqref{eq:UCF}. 
\begin{thm}\label{thm:UCF1}
	Suppose $V(x)$, $\alpha$, and $h(x)$ are specified in \eqref{eq:special_W}. The conductance $C_n(\Mconst)$ defined in \eqref{eq:conduct} has a Gaussian limit:
		\begin{equation}\label{eq:wireclt1} 
C_n(\Mconst)		 -\mathbb{E} \left( C_n(\Mconst)	\right) \to \N\left(0,\frac{1}{15}\left(1-\frac{6\Mconst+1}{(\Mconst+1)^6}\right)  \right) \mbox{ in distribution}, \mbox{ as }n\to\infty, 
	\end{equation} 
We also have 
\begin{equation}\label{eq:wireclt2}
	\lim_{n\to\infty}\Var(C_n(\Mconst))=
	\frac{1}{15}\left(1-\frac{6\Mconst+1}{(\Mconst+1)^6}\right), 
\end{equation}
which implies the universal conductance fluctuation \eqref{eq:UCF} by taking $\Mconst\to\infty$.
\end{thm}
Our result \eqref{eq:wireclt2} coincides with the formula suggested by physical methods in \cite[Equation (178)]{Beenakker97}.
We will prove Theorem \ref{thm:UCF1} in Section \ref{subsubsec:pfucf}. 
	\subsection{Related models and previous results}
	
        \paragraph{Biorthogonal ensembles}
       An important class of biorthogonal polynomials is the \emph{Muttalib-Borodin biorthogonal polynomials}, which are characterized by $f(x) = x^{\theta}$ in \eqref{eq:biorthogonality}. Correspondingly, the particle model given in \eqref{eq:jpdf} with $f(x) = x^{\theta}$ is called the \emph{Muttalib-Borodin ensemble}.
	This ensemble was proposed by the physicist Muttalib \cite{Muttalib95} as a simplification of the biorthogonal ensemble considered in our paper, and the study of the Muttalib-Borodin ensemble \cite{Borodin99}, \cite{Cheliotis14}, \cite{Forrester-Wang15}, \cite{Zhang15}, \cite{Husson-Mazzuca-Occelli25} is partially motivated by its indirect relation to the quantum transport theory of disordered wires.
	We also remark that technically, the Muttalib-Borodin ensemble is more challenging, as its limit behavior at the hard edge is the more complex Meijer G kernel, rather than the Bessel kernel.
	Finally, we note that biorthogonal ensembles are also motivated by several other areas, such as \cite{Borot-Guionnet-Kozlowski13}, \cite{Lubinsky-Sidi-Stahl15}, \cite{Kuijlaars16}, \cite{Kuijlaars-Molag19}, \cite{Molag20}.
        \paragraph{Variance of linear statistics}
        A striking feature of Theorem \ref{thm:variance_general} is that the limiting variance of the linear statistic depends only on the support of the equilibrium measure, not on the density function itself. This feature is well known for linear statistics in orthogonal polynomial ensembles that can be realized by \eqref{eq:jpdf} with $f(x) = x$; see, for example, \cite{Johansson98} and \cite[Section 2.2]{Breuer-Duits13}. For biorthogonal ensembles, to the best of the authors' knowledge, this feature has not been discussed in the mathematical literature but has been indicated in physical arguments (e.g., \cite[Equations (50) and (55)]{Beenakker97}).
        \paragraph{Disordered wires}
        The biorthogonal ensemble considered in Section \ref{subsubsec:disordered_wires_results} is an approximate solution to the DMPK equation \eqref{eq:DMPK}, and the approximation has been justified by physical arguments only in the regime $1 \gg \Mconst \gg n$, as explained in Section \ref{subsec:Motivation}. However, the mean conductance formula \eqref{eq:meanconduc} and the conductance fluctuation formula \eqref{eq:wireclt2} for finite $\Mconst$ coincide with the formulas derived directly from the DMPK equation by physical arguments \cite[Equations (205), (176), and (178)]{Beenakker97}. This suggests that the biorthogonal ensemble is a good approximation in the regime $\Mconst = \bigO(1)$.
	
        \subsection{Strategy of proofs}\label{subsec:strategypf}
        
        In this section, we highlight some of the key ideas in the proofs of the main results.
        
        Historically, orthogonal polynomials were related to Riemann-Hilbert problems by \cite{Fokas-Its-Kitaev91} and \cite{Fokas-Its-Kitaev92}. The powerful Deift-Zhou nonlinear steepest-descent method was then successfully applied to such Riemann-Hilbert problems \cite{Deift-Kriecherbauer-McLaughlin-Venakides-Zhou99}, \cite{Deift-Kriecherbauer-McLaughlin-Venakides-Zhou99a}, opening the door to many limiting results for orthogonal polynomials and their generalizations; see \cite{Kuijlaars10} for a review. These various Riemann-Hilbert problems are all matrix-valued, some of size $2 \times 2$ and others larger.
Later, in the study of a special type of biorthogonal polynomials related to the random matrix model with an equispaced external source \cite{Claeys-Wang11}, Claeys and the first-named author related the biorthogonal polynomials to $2$-component vector-valued Riemann-Hilbert problems and applied the Deift-Zhou method to them to obtain limiting results. The biorthogonal polynomials in \cite{Claeys-Wang11} have, in our notation, $f(x) = e^x$ in \eqref{eq:biorthogonality}, where the domain of integration is replaced by $\realR$. After \cite{Claeys-Wang11}, the method of vector-valued Riemann-Hilbert problems was applied to the Muttalib-Borodin biorthogonal polynomials \cite{Claeys-Romano14}, \cite{Wang-Zhang21}, \cite{Charlier21}.
        In our paper, we similarly apply the vector Riemann-Hilbert approach to prove Theorem \ref{thm:main_summary}, the Plancherel-Rotach asymptotics of the biorthogonal polynomials \eqref{eq:biorthogonality}.
        
        The universality of local statistics for orthogonal polynomial ensembles is usually proved via the Christoffel-Darboux formula. However, for our biorthogonal ensemble, the Christoffel-Darboux formula is not available. In \cite{Claeys-Wang22}, the sine universality of the biorthogonal ensemble studied in \cite{Claeys-Wang11} was proved using an approximate version of the Christoffel-Darboux formula. Here, we adopt (essentially) the same approach, which consists of two steps. First, we consider the decomposition of $K_n(x,y)$ in \eqref{eq:ess2}. We will show that the main contribution to the right-hand side of \eqref{eq:ess2} comes from the third term $J_3^{(M)}$, which contains those $p^{(n)}_k$ and $q^{(n)}_j$ where both $k$ and $j$ are close to $n$. This step is achieved using pointwise and $L^2$ bounds on the biorthogonal polynomials in Section \ref{subsec:upperbds}. Second, we evaluate $J_3^{(M)}$ using Theorem \ref{thm:main_summary} and certain algebraic identities (Lemma \ref{lem:cancel}) to obtain the sine kernel limit.
         We note that universality at the edges can be proved in a more straightforward manner; see \cite{Claeys-Wang22} for the soft edge, \cite{Wang-Zhang21} for the hard edge, and \cite{Wang-Xu25} for a transition case.
        
        For linear statistics of biorthogonal ensembles, thanks to the explicit density formula in Theorem \ref{eq:find_b_psi}, the limiting mean formula in \cite[Theorem 2.2]{Borot-Guionnet-Kozlowski13} yields Ohm's law in Theorem \ref{thm:eq_measure_linear}. Regarding the fluctuations of holomorphic linear statistics, \cite[Proposition 8.2]{Borot-Guionnet-Kozlowski13} provides a systematic way to show that they converge to a normal distribution if the ensemble satisfies certain regularity conditions. However, it is not easy to rewrite the limiting variance formula in \cite{Borot-Guionnet-Kozlowski13} into explicit results like \eqref{eq:wireclt1}. On the other hand, we find that the results in \cite{Breuer-Duits13, Lambert18} (see \cite[Theorem 2.1]{Breuer-Duits13} and \cite[Theorem 2.6]{Lambert18}) can be adapted to explicitly compute the limiting variances of linear statistics, provided that the test function is a polynomial. Hence, we adopt a mixed approach in our paper to derive \eqref{eq:wireclt1}: first, we show that \eqref{eq:general_CLT} holds for all polynomials by using Theorem \ref{thm:main_summary} to compute the right limit of the recurrence matrices of the biorthogonal polynomials (Lemma \ref{lem:recu}), and then we use the continuity property implicit in \cite[Proposition 8.2]{Borot-Guionnet-Kozlowski13} to extend \eqref{eq:sigform} to analytic $\phi$.
        The novelty here lies in the form of double contour integral formula \eqref{eq:sigform}, as it facilitates the limit procedure of going from polynomials to analytic functions. (Indeed, the original form \eqref{eq:variance_poly}   as in \cite{Breuer-Duits13, Lambert18} is much less convenient for taking limits of  test functions.) We believe
         our approach here   can be useful for  studying other biorthogonal ensembles as well (see Remark \ref{rmk:congen} below).
	
	\subsection*{Acknowledgements}
	Dong Wang and Dong Yao are both corresponding authors of the paper.
		Dong Yao is supported by the National Key R\&D Program of China (No. 2023YFA1010101), the NSFC grant (No. 12571161), and the Basic Research Program of Jiangsu Province grant (No. BK20220677).
	Dong Wang is partially supported by the National Natural Science Foundation of China under grant numbers 12271502 and 11871425, and the University of Chinese Academy of Sciences start-up grant 118900M043. 
	Dong Wang thanks K.~A.~Muttalib for discussions in the early stage of this project and Tiefeng Jiang for assistance with the literature review.

	\section{Construction of the equilibrium measure} \label{sec:constr_eq_measure}
	
	In this section, we assume that $V$ is a potential function satisfying the conditions in Theorem \ref{thm:one-cut_rgular_w_hard_edge}, and we show that $V$ is one-cut regular with a hard edge at $0$ by explicitly constructing its equilibrium measure as in Theorem \ref{eq:find_b_psi}, which implies Theorem \ref{thm:one-cut_rgular_w_hard_edge}. Following \cite{Claeys-Wang11}, we first propose the support of the equilibrium measure as an ansatz, then compute the density within this support, and finally verify that the constructed measure satisfies the criteria for one-cut regularity, concluding that it is the unique equilibrium measure.
	
	At each step, we analyze the special case $V = x/\Mconst$ and derive explicit formulas for it.
	
	\subsection{Quantitative properties of the equilibrium measure} \label{subsubsec:J_trans}
	
	
	First, we collect some properties of the function $f$ defined in \eqref{eq:defn_f}.
	 We define $\rho$ to be the curve lying in $\compC_-$ with the formula
	\begin{equation} \label{eq:defn_rho}
		\rho = \left\{ \frac{t^2}{\pi^2} - \frac{\pi^2}{4} -it \mid t \in (0, +\infty) \right\}.
	\end{equation}
	Then $\rho \cup \{ 0 \} \cup \bar{\rho}$ forms a parabola. We define $\paraP$ as the region of $\compC$ to the right of this parabola, and $\interior \paraP$ as its interior. See Figure \ref{fig:paraP}.
	\begin{lem} \label{lem:f_prop} 
		$f(x)$ defined in \eqref{eq:defn_f} has a natural extension to an analytic function on $\compC$ and satisfies the following properties:
		\begin{itemize}
			\item
			$f(x) \in \realR_+$ for all $x \in \realR_+$, and $f(x)$ increases from $0$ to $+\infty$ as $x$ ranges from $0$ to $+\infty$.
			\item
			$f(x) \in (-1, 0)$ for $x \in (-\pi^2/4, 0)$, and $f(x)$ increases from $-1$ to $0$ as $x$ ranges from $-\pi^2/4$ to $0$.
			\item
			As $x \in \realR_+$ ranges from $0$ to $+\infty$, $f\left(\frac{1}{4}(x^2 - \pi^2) \pm \frac{1}{2} \pi xi\right) \in \realR_-$ and decreases from $-1$ to $-\infty$.
		\end{itemize}
	Consequently, the function $f: \interior \paraP \to \compC \setminus (-\infty, -1]$ is conformal, and it maps both $\rho$ and $\bar{\rho}$ to $(-\infty, -1]$.
	\end{lem}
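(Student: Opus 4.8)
The plan is to establish all three items by direct computation: in each regime I reduce $f$ to an elementary real function by choosing a convenient branch of the square root, using that $f$ is insensitive to that choice because each of the three defining expressions for $f$ is even in $\sqrt{x}$. For the analytic extension I would rewrite $f(x) = \tfrac{1}{2}\big(\cosh(2\sqrt{x}) - 1\big)$ and expand the hyperbolic cosine, obtaining the everywhere-convergent power series $f(x) = \sum_{k \geq 1} \frac{2^{2k-1}}{(2k)!}\, x^k$; this shows $f$ extends to an entire function on $\compC$ and simultaneously justifies that the branch of $\sqrt{x}$ plays no role.

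For the first item, on $\realR_+$ I take $\sqrt{x} \geq 0$; then $\sinh(\sqrt{x}) > 0$ for $x > 0$ gives $f(x) > 0$, while $f(0) = 0$, and differentiating gives $f'(x) = \sinh(\sqrt{x})\cosh(\sqrt{x})/\sqrt{x} > 0$, so $f$ is strictly increasing on $[0, \infty)$, with $f(x) \to +\infty$ since $\sinh(\sqrt{x}) \to \infty$. For the second item, for $x \in (-\pi^2/4, 0)$ I set $x = -t$ with $t \in (0, \pi^2/4)$ and take $\sqrt{x} = i\sqrt{t}$, so $\sinh(\sqrt{x}) = i\sin(\sqrt{t})$ and hence $f(x) = -\sin^2(\sqrt{t})$; as $t$ runs over $(0, \pi^2/4)$, $\sqrt{t}$ runs over $(0, \pi/2)$, where $\sin$ increases from $0$ to $1$, so $f(x) \in (-1, 0)$ and $f$ increases from $-1$ to $0$ as $x$ increases from $-\pi^2/4$ to $0$.

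The third item is the only one requiring a small observation, namely the factorization $\big(\tfrac{1}{2}(x \pm \pi i)\big)^2 = \tfrac{1}{4}(x^2 - \pi^2) \pm \tfrac{1}{2}\pi x i$, which lets me take $\sqrt{z} = \tfrac{1}{2}(x \pm \pi i)$ along that curve. Then the addition formula for $\sinh$ gives $\sinh\!\big(\tfrac{x}{2} \pm \tfrac{\pi i}{2}\big) = \pm i\cosh(x/2)$, hence $f = -\cosh^2(x/2)$, which lies in $\realR_-$; differentiating in $x$ yields $-\cosh(x/2)\sinh(x/2) < 0$ for $x > 0$, and the endpoint values $-1$ (at $x = 0$) and $-\infty$ (as $x \to \infty$) complete the claim. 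There is no genuine obstacle in this lemma; the only nontrivial point is spotting the square-root factorizations — particularly the one in the third item — that collapse each computation to a monotone elementary function.
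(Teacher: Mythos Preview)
Your proof is correct and is exactly the kind of direct verification the paper has in mind; in fact the paper does not spell out a proof of this lemma at all, treating the three items as immediate from the explicit formulas. Your square-root factorization in the third item, reducing the computation to $f=-\cosh^2(x/2)$, is precisely the clean way to do it.
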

        The proof of this lemma is straightforward and is omitted.
\begin{rmk}
	Alternatively, if we glue $\rho$ and $\bar{\rho}$ by identifying $z \in \rho$ with $\bar{z} \in \bar{\rho}$ and view $\paraP$ as a Riemann surface, then by Lemma \ref{lem:f_prop}, $f$ is a conformal mapping between $\paraP$ and $\compC$.
\end{rmk}
	
	\begin{figure}[htb]
		\centering
		\includegraphics{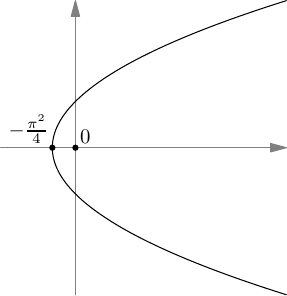}
		\caption{Shape of $\paraP$ (the region to the right of the parabola).}
		\label{fig:paraP}
	\end{figure}
	
	We define, for all $x \in \realR_+$, the transformation
	\begin{equation} \label{eq:defn_J_c}
		J_x(s) = x\sqrt{s} + \arcosh \left( \frac{s + 1}{s - 1} \right), \quad \text{and} \quad \J[x](s) = \frac{1}{4} (J_x(s))^2, \quad s \in \compC_+ \cup (1, \infty),
	\end{equation}
	such that $\sqrt{s}$ takes the principal branch on $\arg s \in (-\pi, \pi)$, and $\arcosh$ takes the branch that is a one-to-one mapping from $\compC \setminus (-\infty, 1]$ to $\{ s \in \compC \mid \Re s > 0 \text{ and } -\pi < \Im s < \pi \}$. Below, we extend the domain of $J_x(s)$ and $\J[x](s)$. Naturally, they extend to $s \in \compC_-$ by $J_x(\bar{s}) = \overline{J_x(s)}$ and $\J[x](\bar{s}) = \overline{\J[x](s)}$. $\J[x](s)$ also extends to $s \in (-\infty, 0)$ by analytic continuation. For $s \in [0, 1]$, we leave $\J[x](s)$ undefined, as it has a branch cut there. We also define ad hoc that $J_x(s) = \lim_{\epsilon \to 0_+} J_x(s + \epsilon i)$ for $s \in (-\infty, 1)$. In particular, $\Im J_x(s)=-\pi$ for $s\in (0,1)$. 
	
	\begin{lem} \label{lem:J_c}
		$J_x(s)$ satisfies the following properties:
		\begin{enumerate}
			\item \label{enu:lem:J_c:1}
			$J_x(s) \in \realR_+$ for $s \in (1, +\infty)$; $J_x(s)$ decreases from $+\infty$ to $J_x(s_2(x)) = \sqrt{(x + 1)^2 - 1} + \arcosh(x + 1)$ as $s$ ranges from $1$ to $s_2(x)$, which is defined by
			\begin{equation} \label{eq:defn_s_2_x}
				s_2(x) := 1 + 2/x,
			\end{equation}
			 and then increases from $J_x(s_2(x))$ to $+\infty$ as $s$ ranges from $s_2(x)$ to $\infty$.
			\item \label{enu:lem:J_c:2}
			$J_x(-s) \in i\realR$ for $s \in (0, +\infty)$. $\Im J_x(-s)$ increases monotonically from $-\pi$ to $+\infty$ as $s$ ranges from $0$ to $\infty$, and there exists a unique $s_1(x) \in (-\infty, 0)$ such that $J_x(s_1(x)) = 0$. Specifically, $s_1(x)$ is the unique solution on $(-\infty, 0)$ to
			\begin{equation} \label{eq:defn_s_1}
				x\sqrt{-s} = \arccos \left( \frac{-s - 1}{-s + 1} \right).
			\end{equation}
			\item \label{enu:lem:J_c:3}
			$J_x(s) + \pi i \in \realR_+$ for $s \in (0, 1)$. $\Re J_x(s)$ increases from $0$ to $+\infty$ as $s$ ranges from $0$ to $1$.
			\item \label{enu:lem:J_c:4}
			There exists a unique curve $\gamma_1(x) \subseteq \compC_+$ (see the left panel of Figure \ref{fig:J_c}) connecting $s_1(x)$ and $s_2(x)$, such that $J_x(z) \in \realR$ for $z \in \gamma_1(x)$, and $J_x(z)$ increases from $0$ to $J_x(s_2(x))$ as $z$ moves from $s_1(x)$ to $s_2(x)$ along $\gamma_1(x)$.
		\end{enumerate}
	\end{lem}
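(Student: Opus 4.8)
The plan is to treat Items \ref{enu:lem:J_c:1}--\ref{enu:lem:J_c:3} as real‑variable statements and Item \ref{enu:lem:J_c:4} as a level‑curve statement in $\compC_+$; everything rests on two preliminary computations. First, differentiating and using $\frac{d}{ds}\arcosh\!\bigl(\tfrac{s+1}{s-1}\bigr) = -\tfrac{1}{(s-1)\sqrt s}$ gives $J_x'(s) = \tfrac{1}{\sqrt s}\bigl(\tfrac x2 - \tfrac{1}{s-1}\bigr)$, whose only zero in $\compC$ is the simple zero $s = 1+2/x = s_2(x)$, and $J_x''(s_2(x)) = \tfrac{x^2}{4\sqrt{s_2(x)}} > 0$. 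Second, I record the boundary values of $J_x$ on $\realR$ produced by the branch conventions and the ad hoc rule $J_x(u) = \lim_{\epsilon\to0_+} J_x(u+i\epsilon)$: writing $w(u) = \tfrac{u+1}{u-1}$, one checks $w(u)\in(1,\infty)$ for $u\in(1,\infty)$, $w(u)\in(-\infty,-1)$ for $u\in(0,1)$, and $w(u)\in(-1,1)$ for $u\in(-\infty,0)$; then, reading $\arcosh z = \log(z+\sqrt{z-1}\sqrt{z+1})$ along these limits gives $\arcosh(w-i0_+) = \arcosh|w| - i\pi$ for $w<-1$ and $\arcosh(w-i0_+) = -i\arccos w$ for $w\in(-1,1)$, while $\sqrt{u+i0_+} = i\sqrt{-u}$ for $u<0$. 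Combining, $J_x(u) = x\sqrt u + \arcosh w(u)$ on $(1,\infty)$, $J_x(u) = x\sqrt u + \arcosh(-w(u)) - i\pi$ on $(0,1)$, and $J_x(u) = i\bigl(x\sqrt{-u} - \arccos w(u)\bigr)$ on $(-\infty,0)$; in particular $J_x$ is continuous on $\overline{\compC_+}\setminus\{1\}$.

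Items \ref{enu:lem:J_c:1}--\ref{enu:lem:J_c:3} are then immediate. On $(1,\infty)$ both summands are real, $J_x\to+\infty$ at $1^+$ and at $\infty$, and $J_x'$ is $<0$ on $(1,s_2(x))$ and $>0$ afterwards, so $s_2(x)$ is the unique minimum; the value $J_x(s_2(x))$ follows from $x\sqrt{s_2(x)} = \sqrt{(x+1)^2-1}$ and $w(s_2(x)) = x+1$. For Item \ref{enu:lem:J_c:2}, $J_x$ is purely imaginary on $(-\infty,0)$ and $\tfrac{d}{ds}\Im J_x(-s) = \tfrac{1}{\sqrt s}\bigl(\tfrac x2 + \tfrac1{s+1}\bigr) > 0$, with $\Im J_x = -\pi$ at $u=0$ and $\Im J_x\to+\infty$ as $u\to-\infty$, so there is a unique zero $s_1(x)$, characterized by $x\sqrt{-s} = \arccos w(s)$, which is \eqref{eq:defn_s_1}; moreover $J_x'(s_1(x))\in i\realR\setminus\{0\}$, so $s_1(x)$ is a regular point. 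Item \ref{enu:lem:J_c:3} follows because $J_x(u)+\pi i = x\sqrt u + \arcosh(-w(u))$ is real, equals $0$ at $u=0$, tends to $+\infty$ at $u=1$, and has positive derivative on $(0,1)$.

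For Item \ref{enu:lem:J_c:4} I work with the level set $L = \{z\in\overline{\compC_+}: \Im J_x(z) = 0\}$. On $\compC_+$, $J_x$ is analytic and $J_x'\neq0$, since the only zero, only pole, and branch point of $\tfrac{1}{\sqrt s}\bigl(\tfrac x2 - \tfrac1{s-1}\bigr)$ all lie on $\realR$; at the interior point $s_2(x)\in(1,\infty)$ we have $J_x'(s_2(x))=0$ and $J_x''(s_2(x))>0$, so near $s_2(x)$ the set $L$ is two analytic arcs meeting orthogonally — one along $(1,\infty)$, the other transverse — and the branch entering $\compC_+$, which I call $\gamma_1(x)$, satisfies $J_x(z) = J_x(s_2(x)) - \tfrac12 J_x''(s_2(x))(z-s_2(x))^2 + \cdots$, so $J_x$ decreases from $J_x(s_2(x))$ along it. Continue $\gamma_1(x)$ as a maximal analytic arc of $L$: since $J_x'$ never vanishes on it, $\tfrac{d}{dt}J_x(\gamma_1(t))$ is a real function with no zero, so $J_x$ is strictly decreasing along $\gamma_1(x)$, which is therefore simple and not a closed loop. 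It cannot escape to infinity in $\compC_+$: from $\arcosh w(z)\to0$ one has $J_x(z) = x\sqrt z + o(1)$ as $z\to\infty$, so $\Im J_x(z)\to+\infty$ except in a thin cusp about $\realR_+$, on which $\Re J_x(z)\to+\infty$; the first alternative contradicts $\Im J_x\equiv0$ on $\gamma_1(x)$ and the second contradicts $J_x<J_x(s_2(x))$ on $\gamma_1(x)$. Hence $\gamma_1(x)$ terminates at a point $p\in\partial\compC_+=\realR$, and the boundary values force $p = s_1(x)$: on $(1,\infty)\setminus\{s_2(x)\}$, $J_x$ is conformal so $L$ coincides with $\realR$ near $p$ and admits no branch in $\compC_+$; on $(-\infty,0)$, $J_x\in i\realR$ is nonzero off $s_1(x)$; on $(0,1)\cup\{0\}$, $\Im J_x = -\pi$; and $J_x\to\infty$ with $\Re J_x\to+\infty$ near $1$. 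Since $J_x(s_1(x))=0$, $J_x$ increases from $0$ to $J_x(s_2(x))$ along $\gamma_1(x)$ as $z$ runs from $s_1(x)$ to $s_2(x)$. Finally, any curve in $\compC_+$ joining $s_1(x)$ and $s_2(x)$ on which $J_x$ is real lies in $L$ and leaves the regular point $s_1(x)$ — where $L$ meets $\realR$ only at $s_1(x)$ and transversally — along the unique branch of $L$ entering $\compC_+$, so it coincides with $\gamma_1(x)$.

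The substantive difficulty lies entirely in Item \ref{enu:lem:J_c:4} and is global in nature: showing that the level arc $\gamma_1(x)$ emanating from $s_2(x)$ neither runs off to infinity nor returns to $\realR$ anywhere except $s_1(x)$. The first needs the asymptotics $J_x(z)\sim x\sqrt z$ and control of the thin region near $\realR_+$ where $\Im J_x$ is small; the second needs the precise boundary values from Items \ref{enu:lem:J_c:1}--\ref{enu:lem:J_c:3} together with the fact that an analytic level arc cannot accumulate at an interior regular point lying on another level arc. The derivative identity, the branch bookkeeping behind the ad hoc boundary values, and the monotonicity estimates are otherwise routine. (One could instead establish Item \ref{enu:lem:J_c:4} by proving that $J_x$, equivalently $\J[x]$, maps $\compC_+$ conformally onto an explicit slit region and reading off $\gamma_1(x)$ as a preimage, but checking the boundary correspondence costs comparable effort.)
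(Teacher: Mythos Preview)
Your argument is correct. Parts \ref{enu:lem:J_c:1}--\ref{enu:lem:J_c:3} match what the paper does (it simply says these follow by direct computation), but for Part \ref{enu:lem:J_c:4} you take a genuinely different route from the paper.

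The paper proceeds \emph{constructively}: it introduces the substitution $s=\dfrac{\cosh(u+iv)+1}{\cosh(u+iv)-1}$ with $u>0$, $v\in(-\pi,0)$, under which $\Im J_x(s)=0$ becomes the explicit scalar equation $\cosh u = x\,\tfrac{\sin v}{v}+\cos v$. Monotonicity of the right-hand side in $v$ gives a single-valued solution $u=u(v)$ on $[v^*,0]$, producing an explicit parametrization of $\gamma_1(x)$; monotonicity of $J_x$ along it is then checked by computing $\tfrac{d}{dv}J_x(s(u(v),v))$ directly and observing it is real and non-vanishing. Your approach is instead \emph{qualitative}: you follow the level arc of $\Im J_x=0$ out of the saddle $s_2(x)$, use $J_x'\neq 0$ on $\compC_+$ to keep $J_x$ strictly monotone along it, rule out escape to $\infty$ via $J_x(z)\sim x\sqrt z$, and use the boundary values from Parts \ref{enu:lem:J_c:1}--\ref{enu:lem:J_c:3} to exclude every landing point on $\realR$ except $s_1(x)$. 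The paper's parametrization buys more than the lemma itself: the explicit formula \eqref{eq:gamma'_1x} is reused later in the appendix to compute the limiting shapes of $\gamma_1(x)$ as $x\to 0$ and $x\to\infty$, and in the proof of Lemma \ref{lem:determine_b}. Your argument is cleaner and avoids the change of variables, but would need supplementing if those later explicit estimates were required.
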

	
	Part \ref{enu:lem:J_c:4} of this lemma will be proved in Appendix \ref{sec:J_x_prop}. Parts \ref{enu:lem:J_c:1}, \ref{enu:lem:J_c:2}, and \ref{enu:lem:J_c:3} can be verified by direct computation, so we omit the details.
	
	Define
	\begin{equation} \label{eq:gamma_2}
		\gamma_2(x) = \{z \in \compC_- \mid \bar{z} \in \gamma_1(x) \}, 
	\end{equation}
	Then the union of $\gamma_1(x)$ and $\gamma_2(x)$ encloses a bounded region $D_x \subseteq \compC$. In this paper, we orient $\gamma_1(x)$ and $\gamma_2(x)$ from $s_1(x)$ to $s_2(x)$, unless stated otherwise. 
	For each $x \in (0, \infty)$, we let
	\begin{equation} \label{eq:defn_b(x)}
		\mathfrak{b}(x) := \J[x](s_2(x)) = \frac{1}{4}J_x(s_2(x))^2 = \frac{1}{4} \left( \sqrt{(x + 1)^2 - 1} + \arcosh(x + 1) \right)^2.
	\end{equation}
	It is clear that $\mathfrak{b}(x)$ is a continuous function of $x$ and increases monotonically from $0$ to $\infty$ as $x$ ranges from $0$ to $\infty$.
	We have the following lemma regarding the conformality of $\J[x]$. The proof will be
	postponed to Appendix \ref{sec:J_x_prop}. 
	\begin{lem} \label{lem:conformal_mapping}
		$\J[x]$ maps $\compC \setminus \overline{D_x}$ conformally to $\compC \setminus [0, \mathfrak{b}(x)]$ and maps $D_x \setminus [0, 1]$ conformally to $\paraP \setminus [0, \mathfrak{b}(x)]$.
	\end{lem}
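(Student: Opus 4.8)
Proof plan. The plan is to establish both assertions by the boundary-correspondence method: show that $\J[x]$ is holomorphic on the domain in question, identify the image of its boundary with the boundary of the target slit domain by means of Lemma \ref{lem:J_c}, check that $\J[x]$ restricts to a homeomorphism of boundaries (in the prime-end sense), and then apply the argument principle to conclude that $\J[x]$ has degree one there, hence is a conformal bijection. First, the holomorphy: since $s_1(x)<0<1<s_2(x)$ and $\gamma_1(x)\subset\compC_+$, $\gamma_2(x)\subset\compC_-$, the branch cut $[0,1]$ of $\J[x]$ lies in the interior of $D_x$ and $\overline{D_x}\cap\realR=[s_1(x),s_2(x)]$; hence $\J[x]$ is single-valued and holomorphic on each of $\compC\setminus\overline{D_x}$ and $D_x\setminus[0,1]$. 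Because $\arcosh\frac{s+1}{s-1}\to 0$ as $s\to\infty$, one has $\J[x](s)=\tfrac{x^2}{4}s+\bigO(1)$, so $\J[x]$ has a simple pole at $\infty$ and extends to a holomorphic map $\hat\compC\setminus\overline{D_x}\to\hat\compC$ fixing $\infty$; on $D_x\setminus[0,1]$ it has no pole, but $\arcosh\frac{s+1}{s-1}\sim-\log(s-1)$ as $s\to1$ forces $\J[x](s)\to\infty$, so the slit tip $s=1$ is sent to $\infty$. A computation gives $J_x'(s)=\frac{x(s-1)-2}{2\sqrt{s}\,(s-1)}$, whose only zero is $s_2(x)$; thus the critical points of $\J[x]=\tfrac14 J_x^2$ are among the zeros of $J_x$ and the point $s_2(x)$, and the ones relevant to the boundary analysis are $s_1(x)$ and $s_2(x)$, both lying on $\Gamma:=\gamma_1(x)\cup\gamma_2(x)=\partial D_x$, which by Lemma \ref{lem:J_c}\ref{enu:lem:J_c:4} is a Jordan curve.

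For the boundary images I would use Lemma \ref{lem:J_c}. By part \ref{enu:lem:J_c:4} and conjugation symmetry, $J_x$ is real and increases from $0$ to $J_x(s_2(x))$ along each of $\gamma_1(x)$ and $\gamma_2(x)$, so $\J[x]=\tfrac14 J_x^2$ maps each of them monotonically onto $[0,b(x)]$ with $s_1(x)\mapsto0$ and $s_2(x)\mapsto b(x)$, where $b(x)=\J[x](s_2(x))=\tfrac14 J_x(s_2(x))^2$; a local computation shows the two arcs realize the two sides of the slit $[0,b(x)]$ (and at $s_1(x)$ and $s_2(x)$ the map is locally two-to-one, but sends the two branches to opposite sides of the slit, so bijectivity onto the two-sided slit is preserved). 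Hence $\J[x]$ maps $\Gamma$ homeomorphically onto the (two-sided) slit $[0,b(x)]$. On the two sides of $[0,1]$, part \ref{enu:lem:J_c:3} gives $J_x=t\mp i\pi$ with $t=\Re J_x$ increasing from $0$ to $+\infty$; writing $\J[x]=\tfrac14(t\mp i\pi)^2=u+iv$ yields $u=\frac{v^2}{\pi^2}-\frac{\pi^2}{4}$, so the two sides of $[0,1]$ are mapped onto the two arms $\rho$ and $\bar\rho$ of the parabola, with $s=0\mapsto-\pi^2/4$ (the vertex) and $s=1\mapsto\infty$; thus $\J[x]$ maps the doubled slit $[0,1]$ homeomorphically onto $\rho\cup\{-\pi^2/4\}\cup\bar\rho$ together with $\infty$, i.e.\ onto the boundary of $\paraP$ on the sphere. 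Since $\J[x]$ is open and its cluster set along each boundary arc is contained in the corresponding image curve, $\J[x](\compC\setminus\overline{D_x})$ is contained in the single component $\hat\compC\setminus[0,b(x)]$ of the complement of $[0,b(x)]$, while $\J[x](D_x\setminus[0,1])$ is contained in a component of the complement of $[0,b(x)]\cup\partial\paraP$; that component must be $\paraP\setminus[0,b(x)]$, because just off $\gamma_1(x)$ on the $D_x$-side the values of $\J[x]$ lie near $(0,b(x))\subset\paraP$.

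It remains to count preimages. For $w\notin[0,b(x)]$ the curve $\J[x](\Gamma)$ runs along the real segment $[0,b(x)]$ from $0$ to $b(x)$ and back, so its winding number about $w$ is $0$; combined with the single simple pole at $\infty$, the argument principle on $\hat\compC\setminus\overline{D_x}$ gives exactly one $w$-point there, counted with multiplicity, which forces both injectivity and the nonvanishing of $\J[x]'$ on that domain; a short comparison with the two-to-one boundary behaviour shows $\J[x]$ omits $[0,b(x)]$ there, and therefore $\J[x]:\compC\setminus\overline{D_x}\to\compC\setminus[0,b(x)]$ is a conformal bijection fixing $\infty$. For the second statement, $D_x\setminus[0,1]$ is doubly connected with outer boundary $\Gamma$ and inner boundary the doubled slit $[0,1]$; excising a small disc around $s=1$ (where $\J[x]$ is unbounded) and applying the argument principle on the resulting domain, the number of preimages of any $w\in\paraP\setminus[0,b(x)]$ equals the winding of $\J[x](\Gamma)$ about $w$ plus the winding of the image of the doubled slit $[0,1]$ about $w$, namely $0+1=1$ (the second winding being $+1$ because $w$ lies on the $\paraP$-side of the parabola, the orientations being forced to agree by non-negativity of the count); letting the disc shrink and arguing as before shows $\J[x]:D_x\setminus[0,1]\to\paraP\setminus[0,b(x)]$ is a conformal bijection.

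The main obstacle will be the careful analysis at the three singular boundary points: the simple pole at $\infty$ and, above all, the slit endpoints $s=0$ and $s=1$, at the latter of which $\J[x]$ blows up, so that the argument principle cannot be applied naively and one must pass to the Riemann sphere and excise a shrinking neighbourhood of $s=1$; one must also verify that the boundary critical points $s_1(x)$ and $s_2(x)$, where $\J[x]$ is locally two-to-one, do not disturb the boundary homeomorphism onto the two-sided slit, and one relies on Lemma \ref{lem:J_c}\ref{enu:lem:J_c:4} (proved in Appendix \ref{sec:J_x_prop}) for the fact that $\gamma_1(x)$ is a simple arc, so that $\Gamma$ is a genuine Jordan curve.
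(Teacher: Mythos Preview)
Your boundary-correspondence/argument-principle plan is correct and can be made rigorous with the care you describe at the slit tips. The paper, however, takes a cleaner route that sidesteps the two technical obstacles you flag. Instead of attacking the full domains directly, it first invokes the conjugation symmetry $\J[x](\bar s)=\overline{\J[x](s)}$ and works only in $\compC_+$. There both relevant domains become \emph{simply connected}: $\compC_+\setminus\overline{D_x}$ has boundary $(-\infty,s_1(x)]\cup\gamma_1(x)\cup[s_2(x),\infty)$, which Lemma~\ref{lem:J_c} parts \ref{enu:lem:J_c:1}, \ref{enu:lem:J_c:2}, \ref{enu:lem:J_c:4} carry homeomorphically to $\realR=\partial\compC_+$; and $D_x\cap\compC_+$ has boundary $\gamma_1(x)\cup[s_1(x),s_2(x)]$, which the same lemma (now using all four parts, with the one-sided boundary values on $[0,1]$) carries to $\partial(\compC_-\cap\paraP)$. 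A standard boundary-correspondence argument on large semidiscs (truncate at radius $R$, check the image of $\partial C_R$ is a Jordan curve in $\overline{\compC_+}$, apply the argument principle, let $R\to\infty$) then yields the two half-plane bijections, and conjugation reconstructs the full statement. What this buys over your approach: (i) no doubly-connected domain ever appears, so there is no need to track two winding-number contributions; (ii) the point $s=1$ is merely an ordinary boundary point of the simply connected region $D_x\cap\compC_+$, sent to $\infty$ along the real axis, so the large-$R$ truncation already handles the unboundedness and no separate excision is required. Your route is conceptually uniform---one argument-principle computation per target domain---but the price is exactly the delicate endpoint and lollipop-boundary analysis you anticipate.
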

	
	Let the functions $\Iinv_{x, 1}$ and $\Iinv_{x, 2}$ be the inverse functions of $\J[x]$, such that $\Iinv_{x, 1}$ is the inverse map of $\J[x]$ from $\compC \setminus [0, \mathfrak{b}(x)]$ to $\compC \setminus \overline{D_x}$, and $\Iinv_{x, 2}$ is the inverse map of $\J[x]$ from $\paraP \setminus [0, \mathfrak{b}(x)]$ to $D_x \setminus [0, 1]$:
	\begin{align}
		\Iinv_{x, 1}(\J[x](s)) = {}& s, && s \in \compC \setminus \overline{D_x}, \label{eq:defn_I1} \\
		\Iinv_{x, 2}(\J[x](s)) = {}& s, && s \in D_x \setminus [0, 1]. \label{eq:defn_I2}
	\end{align}
	We then denote for $u \in (0, \mathfrak{b}(x))$
	\begin{align}
		\Iinv_{x, +}(u) := {}& \lim_{\epsilon \to 0_+} \Iinv_{x, 1}(u + i\epsilon) = \lim_{\epsilon \to 0_+} \Iinv_{x, 2}(u - i\epsilon), \label{eq:Iinv_+} \\
		\Iinv_{x, -}(u) := {}& \lim_{\epsilon \to 0_+} \Iinv_{x, 1}(u - i\epsilon) = \lim_{\epsilon \to 0_+} \Iinv_{x, 2}(u + i\epsilon). \label{eq:Iinv_-}
	\end{align}
	We note that $\Iinv_{x, +}(x)$ lies in $\compC_+$, $\Iinv_{x, -}(x)$ lies in $\compC_-$, and their loci are the upper and lower boundaries of $D_x$ (i.e., $\gamma_1(x)$ and $\gamma_2(x)$, respectively). For later use, we define for $\xi \in (0, \mathfrak{b}(x))$
	\begin{equation} \label{eq:defn_F_x(uxi)}
		F_x(u; \xi) = \log \left\lvert \frac{(\sqrt{\Iinv_{x, +}(u)} + \sqrt{\Iinv_{x, +}(\xi)})(\sqrt{\Iinv_{x, -}(u)} - \sqrt{\Iinv_{x, +}(\xi)})}{(\sqrt{\Iinv_{x, +}(u)} - \sqrt{\Iinv_{x, +}(\xi)})(\sqrt{\Iinv_{x, -}(u)} + \sqrt{\Iinv_{x, +}(\xi)})} \right\rvert.
	\end{equation}
	
	The schematic illustration is given in Figures \ref{fig:J_c} and \ref{fig:J_on_D}. (To simplify the notation, we assume $x = c$ in Figure \ref{fig:J_on_D}; see \eqref{eq:simplified}.)

	\begin{figure}[htb]
		\centering
		\includegraphics{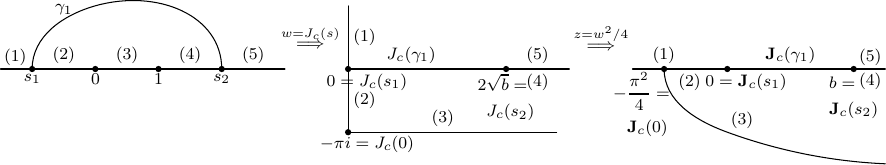}
		\caption[$J_c$ and $\J$.]{The schematic illustration of $J_c$ and $\J$ on $\compC_+$. (The definitions of $J_c$ and $\J$ are extended to $\compC_-$ naturally by complex conjugation.) If $c$ is replaced by a general $x \in (0, \infty)$, then $J_x(s_2(x)) = 2\sqrt{\mathfrak{b}(x)}$ and $\J[x](s_2(x)) = \mathfrak{b}(x)$ will change, while $J_x(s_1(x)) = \J[x](s_1(x)) = 0$, $J_x(0) = \pi i$, and $\J[x](0) = -\pi^2/4$ remain unchanged.}
		\label{fig:J_c}
	\end{figure}

	\begin{figure}[htb]
		\centering
		\includegraphics{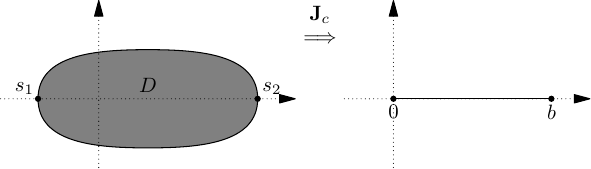} \\
		\vspace{1cm}
		\includegraphics{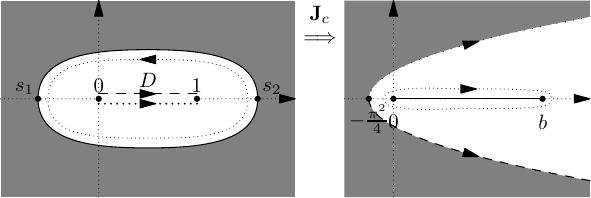}
		\caption[$\J$ transform]{$\J$ maps $\compC \setminus \overline{D}$ to $\compC \setminus [0, b]$ and maps $D \setminus [0, 1]$ to $\paraP \setminus [0, b]$.}
		\label{fig:J_on_D}
	\end{figure}
	
	The definitions of $J_x$ and $\J[x]$ are independent of $V$. Suppose $V$ is one-cut regular with a hard edge, such that the equilibrium measure $\mu_V$ associated with $V$ is supported on $[0, b]$ and has the density function $\psi(x)$. Given such $b>0$, we let $c=c(b) > 0$ be the unique solution to
	\begin{align} \label{eq:relation_b_c}
		\mathfrak{b}(c) = {}& b, && \text{or equivalently}, & \sqrt{(c + 1)^2 - 1} + \arcosh(c + 1) = {}& 2\sqrt{b}.
	\end{align}
	
	Throughout this paper, we denote
	\begin{align} \label{eq:simplified}
		s_i = {}& s_i(c), & \gamma_i = {}& \gamma_i(c), & D = {}& D_c, & \Iinv_i(u) = {}& \Iinv_{c, i}(u), & \Iinv_{\pm}(u) = {}& \Iinv_{c, \pm}(u), & F(u; \xi) = {}& F_c(u; \xi).
	\end{align}

        Below, we state the result on the constructive description of the equilibrium measure $\mu$ introduced in Proposition \ref{thm:potential_theory}, under the assumption \eqref{eq:one-cut_regular_cond} in Theorem \ref{thm:one-cut_rgular_w_hard_edge}.

\begin{thm} \label{eq:find_b_psi}
	Suppose $V$ satisfies all assumptions in Theorem \ref{thm:one-cut_rgular_w_hard_edge}.
	\begin{enumerate}
		\item \label{enu:eq:find_b_psi_1}
		The parameter $c$ in \eqref{eq:relation_b_c} is the unique solution to the following equation for $x \in (0, \infty)$:
		\begin{equation} \label{eq:determine_c}
			\frac{1}{2\pi i} \oint_{\gamma(x)} \frac{\J[x]'(\xi)V'(\J[x](\xi))}{\xi - s_2(x)} d\xi = \frac{1}{s_2(x) - 1},
		\end{equation}
		where $\gamma(x) = \gamma_1(x) \cup \gamma_2(x)$ is the boundary of $D_x$, with positive orientation. Hence, $b$---the right-end point of the support of the equilibrium measure $\mu$---is determined by $b = \mathfrak{b}(c)$.
		\item \label{enu:eq:find_b_psi_2}
		The density function $\psi(x)$ (defined in \eqref{eq:density_formula}) of the equilibrium measure $\mu$ is determined by
		\begin{equation}\label{eq:psiden}
			\psi(x) = \frac{1}{4\pi^2 \sqrt{x}} \int^b_0 U'(u) F(u; x) du,
		\end{equation}
		where $U$ is defined in \eqref{eq:one-cut_regular_cond} and $F(u; x)$ is defined by \eqref{eq:simplified} and \eqref{eq:defn_F_x(uxi)}.
		\item \label{enu:eq:find_b_psi_3}
		The (random) empirical measure $\frac{1}{n} \sum_{i=1}^n \delta_{\lambda_i}$ converges to $\mu$ in probability.
	\end{enumerate}
\end{thm}
Parts \ref{enu:eq:find_b_psi_1} and \ref{enu:eq:find_b_psi_2} of Theorem \ref{eq:find_b_psi} will be proved in the remaining part of Section \ref{sec:constr_eq_measure}. Part \ref{enu:eq:find_b_psi_3} follows from \cite[Theorem 2.2]{Borot-Guionnet-Kozlowski13} (see also Proposition \ref{thm:mean} and the paragraph above it).


\subsection{A technical lemma} \label{subsec:support_of_eq_measure}

Recall the mapping $\J[x]$ defined in \eqref{eq:defn_J_c} and $\gamma(x)$ defined in Part \ref{enu:eq:find_b_psi_1} of Theorem \ref{eq:find_b_psi}.

\begin{lem} \label{lem:determine_b}
	Suppose $V$ satisfies the conditions required in Theorem \ref{thm:one-cut_rgular_w_hard_edge}. There exists a unique $x \in (0, \infty)$ such that equation \eqref{eq:determine_c} (with $x$ as the unknown) holds.
\end{lem}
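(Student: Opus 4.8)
The plan is to show that the function
\[
  \Phi(x) := \frac{1}{2\pi i} \oint_{\gamma(x)} \frac{\J[x]'(\xi) V'(\J[x](\xi))}{\xi - s_2(x)} d\xi - \frac{1}{s_2(x) - 1}
\]
is strictly monotone in $x$ on $(0, \infty)$ and changes sign, so that \eqref{eq:determine_c} has exactly one root. The first step is to rewrite the contour integral in a form that makes the $x$-dependence transparent. Using the conformal map $\J[x]: D_x \setminus [0, 1] \to \paraP \setminus [0, b(x)]$ from Lemma \ref{lem:conformal_mapping}, I would push the integral forward to the $u = \J[x](\xi)$ variable; the contour $\gamma(x)$ maps to a loop around $[0, b(x)]$, and since $\J[x](s_2(x)) = b(x)$, the pole at $\xi = s_2(x)$ becomes a pole at $u = b(x)$ with a Jacobian factor. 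This should turn $\Phi(x)$ into an integral of $V'(u)$ against an explicit kernel over $[0, b(x)]$ plus the $-1/(s_2(x) - 1)$ term, essentially the normalization condition $\int d\mu = 1$ in disguise — indeed this lemma is precisely the part of Theorem \ref{eq:find_b_psi} that fixes $c$, so $\Phi(x) = 0$ should be equivalent to the statement that the candidate density integrates to $1$.

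The second step is the monotonicity. I would differentiate $\Phi$ (or the transformed integral) in $x$, using that $b(x)$ is strictly increasing (noted after \eqref{eq:defn_b(x)}) and $s_2(x) = 1 + 2/x$ is strictly decreasing. The hypothesis $U'(x) > 0$ where $U(x) = V'(x)\sqrt{x}$ from \eqref{eq:one-cut_regular_cond} is what should make the derivative have a fixed sign: after the change of variables the integrand naturally groups $V'$ with a $\sqrt{\cdot}$ factor, so an integration by parts converts $V'$ into $U'$ against a positive kernel. The point of the one-cut-regular hypothesis is exactly to guarantee this sign, so I expect the computation to be engineered so that $U' > 0$ gives strict monotonicity of $\Phi$.

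The third step is the boundary behaviour: I would check $\lim_{x \to 0_+} \Phi(x)$ and $\lim_{x \to \infty} \Phi(x)$ have opposite signs (or that $\Phi$ runs over all of $\realR$, or at least straddles $0$). As $x \to 0$, $s_2(x) \to \infty$ and $b(x) \to 0$, so the $-1/(s_2(x)-1)$ term vanishes and the integral over a shrinking contour should also vanish or blow up in a controlled way; the growth condition \eqref{eq:growth_at_infty} on $V$ controls the $x \to \infty$ end. Combined with monotonicity and continuity (clear from continuity of all the ingredients $s_1(x), s_2(x), \gamma(x), \J[x]$ in $x$), the intermediate value theorem yields existence and uniqueness.

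The main obstacle I anticipate is the first step — carefully transporting the contour integral through the conformal map $\J[x]$ while tracking how the branch cuts $[0,1]$ (in the $\xi$-plane) and $[0, b(x)]$, $\paraP$ (in the $u$-plane) interact, and correctly accounting for the residue contribution at $s_2(x)$ versus the cut contribution. Once $\Phi(x)$ is in the "real integral of $U'$ against an explicit positive kernel over $[0,b(x)]$" form, the monotonicity and sign-change arguments should be routine, but getting to that form cleanly, with all Jacobians and orientations right, is where the real work lies. I would lean on the detailed analysis of $\J[x]$ promised for Appendix \ref{sec:J_x_prop} (proof of Lemmas \ref{lem:J_c} and \ref{lem:conformal_mapping}) to justify the manipulations.
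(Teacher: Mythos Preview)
Your three-step plan (monotonicity in $x$ via the hypothesis $U'>0$, boundary behaviour, intermediate value theorem) is exactly the paper's strategy, and you correctly identify that the whole point of \eqref{eq:one-cut_regular_cond} is to give the derivative a sign. Two execution points differ from the paper and are worth flagging.

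First, your anticipated ``main obstacle'' --- tracking a residue at $\xi=s_2(x)$ through the conformal map --- is a phantom. Since $s_2(x)$ is the critical point of $\J[x]$, one has $\J[x]'(s_2(x))=0$, and a short computation from \eqref{eq:defn_J_c} gives
\[
\frac{\J[x]'(\xi)}{\xi-s_2(x)}=\frac{x\,J_x(\xi)}{4\sqrt{\xi}\,(\xi-1)},
\]
so the integrand is regular at $s_2(x)$ and \eqref{eq:determine_c} rewrites as $F(x)=2$ with $F(x)=\frac{1}{2\pi i}\oint_{\gamma'}\frac{J_x(\xi)V'(\J[x](\xi))}{\sqrt{\xi}(\xi-1)}d\xi$. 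This removes the moving pole and the $-1/(s_2(x)-1)$ term in one stroke; differentiating $F$ in $x$ then produces $U'(\J[x](\xi))$ directly (no integration by parts needed), after which the change of variable $y=\J[x](\xi)$ yields a real integral of $U'(y)$ against an explicitly positive kernel. Your ``change variable first, then differentiate, then integrate by parts'' route would get there too, but you would be differentiating $\Iinv_{x}$, $s_2(x)$, and the endpoint $b(x)$ all at once, which is substantially messier.

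Second, for the boundary behaviour the paper does not use the growth condition \eqref{eq:growth_at_infty} directly; instead it computes $F(x)$ explicitly for the linear potential $V'=C$ (getting $F(x)=Cx$) and then argues by comparison via the integral formula for $F'$. Your sketch of the $x\to 0$ and $x\to\infty$ limits is too vague as stated; the comparison-with-linear-$V$ trick is what makes this clean.
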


\begin{proof}
	Using the formulas \eqref{eq:defn_J_c} and \eqref{eq:defn_s_2_x}, we can rewrite \eqref{eq:determine_c} as
	\begin{equation} \label{eq:determine_b_alt}
		F(x) = 2, \quad \text{where} \quad F(x) = \frac{1}{2\pi i} \oint_{\gamma'} \frac{J_x(\xi) V'(\J[x](\xi))}{\sqrt{\xi}(\xi - 1)} d\xi.
	\end{equation}
	Here, $\gamma'$ can be $\gamma(x)$ or a slightly larger contour enclosing $\gamma(x)$, provided that $V'(\J(\xi))$ is well-defined therein. Let $U(z) = V'(z) \sqrt{z}$, where $\sqrt{z}$ takes the principal branch. By direct computation, we have
	\begin{equation}
		F'(x) = \frac{1}{4\pi i} \oint_{\gamma'} U'(\J[x](\xi)) J_x(\xi) \frac{d\xi}{\xi - 1}.
	\end{equation}
	Setting $\gamma' = \gamma(x)$ and changing variables to $y = \J[x](\xi)$, we obtain
	\begin{equation} \label{eq:f'(x)}
		\begin{split}
			F'(x) = {}& \frac{1}{x \pi i} \int^{\mathfrak{b}(x)}_0 U'(y) \left( \frac{\sqrt{\Iinv_{x, -}(y)}}{\Iinv_{x, -}(y) - s_2(x)} - \frac{\sqrt{\Iinv_{x, +}(y)}}{\Iinv_{x, +}(y) - s_2(x)} \right) dy \\
			= {}-& \frac{2}{x \pi} \int^{\mathfrak{b}(x)}_0 U'(y) \Im \frac{\sqrt{\Iinv_{x, +}(y)}}{\Iinv_{x, +}(y) - s_2(x)} dy.
		\end{split}
	\end{equation}
	For $y \in (0, \mathfrak{b}(x))$, note that $U'(y) = V''(y) x^{1/2} + \frac{1}{2} V'(y) x^{-1/2} > 1$. By the definition of $\Iinv_{x, +}$ (see \eqref{eq:Iinv_+}), $\Iinv_{x, +}(y) \in \gamma_1(x)$ for all $y \in (0, \mathfrak{b}(x))$. In the proof of Part \ref{enu:lem:J_c:4} of Lemma \ref{lem:J_c} (Appendix \ref{sec:J_x_prop}), we provide a parametrization of $\gamma_1(x)$ in \eqref{eq:gamma'_1x} (it is shown there that $\gamma'_1(x)$ in \eqref{eq:gamma'_1x} equals $\gamma_1(x)$). Clearly, $\arg \Iinv_{x, +}(y) \in (0, \pi)$ and $\arg \sqrt{\Iinv_{x, +}(y)} \in (0, \pi/2)$. Additionally, $\arg(\Iinv_{x, +}(y) - s_2(x)) \in (\pi/2, \pi)$. Thus, $\arg(\sqrt{\Iinv_{x, +}(y)}/(\Iinv_{x, +}(y) - s_2(x))) \in (-\pi, 0)$, and similarly $\arg(\sqrt{\Iinv_{x, +}(y)}/(\Iinv_{x, +}(y) - s_2(x))) \in (-\pi, 0)$. We conclude that the integrand on the right-hand side of \eqref{eq:f'(x)} is positive, so $F(x)$ is a monotonically increasing function.

	In the special case where $V'(\J(s)) = C > 0$ (a constant), which is equivalent to $U'(y) = (C/2) y^{-1/2}$, we have
	\begin{equation} \label{eq:linear_V_eq_measure}
		\left. F(x) \right\rvert_{V'(\J(s)) = C} = \frac{C}{2\pi i} \oint_{\gamma'} \frac{J_x(\xi)}{\sqrt{\xi}(\xi - 1)} d\xi = \frac{C}{2\pi i} \oint_{\gamma'} \frac{x}{\xi - 1} d\xi + \frac{C}{2\pi i} \oint_{\gamma'} \frac{\arcosh \frac{\xi + 1}{\xi - 1}}{\sqrt{\xi}(\xi - 1)} d\xi = Cx.
	\end{equation}
	The first integral equals $Cx$, and the second vanishes---this is easily verified by deforming $\gamma'$ into a very large circular contour. For general $V$ satisfying assumption \eqref{eq:one-cut_regular_cond}, a comparison argument between $U'(y)$ and $(C/2) y^{-1/2}$, combined with the integral formula \eqref{eq:f'(x)}, shows that $F(x) \to 0$ as $x \to 0$ and $F(x) \to \infty$ as $x \to \infty$. Thus, $F(x) = 2$ has a unique solution on $(0, \infty)$.
\end{proof}
We denote this solution as $c'$. Later, in Section \ref{subsec:density_of_eq_measure}, we will show that the unique solution to \eqref{eq:determine_c} equals $c$---the parameter in \eqref{eq:relation_b_c}.

In the special case where $V(x) = x/\Mconst$, we have $V'(\J(s)) = \Mconst^{-1}$. By \eqref{eq:linear_V_eq_measure}, the solution to \eqref{eq:determine_c} is
\begin{equation}
	c' = 2\Mconst.
\end{equation}
This also proves \eqref{eq:b_linear_case} in Theorem \ref{thm:eq_measure_linear}.

\subsection{The $\tilde{\gfn}$-functions and the density of the equilibrium measure} \label{subsec:density_of_eq_measure}

In this subsection, we first assume the existence of $b$ and $\psi$ (the density function of $\mu_V$ on $(0, b)$) and define the functions
\begin{align} \label{eq:defn_g_gtilde}
	\gfn(z) := {}& \int^b_0 \log(z - x) \psi(x) dx, & \tilde{\gfn}(z) := {}& \int^b_0 \log(f(z) - f(x)) \psi(x) dx,
\end{align}
with branch cuts for the logarithms taken where $z \in (-\infty, x)$ and $f(z) \in (-\infty, f(x))$, respectively. Using these, we derive a Riemann-Hilbert (RH) problem satisfied by $\gfn'(z)$ and $\tilde{\gfn}'(z)$. We then solve this RH problem via direct calculation, confirm the value of $b$, and express $\psi$ in a computable form. This completes the proofs of Theorems \ref{thm:one-cut_rgular_w_hard_edge} and \ref{eq:find_b_psi}.

Note that $\gfn(z)$ and $\tilde{\gfn}(z)$ (defined in \eqref{eq:defn_g_gtilde}) are analytic on $\compC \setminus (-\infty, b]$ and $\paraP \setminus (-\pi^2/4, b]$, respectively.

\begin{lem} \label{lem:regular_g}
	Suppose $V$ is one-cut regular with a hard edge. Then
	\begin{enumerate}
		\item \label{enu:lem:regular_g:1}
		For $x \in (-\infty, 0)$, $\gfn_{\pm}(x)$ are continuous, and
		\begin{equation}
			\gfn_+(x) = \gfn_-(x) + 2\pi i;
		\end{equation}
		for $x \in (-\pi^2/4, 0)$, $\tilde{\gfn}_{\pm}(x)$ are continuous, and
		\begin{equation}
			\tilde{\gfn}_+(x) = \tilde{\gfn}_-(x) + 2\pi i;
		\end{equation}
		and for $z \in \rho$ (the lower parabola defined in \eqref{eq:defn_rho}), $\tilde{\gfn}_{\pm}(z)$ are continuous, and
		\begin{equation}
			\tilde{\gfn}_-(\bar{z}) = \tilde{\gfn}_+(z) + 2\pi i,
		\end{equation}
		where $\bar{z}$ is the complex conjugate of $z$ (lying on the upper parabola), with both $\rho$ and $\bar{\rho}$ oriented from left to right.
		\item
		For $x \in (0, b)$, we have
		\begin{equation} \label{eq:psi_and_G}
			-\frac{1}{2\pi i}\left(\gfn_+'(x) - \gfn'_-(x)\right) = -\frac{1}{2\pi i}\left(\tilde{\gfn}'_+(x) - \tilde{\gfn}'_-(x)\right) > 0,
		\end{equation}
		and the left-hand side equals $\psi(x)$.
		\item \label{enu:regular_g:3} 
		As $z \to b$, the limits of $\gfn(z), \tilde{\gfn}(z), \gfn'(z)$ and $\tilde{\gfn}'(z)$ exist. As $x \to b$, $\gfn'(z) - \gfn'(b) = \bigO(|z - b|^{1/2})$, $\tilde{\gfn}'(z) - \tilde{\gfn}'(b) = \bigO(|z - b|^{1/2})$, and
		\begin{equation} \label{eq:limit_g_at_b}
			\lim_{x \to b_-} \frac{i\left(\gfn_+'(x) - \gfn'_-(x)\right)}{\sqrt{b - x}} = \lim_{x \to b_-} \frac{i\left(\tilde{\gfn}'_+(x) - \tilde{\gfn}'_-(x)\right)}{\sqrt{b - x}} \in (0, +\infty),
		\end{equation}
		As $z \to 0$ in $\compC_+$ or $\compC_-$, the limits of $\gfn(z)$ and $\tilde{\gfn}(z)$ exist. As $x \to 0$, $\gfn'(z) = \bigO(|z|^{-1/2})$, $\tilde{\gfn}'(z) = \bigO(|z|^{-1/2})$, and
		\begin{equation} \label{eq:limit_g_at_0}
			\lim_{x \to 0_+} i\left(\gfn_+'(x) - \gfn'_-(x)\right)\sqrt{x} = \lim_{x \to 0_+} i\left(\tilde{\gfn}'_+(x) - \tilde{\gfn}'_-(x)\right)\sqrt{x} \in (0, +\infty).
		\end{equation}
		The two limits in \eqref{eq:limit_g_at_b} and \eqref{eq:limit_g_at_0} are $2\pi \psi_b$ and $2\pi \psi_0$, respectively.
		\item
		As $z \to \infty$ in $\compC$, $\gfn'(z) = z^{-1} + \bigO(z^{-2})$. As $f(z) \to \infty$ (i.e., $\Re z \to \infty$), $\tilde{\gfn}'(z) = z^{-1/2}\left(1 + \bigO(f(z)^{-1})\right)$.
		\item \label{enu:regular_g:5}
		For $x \in [0, b]$, there exists a constant $\ell$ such that
		\begin{equation} \label{eq:Euler-Lagrange}
			\gfn_{\pm}(x) + \tilde{\gfn}_{\mp}(x) - V(x) - \ell = 0.
		\end{equation}
		For $x \in (b, \infty)$, we have
		\begin{equation} \label{eq:gfn_outer}
			\gfn_{\pm}(x) + \tilde{\gfn}_{\mp}(x) - V(x) - \ell < 0.
		\end{equation}
	\end{enumerate}
	Conversely, if functions $\gfn(x)$ and $\tilde{\gfn}(x)$ (analytic on $\compC \setminus (-\infty, b]$ and $\paraP \setminus (-\pi^2/4, b]$, respectively) satisfy all the properties above, then $V$ is one-cut regular with a hard edge, and its equilibrium measure is $d\mu(x) = \psi(x) dx$ (supported on $[0, b]$) with
	\begin{equation} \label{eq:psi_by_G}
		\psi(x) = -\frac{1}{2\pi i}\left(G_+(x) - G_-(x)\right) = -\frac{1}{2\pi i}\left(\tilde{G}_+(x) - \tilde{G}_-(x)\right),
	\end{equation}
	where
	\begin{align} \label{eq:defn_G}
		G(z) = {}& \gfn'(z) = \int^b_0 \frac{\psi(x) dx}{z - x}, & \tilde{G}(z) = {}& \tilde{\gfn}'(z) = \int^b_0 \frac{f'(z) \psi(x) dx}{f(z) - f(x)}.
	\end{align}
\end{lem}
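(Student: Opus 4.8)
The plan is to treat the lemma as two separate implications: the forward direction (properties \ref{enu:lem:regular_g:1}--\ref{enu:regular_g:5}, assuming $V$ is one-cut regular with a hard edge) is essentially a bookkeeping exercise with boundary values of the logarithmic and Cauchy transforms in \eqref{eq:defn_g_gtilde}, combined with the mapping properties of $f$ recorded in Lemma \ref{lem:f_prop} and Lemma \ref{lem:conformal_mapping}; the converse is where analysis enters, via a Liouville-type continuation argument that identifies the given $\gfn,\tilde{\gfn}$ with the transforms of the density reconstructed from \eqref{eq:psi_by_G}.

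\textbf{Forward direction.} For Part \ref{enu:lem:regular_g:1}, the branch cut of $\log(z-x)$ is $z\in(-\infty,x)$, so on $(-\infty,0)$ every factor in $\gfn$ jumps by $2\pi i$ and $\int d\mu=1$ yields $\gfn_+=\gfn_-+2\pi i$; for $\tilde{\gfn}$ on $(-\pi^2/4,0)$ one uses $f(x)\in(-1,0)$ there while $f(y)>0$ for $y\in(0,b)$, so $f(z)-f(y)$ crosses $\realR_-$; on $\rho$ one uses that $f$ maps both $\rho$ and $\bar\rho$ onto $(-\infty,-1]$ and glues to a conformal map of $\paraP$ onto $\compC$, so crossing the glued locus winds $f(z)-f(y)$ once around $0$ (here the orientation conventions of Lemma \ref{lem:f_prop} fix the sign). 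For the interior, on $(0,b)$ one computes $\gfn_\pm(x)=\int\log|x-y|\,d\mu(y)\pm i\pi\mu([x,b])$ and, since $f'>0$ there, $\tilde{\gfn}_\pm(x)=\int\log|f(x)-f(y)|\,d\mu(y)\pm i\pi\mu([x,b])$; adding $\gfn_\pm+\tilde{\gfn}_\mp$ the imaginary parts cancel, so \eqref{eq:defn_ell} becomes exactly \eqref{eq:Euler-Lagrange}, and for $x>b$ all four boundary values are real and \eqref{eq:defn_ell:2} becomes \eqref{eq:gfn_outer}, giving Part \ref{enu:regular_g:5}. Differentiating these relations, Sokhotski--Plemelj gives $\gfn'_+-\gfn'_-=\tilde{\gfn}'_+-\tilde{\gfn}'_-=-2\pi i\psi$ on $(0,b)$, which is Part~2 together with the positivity from Requirement item \ref{enu:req:one-cut_reg:2}. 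For Part \ref{enu:regular_g:3} one feeds the edge exponents $\psi(x)\sim\psi_0x^{-1/2}$ and $\psi(x)\sim\psi_b(b-x)^{1/2}$ from item \ref{enu:req:one-cut_reg:3} into the standard local analysis of a Cauchy transform near an inverse-square-root hard edge and a square-root soft edge, and transplants to $\tilde{\gfn}$ using that $f$ is analytic with $f(z)=z+\bigO(z^2)$, $f'(0)=1$ near $0$ and $f'(b)\neq0$; the limits then evaluate to $2\pi\psi_0$ and $2\pi\psi_b$. Part~4 is the expansion at infinity: $\gfn'(z)=z^{-1}\int d\mu+\bigO(z^{-2})=z^{-1}+\bigO(z^{-2})$, while $\tilde{\gfn}'(z)=\frac{f'(z)}{f(z)}\bigl(1+\bigO(f(z)^{-1})\bigr)$ with $\frac{f'(z)}{f(z)}=z^{-1/2}\coth\sqrt z=z^{-1/2}\bigl(1+\bigO(f(z)^{-1})\bigr)$.

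\textbf{Converse direction.} Given $\gfn,\tilde{\gfn}$ satisfying all the listed properties, define $\psi$ by \eqref{eq:psi_by_G}; Part~2 makes $\psi$ continuous and positive on $(0,b)$, and \eqref{eq:limit_g_at_0}, \eqref{eq:limit_g_at_b} give $x^{1/2}\psi(x)\to\psi_0>0$ and $(b-x)^{-1/2}\psi(x)\to\psi_b>0$, so $\psi$ is integrable and item \ref{enu:req:one-cut_reg:3} holds. The key step is to prove $\gfn'(z)=\int_0^b\psi(x)(z-x)^{-1}dx$: the difference of the two sides is analytic off $\{0,b\}$ once the jump on $(-\infty,0)$ is killed by $\gfn_+=\gfn_-+2\pi i$ and the jump on $(0,b)$ by the definition of $\psi$; it is $\bigO(|z|^{-1/2})$ near $0$ and bounded near $b$ (comparing Part \ref{enu:regular_g:3} with the known endpoint behavior of the model Cauchy transform), hence the singularities at $0$ and $b$ are removable; and it tends to $0$ at infinity by Part~4, so Liouville forces it to vanish. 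Integrating and absorbing the constant into $\ell$ gives $\gfn(z)=\int_0^b\log(z-x)\psi(x)\,dx$, whence $\gfn'(z)=z^{-1}+\bigO(z^{-2})$ forces $\int_0^b\psi=1$, so $\supp\mu=[0,b]$ by positivity and item \ref{enu:req:one-cut_reg:1} holds. The same scheme, now on the Riemann surface $\paraP$ glued along $\rho\sim\bar\rho$, identifies $\tilde{\gfn}(z)=\int_0^b\log(f(z)-f(x))\psi(x)\,dx$: the jumps across $(-\pi^2/4,0)$ and $\rho$ in Part \ref{enu:lem:regular_g:1} are exactly what makes the difference of the two $\tilde{\gfn}'$-candidates single-valued and analytic on that surface, and Liouville applies because $f$ realizes $\paraP$ as $\compC$. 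Finally the boundary-value computation of the forward direction turns \eqref{eq:Euler-Lagrange} and \eqref{eq:gfn_outer} back into \eqref{eq:defn_ell} and \eqref{eq:defn_ell:2}, i.e.\ items \ref{enu:req:one-cut_reg:4} and \ref{enu:req:one-cut_reg:5}, so $V$ is one-cut regular with a hard edge with equilibrium measure $d\mu(x)=\psi(x)\,dx$.

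\textbf{Main obstacle.} The routine boundary-value algebra aside, the real work is the Liouville/continuation argument in the converse: making it close at the hard edge $0$ and the soft edge $b$ relies on pinning down the exact $x^{-1/2}$ and $(b-x)^{1/2}$ local profiles of the model Cauchy transforms so that the order bounds of Part \ref{enu:regular_g:3} suffice to remove the singularities, and carrying the whole argument out for $\tilde{\gfn}$ on the glued surface $\paraP$ requires checking that the $\rho$-monodromy in Part \ref{enu:lem:regular_g:1} is precisely the one that renders $\tilde{\gfn}'$ single-valued there. Getting the $\pm2\pi i$ signs right in the $\rho$-jump of Part \ref{enu:lem:regular_g:1}, which hinges on the orientation conventions and the detailed mapping data of Lemma \ref{lem:f_prop}, is the other delicate point.
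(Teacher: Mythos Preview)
The paper does not actually prove this lemma: immediately after the statement it says ``The proof of Lemma \ref{lem:regular_g} is straightforward and we omit it.'' Your proposal is correct and is exactly the kind of standard boundary-value/Sokhotski--Plemelj bookkeeping (for the forward direction) and Liouville continuation argument (for the converse) that the word ``straightforward'' is meant to cover, so there is nothing to compare against; you have in fact supplied more detail than the paper does.
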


The proof of Lemma \ref{lem:regular_g} is straightforward and is omitted here. Below, we construct $\gfn(z)$ and $\tilde{\gfn}(z)$ satisfying these properties, under the conditions on $V$ given in Theorem \ref{thm:one-cut_rgular_w_hard_edge}. Note that $b$ is unknown during the construction and must be determined.

To construct $\gfn(z)$ and $\tilde{\gfn}(z)$, it suffices to construct their derivatives $G(z)$ and $\tilde{G}(z)$. Recall the function $\mathfrak{b}(x)$ (defined in \eqref{eq:defn_b(x)}) and $s_1(x), s_2(x)$ (defined in Lemma \ref{lem:J_c}). We define two RH problems for $(H^{(x)}(z), \tilde{H}^{(x)}(z))$ and $N^{(x)}(s)$ with a parameter $x > 0$:

\begin{RHP} \hfill \label{RHP:H}
	\begin{enumerate}
		\item
		$H^{(x)}(z)$ is analytic in $\compC \setminus [0, \mathfrak{b}(x)]$, and $\tilde{H}^{(x)}(z)$ is analytic in $\paraP \setminus [0, \mathfrak{b}(x)]$.
		\item \label{enu:RHP:H_boundary}
		The boundary conditions are as follows: $H^{(x)}_{\pm}(z)$ and $\tilde{H}^{(x)}_{\pm}(z)$ are continuous on $(0, \mathfrak{b}(x))$ \footnote{In all RH problems in this paper, the boundary values of functions are continuous on both sides of the jump curves, unless stated otherwise. This continuity is omitted in later RH problems.} and
		\begin{align}
			H^{(x)}(z) = {}& \frac{1}{z} + \bigO(z^{-2}), && \text{as $z \to \infty$}, \label{eq:boundary_at_infty} \\
			\tilde{H}^{(x)}(z) = {}& z^{-\frac{1}{2}} \left(1 + \bigO(f(z)^{-1}\right), && \text{as $f(z) \to \infty$ (i.e., $\Re z \to +\infty$)}, \label{eq:boundary_at_infty_H} \\
			H^{(x)}(z) = {}& \bigO(1), \quad \tilde{H}^{(x)}(z) = \bigO(1), && \text{as $z \to \mathfrak{b}(x)$}, \\
			H^{(x)}(z) = {}& \bigO(z^{-\frac{1}{2}}), \quad \tilde{H}^{(x)}(z) = \bigO(z^{-\frac{1}{2}}), && \text{as $z \to 0$}.
		\end{align}
		\item \label{enu:RHP:H_jump}
		For $z \in (0, \mathfrak{b}(x))$, we have
		\begin{equation} \label{eq:H^t_jump}
			H^{(x)}_{\pm}(z) + \tilde{H}^{(x)}_{\mp}(z) - V'(z) = 0.
		\end{equation}
		\item \label{enu:RHP:H_4}
		$\tilde{H}^{(x)}(z)$ is continuous up to the boundary $\rho \cup \{ 0 \} \cup \bar{\rho}$ of $\paraP$. For $z \in \rho \subset \compC_-$ such that $z = \lim_{\epsilon \to 0_+} \J[x]'(y + \epsilon i)$ (with $y \in (0, 1)$), we have
		\begin{equation}
			\lim_{\epsilon \to 0_+} \tilde{H}^{(x)}(z + \epsilon i) \J[x]'(y + \epsilon i) = \lim_{\epsilon \to 0_+} \tilde{H}^{(x)}(\bar{z} - \epsilon i) \J[x]'(y - \epsilon i).
		\end{equation}
	\end{enumerate}
\end{RHP}

Our RH problem \ref{RHP:H} is motivated by the properties of $(G(z), \tilde{G}(z))$ (defined in \eqref{eq:defn_G}). Analogous to \eqref{eq:psi_by_G}, we define for $y \in (0, \mathfrak{b}(x))$
\begin{equation} \label{eq:psi^c'}
	\psi^{(x)}(y) = -\frac{1}{2\pi i}\left(H^{(x)}_+(y) - H^{(x)}_-(y)\right) = -\frac{1}{2\pi i}\left(\tilde{H}^{(x)}_+(y) - \tilde{H}^{(x)}_-(y)\right),
\end{equation}
and, analogous to \eqref{eq:defn_G}, we have
\begin{align} \label{eq:defn_H_by_psi}
	H^{(x)}(z) = {}& \int^{\mathfrak{b}(x)}_0 \frac{\psi^{(x)}(y) dy}{z - y}, & \tilde{H}^{(x)}(z) = {}& \int^{\mathfrak{b}(x)}_0 \frac{f'(z) \psi^{(x)}(y) dy}{f(z) - f(y)}.
\end{align}

\begin{RHP} \hfill \label{RHP:N}
	\begin{enumerate}
		\item \label{enu:RHP_N_1}
		$N^{(x)}(s)$ is analytic in $\compC \setminus \gamma(x)$, where $\gamma(x) = \gamma_1(x) \cup \gamma_2(x)$ (defined in Theorem \ref{eq:find_b_psi}).
		\item \label{enu:RHP_N_3}
		$N^{(x)}_{\pm}(s)$ is bounded on $\gamma_1(x)$ and $\gamma_2(x)$. $N^{(x)}(s)$ is bounded as $s \to s_1(x)$, $s \to s_2(x)$, or $s \to 0$, and has the behavior
		\begin{equation} \label{eq:limit_N(s)_at_infty_weak}
			N^{(x)}(s) = \bigO(s^{-1}), \quad \text{as} \quad s \to \infty.
		\end{equation}
		\item \label{enu:RHP_N_2}
		$N^{(x)}(s)$ satisfies the jump condition
		\begin{equation}
			N^{(x)}_+(s) + N^{(x)}_-(s) = \frac{s - 1}{s - s_2(x)} \J[x]'(s) V'(\J[x](s)), \quad s \in \gamma(x) \setminus \{ s_1(x), s_2(x) \},
		\end{equation}
		where $\J[x](s)$ is defined in \eqref{eq:defn_J_c} and $s_2(x)$ is defined in \eqref{eq:defn_s_2_x}.
	\end{enumerate}
\end{RHP}

If $(H^{(x)}(z), \tilde{H}^{(x)}(z))$ solves RH problem \ref{RHP:H}, then the function $\tilde{N}^{(x)}(s)$---defined by
\begin{equation} \label{eq:N_in_H}
	\tilde{N}^{(x)}(s)= \frac{s - 1}{s - s_2(x)} \times
	\begin{cases}
		\J[x]'(s) H^{(x)}(\J[x](s)), & s \in \compC \setminus \overline{D_{x}}, \\
		\J[x]'(s) \tilde{H}^{(x)}(\J[x](s)), & s \in D_{x} \setminus [0, 1],
	\end{cases}
\end{equation}
---solves RH problem \ref{RHP:N}. (Although $\tilde{N}^{(x)}(s)$ is undefined on $(0, 1)$ via \eqref{eq:N_in_H}, it extends naturally to $(0, 1)$ by analytic continuation, thanks to Item \ref{enu:RHP:H_4} of RH problem \ref{RHP:H}.) Moreover, $N^{(x)}(s)$ (given by \eqref{eq:N_in_H}) satisfies a stronger condition than \eqref{eq:limit_N(s)_at_infty_weak}:
\begin{align} \label{eq:limit_N(s)_at_infty}
	\tilde{N}^{(x)}(1) = {}& (s_2 - 1)^{-1}, &  \tilde{N}^{(x)}(s) = {}& s^{-1} + \bigO(s^{-2}), \quad s \to \infty.
\end{align}

For each $x > 0$, RH problem \ref{RHP:N} has at most one solution. Suppose $N^{(x), 1}(s)$ and $N^{(x), 2}(s)$ are both solutions. Define
\begin{equation}
	M^{(x)}(s) =
	\begin{cases}
		N^{(x), 1}(s) - N^{(x), 2}(s), & s \in \compC \setminus \bar{D}, \\
		-N^{(x), 1}(s) + N^{(x), 2}(s), & s \in D.
	\end{cases}
\end{equation}
After analytic continuation on $\gamma(x) \setminus \{ s_1(x), s_2(x) \}$, $M^{(x)}(s)$ is analytic in $\compC \setminus \{ s_1(x), s_2(x) \}$, bounded as $s \to s_1(x), s_2(x)$, and $M^{(x)}(s) \to 0$ as $s \to \infty$. By Liouville's theorem, $M^{(x)}(s) = 0$, so $\tilde{N}^{(x), 1}(s) = N^{(x), 2}(s)$.

The unique solution to RH problem \ref{RHP:N} has the explicit formula
\begin{equation} \label{eq:defn_N(s)}
	N^{(x)}(s) =
	\begin{cases}
		-\frac{1}{2\pi i} \int_{\gamma(x)} \frac{(\xi - 1) \J[x]'(\xi) V'(\J[x](\xi))}{(\xi - s_2(x))(\xi - s)} d\xi, & s \in \compC \setminus \overline{D_{x}}, \\
		\frac{1}{2\pi i} \int_{\gamma(x)} \frac{(\xi - 1) \J[x]'(\xi) V'(\J[x](\xi))}{(\xi - s_2(x))(\xi - s)} d\xi, & s \in D_{x} \setminus [0, 1].
	\end{cases}
\end{equation}
By direct computation,
\begin{align}
	N^{(x)}(1) = {}& (s_2(x) - 1)^{-1} C_x, & \lim_{s \to \infty} N^{(x)}(s) = {}& C_x s^{-1} + \bigO(s^{-2}), 
\end{align}
where
\begin{equation}
	\begin{split}
		C_x = {}& \frac{1}{2\pi i} \int_{\gamma(x)} \frac{(\xi - 1) \J[x]'(\xi) V'(\J[x](\xi))}{\xi - s_2(x)} d\xi 
		= {} \frac{s_2(x) - 1}{2\pi i} \int_{\gamma(x)} \frac{\J[x]'(\xi) V'(\J[x](\xi))}{\xi - s_2(x)} d\xi.
	\end{split}
\end{equation}
We conclude that RH problem \ref{RHP:H} has a solution (implying RH problem \ref{RHP:N} has a solution satisfying \eqref{eq:limit_N(s)_at_infty}) only if $C_x = 1$---which is equivalent to equation \eqref{eq:determine_c} in Theorem \ref{eq:find_b_psi}. Thus, RH problem \ref{RHP:N} has a solution only if $x = c'$ (the unique solution to \eqref{eq:determine_c}, as shown in Lemma \ref{lem:determine_b}). Under the conditions of Theorem \ref{thm:one-cut_rgular_w_hard_edge}, and assuming $V$ is one-cut regular with a hard edge, the only possible right-end point of the equilibrium measure's support is $b' = b(c')$. The only candidates for $G, \tilde{G}$ (defined in \eqref{eq:defn_g_gtilde} and \eqref{eq:defn_G}) are
\begin{align}
	H^{(c')}(z) = {}& \frac{2}{c'} \sqrt{\frac{\Iinv_{c', 1}(z)}{z}} N^{(c')}(\Iinv_{c', 1}(z)), &
	\tilde{H}^{(c')}(z) = {}& \frac{2}{c'} \sqrt{\frac{\Iinv_{c', 2}(z)}{z}} N^{(c')}(\Iinv_{c', 2}(z)),
\end{align}
and the only candidate for the density function is $\psi^{(c')}(x)$ (defined in \eqref{eq:psi^c'}).

The remaining step is to show that $\psi^{(c')}(x) dx$ on $[0, b']$ satisfies Requirement \ref{req:one-cut_reg}. Identities \eqref{eq:boundary_at_infty} and \eqref{eq:boundary_at_infty_H} (Item \ref{enu:RHP:H_boundary} of RH problem \ref{RHP:H}) imply the total mass of the (possibly signed) measure $\psi^{(c')}(x) dx$ is $1$. Identity \eqref{eq:defn_ell} (Part \ref{enu:req:one-cut_reg:4} of Requirement \ref{req:one-cut_reg}) follows from \eqref{eq:H^t_jump} (Item \ref{enu:RHP:H_jump} of RH problem \ref{RHP:H}). Thus, we only need to verify Parts \ref{enu:req:one-cut_reg:2}, \ref{enu:req:one-cut_reg:5}, and \ref{enu:req:one-cut_reg:3} of Requirement \ref{req:one-cut_reg}.

To do this, for $s \in \compC \setminus \overline{D}$, we express $N^{(c')}(s)$ in terms of $U(u)$ (defined in \eqref{eq:one-cut_regular_cond}):
\begin{equation}
	\begin{split}
		N^{(c')}(s) = {}& \frac{-1}{2\pi i} \frac{c'}{2} \oint_{\gamma(c')} \frac{V'(\J[c'](\xi)) \sqrt{\J[c'](\xi)}}{\sqrt{\xi}(\xi - s)} d\xi \\
		= {}& \frac{-1}{2\pi i} \frac{c'}{2} \int_{\gamma_1(c')} \frac{V'(\J[c'](\xi)) \sqrt{\J[c'](\xi)}}{\sqrt{\xi}(\xi - s)} d\xi + \frac{1}{2\pi i} \frac{c'}{2} \int_{\gamma_2(c')} \frac{V'(\J[c'](\xi)) \sqrt{\J[c'](\xi)}}{\sqrt{\xi}(\xi - s)} d\xi \\
		= {}& \frac{-1}{2\pi i} \frac{c'}{2} \left[ \int^{b'}_0 \frac{U(u) \Iinv'_{c', +}(u)}{\sqrt{\Iinv_{c', +}(u)} (\Iinv_{c', +}(u) - s)} du - \int^{b'}_0 \frac{U(u) \Iinv'_{c', -}(u)}{\sqrt{\Iinv_{c', -}(u)} (\Iinv_{c', -}(u) - s)} du\right] \\
		= {}& \frac{-1}{2\pi i} c' \left[ \int^{b'}_0 \frac{U(u) \sqrt{\Iinv_{c', +}(u)}'}{(\sqrt{\Iinv_{c', +}(u)} + \sqrt{s})(\sqrt{\Iinv_{c', +}(u)} - \sqrt{s})} du \right. \\
		& \phantom{\smash{\frac{-1}{2\pi i} c'}}
		- \left. \int^{b'}_0 \frac{U(u) \sqrt{\Iinv_{c', -}(u)}'}{(\sqrt{\Iinv_{c', -}(u)} + \sqrt{s})(\sqrt{\Iinv_{c', -}(u)} - \sqrt{s})} du \right] \\
		= {}& \frac{-1}{2\pi i} \frac{c'}{2\sqrt{s}} \int^{b'}_0 U(u) \frac{d}{du} \left( \log \frac{\sqrt{\Iinv_{c', +}(u)} - \sqrt{s}}{\sqrt{\Iinv_{c', +}(u)} + \sqrt{s}} - \log \frac{\sqrt{\Iinv_{c', -}(u)} - \sqrt{s}}{\sqrt{\Iinv_{c', -}(u)} + \sqrt{s}} \right) du \\
		= {}& \frac{1}{2\pi i} \frac{c'}{2\sqrt{s}} \int^{b'}_0 U'(u) \left( \log \frac{\sqrt{\Iinv_{c', +}(u)} - \sqrt{s}}{\sqrt{\Iinv_{c', +}(u)} + \sqrt{s}} - \log \frac{\sqrt{\Iinv_{c', -}(u)} - \sqrt{s}}{\sqrt{\Iinv_{c', -}(u)} + \sqrt{s}} \right) du,
	\end{split}
\end{equation}
where $\sqrt{s}$ takes the principal branch. 
For $x \in (0, b)$,
\begin{equation} \label{eq:integral_rep_psi}
	\begin{split}
		& \psi^{(c')}(x) \\
		= {}& -\frac{x^{-\frac{1}{2}}}{4\pi^2} \Re \lim_{\epsilon \to 0_+} \int^{b'}_0 U'(u) \log \frac{(\sqrt{\Iinv_{c', +}(u)} - \sqrt{\Iinv_{c', 1}(x + \epsilon i)})(\sqrt{\Iinv_{c', -}(u)} + \sqrt{\Iinv_{c', 1}(x + \epsilon i)})}{(\sqrt{\Iinv_{c', +}(u)} + \sqrt{\Iinv_{c', 1}(x + \epsilon i)})(\sqrt{\Iinv_{c', -}(u)} - \sqrt{\Iinv_{c', 1}(x + \epsilon i)})} du \\
		= {}& \frac{x^{-\frac{1}{2}}}{4\pi^2} \int^{b'}_0 U'(u) F_{c'}(u; x) du,
	\end{split}
\end{equation}
where $F_{c'}(u; x)$ is defined in \eqref{eq:defn_F_x(uxi)}. Since $F_{c'}(u; x)$ is continuous on $u \in [0, x) \cup (x, b')$ and behaves like $\bigO(\log(|u - x|))$ as $u \to x$, the integral \eqref{eq:integral_rep_psi} is well-defined.

\paragraph{Positivity and regularity in the bulk}

For all $y \in (0, b')$, $\Re \Iinv_{c', +}(y) = \Re \Iinv_{c', -}(y) > 0$ and $\Im \Iinv_{c', +}(y) = -\Im \Iinv_{c', -}(y) > 0$. Thus, for all $x, u \in (0, b')$,
\begin{align}
	\left| \frac{\sqrt{\Iinv_{c', +}(u)} + \sqrt{\Iinv_{c', +}(x)}}{\sqrt{\Iinv_{c', -}(u)} + \sqrt{\Iinv_{c', +}(x)}} \right| > {}& 1, & \left| \frac{\sqrt{\Iinv_{c', -}(u)} - \sqrt{\Iinv_{c', +}(x)}}{\sqrt{\Iinv_{c', +}(u)} - \sqrt{\Iinv_{c', +}(x)}} \right| > {}& 1.
\end{align}
Hence, as a function of $u$,
\begin{equation} \label{eq:positivity_F}
	F_{c'}(u; x) > 0, \quad \text{for all} \quad u \in (0, x) \cup (x, b').
\end{equation}
Combined with the assumption $U'(x) > 0$ (from \eqref{eq:one-cut_regular_cond}), this implies $\psi^{(c')}(x) > 0$ for all $x \in (0, b')$.

\paragraph{Regularity at the edges}

We need to show $\psi^{(c')}(x)$ satisfies $\lim_{x \to b'_-} \psi^{(c')}(x)/\sqrt{b' - x} = c_1 (1 + \bigO(x - b'))$ and $\lim_{x \to 0^+} \psi^{(c')}(x)\sqrt{x} = c_2 (1 + \bigO(x))$ for some $c_1, c_2 > 0$. By \eqref{eq:integral_rep_psi}, this is equivalent to showing the integral
\begin{equation} \label{eq:integral_for_psi}
	\int^{b'}_0 U'(u) F_{c'}(u; x) du
\end{equation}
decays like $\sqrt{b' - x}$ as $x \to b'$ and approaches a non-zero limit as $x \to 0$. This relies on the continuity and positivity properties of $U'(u)$ and $F_{c'}(u; x)$. They are both continuous functions as $u \in (0, b')$, and we have the positivity results \eqref{eq:one-cut_regular_cond} for $U'(u)$ on $(0, b')$ and \eqref{eq:positivity_F} for $F_{c'}(u; x)$ as $u, x \in (0, b')$.

Moreover, since $V(z)$ is analytic at $0$, $U(x)$ either converges to $0$ (if $V'(0) = 0$) or behaves like $x^{-1/2}$ as $u \to 0_+$ (if $V'(0) > 0$).

For $x \in (0, b')$ near $0$, properties of $\Iinv_{c', \pm}(x)$ (given in Appendix \ref{sec:J_x_prop}) show the function 
\begin{equation}
	\frac{F_{c'}(u; x)}{\sqrt{\log^2(|u - x|) + 1}}
\end{equation}
is uniformly bounded and converges uniformly to a non-vanishing limit as $x \to 0$. Thus, \eqref{eq:integral_for_psi} converges to a non-zero limit as $x \to 0$. For $x \in (0, b')$ near $b'$, we use $d_1(x)$ (defined in \eqref{eq:defn_d_1}) to write
\begin{equation}
	F_{c'}(u; x) = \log \left| \frac{2b' + d_1(c') (\sqrt{b' - u} + \sqrt{b' - x}) i}{2b' + d_1(c') (\sqrt{b' - u} - \sqrt{b' - x}) i} \frac{\sqrt{b' - u} + \sqrt{b' - x}}{\sqrt{b' - u} - \sqrt{b' - x}} \right| \left(1 + \sqrt{b' - u} f(u; x)\right),
\end{equation}
where $f(u; x)$ is continuous on $[0, b']$ and converges uniformly to a limit function as $x \to b'$. Hence, the product of the integral in \eqref{eq:integral_for_psi} and $(b' - x)^{-1/2}$ converges to a non-zero limit as $x \to b'_-$.

\paragraph{Regularity away from the support}

We aim to show that if $\psi(x)$ is defined as $\psi^{(c')}(x)$ in \eqref{eq:psi^c'} and $b = b'$, then Item \ref{enu:req:one-cut_reg:5} of Requirement \ref{req:one-cut_reg} holds. Define
\begin{equation}
	g(x) = H^{(c')}(x) + \tilde{H}^{(c')}(x) - V'(x), \quad x \in (b', +\infty).
\end{equation}
We need to show $g(x) < 0$ for all $x > b'$. By \eqref{eq:H^t_jump}, the continuity of $H^{(c')}(z)$, and the continuity of $\tilde{H}^{(c')}(z)$ at $z = b'$, $\lim_{x \to b_+} g(x) = 0$. Thus, it suffices to show $\frac{d}{dx}\left(g(x) \sqrt{x}\right) < 0$ on $(b', \infty)$. Using \eqref{eq:defn_H_by_psi}, we express $\frac{d}{dx}\left(g(x) \sqrt{x}\right)$ as
\begin{equation}
	-\int^{b'}_0 \left( \frac{x + t}{2\sqrt{x}(x - t)^2} + \frac{\sinh^2(\sqrt{x}) + \cosh(2\sqrt{x}) \sinh^2(\sqrt{t})}{2\sqrt{x}(\sinh^2(\sqrt{x}) - \sinh^2(\sqrt{t}))} \right) \psi^{(c')}(t) dt - U'(x),
\end{equation}
which is negative for all $x \in (b', +\infty)$ due to the positivity of $\psi^{(c')}(t)$ and $U'(x)$.

\begin{proof}[Proof of Theorems \ref{thm:one-cut_rgular_w_hard_edge} and \ref{eq:find_b_psi}]
	We only need to prove Theorem \ref{eq:find_b_psi}, which is a quantitative version of Theorem \ref{thm:one-cut_rgular_w_hard_edge}.

        By the computation above in this subsection, we find that with $c'$ being the unique solution of \eqref{eq:determine_c}, the explicitly constructed density function $\psi^{(c')}(x)$ and the measure it defines satisfy all the items in Requirement \ref{req:one-cut_reg}, so it is the desired equilibrium measure. (The uniqueness is guaranteed by Proposition \ref{thm:potential_theory}.) Thus, $c = c'$ and $b = b(c') = b'$.
\end{proof}

\begin{proof}[Proof of \eqref{eq:psi_linear_case} in Theorem \ref{thm:eq_measure_linear}]
Recall from the end of Section \ref{subsec:support_of_eq_measure} that $c = 2\Mconst$ for the special case $V(x) = \frac{x}{\Mconst}$. For $s \in D$, we have
\begin{subequations}
	\begin{align}
		N(s) = {}& \frac{1}{2\pi i} \oint_{\gamma} \frac{\Mconst^{-1} \frac{c}{4}\left(c\sqrt{\xi} + \arcosh \frac{\xi + 1}{\xi - 1}\right)}{\sqrt{\xi}(\xi - s)} d\xi \\
		= {}& \frac{c^2}{4\Mconst} \frac{1}{2\pi i} \oint_{\gamma} \frac{d \xi}{\xi - s} + \frac{c}{4\Mconst} \frac{1}{2\pi i} \oint_{\gamma} \frac{\arcosh \frac{\xi + 1}{\xi - 1}}{\sqrt{\xi}(\xi - s)} d\xi, \label{eq:N(s)_linear_inside}
	\end{align}
\end{subequations}
The first term in \eqref{eq:N(s)_linear_inside} equals $\frac{c^2}{4\Mconst} s$, and the second vanishes (verified by deforming $\gamma$ into a large circle). Similarly, evaluating $N(s)$ for $s \in \compC \setminus \overline{D}$ (and recalling $c = 2\Mconst$) gives
\begin{equation}
	N(s) =
	\begin{cases}
		\Mconst, & s \in D, \\
		\frac{1}{2\sqrt{s}} \arcosh \frac{s + 1}{s - 1}, & s \in \compC \setminus \overline{D}.
	\end{cases}
\end{equation}
Thus,
\begin{align} \label{eq:G_linear_V}
	G(z) = {}& \frac{1}{\Mconst} - \sqrt{\frac{\Iinv_1(z)}{z}}, & \tilde{G}(z) = {}& \sqrt{\frac{\Iinv_2(z)}{z}},
\end{align}
and \eqref{eq:psi_linear_case} follows from \eqref{eq:psi_by_G}.
\end{proof} 

\section{Asymptotic analysis for $p^{(n)}_{n + k}(x)$} \label{sec:asy_p}

\subsection{RH problem of the polynomials}

Consider the following modified Cauchy transform of $p_j$:
\begin{equation}
	C p_j(z) := \frac{1}{2\pi i} \int_{\realR_+} \frac{p_j(x)}{f(x) - f(z)} W^{(n)}_{\alpha}(x) dx,
\end{equation}
which is well defined for $z \in \paraP \setminus \realR_+$. Since $W^{(n)}_{\alpha}(x)$ is real analytic and vanishes rapidly as $x \to +\infty$, we have the following asymptotic expansion for $C p_j(z)$ as $z \in \paraP \setminus \realR_+$ and $\Re z \to +\infty$:
\begin{equation}
	\begin{split}
		C p_j(z) = {}& \frac{-1}{2\pi i f(z)} \int_{\realR_+} \frac{p_j(x)}{1 - f(x)/f(z)} W^{(n)}_{\alpha}(x) dx \\
		= {}& \frac{-1}{2\pi i f(z)} \sum^M_{k = 0} \left( \int_{\realR_+} p_j(x) f^k(x) W^{(n)}_{\alpha}(x) dx \right) f^{-(k + 1)}(z) + \bigO( f^{-(M + 2)}(z)),
	\end{split}
\end{equation}
for any $M \in \natN$ and uniformly in $\Im z$. Thus, due to orthogonality,
\begin{equation}
	C p_j(z) = \frac{-h^{(n)}_j}{2\pi i} f^{-(j + 1)}(z) + \bigO(f^{-(j + 2)}(z)),
\end{equation}
where $h^{(n)}_j$ is given in \eqref{eq:defn_h}. 

Hence, we conclude that if we define the array
\begin{equation} \label{eq:defn_Y}
	Y(z) = Y^{(j, n)}(z) := (p_j(z), C p_j(z)),
\end{equation}
then they satisfy the following conditions:

\begin{RHP} \hfill \label{RHP:Y}
	\begin{enumerate}
		\item
		$Y = (Y_1, Y_2)$, where $Y_1$ is analytic on $\compC$, and $Y_2$ is analytic on $\paraP \setminus \realR_+$.
		\item
		With the standard orientation of $\realR_+$,
		\begin{equation}
			Y_+(x) = Y_-(x)
			\begin{pmatrix}
				1 & W^{(n)}_{\alpha}(x)/f'(x) \\
				0 & 1
			\end{pmatrix},
			\quad \text{for $x \in \realR_+$}.
		\end{equation}
		\item \label{enu:RHP:Y_infty}
		As $z \to \infty$ in $\compC$, $Y_1(z) = z^j + \bigO(z^{j - 1})$.
		\item \label{enu:RHP:Y_infty_P}
		As $f(z) \to \infty$ in $\paraP$ (i.e., $\Re z \to +\infty$), $Y_2(z) = \bigO(f^{-(j + 1)}(z))$. 
		\item \label{enu:RHP:Y_zero} \setcounter{favoriteitem}{\value{enumi}}
		As $z \to 0$ in $\compC$ or $\paraP$,
		\begin{align} \label{eq:Y_bd_0}
			Y_1(z) = {}& \bigO(1), &  Y_2(z) = {}&
			\begin{cases}
				\bigO(1), & \alpha > 0, \\
				\bigO(\log z), & \alpha = 0, \\
				\bigO(z^{\alpha}), & \alpha \in (-1, 0).
			\end{cases}
		\end{align}
		\item \label{enu:RHP:Y_bd_rho}
		At $z \in \rho \cup \{ -\pi^2/4 \} \cup \bar{\rho}$, the limit $Y_2(z) := \lim_{w \to z \text{ in } \paraP} Y_2(w)$ exists and is continuous, and 
		\begin{equation} \label{eq:Y_bd_cond}
			Y_2(z) = Y_2(\bar{z}).
		\end{equation}
	\end{enumerate}
\end{RHP}

Below, we take $j = n + k$ where $k$ is a constant integer, and our goal is to obtain the asymptotics for $Y = Y^{(n + k, n)}$ as $n \to \infty$.

\subsubsection{Uniqueness of RH problem \ref{RHP:Y}} \label{subsubsec:uniqueness}

For later use in the proof of Lemma \ref{lem:uniqueness_of_R}, we consider the uniqueness of a weaker form of RH problem \ref{RHP:Y} where Item \ref{enu:RHP:Y_zero} of RH problem \ref{RHP:Y} is replaced by
\begin{em}
	\begin{enumerate} [start=\value{favoriteitem}, label=\arabic{*}'.]
		\item \label{enu:RHP:Y_zero_alt}
		As $z \to 0$ in $\compC$ or $\paraP$, 
		\begin{align} 
			Y_1(z) = {}&
			\begin{cases}
				\bigO(z^{-\alpha}), & \alpha > 0 \text{ and $\arg z \in [-\frac{\pi}{3}, \frac{\pi}{3}]$}, \\
				\bigO(\log z), & \alpha = 0, \\
				\bigO(1), & \alpha \in (-1, 0) \text{ or $\alpha > 0$ and $\arg (-z) \in (-\frac{2\pi}{3}, \frac{2\pi}{3})$},
			\end{cases} \\
			Y_2(z) = {}&
			\begin{cases}
				\bigO(1), & \alpha > 0, \\
				\bigO(\log z), & \alpha = 0, \\
				\bigO(z^{\alpha}), & \alpha \in (-1, 0).
			\end{cases}
		\end{align}
		and $Y_1(z)$ and $Y_2(z)$ are allowed to have a mild blowup at $b$, such that
		\begin{align}
			Y_1(z) = {}& \bigO((z - b)^{-\frac{1}{2}}), & Y_2(z) = {}& \bigO((z - b)^{-\frac{1}{2}}), & \text{as $z \to b$},
		\end{align}
	\end{enumerate}
\end{em}

\begin{proof}[Proof of the uniqueness of RH problem \ref{RHP:Y} with Item \ref{enu:RHP:Y_zero} weakened to \ref{enu:RHP:Y_zero_alt}]
  Despite nominally $Y_1$ may have blowups at $0$ and $b$, since $Y_1$ has pole singularities at $0, b$, and it may only blow up at $b$ like an inverse square root and may only blow up at $0$ like an inverse logarithm at $0$ in the sector $\arg (-z) \in (-\frac{2\pi}{3}, \frac{2\pi}{3})$, we find that $Y_1(z)$ actually has no singularities at $0, b$. Hence, by Item \ref{enu:RHP:Y_infty} of RH problem \ref{RHP:Y}, we find that $Y_1$ is a polynomial of degree $j$. Next, define
	\begin{equation}
		Z_2(z) = Y_2(z) - \frac{1}{2\pi i} \int_{\realR_+} \frac{p_j(x)}{f(x) - f(z)} W^{(n)}_{\alpha}(x) dx.
	\end{equation}
	Then $Z_2(z)$ has only a trivial jump on $\realR_+$, so it can be defined analytically on $\paraP \setminus \{ 0, b \}$. By an argument similar to that applied to $Y_1$ above, $b$ is not a singular point of $Z_2(z)$, and $Z_2(z)$ can be defined analytically on $\paraP \setminus \{ 0 \}$. Now consider the function $Z_2(f^{-1}(z))$ where $z \in \compC \setminus (-\infty, -1]$. By Item \ref{enu:RHP:Y_bd_rho} of RH problem \ref{RHP:Y}, $Z_2(f^{-1}(z))$ can be extended analytically to $\compC \setminus \{ 0 \}$. By \eqref{eq:Y_bd_0}, $Z_2(f^{-1}(z)) = \mathfrak{o}(z^{-1})$ as $z \to 0$. Hence, $Z_2(f^{-1}(z))$ is analytic on $\compC$, and we conclude $Z_2(z) = 0$ by Item \ref{enu:RHP:Y_infty_P} of RH problem \ref{RHP:Y}. Finally, Item \ref{enu:RHP:Y_infty_P} of RH problem \ref{RHP:Y} implies the orthogonality of $Y_1(z)$, so the uniqueness of biorthogonal polynomials given in Proposition \ref{prop:char} completes the proof.
\end{proof}

\subsection{First transformation $Y \mapsto T$}

Recall $\gfn(z)$ and $\gfntilde(z)$ defined in \eqref{eq:defn_g_gtilde} on $\compC \setminus [0, b]$ and $\paraP \setminus [0, b]$, respectively. Denote $Y = Y^{(n + k, n)}$ and define $T$ as
\begin{equation} \label{eq:trans_Y_to_T}
	T(z) = e^{-\frac{n \ell}{2}} Y(z)
	\begin{pmatrix}
		e^{-n\gfn(z)} & 0 \\
		0 & e^{n\gfntilde(z)}
	\end{pmatrix}
	e^{\frac{n\ell}{2} \sigma_3},
\end{equation}
where $\ell$ is the constant appearing in \eqref{eq:defn_ell}, and $\sigma_3 = (\begin{smallmatrix} 1 & 0 \\ 0 & -1 \end{smallmatrix})$. Then $T$ satisfies an RH problem with the same domain of analyticity as $Y$, but with different asymptotic behavior and a different jump relation.

\begin{RHP} \hfill \label{RHP:T}
	\begin{enumerate}
		\item
		$T = (T_1, T_2)$, where $T_1$ is analytic in $\compC \setminus \realR_+$, and $T_2$ is analytic in $\paraP \setminus \realR_+$.
		\item
		$T$ satisfies the jump relation
		\begin{equation}
			T_+(x) = T_-(x) J_T(x), \quad \text{for $x \in \realR_+$},
		\end{equation}
		where
		\begin{equation}
			J_T(x) =
			\begin{pmatrix}
				e^{n(\gfn_-(x) - \gfn_+(x))} & \frac{x^{\alpha} h(x)}{f'(x)} e^{n(\gfn_-(x) + \gfntilde_+(x) - V(x) - \ell)} \\
				0 & e^{n(\tilde{\gfn}_+(x) - \tilde{\gfn}_-(x))} 
			\end{pmatrix}.
		\end{equation}
		\item 
		\begin{align} \label{eq:T_at_infty}
			T_1(z) = {}& z^k + \bigO(z^{k - 1}) \quad \text{as $z \to \infty$ in $\compC$}, & T_2(z) = {}& \bigO(f^{-(k + 1)}(z)) \quad \text{as $f(z) \to \infty$ in $\paraP$}.
		\end{align}
		\item
		As $z \to 0$ in $\compC$ or $\paraP$, $T(z)$ has the same limit behavior as $Y(z)$ in \eqref{eq:Y_bd_0}.
		\item 
		\begin{align} \label{eq:T_at_b}
			T_1(z) = {}& \bigO(1), & T_2(z) = {}& \bigO(1), & \text{as $z \to b$}.
		\end{align}
		\item
		At $z \in \rho \cup \{ -\pi^2/4 \} \cup \bar{\rho}$, $T_2(z)$ satisfies the same boundary condition as $Y_2(z)$ in \eqref{eq:Y_bd_cond}.
	\end{enumerate}
\end{RHP}

\subsection{Second transformation $T \mapsto S$} \label{sec:T_to_S}

Let
\begin{equation} \label{eq:defn_phi}
	\phi(z) = \gfn(z) + \tilde{\gfn}(z) - V(z) - \ell
\end{equation}
for $z \in \paraP \setminus (-\infty, b)$, where $\ell$ is a constant chosen such that $\phi(0) = \phi(b) = 0$. (See \eqref{eq:defn_ell} and \eqref{eq:defn_ell:2}.) For $x \in (b, +\infty)$, it follows from the analyticity of $\gfn(z)$ and $\tilde{\gfn}(z)$ there and \eqref{eq:gfn_outer} that the jump matrix $J_T(x)$ tends to the identity matrix exponentially fast as $n \to \infty$. For $x \in (0, b)$, we decompose the jump matrix $J_T(x)$ as
\begin{equation} \label{eq:J_T_decomposition}
	\begin{pmatrix}
		1 & 0 \\
		\frac{f'(x)}{x^{\alpha} h(x)} e^{-n \phi_-(x)} & 1
	\end{pmatrix}
	\begin{pmatrix}
		0 & \frac{e^{n(\gfn_-(x) + \gfntilde_+(x) - V(x) - \ell)}}{f'(x) x^{-\alpha} h(x)^{-1}} \\
		-\frac{f'(x) x^{-\alpha} h(x)^{-1}}{e^{n(\gfn_-(x) + \gfntilde_+(x) - V(x) - \ell)}} & 0
	\end{pmatrix}
	\begin{pmatrix}
		1 & 0 \\
		\frac{f'(x)}{x^{\alpha} h(x)} e^{-n \phi_+(x)} & 1
	\end{pmatrix}.
\end{equation}
The function $\phi(z)$ has discontinuities on $\realR_-$ and $(0, b)$, such that
\begin{align}
	\phi_+(x) = {}& \phi_-(x) + 4\pi i, & x < {}& 0, \\
	\phi_+(x) = {}& -\phi_-(x), & x \in {}& (0, b).
\end{align}

Then we ``open the lens'', where the lens $\Sigma$ is a contour consisting of $\realR_+$ and two arcs from $0$ to $b$. We assume one of the two arcs lies in the upper half-plane and denote it by $\Sigma_1$, and the other lies in the lower half-plane and denote it by $\Sigma_2$ (see Figure \ref{fig:Sigma_S}). (We may take $\Sigma_1$ and $\Sigma_2$ symmetric about the real axis.) We do not fix the shape of $\Sigma$ at this stage, but only require that $\Sigma$ is in $\paraP$ and $V$ is analytic in a simply connected region containing $\Sigma$. The exact shape of $\Sigma_1$ and $\Sigma_2$ will be given in Sections \ref{subsec:Airy_para}, \ref{subsec:Bes_para}, and \ref{subsec:final_trans}.

\begin{figure}[htb]
	\centering
	\includegraphics{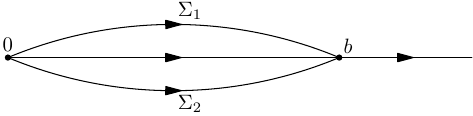}
	\caption{The lens $\Sigma$.}
	\label{fig:Sigma_S}
\end{figure}

Define
\begin{equation} \label{eq:trans_S_in_T}
	S(z) :=
	\begin{cases}
		T(z), & \text{outside the lens}, \\
		T(z)
		\begin{pmatrix}
			1 & 0 \\
			z^{-\alpha} h(z)^{-1} f'(z) e^{-n \phi(z)} & 1
		\end{pmatrix},
		& \text{in the lower part of the lens}, \\
		T(z)
		\begin{pmatrix}
			1 & 0 \\
			-z^{-\alpha} h(z)^{-1} f'(z) e^{-n \phi(z)} & 1
		\end{pmatrix},
		& \text{in the upper part of the lens}.
	\end{cases}
\end{equation}
From the definition of $S$, identity \eqref{eq:Euler-Lagrange}, and decomposition of $J_T(x)$ in \eqref{eq:J_T_decomposition}, we have that $S$ satisfies the following:

\begin{RHP} \hfill \label{RHP:S}
	\begin{enumerate}
		\item
		$S = (S_1, S_2)$, where $S_1$ is analytic in $\compC \setminus \Sigma$, and $S_2$ is analytic in $\paraP \setminus \Sigma$.
		\item
		We have
		\begin{equation}
			S_+(z) = S_-(z) J_S(z), \quad \text{for $z \in \Sigma$},
		\end{equation}
		where
		\begin{equation}
			J_S(z) =
			\begin{cases}
				\begin{pmatrix}
					1 & 0 \\
					z^{-\alpha} h(z)^{-1} f'(z) e^{-n \phi(z)} & 1
				\end{pmatrix},
				& \text{for $z \in \Sigma_1 \cup \Sigma_2$}, \\
				\begin{pmatrix}
					0 & z^{\alpha} h(z) f'(z)^{-1} \\
					-z^{-\alpha} h(z)^{-1} f'(z) & 0
				\end{pmatrix},
				& \text{for $z \in (0, b)$}, \\
				\begin{pmatrix}
					1 & z^{\alpha} h(z) f'(z)^{-1} e^{n \phi(z)} \\
					0 & 1
				\end{pmatrix},
				& \text{for $z \in (b, \infty)$}.
			\end{cases}
		\end{equation}
		\item
		As $z \to \infty$ in $\compC$ or $\paraP$, $S(z)$ has the same limit behavior as $T(z)$ in \eqref{eq:T_at_infty}.
		\item
		As $z \to 0$ in $\compC \setminus \Sigma$, we have
		\begin{equation}
			S_1(z) =
			\begin{cases}
				\bigO(z^{-\alpha}), & \text{$\alpha > 0$ and $z$ inside the lens}, \\
				\bigO(\log z), & \text{$\alpha = 0$ and $z$ inside the lens}, \\
				\bigO(1), & \text{$z$ outside the lens or $-1 < \alpha < 0$}.
			\end{cases}
		\end{equation}
		\item 
		As $z \to 0$ in $\paraP$, $S_2$ has the same limit behavior as $Y_2(z)$ in \eqref{eq:Y_bd_0}.
		\item
		As $z \to b$, $S(z)$ has the same limit behavior as $T(z)$ in \eqref{eq:T_at_b}.
		\item
		At $z \in \rho \cup \{ -\pi^2/4 \} \cup \bar{\rho}$, $S_2(z)$ satisfies the same boundary condition as $Y_2(z)$ in \eqref{eq:Y_bd_cond}.
	\end{enumerate}
\end{RHP}

By \eqref{eq:Euler-Lagrange}, for $x \in (0, b)$, we have
\begin{equation}
	\phi'_{\pm}(x) = \gfn'_{\pm}(x) + \gfntilde'_{\pm}(x) - V'(x) = \gfn'_{\pm}(x) - \gfntilde'_{\mp}(x) = \mp 2\pi i \psi(x).
\end{equation}
Since $\psi(x) > 0$ for all $x \in (0, b)$, we have, by the Cauchy-Riemann condition, $\Re \phi(z) > 0$ on both the upper arc $\Sigma_1$ and lower arc $\Sigma_2$, provided these arcs are sufficiently close to $(0, b)$. As a consequence, the jump matrix for $S$ on the lenses tends to the identity matrix as $n \to \infty$. Uniform convergence breaks down when $x$ approaches the endpoints $0$ and $b$, so we need to use special local parametrices near these points.

\begin{rmk} \label{rmk:contour_deform}
	Here and in subsequent RH problems, we may deform the jump contour $(b, +\infty)$ locally, as long as $\Re \phi(z) < 0$ there, so that the entry $z^{\alpha} h(z) f'(z)^{-1} e^{n \phi(z)}$ remains exponentially small.
\end{rmk}

\subsection{Construction of the global parametrix}

Since $J_S(z)$ converges to $I$ on $\Sigma_1 \cup \Sigma_2 \cup (b, \infty)$, we construct the following:

\begin{RHP} \hfill \label{RHP:Pinfinity}
	\begin{enumerate}
		\item
		$P^{(\infty)} = (P^{(\infty)}_1, P^{(\infty)}_2)$, where $P^{(\infty)}_1$ is analytic in $\compC \setminus [0, b]$, and $P^{(\infty)}_2$ is analytic in $\paraP \setminus [0, b]$.
		\item
		For $x \in (0, b)$, we have
		\begin{equation}
			P^{(\infty)}_+(x) = P^{(\infty)}_-(x)
			\begin{pmatrix}
				0 & x^{\alpha} h(x) f'(x)^{-1} \\
				-x^{-\alpha} h(x)^{-1} f'(x) & 0
			\end{pmatrix}.
		\end{equation}
		\item
		As $z \to \infty$ in $\compC$ or $\paraP$, $P^{(\infty)}(z)$ has the same limit behavior as $T(z)$ in \eqref{eq:T_at_infty}.
		\item \label{enu:RHP:Pinfinity_5}
		At $z \in \rho \cup \{ -\pi^2/4 \} \cup \bar{\rho}$, $P^{(\infty)}_2(z)$ satisfies the same boundary condition as $Y_2(z)$ in \eqref{eq:Y_bd_cond}.
	\end{enumerate}
\end{RHP}

To construct a solution to the above RH problem, we follow the idea in \cite{Claeys-Wang11} to map the RH problem for $P^{(\infty)}$ to a scalar RH problem that can be solved explicitly. More precisely, using the function $\J(s)$ defined in \eqref{eq:defn_J_c}, we set
\begin{equation} \label{eq:P_scalar_defn}
	\P(s) :=
	\begin{cases}
		P^{(\infty)}_1(\J(s)), & s \in \compC \setminus \overline{D}, \\
		P^{(\infty)}_2(\J(s)), & s \in D \setminus [0, 1],
	\end{cases}
\end{equation}
where $D$ is the region bounded by the curves $\gamma_1$ and $\gamma_2$, as shown in Figure \ref{fig:J_c}. Due to Item \ref{enu:RHP:Pinfinity_5} of RH Problem \ref{RHP:Pinfinity}, the function $\P$ is then well defined on $[0, 1)$ by analytic continuation. It is straightforward to check that $\P$ satisfies the following:

\begin{RHP} \hfill \label{RHP:P_scalar}
	\begin{enumerate}
		\item
		$\P$ is analytic in $\compC \setminus ( \gamma_1 \cup \gamma_2 \cup \{ 1 \} )$.
		\item
		For $s \in \gamma_1 \cup \gamma_2$, $\P_+(s) = \P_-(s) J_{\P}(s)$, where, with $\gamma_1$ and $\gamma_2$ oriented from $s_1$ to $s_2$,
		\begin{equation}
			J_{\P}(s) =
			\begin{cases}
				-\frac{\sinh(J_c(s))}{\J^{\alpha}(s) h(\J(s)) J_c(s)}, & s \in \gamma_1, \\
				\frac{\J^{\alpha}(s) h(\J(s)) J_c(s)}{\sinh(J_c(s))}, & s \in \gamma_2.
			\end{cases}
		\end{equation}
		
		\item
		As $s \to \infty$, $\P(s) = (\frac{c^2}{4})^k s^k + \bigO(s^{k - 1})$.
		\item
		As $s \to 1$, $\P(s) = \bigO((s - 1)^{k + 1})$.
		
	\end{enumerate}
\end{RHP}

The solution to RH problem \ref{RHP:P_scalar} may not be unique. One solution is
\begin{equation} \label{eq:defn_P_scalar}
	\P(s) = 
	\begin{cases}
		G_k(s), & s \in \compC \setminus \bar{D}, \\
		2 \frac{(\frac{c^2}{4})^{\alpha + \frac{1}{2} + k} (s - s_1)^{\alpha + 1} s^{\frac{1}{2}} (s - 1)^k}{\sinh(J_c(s)) \sqrt{(s - s_1)(s - s_2)}} \tilde{D}(s)^{-1}, & s \in D.
	\end{cases}
\end{equation}
where $\tilde{D}(s)$ is defined in \eqref{eq:defn_Dtilde(s)}, the power function $(s - s_1)^{\alpha + 1}$ takes the principal branch, and $\sqrt{(s - s_1)(s - s_2)} \sim s$ in $\compC \setminus \gamma_1$. Later in this paper, we take \eqref{eq:defn_P_scalar} as the definition of $\P(s)$. We note that
\begin{align}
	\P(s) = {}&
	\begin{cases}
		\bigO((s - s_1)^{-\alpha - \frac{1}{2}}), & s \to s_1 \text{ in } \compC \setminus \bar{D}, \\
		\bigO((s - s_1)^{\alpha - \frac{1}{2}}), & s \to s_1 \text{ in } D,
	\end{cases}
	& \P(s) = {}& \bigO((s - s_2)^{-\frac{1}{2}}), \quad s \to s_2.
\end{align}

Based on this solution, we construct the solution to RH problem \ref{RHP:Pinfinity} as
\begin{align}
	P^{(\infty)}_1(z) = {}& \P(\Iinv_1(z)) = G_k(\Iinv_1(z)), & z \in {}& \compC \setminus [0, b], \label{eq:Pinfty_1_in_scalar} \\
	P^{(\infty)}_2(z) = {}& \P(\Iinv_2(z)), & z \in {}& \paraP \setminus [0, b]. \label{eq:Pinfty_2_in_scalar}
\end{align}
By direct calculation in Section \ref{sec:J_x_prop}, we have
\begin{align}
	P^{(\infty)}_1(z) = {}& \bigO(z^{-\frac{\alpha}{2} - \frac{1}{4}}), & P^{(\infty)}_2(z) = {}& \bigO(z^{\frac{\alpha}{2} - \frac{1}{4}}), & \text{as } z \to 0, \label{eq:Pinfty_at_0} \\
	P^{(\infty)}_1(z) = {}& \bigO(z^{-\frac{1}{4}}), & P^{(\infty)}_2(z) = {}& \bigO(z^{-\frac{1}{4}}), & \text{as } z \to b. \label{eq:Pinfty_at_b}
\end{align}

\subsection{Third transformation $S \mapsto Q$}

Noting that $P^{(\infty)}_1(z) \neq 0$ for all $z \in \compC \setminus [0, b]$ and $P^{(\infty)}_2(z) \neq 0$ for all $z \in \paraP \setminus [0, b]$, we define the third transformation by
\begin{equation} \label{eq:defn_Q}
	Q(z) = (Q_1(z), Q_2(z)) = \left( \frac{S_1(z)}{P^{(\infty)}_1(z)}, \frac{S_2(z)}{P^{(\infty)}_2(z)} \right).
\end{equation}
In view of the RH problems \ref{RHP:S} and \ref{RHP:Pinfinity}, and the properties of $P^{(\infty)}(z)$ in \eqref{eq:Pinfty_at_0} and \eqref{eq:Pinfty_at_b}, it is easy to see that $Q$ satisfies the following RH problem (the contour $\Sigma$ is given in Figure \ref{fig:Sigma_S}):

\begin{RHP} \hfill \label{RHP:Q}
	\begin{enumerate}
		\item
		$Q = (Q_1, Q_2)$, where $Q_1$ is analytic in $\compC \setminus \Sigma$, and $Q_2$ is analytic in $\paraP \setminus \Sigma$.
		\item
		For $z \in \Sigma$, we have
		\begin{equation} \label{eq:jump_Q}
			Q_+(z) = Q_-(z) J_Q(z), 
		\end{equation}
		where
		\begin{equation} \label{eq:defn_J_Q}
			J_Q(z) =
			\begin{cases}
				\begin{pmatrix}
					1 & 0 \\
					z^{-\alpha} h(z)^{-1} f'(z) \frac{P^{(\infty)}_2(z)}{P^{(\infty)}_1(z)} e^{-n \phi(z)} & 1
				\end{pmatrix},
				& z \in \Sigma_1 \cup \Sigma_2, \\
				\begin{pmatrix}
					0 & 1 \\
					1 & 0
				\end{pmatrix},
				& z \in (0, b), \\
				\begin{pmatrix}
					1 & z^{\alpha} h(z) f'(z)^{-1} \frac{P^{(\infty)}_1(z)}{P^{(\infty)}_2(z)} e^{-n \phi(z)} \\
					0 & 1 \\
				\end{pmatrix},
				& z \in (b, \infty).
			\end{cases}
		\end{equation}
		\item
		\begin{align}
			Q_1(z) = {}& 1 + \bigO(z^{-1}), \quad \text{as $z \to \infty$ in $\compC$}, & Q_2(z) = {}& \bigO(f^{-1}(z)), \quad \text{as $f(z) \to \infty$ in $\paraP$}.
		\end{align}
		\item \label{enu:RHP:Q_zero}
		As $z \to 0$ in $\compC \setminus \Sigma$, we have
		\begin{equation} \label{eq:RHP:Q_zero}
			Q_1(z) =
			\begin{cases}
				\bigO(z^{-\frac{\alpha}{2} + \frac{1}{4}}), & \text{$\alpha > 0$ and $z$ inside the lens}, \\
				\bigO(z^{\frac{1}{4}} \log z), & \text{$\alpha = 0$ and $z$ inside the lens}, \\
				\bigO(z^{\frac{\alpha}{2} + \frac{1}{4}}), & \text{$z$ outside the lens or $-1 < \alpha < 0$}.
			\end{cases}
		\end{equation}
		\item \label{enu:RHP:Q_zero_2} 
		As $z \to 0$ in $\paraP$, we have
		\begin{equation} \label{eq:RHP:Q_zero_2}
			Q_2(z) =
			\begin{cases}
				\bigO(z^{-\frac{\alpha}{2} + \frac{1}{4}}), & \alpha > 0, \\
				\bigO(z^{\frac{1}{4}} \log z), & \alpha = 0, \\
				\bigO(z^{\frac{\alpha}{2} + \frac{1}{4}}), & \alpha \in (-1, 0).
			\end{cases}
		\end{equation}
		\item \label{enu:RHP:Q_b} 
		\begin{align} \label{eq:RHP:Q_b} 
			Q_1(z) = {}& \bigO((z - b)^{\frac{1}{4}}), & Q_2(z) = {}& \bigO((z - b)^{\frac{1}{4}}), & \text{as $z \to b$}.
		\end{align}
		\item
		At $z \in \rho \cup \{ -\pi^2/4 \} \cup \bar{\rho}$, $Q_2(z)$ satisfies the same boundary condition as $Y_2(z)$ in \eqref{eq:Y_bd_cond}.
	\end{enumerate}
\end{RHP}

\subsection{Construction of local parametrix near $b$} \label{subsec:Airy_para}

First, we consider the local parametrix near $b$. Using Part \ref{enu:regular_g:5} of Lemma \ref{lem:regular_g}, we have $\lim_{z \to b} \phi(z) = 0$, where $\phi$ is defined in \eqref{eq:defn_phi}. By Part \ref{enu:regular_g:3} of Lemma \ref{lem:regular_g} and Part \ref{enu:req:one-cut_reg:3} of Requirement \ref{req:one-cut_reg}, we obtain the local behavior of $\phi$ in the vicinity of $b$ (where $\psi_b$ is defined in \eqref{eq:reg:one-cut_reg:3}):
\begin{equation}
	\phi(z) = -\frac{4\pi}{3} \psi_b (z - b)^{\frac{3}{2}} + \bigO(\lvert z - b \rvert^{\frac{5}{2}}),
\end{equation}
$\phi(z)/(z - b)^{3/2}$ is analytic at $b$, and thus we have
\begin{equation} \label{eq:f_b_prop}
	f_b(z) := \left( -\frac{3}{4} \phi(z) \right)^{\frac{2}{3}}, \quad \text{with $f_b(b) = 0$ and $f'_b(b) = (\pi \psi_b)^{\frac{2}{3}} > 0$}
\end{equation}
is a well-defined analytic function in a small neighborhood of $b$.
Moreover, we also choose the shape of the contour $\Sigma$ such that the image of $\Sigma \cap D(b, \epsilon)$ under the mapping $f_b$ coincides with the jump contour $\Gamma_{\Ai}$ defined in \eqref{def:AiryContour}, which is the jump contour of the RH problem \ref{rhp:Ai} for the Airy parametrix.

Let
\begin{align} \label{eq:defn_g^(b)_i}
	g^{(b)}_1(z) = {}& \frac{f'(z)/h(z)}{P^{(\infty)}_1(z)}, & g^{(b)}_2(z) = {}& \frac{z^{\alpha}}{P^{(\infty)}_2(z)},
\end{align}
and define
\begin{equation} \label{eq:defn_Pmodel}
	\Pmodel^{(b)}(z) := \Psi^{(\Ai)}(n^{\frac{2}{3}} f_b(z))
	\begin{pmatrix}
		e^{-\frac{n}{2} \phi(z)} g^{(b)}_1(z) & 0 \\
		0 & e^{\frac{n}{2} \phi(z)} g^{(b)}_2(z)
	\end{pmatrix},
	\quad z \in D(b, \epsilon) \setminus \Sigma.
\end{equation}
From \eqref{eq:defn_g^(b)_i} and the RH problem \ref{RHP:Pinfinity} satisfied by $P^{(\infty)}(z)$, we have
\begin{align}
	g^{(b)}_{1, +}(x) = {}& -g^{(b)}_{2, -}(x), & g^{(b)}_{2, +}(x) = {}& g^{(b)}_{1, -}(x), && \text{for $x \in (b - \epsilon, b)$}, \label{eq:gb_jump} \\
	g^{(b)}_1(z) = {}& \bigO((z - b)^{\frac{1}{4}}), & g^{(b)}_2(z) = {}& \bigO((z - b)^{\frac{1}{4}}), && \text{as $z \to b$}. \label{eq:gb_bd}
\end{align}
Then $\Pmodel^{(b)}(z)$ satisfies the following RH problem:

\begin{RHP} \hfill \label{RHP:Pmodel^b}
	\begin{enumerate}
		\item
		$\Pmodel^{(b)}(z)$ is a $2 \times 2$ matrix-valued function analytic for $z \in D(b, \epsilon) \setminus \Sigma$.
		\item
		For $z \in \Sigma \cap D(b, \epsilon)$, we have
		\begin{equation} \label{eq:jump_Pmodel}
			\Pmodel^{(b)}_+(z) = \Pmodel^{(b)}_-(z) J_Q(z),
		\end{equation}
		where $J_Q(z)$ is defined in \eqref{eq:defn_J_Q}.
		\item
		\begin{align} \label{eq:Pmodel^b_asy}
			(\Pmodel^{(b)})_{ij}(z) = {}& \bigO((z - b)^{\frac{1}{4}}), & ((\Pmodel^{(b)})^{-1})_{ij}(z) = {}& \bigO((z - b)^{-\frac{1}{4}}), & \text{as $z \to b$}, \quad i, j = 1, 2.
		\end{align}
		\item
		For $z \in \partial D(b, \epsilon)$, we have, as $n \to \infty$,
		\begin{equation}
			E^{(b)}(z) \Pmodel^{(b)}(z) = I + \bigO(n^{-1}),
		\end{equation}
		where
		\begin{equation} \label{eq:defn_E^b}
			E^{(b)}(z) = 
			\frac{1}{\sqrt{2}}
			\begin{pmatrix}
				g^{(b)}_1(z) & 0 \\
				0 & g^{(b)}_2(z)
			\end{pmatrix}^{-1}
			e^{\frac{\pi i}{4} \sigma_3}
			\begin{pmatrix}
				1 & -1 \\
				1 & 1
			\end{pmatrix}
			\begin{pmatrix}
				n^{\frac{1}{6}} f_b(z)^{\frac{1}{4}} & 0 \\
				0 & n^{-\frac{1}{6}} f_b(z)^{-\frac{1}{4}}
			\end{pmatrix}.
		\end{equation}
	\end{enumerate}
\end{RHP}

It is straightforward to see that $E^{(b)}(z)$ defined in \eqref{eq:defn_E^b} is analytic on $D(b, \epsilon) \setminus (b - \epsilon, b]$, and for $x \in (b - \epsilon, b)$
\begin{equation}
	E^{(b)}_+(x) E^{(b)}_-(x)^{-1} =
	\begin{pmatrix}
		0 & 1 \\
		1 & 0
	\end{pmatrix},
\end{equation}
and as $z \to b$,
\begin{align}
	E^{(b)}(z) = {}&
	\begin{pmatrix}
		\bigO(1) & \bigO(z^{-\frac{1}{2}}) \\
		\bigO(1) & \bigO(z^{-\frac{1}{2}})
	\end{pmatrix}, &
	E^{(b)}(z)^{-1} = {}&
	\begin{pmatrix}
		\bigO(1) & \bigO(1) \\
		\bigO(z^{\frac{1}{2}}) & \bigO(z^{\frac{1}{2}})
	\end{pmatrix}.
\end{align}
Then we define a $2 \times 2$ matrix-valued function
\begin{equation} \label{eq:P^b_in_E}
	P^{(b)}(z) = E^{(b)}(z) \Pmodel^{(b)}(z), \quad z \in D(b, \epsilon) \setminus \Sigma,
\end{equation}
where $\Pmodel^{(b)}$ is given in \eqref{eq:defn_Pmodel}. Then $P^{(b)}(z)$ satisfies the following RH problem:

\begin{RHP} \hfill \label{RHP:P^b_actual}
	\begin{enumerate}
		\item
		$P^{(b)}(z)$ is analytic in $D(b, \epsilon) \setminus \Sigma$.
		\item
		For $z \in \Sigma \cap D(b, \epsilon)$, we have
		\begin{equation}
			P^{(b)}_+(z) =
			\begin{cases}
				P^{(b)}_-(z) J_Q(z), & z \in \Sigma \cap D(b, \epsilon) \setminus (b - \epsilon, b], \\
				\begin{pmatrix}
					0 & 1 \\
					1 & 0
				\end{pmatrix}
				P^{(b)}_-(z) J_Q(z), & z \in (b - \epsilon, b).
			\end{cases}
		\end{equation}
		\item
		\begin{align} \label{eq:P^b_asy}
			(P^{(b)})_{ij}(z) = {}& \bigO((z - b)^{-\frac{1}{4}}), & ((P^{(b)})^{-1})_{ij}(z) = {}& \bigO((z - b)^{-\frac{1}{4}}), & \text{as $z \to b$}, \quad i, j = 1, 2.
		\end{align}
		\item 
		For $z$ on the boundary $\partial D(b, \epsilon)$, we have, as $n \to \infty$, $P^{(b)}(z) = I + \bigO(n^{-1})$.
	\end{enumerate}
\end{RHP}

Finally, we define a vector-valued function $V^{(b)}$ by
\begin{equation} \label{eq:V^b_in_Q_P}
	V^{(b)}(z) = Q(z) P^{(b)}(z)^{-1}, \quad z \in D(b, \epsilon) \setminus \Sigma,
\end{equation}
where $Q(z)$ is defined in \eqref{eq:defn_Q}. We find that $V^{(b)}(z)$ has only a trivial jump on $(\Sigma_1 \cup \Sigma_2 \cup [b, b + \epsilon)) \cap D(b, \epsilon)$, so $V^{(b)}(z)$ can be defined by continuation on $D(b, \epsilon) \setminus (b - \epsilon, b]$. It satisfies the following RH problem:

\begin{RHP} \hfill \label{RHP:Vb}
	\begin{enumerate}
		\item
		$V^{(b)} = (V^{(b)}_1, V^{(b)}_2)$ is analytic in $D(b, \epsilon) \setminus (b - \epsilon, b]$.
		\item
		For $x \in (b - \epsilon, b)$, we have
		\begin{equation}
			V^{(b)}_+(x) = V^{(b)}_-(x)
			\begin{pmatrix}
				0 & 1 \\
				1 & 0
			\end{pmatrix}.
		\end{equation}
		\item
		\begin{align}
			V^{(b)}_1(z) = {}& \bigO(1), & V^{(b)}_2(z) = {}& \bigO(1), & \text{as $z \to b$}.
		\end{align}
		\item
		For $z \in \partial D(b, \epsilon)$, we have, as $n \to \infty$, $V^{(b)}(z) = Q(z) (I + \bigO(n^{-1}))$.
	\end{enumerate}
\end{RHP}

\subsection{Construction of local parametrix near $0$} \label{subsec:Bes_para}

Using Part \ref{enu:lem:regular_g:1} of Lemma \ref{lem:regular_g}, we have $\lim_{z \to 0 \text{ in } \compC_{\pm}} \phi(z) = \pm \pi i$. We define in the vicinity of $0$:
\begin{equation}
	\phi^L(z) =
	\begin{cases}
		\phi(z) - \pi i, & z \in \compC_+, \\
		\phi(z) + \pi i, & z \in \compC_-,
	\end{cases}
\end{equation}
By Part \ref{enu:regular_g:3} of Lemma \ref{lem:regular_g} and Part \ref{enu:req:one-cut_reg:3} of Requirement \ref{req:one-cut_reg}, we obtain the local behavior of $\phi$ (or equivalently $\phi^L$) at $0$ (where $\psi_0$ is the positive constant defined in Part \ref{enu:req:one-cut_reg:3} of Requirement \ref{req:one-cut_reg}):
\begin{equation}
	\phi^L(z) = 4\pi \psi_0 (-z)^{1/2} + \bigO((-z)^{3/2}),
\end{equation}
and $\phi^L(z)/(-z)^{1/2}$ is analytic at $0$. In addition,  similarly to \eqref{eq:f_b_prop}, 
\begin{equation} \label{eq:f_0_prop}
	f_0(z) := \frac{1}{16} \phi^L(z)^2= \frac{1}{16} (\phi(z) \mp \pi i)^2, \quad \text{with $f_0(0) = 0$ and $f'_0(0) = -(\pi\psi_0)^2 < 0$}
\end{equation}
(where the sign is $-$ in $\compC_+$ and $+$ in $\compC_-$), is a well-defined analytic function in a certain neighborhood of $0$. In fact, 	$f_0(z)$ 
is a conformal mapping in a neighborhood $D(0, \epsilon)$ around $0$ for small enough $\epsilon>0$. Moreover, we also choose the shape of the contour $\Sigma$ such that the image of $\Sigma \cap D(0, \epsilon)$ under the mapping $f_0$ coincides with the jump contour $\Gamma_{\Be}$ defined in \eqref{def:BesselContour}, which is the jump contour of the RH problem \ref{RHP:Bessel} for the Bessel parametrix.

Let
\begin{align} \label{eq:defn_g^(0)_i}
	g^{(0)}_1(z) = {}& \frac{(-z)^{-\alpha/2} f'(z)/h(z)}{P^{(\infty)}_1(z)}, & g^{(0)}_2(z) = {}& \frac{(-z)^{\alpha/2}}{P^{(\infty)}_2(z)},
\end{align}
where $(-z)^{\pm \alpha/2}$ takes the principal branch, and define (where $\Psi^{(\Be)}_{\alpha}$ is defined in \eqref{eq:psibeal}),
\begin{equation} \label{eq:defn_Pmodel_0}
	\Pmodel^{(0)}(z) = \Psi^{(\Be)}_{\alpha}(n^2 f_0(z))
	\begin{pmatrix}
		e^{-\frac{n}{2} \phi(z)} g^{(0)}_1(z) & 0 \\
		0 & e^{\frac{n}{2} \phi(z)} g^{(0)}_2(z)
	\end{pmatrix},
	\quad z \in D(0, \epsilon) \setminus \Sigma.
\end{equation}
From \eqref{eq:defn_g^(0)_i} and the RH problem \ref{RHP:Pinfinity} satisfied by $P^{(\infty)}(z)$, we have
\begin{align}
	g^{(0)}_{1, +}(x) = {}& -g^{(0)}_{2, -}(x), & g^{(0)}_{2, +}(x) = {}& g^{(0)}_{1, -}(x), && \text{for $x \in (0, \epsilon)$}, \label{eq:g0_jump} \\
	g^{(0)}_1(z) = {}& \bigO(z^{\frac{1}{4}}), & g^{(0)}_2(z) = {}& \bigO(z^{\frac{1}{4}}), && \text{as $z \to 0$}. \label{eq:g0_bd}
\end{align}
Then $\Pmodel^{(0)}(z)$ satisfies the following RH problem:
\begin{RHP} \hfill \label{RHP:Pmodel}
	\begin{enumerate}
		\item
		$\Pmodel^{(0)}(z)$ is a $2 \times 2$ matrix-valued function analytic for $z \in D(0, \epsilon) \setminus \Sigma$.
		\item
		For $z \in \Sigma \cap D(0, \epsilon)$, we have
		\begin{equation}
			\Pmodel^{(0)}_+(z) = \Pmodel^{(0)}_-(z) J_Q(z),
		\end{equation}
		where $J_Q(z)$ is defined in \eqref{eq:defn_J_Q}.
		\item
		As $z \in \partial D(0, \epsilon)$, we have
		\begin{equation}
			E^{(0)}(z) \Pmodel^{(0)}(z) = (I + \bigO(n^{-1})),
		\end{equation}
		where
		\begin{equation} \label{eq:defn_E^0}
			E^{(0)}(z) = \frac{1}{\sqrt{2}}
			\begin{pmatrix}
				g^{(0)}_1(z) & 0 \\
				0 & g^{(0)}_2(z)
			\end{pmatrix}^{-1}
			\begin{pmatrix}
				1 & i \\
				i & 1
			\end{pmatrix}
			\begin{pmatrix}
				n^{\frac{1}{2}} f_0(z)^{\frac{1}{4}} & 0 \\
				0 & n^{-\frac{1}{2}} f_0(z)^{-\frac{1}{4}}
			\end{pmatrix}
			(2\pi)^{\frac{1}{2} \sigma_3}.
		\end{equation}
		\item \label{enu:RHP:Pmodel_zero}
		As $z \to 0$, if $\alpha \in (-1, 0)$, then
		\begin{align} \label{eq:Pmodel^0_asy_1}
			\Pmodel^{(0)}(z) = {}&
			\begin{pmatrix}
				\bigO(z^{\frac{\alpha}{2} + \frac{1}{4}}) & \bigO(z^{\frac{\alpha}{2} + \frac{1}{4}}) \\
				\bigO(z^{\frac{\alpha}{2} + \frac{1}{4}}) & \bigO(z^{\frac{\alpha}{2} + \frac{1}{4}})
			\end{pmatrix}, &
			\Pmodel^{(0)}(z)^{-1} = {}&
			\begin{pmatrix}
				\bigO(z^{\frac{\alpha}{2} - \frac{1}{4}}) & \bigO(z^{\frac{\alpha}{2} - \frac{1}{4}}) \\
				\bigO(z^{\frac{\alpha}{2} - \frac{1}{4}}) & \bigO(z^{\frac{\alpha}{2} - \frac{1}{4}})
			\end{pmatrix},
		\end{align}
		if $\alpha = 0$, then
		\begin{align} \label{eq:Pmodel^0_asy_2}
			\Pmodel^{(0)}(z) = {}&
			\begin{pmatrix}
				\bigO(z^{\frac{1}{4}} \log z) & \bigO(z^{\frac{1}{4}} \log z) \\
				\bigO(z^{\frac{1}{4}} \log z) & \bigO(z^{\frac{1}{4}} \log z)
			\end{pmatrix}, &
			\Pmodel^{(0)}(z)^{-1} = {}&
			\begin{pmatrix}
				\bigO(z^{-\frac{1}{4}} \log z) & \bigO(z^{-\frac{1}{4}} \log z) \\
				\bigO(z^{-\frac{1}{4}} \log z) & \bigO(z^{-\frac{1}{4}} \log z)
			\end{pmatrix},
		\end{align}
		and if $\alpha > 0$, then outside the lens
		\begin{align} \label{eq:Pmodel^0_asy_3}
			\Pmodel^{(0)}(z) = {}&
			\begin{pmatrix}
				\bigO(z^{\frac{\alpha}{2} + \frac{1}{4}}) & \bigO(z^{-\frac{\alpha}{2} + \frac{1}{4}}) \\
				\bigO(z^{\frac{\alpha}{2} + \frac{1}{4}}) & \bigO(z^{-\frac{\alpha}{2} + \frac{1}{4}})
			\end{pmatrix}, &
			\Pmodel^{(0)}(z)^{-1} = {}&
			\begin{pmatrix}
				\bigO(z^{-\frac{\alpha}{2} - \frac{1}{4}}) & \bigO(z^{-\frac{\alpha}{2} - \frac{1}{4}}) \\
				\bigO(z^{\frac{\alpha}{2} - \frac{1}{4}}) & \bigO(z^{\frac{\alpha}{2} - \frac{1}{4}})
			\end{pmatrix},
		\end{align}
		and inside the lens
		\begin{align} \label{eq:Pmodel^0_asy_4}
			\Pmodel^{(0)}(z) = {}&
			\begin{pmatrix}
				\bigO(z^{-\frac{\alpha}{2} + \frac{1}{4}}) & \bigO(z^{-\frac{\alpha}{2} + \frac{1}{4}}) \\
				\bigO(z^{-\frac{\alpha}{2} + \frac{1}{4}}) & \bigO(z^{-\frac{\alpha}{2} + \frac{1}{4}})
			\end{pmatrix}, &
			\Pmodel^{(0)}(z)^{-1} = {}&
			\begin{pmatrix}
				\bigO(z^{-\frac{\alpha}{2} - \frac{1}{4}}) & \bigO(z^{-\frac{\alpha}{2} - \frac{1}{4}}) \\
				\bigO(z^{-\frac{\alpha}{2} - \frac{1}{4}}) & \bigO(z^{-\frac{\alpha}{2} - \frac{1}{4}})
			\end{pmatrix}, &
		\end{align}
	\end{enumerate}
\end{RHP}

It is straightforward to see that $E^{(0)}(z)$ is analytic on $D(0, \epsilon) \setminus [0, \epsilon)$, and for $x \in (0, \epsilon)$,
\begin{equation} \label{eq:E0_jump}
	E^{(0)}_+(x) E^{(0)}_-(x)^{-1} =
	\begin{pmatrix}
		0 & 1 \\
		1 & 0
	\end{pmatrix},
\end{equation}
and as $z \to 0$,
\begin{align} \label{eq:E^0_defn}
	E^{(0)}(z) = {}&
	\begin{pmatrix}
		\bigO(1) & \bigO(z^{-\frac{1}{2}}) \\
		\bigO(1) & \bigO(z^{-\frac{1}{2}})
	\end{pmatrix}, &
	E^{(0)}(z)^{-1} = {}&
	\begin{pmatrix}
		\bigO(1) & \bigO(1) \\
		\bigO(z^{\frac{1}{2}}) & \bigO(z^{\frac{1}{2}})
	\end{pmatrix}.
\end{align}
Then we define a $2 \times 2$ matrix-valued function
\begin{equation} \label{eq:P^0_in_E}
	P^{(0)}(z) = E^{(0)}(z) \Pmodel^{(0)}(z), \quad z \in D(0, \epsilon) \setminus \Sigma,
\end{equation}
where $\Pmodel^{(0)}$ is given in \eqref{eq:defn_Pmodel_0}. Hence, $P^{(0)}(z)$ satisfies the following RH problem:

\begin{RHP} \hfill
	\begin{enumerate}
		\item
		$P^{(0)}(z)$ is analytic in $D(0, \epsilon) \setminus \Sigma$.
		\item
		For $z \in \Sigma \cap D(0, \epsilon)$, we have
		\begin{equation}
			P^{(0)}_+(z) =
			\begin{cases}
				P^{(0)}_-(z) J_Q(z), & z \in \Sigma \cap D(0, \epsilon) \setminus [0, \epsilon), \\
				\begin{pmatrix}
					0 & 1 \\
					1 & 0
				\end{pmatrix}
				P^{(0)}_-(z) J_Q(z), & z \in (0, \epsilon).
			\end{cases}
		\end{equation}
		\item 
		As $z \to 0$, if $\alpha \in (-1, 0)$, then
		\begin{align} \label{eq:P^0_asy_1}
			P^{(0)}(z) = {}&
			\begin{pmatrix}
				\bigO(z^{\frac{\alpha}{2} - \frac{1}{4}}) & \bigO(z^{\frac{\alpha}{2} - \frac{1}{4}}) \\
				\bigO(z^{\frac{\alpha}{2} - \frac{1}{4}}) & \bigO(z^{\frac{\alpha}{2} - \frac{1}{4}})
			\end{pmatrix}, &
			P^{(0)}(z)^{-1} = {}&
			\begin{pmatrix}
				\bigO(z^{\frac{\alpha}{2} - \frac{1}{4}}) & \bigO(z^{\frac{\alpha}{2} - \frac{1}{4}}) \\
				\bigO(z^{\frac{\alpha}{2} - \frac{1}{4}}) & \bigO(z^{\frac{\alpha}{2} - \frac{1}{4}})
			\end{pmatrix},
		\end{align}
		if $\alpha = 0$, then
		\begin{align} \label{eq:P^0_asy_2}
			P^{(0)}(z) = {}&
			\begin{pmatrix}
				\bigO(z^{-\frac{1}{4}} \log z) & \bigO(z^{-\frac{1}{4}} \log z) \\
				\bigO(z^{-\frac{1}{4}} \log z) & \bigO(z^{-\frac{1}{4}} \log z)
			\end{pmatrix}, &
			P^{(0)}(z)^{-1} = {}&
			\begin{pmatrix}
				\bigO(z^{-\frac{1}{4}} \log z) & \bigO(z^{-\frac{1}{4}} \log z) \\
				\bigO(z^{-\frac{1}{4}} \log z) & \bigO(z^{-\frac{1}{4}} \log z)
			\end{pmatrix},
		\end{align}
		and if $\alpha > 0$, then outside the lens
		\begin{align} \label{eq:P^0_asy_3}
			P^{(0)}(z) = {}&
			\begin{pmatrix}
				\bigO(z^{\frac{\alpha}{2} - \frac{1}{4}}) & \bigO(z^{-\frac{\alpha}{2} - \frac{1}{4}}) \\
				\bigO(z^{\frac{\alpha}{2} - \frac{1}{4}}) & \bigO(z^{-\frac{\alpha}{2} - \frac{1}{4}})
			\end{pmatrix}, &
			P^{(0)}(z)^{-1} = {}&
			\begin{pmatrix}
				\bigO(z^{-\frac{\alpha}{2} - \frac{1}{4}}) & \bigO(z^{-\frac{\alpha}{2} - \frac{1}{4}}) \\
				\bigO(z^{\frac{\alpha}{2} - \frac{1}{4}}) & \bigO(z^{\frac{\alpha}{2} - \frac{1}{4}})
			\end{pmatrix}.
		\end{align}
		and inside the lens
		\begin{align} \label{eq:P^0_asy_4}
			P^{(0)}(z) = {}&
			\begin{pmatrix}
				\bigO(z^{-\frac{\alpha}{2} - \frac{1}{4}}) & \bigO(z^{-\frac{\alpha}{2} - \frac{1}{4}}) \\
				\bigO(z^{-\frac{\alpha}{2} - \frac{1}{4}}) & \bigO(z^{-\frac{\alpha}{2} - \frac{1}{4}})
			\end{pmatrix}, &
			P^{(0)}(z)^{-1} = {}&
			\begin{pmatrix}
				\bigO(z^{-\frac{\alpha}{2} - \frac{1}{4}}) & \bigO(z^{-\frac{\alpha}{2} - \frac{1}{4}}) \\
				\bigO(z^{-\frac{\alpha}{2} - \frac{1}{4}}) & \bigO(z^{-\frac{\alpha}{2} - \frac{1}{4}})
			\end{pmatrix}, &
		\end{align}
		\item 
		For $z$ on the boundary $\partial D(0, \epsilon)$, we have, as $n \to \infty$, $P^{(0)}(z) = I + \bigO(n^{-1})$.
	\end{enumerate}
\end{RHP}

Consider the vector-valued function
\begin{equation} \label{eq:U_from_Q}
	U(z) = (U_1(z), U_2(z)) := Q(z) \Pmodel^{(0)}(z)^{-1}, \quad z \in D(0, \epsilon) \setminus \Sigma,
\end{equation}
where $Q(z)$ is defined in \eqref{eq:defn_Q}, and then define the vector-valued function $V^{(0)}$ on $D(0, \epsilon) \setminus \Sigma$ by
\begin{equation} \label{eq:V_in_U}
	V^{(0)}(z) = (V^{(0)}_1(z), V^{(0)}_2(z)) := Q(z) P^{(0)}(z)^{-1} = U(z) E^{(0)}(z)^{-1}.
\end{equation}
Due to the jump conditions \eqref{eq:jump_Q} and \eqref{eq:jump_Pmodel}, $U(z)$ can be extended analytically to $D(0, \epsilon) \setminus \{ 0 \}$. Furthermore, as $z \to 0$, from Part \ref{enu:RHP:Q_zero} of the RH problem \ref{RHP:Q} satisfied by $Q(z)$ and Part \ref{enu:RHP:Pmodel_zero} of the RH problem \ref{RHP:Pmodel} satisfied by $\Pmodel^{(0)}(z)$, we have
\begin{equation}
	(U_1(z), U_2(z)) =
	\begin{cases}
		(\bigO(1), \bigO(1)), & \alpha > 0 \text{ and $z$ is outside the lens}, \\
		(\bigO(z^{-\alpha}), \bigO(z^{-\alpha})), & \alpha > 0 \text{ and $z$ is inside the lens}, \\
		(\bigO((\log z)^2), \bigO((\log z)^2)), & \alpha = 0, \\
		(\bigO(z^{\alpha}), \bigO(z^{\alpha})), & \alpha \in (-1, 0).
	\end{cases}
\end{equation}
Since $0$ is an isolated singular point of $U_1(z)$ and $U_2(z)$, the estimates above imply that $0$ is a removable singular point of $U_1(z)$ and $U_2(z)$—in other words, these two functions can be extended analytically to $D(0, \epsilon)$. Hence, $V^{(0)}(z)$ satisfies the following RH problem:

\begin{RHP} \hfill \label{RHP:V^0}
	\begin{enumerate}
		\item
		$V^{(0)}(z) = (V^{(0)}_1(z), V^{(0)}_2(z))$ is analytic in $D(0, \epsilon) \setminus [0, \epsilon)$.
		\item
		For $x \in (0, \epsilon)$, we have
		\begin{equation}
			V^{(0)}_+(x) = V^{(0)}_-(x)
			\begin{pmatrix}
				0 & 1 \\
				1 & 0
			\end{pmatrix}.
		\end{equation}
		\item \label{enu:RHP:V^0_zero}
		As $z \to 0$, we have $V^{(0)}(z) = (\bigO(1), \bigO(1))$.
		\item
		As $z \in \partial D(0, \epsilon)$, $V^{(0)}(z) = Q(z)(I + \bigO(n^{-1}))$.
	\end{enumerate}
\end{RHP}

	\subsection{Final transformation} \label{subsec:final_trans}

We define $R(z) = (R_1(z), R_2(z))$ as
\begin{align}
	R_1(z) = {}&
	\begin{cases}
		V^{(b)}_1(z), & z \in D(b, \epsilon) \setminus (b - \epsilon, b], \\
		V^{(0)}_1(z), & z \in D(0, \epsilon) \setminus [0, \epsilon), \\
		Q_1(z), & z \in \compC \setminus (\overline{D(b, \epsilon)} \cup \overline{D(0, \epsilon)} \cup \Sigma),
	\end{cases} \label{eq:R_1_defn} \\
	R_2(z) = {}&
	\begin{cases}
		V^{(b)}_2(z), & z \in D(b, \epsilon) \setminus (b - \epsilon, b], \\
		V^{(0)}_2(z), & z \in D(0, \epsilon) \setminus [0, \epsilon), \\
		Q_2(z), & z \in \paraP \setminus (\overline{D(b, \epsilon)} \cup \overline{D(0, \epsilon)} \cup \Sigma).
	\end{cases} \label{eq:R_2_defn}
\end{align}
We set
\begin{equation}
	\Sigma^R := [0, b] \cup [b + \epsilon, \infty) \cup \partial D(0, \epsilon) \cup \partial D(b, \epsilon) \cup \Sigma^R_1 \cup \Sigma^R_2, 
\end{equation}
where
\begin{equation} \label{eq:shape_Sigma^R_i}
	\Sigma^R_i := \Sigma_i \setminus \{ D(0, \epsilon) \cup D(b, \epsilon) \}, \quad i = 1, 2.
\end{equation}
See Figure \ref{fig:Sigma_R} for an illustration and the orientation of the arcs. Here we can fix the shape of $\Sigma^R_i$ (and so finally fix the shape of $\Sigma_i$) by letting $\Sigma_i$ be a continuous arc and $\Re \phi(z) < 0$ on $\Sigma^R_i$. It is straightforward to check that $R$ satisfies the following RH problem:

\begin{figure}[htb]
	\centering
	\includegraphics{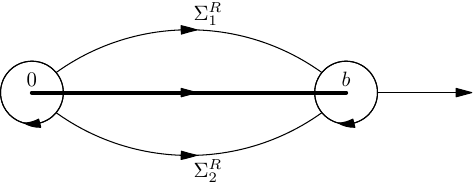}
	\caption{Contour $\Sigma^R$.}
	\label{fig:Sigma_R}
\end{figure}

\begin{RHP} \hfill \label{RHP:R}
	\begin{enumerate}
		\item
		$R(z) = (R_1(z), R_2(z))$, where $R_1(z)$ is analytic in $\compC \setminus \Sigma^R$, and $R_2(z)$ is analytic in $\paraP \setminus \Sigma^R$.
		\item
		$R(z)$ satisfies the following jump conditions:
		\begin{equation}
			R_+(z) = R_-(z)
			\begin{cases}
				J_Q(z), & z \in \Sigma^R_1 \cup \Sigma^R_2 \cup (b + \epsilon, +\infty), \\
				P^{(b)}(z), & z \in \partial D(b, \epsilon), \\
				P^{(0)}(z), & z \in \partial D(0, \epsilon), \\
				\begin{pmatrix}
					0 & 1 \\
					1 & 0
				\end{pmatrix},
				& z \in (0, b) \setminus \{ \epsilon, b - \epsilon \}.
			\end{cases}
		\end{equation}
		\item
		\begin{align}
			R_1(z) = {}& 1 + \bigO(z^{-1}) \quad \text{as $z \to \infty$ in $\compC$}, & R_2(z) = {}& \bigO(1) \quad \text{as $f(z) \to \infty$ in $\paraP$}.
		\end{align}
		\item
		\begin{align}
			R_1(z) = {}& \bigO(1), & R_2(z) = {}& \bigO(1), & \text{as $z \to 0$}, \\
			R_1(z) = {}& \bigO(1), & R_2(z) = {}& \bigO(1), & \text{as $z \to b$}.
		\end{align}
		\item
		At $z \in \rho \cup \{ -\pi^2/4 \} \cup \bar{\rho}$, $R_2(z)$ satisfies the same boundary condition as $Y_2(z)$ in \eqref{eq:Y_bd_cond}.
	\end{enumerate}
\end{RHP}

Similar to the idea used in the construction of the global parametrix, to estimate $R$ for large $n$, we now transform the RH problem for $R$ to a scalar one on the complex plane by defining
\begin{equation} \label{eq:defn_R_by_IJ}
	\R(s) =
	\begin{cases}
		R_1(\J(s)), & s \in \compC \setminus \overline{D} \text{ and } s \notin \Iinv_1(\Sigma^R), \\
		R_2(\J(s)), & s \in D \setminus [0, 1] \text{ and } s \notin \Iinv_2(\Sigma^R),
	\end{cases}
\end{equation}
where we recall that $D$ is the region bounded by the curves $\gamma_1$ and $\gamma_2$, $\Iinv_1: \compC \setminus [0, b] \to \compC \setminus \overline{D}$ and $\Iinv_2: \paraP \setminus [0, b]$ are defined in \eqref{eq:defn_I1} and \eqref{eq:defn_I2}, respectively.

We are now at the stage of describing the RH problem for $\R$. For this purpose, we define
\begin{equation}
	\begin{aligned}
		\SigmaR^{(1)} := {}& \Iinv_1(\Sigma^R_1 \cup \Sigma^R_2) \subseteq \compC \setminus \overline{D}, & \SigmaR^{(1')} := {}& \Iinv_2(\Sigma^R_1 \cup \Sigma^R_2) \subseteq D, \\
		\SigmaR^{(2)} := {}& \Iinv_1((b + \epsilon, +\infty)) \subseteq \compC \setminus \overline{D}, & \SigmaR^{(2')} := {}& \Iinv_2((b + \epsilon, +\infty)) \subseteq D, \\
		\SigmaR^{(3)} := {}& \Iinv_1(\partial D(b, \epsilon)) \subseteq \compC \setminus \overline{D}, & \SigmaR^{(3')} := {}& \Iinv_2(\partial D(b, \epsilon)) \subseteq D, \\
		\SigmaR^{(4)} := {}& \Iinv_1(\partial D(0, \epsilon)) \subseteq \compC \setminus \overline{D}, & \SigmaR^{(4')} := {}& \Iinv_2(\partial D(0, \epsilon)) \subseteq D,
	\end{aligned}
\end{equation}
and set
\begin{equation} \label{eq:defn_SigmaR}
	\SigmaR := \SigmaR^{(1)} \cup \SigmaR^{(1')} \cup \SigmaR^{(2)} \cup \SigmaR^{(2')} \cup \SigmaR^{(3)} \cup \SigmaR^{(3')} \cup \SigmaR^{(4)} \cup \SigmaR^{(4')}.
\end{equation}
See Figure \ref{fig:jump_R_scalar} for an illustration. We also define the following functions on each curve constituting $\SigmaR$: (Below $z = \J(s)$)

\begin{figure}[htb]
	\centering
	\includegraphics{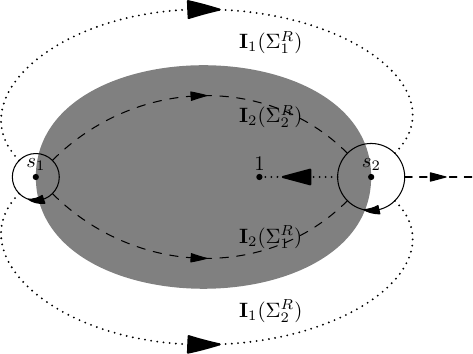}
	\caption{Contour $\SigmaR$. (It is also $\omega(\tilde{\SigmaR})$.) The solid and the dotted curves are the non-trivial jump contour for the RH problem for $\R$. (The solid and dashed curves, upon the mapping $\omega$, are the non-trivial jump contour for the RH problem for $\tilde{\R}$.)}
	\label{fig:jump_R_scalar}
\end{figure}
\begin{align}
	J_{\SigmaR^{(1)}}(s) = {}& (J_Q)_{21}(z), & s \in {}& \SigmaR^{(1)}, \label{eq:J(s)_1} \\
	J_{\SigmaR^{(2')}}(s) = {}& (J_Q)_{12}(z), & s \in {}& \SigmaR^{(2')}, \label{eq:J(s)_2} \\
	J^1_{\SigmaR^{(3)}}(s) = {}& (P^{(b)})_{11}(z) - 1, \quad J^2_{\SigmaR^{(3)}}(s) = (P^{(b)})_{21}(z), & s \in {}& \SigmaR^{(3)}, \label{eq:J(s)_3} \\
	J^1_{\SigmaR^{(3')}}(s) = {}& (P^{(b)})_{22}(z) - 1, \quad J^2_{\SigmaR^{(3')}}(s) = (P^{(b)})_{12}(z), & s \in {}& \SigmaR^{(3')}, \label{eq:J(s)_3_2} \\
	J^1_{\SigmaR^{(4)}}(s) = {}& (P^{(0)})_{11}(z) - 1, \quad J^2_{\SigmaR^{(4)}}(s) = (P^{(0)})_{21}(z), & s \in {}& \SigmaR^{(4)}, \label{eq:J(s)_4} \\
	J^1_{\SigmaR^{(4')}}(s) = {}& (P^{(0)})_{22}(z) - 1, \quad J^2_{\SigmaR^{(4')}}(s) = (P^{(0)})_{12}(z), & s \in {}& \SigmaR^{(4')}, \label{eq:J(s)_4_2}
\end{align}
where $z = \J(s)$ is in $\Sigma^R_1 \cup \Sigma^R_2$ in \eqref{eq:J(s)_1}; in $(b + \epsilon, +\infty)$ in \eqref{eq:J(s)_2}; in $\partial D(b, \epsilon)$ in \eqref{eq:J(s)_3} and \eqref{eq:J(s)_3_2}; in $\partial D(0, \epsilon)$ in \eqref{eq:J(s)_4} and \eqref{eq:J(s)_4_2}. With the aid of these functions, we further define an operator $\Delta_{\SigmaR}$ such that for any complex-valued function $f(s)$ defined on $\SigmaR$, $\Delta_{\SigmaR}$ transforms it linearly into a function $\Delta_{\SigmaR}f$ that is also a complex-valued function defined on $\SigmaR$, with expression
\begin{equation} \label{eq:defn_Delta_SigmaR}
	(\Delta_{\SigmaR} f)(s) =
	\begin{cases}
		J_{\SigmaR^{(1)}}(s) f(\tilde{s}), & s \in \SigmaR^{(1)} \text{ and } \tilde{s} = \Iinv_2(\J(s)) \in \SigmaR^{(1')}, \\
		J_{\SigmaR^{(2')}}(s) f(\tilde{s}), & s \in \SigmaR^{(2')} \text{ and } \tilde{s} = \Iinv_1(\J(s)) \in \SigmaR^{(2)}, \\
		0, & s \in \SigmaR^{(1')} \cup \SigmaR^{(2)}, \\
		J^1_{\SigmaR^{(3)}}(s) f(s) + J^2_{\SigmaR^{(3)}}(s) f(\tilde{s}), & s \in \SigmaR^{(3)} \text{ and } \tilde{s} = \Iinv_2(\J(s)) \in \SigmaR^{(3')}, \\
		J^1_{\SigmaR^{(3')}}(s) f(s) + J^2_{\SigmaR^{(3')}}(s) f(\tilde{s}), & s \in \SigmaR^{(3')} \text{ and } \tilde{s} = \Iinv_1(\J(s)) \in \SigmaR^{(3)}, \\
		J^1_{\SigmaR^{(4)}}(s) f(s) + J^2_{\SigmaR^{(4)}}(s) f(\tilde{s}), & s \in \SigmaR^{(4)} \text{ and } \tilde{s} = \Iinv_2(\J(s)) \in \SigmaR^{(4')}, \\
		J^1_{\SigmaR^{(4')}}(s) f(s) + J^2_{\SigmaR^{(4')}}(s) f(\tilde{s}), & s \in \SigmaR^{(4')} \text{ and } \tilde{s} = \Iinv_1(\J(s)) \in \SigmaR^{(4)}.
	\end{cases}
\end{equation}
We note that all the functions $J_{\SigmaR^{(1)}}(s)$, \dots, $J^2_{\SigmaR^{(4')}}(s)$ that define $\Delta_{\SigmaR}$ in \eqref{eq:defn_Delta_SigmaR} are uniformly $\bigO(n^{-1})$. If we view $\Delta_{\SigmaR}$ as an operator from $L^2(\SigmaR)$ to $L^2(\SigmaR)$, then we have the estimate that for all large enough $n$, there is a constant $M_{\SigmaR}>0$ such that
\begin{equation}\label{eq:estOperator}
	\lVert \Delta_{\SigmaR} \rVert_{L^2(\SigmaR)} \leq M_{\SigmaR} n^{-1}.
\end{equation}

RH problem \ref{RHP:R} entails a scalar shifted RH problem for $\R$:

\begin{RHP} \hfill \label{RHP:R_scalar}
	\begin{enumerate}
		\item
		$\R(s)$ is analytic in $\compC \setminus \SigmaR$, where the contour $\SigmaR$ is defined in \eqref{eq:defn_SigmaR}.
		\item
		For $s \in \SigmaR$, we have
		\begin{equation}
			\R_+(s) - \R_-(s) = (\Delta_{\SigmaR} \R_-)(s),
		\end{equation}
		where $\Delta_{\SigmaR}$ is the operator defined in \eqref{eq:defn_Delta_SigmaR}.
		\item
		As $s \to \infty$, we have
		\begin{equation}
			\R(s) = 1 + \bigO(s^{-1}).
		\end{equation}
		\item 
		As $s \to 0$, we have $\R(s) = \bigO(1)$.
	\end{enumerate}
\end{RHP}

We have the following uniqueness result about the solution of the above RH problem.

\begin{lem} \label{lem:uniqueness_of_R}
	The function $\R(s)$ defined in \eqref{eq:defn_R_by_IJ} is the unique solution of RH problem \ref{RHP:R_scalar}.
\end{lem}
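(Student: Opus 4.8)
The plan is to reduce uniqueness for RH problem \ref{RHP:R_scalar} to the uniqueness of the weakened form of RH problem \ref{RHP:Y} (Item \ref{enu:RHP:Y_zero} replaced by \ref{enu:RHP:Y_zero_alt}) proved in Section \ref{subsubsec:uniqueness}, by running the chain $Y\mapsto T\mapsto S\mapsto Q\mapsto R$ together with the conformal reduction $R\rightsquigarrow\R$ in reverse. That the function $\R$ of \eqref{eq:defn_R_by_IJ} is \emph{a} solution is essentially built into its construction; the only points needing care are $z=0$, $z=b$ and their preimages $s_1$, $s_2$, together with $s=1$ (the preimage of $z=\infty$ on the $\paraP$-sheet), where one combines the endpoint bounds \eqref{eq:RHP:Q_zero}--\eqref{eq:RHP:Q_b} for $Q$, the bounds \eqref{eq:P^0_asy_1}--\eqref{eq:P^0_asy_4} for $P^{(0)}$ and their analogues for $P^{(b)}$, and the local behaviour of $\J$, $\Iinv_1$, $\Iinv_2$ from Appendix \ref{sec:J_x_prop}, to see that $\R$ extends analytically across $\gamma_1\cup\gamma_2$ and is $\bigO(1)$ at those points; the jump $\R_+-\R_-=\Delta_{\SigmaR}\R_-$ and the normalisations then follow directly from RH problem \ref{RHP:R} and the definitions \eqref{eq:defn_R_by_IJ} and \eqref{eq:defn_Delta_SigmaR}.

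For uniqueness, let $\hat\R$ be any solution of RH problem \ref{RHP:R_scalar} and set $\hat R_1:=\hat\R\circ\Iinv_1$ on $(\compC\setminus[0,b])\setminus\Sigma^R$ and $\hat R_2:=\hat\R\circ\Iinv_2$ on $(\paraP\setminus[0,b])\setminus\Sigma^R$. Since $\SigmaR$ misses $\gamma_1\cup\gamma_2$, the analyticity of $\hat\R$ across these curves, together with the matching relations $\Iinv_{1,\pm}=\Iinv_{2,\mp}$ on $(0,b)$ from \eqref{eq:Iinv_+}--\eqref{eq:Iinv_-}, forces $\hat R_+=\hat R_-\left(\begin{smallmatrix}0&1\\1&0\end{smallmatrix}\right)$ on $(0,b)$ and the identification $\hat R_2(z)=\hat R_2(\bar z)$ on $\rho\cup\{-\pi^2/4\}\cup\bar\rho$; the remaining jumps, on $\Iinv_i$ of the lens arcs $\Sigma^R_1\cup\Sigma^R_2$, of $(b+\epsilon,\infty)$, and of $\partial D(0,\epsilon)$ and $\partial D(b,\epsilon)$, reproduce $J_Q$, $P^{(b)}$, $P^{(0)}$ by the very definition of $\Delta_{\SigmaR}$. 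Hence $\hat R$ solves RH problem \ref{RHP:R}. Now undo the substitutions in order: $\hat Q:=\hat R\,P^{(b)}$ on $D(b,\epsilon)$, $\hat Q:=\hat R\,P^{(0)}$ on $D(0,\epsilon)$, $\hat Q:=\hat R$ elsewhere (inverting \eqref{eq:V^b_in_Q_P} and \eqref{eq:V_in_U}); then $\hat S_j:=\hat Q_j\,P^{(\infty)}_j$ (inverting \eqref{eq:defn_Q}); then $\hat T$ by reversing the lens opening \eqref{eq:trans_S_in_T}; and finally $\hat Y$ by reversing \eqref{eq:trans_Y_to_T}. Each step is explicit and invertible, so $\hat Y$ inherits the jump relation and the behaviour at $\infty$ of RH problem \ref{RHP:Y}; and since the parametrix factors $P^{(\infty)}$, $P^{(b)}$, $P^{(0)}$ contribute only inverse-fourth-root growth at $b$ and at most $z^{\mp\alpha/2}$-type growth at $0$, propagating them through the four transformations shows that $\hat Y$ satisfies precisely the weakened endpoint conditions in \ref{enu:RHP:Y_zero_alt} — the admissible $\bigO((z-b)^{-1/2})$ blow-up at $b$ and the sectorial growth $\bigO(z^{-\alpha})$, $\bigO(\log z)$, $\bigO(1)$ at $0$. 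By the uniqueness established in Section \ref{subsubsec:uniqueness}, $\hat Y=Y^{(n+k,n)}$, and retracing the chain of bijections yields $\hat\R=\R$.

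The main obstacle is exactly this endpoint bookkeeping: one has to verify that, starting only from $\hat\R=\bigO(1)$, no singularity stronger than those permitted in \ref{enu:RHP:Y_zero_alt} is created anywhere along the chain, and that the $\rho/\bar\rho$ identification survives each transformation — which is the whole reason the weakened version of RH problem \ref{RHP:Y} was isolated ahead of time in Section \ref{subsubsec:uniqueness}. Everything else is routine. (Alternatively, one can bypass the reduction with a direct small-norm argument: the difference $M$ of two solutions is analytic on $\compC\setminus\SigmaR$ with $M=\bigO(s^{-1})$ at $\infty$, $M=\bigO(1)$ at $0$, and $M_+-M_-=\Delta_{\SigmaR}M_-$, so $M$ is the Cauchy transform of $\Delta_{\SigmaR}M_-$ and its minus boundary value satisfies $(I-C_-\Delta_{\SigmaR})M_-=0$, with $C_-$ the lower Plemelj projection on $L^2(\SigmaR)$; since $\lVert C_-\rVert$ is finite and $\lVert\Delta_{\SigmaR}\rVert_{L^2(\SigmaR)}\le M_{\SigmaR}n^{-1}$ by \eqref{eq:estOperator}, $I-C_-\Delta_{\SigmaR}$ is invertible for large $n$, whence $M\equiv 0$.)
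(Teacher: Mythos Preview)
Your main route is the paper's: reverse the chain $\R\to R\to Q\to S\to T\to Y$ and appeal to the uniqueness proved in Section~\ref{subsubsec:uniqueness} for the weakened RH problem~\ref{RHP:Y}. One point, however, is not quite ``routine'' and is handled differently in the paper. You write that multiplying $\hat R$ by $P^{(0)}$ and propagating through gives \emph{precisely} the endpoint bounds of Item~\ref{enu:RHP:Y_zero_alt}. It does not: with only $\hat R=\bigO(1)$ and the raw bounds \eqref{eq:P^0_asy_1}--\eqref{eq:P^0_asy_4} for $P^{(0)}$, the reconstructed $\hat Q$, $\hat S$, $\hat Y$ are each a factor $z^{1/2}$ worse at $0$ than the stated conditions (e.g.\ $\hat Y_1=\bigO(z^{-1/2})$ outside the lens for $\alpha>0$, not $\bigO(1)$). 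The paper fixes this by inserting the intermediate $U^{\sol}:=R^{\sol}E^{(0)}$: this has no jump in $D(0,\epsilon)\setminus\{0\}$, and at the isolated point $0$ it is $\bigO(z^{-1/2})$, hence actually analytic there; \emph{then} $Q^{\sol}=U^{\sol}\,\Pmodel^{(0)}$ inherits the sharper $\Pmodel^{(0)}$ bounds \eqref{eq:Pmodel^0_asy_1}--\eqref{eq:Pmodel^0_asy_4} rather than the $P^{(0)}$ bounds. Without this removable-singularity upgrade your $\hat Y$ only satisfies a further relaxed version of Item~\ref{enu:RHP:Y_zero_alt}; the argument of Section~\ref{subsubsec:uniqueness} would still go through with those weaker bounds (since $\bigO(z^{-1/2})=o(z^{-1})$ in the relevant sector), but this needs to be said, not asserted as matching~\ref{enu:RHP:Y_zero_alt} exactly.

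Your parenthetical small-norm alternative is a genuinely different argument: it is short and self-contained, but it uses \eqref{eq:estOperator} and therefore only yields uniqueness for $n$ sufficiently large. The paper's reduction to RH problem~\ref{RHP:Y} gives uniqueness for every $n$, which is why that route (and the preparatory Item~\ref{enu:RHP:Y_zero_alt}) was set up in advance.
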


\begin{proof}
	Suppose $\R^{\sol}(s)$ is one solution to RH problem \ref{RHP:R_scalar}. Then, using \eqref{eq:defn_R_by_IJ} backwardly, we have a solution $R^{\sol}(z) = (R^{\sol}_1(z), R^{\sol}_2(z))$ to RH problem \ref{RHP:R}. From $R^{\sol}(s)$, we define the vector-valued function $U^{\sol}(z)$ on $D(0, \epsilon) \setminus [0, \epsilon)$ as (using \eqref{eq:V_in_U} backwardly)
	\begin{equation}
		(U^{\sol}_1(z), U^{\sol}_2(z)) = (R^{\sol}_1(z), R^{\sol}_2(z)) E^{(0)}(z),
	\end{equation}
	where $E^{(0)}(z)$ is defined in \eqref{eq:E^0_defn}. Then $U^{\sol}_1(z)$ and $U^{\sol}_2(z)$ can be defined analytically in $D(0, \epsilon) \setminus \{ 0 \}$, and at the isolated singularity $0$ they may only blow up like an inverse square root. Hence, $U^{\sol}_1(z)$ and $U^{\sol}_2(z)$ are actually analytic in $D(0, \epsilon)$. Next, we define  $Q^{\sol}(z) = (Q^{\sol}_1(z), Q^{\sol}_2(z))$ by (using \eqref{eq:R_1_defn}, \eqref{eq:R_2_defn} and \eqref{eq:U_from_Q} backwardly)
	\begin{align}
		Q^{\sol}_1(z) = {}&
		\begin{cases}
			R^{\sol}_1(z)(P^{(b)})_{11}(z) + R^{\sol}_2(z)(P^{(b)})_{21}(z), & z \in D(b, \epsilon) \setminus \Sigma, \\
			U^{\sol}_1(z)(\Pmodel^{(0)})_{11}(z) + U^{\sol}_2(z)(\Pmodel^{(0)})_{21}(z), & z \in D(0, \epsilon) \setminus \Sigma, \\
			R^{\sol}_1(z), & z \in \compC \setminus (\overline{D(b, \epsilon)} \cup \overline{D(0, \epsilon)} \cup \Sigma),
		\end{cases} \\
		Q^{\sol}_2(z) = {}&
		\begin{cases}
			R^{\sol}_1(z)(P^{(b)})_{12}(z) + R^{\sol}_2(z)(P^{(b)})_{22}(z), & z \in D(b, \epsilon) \setminus \Sigma, \\
			U^{\sol}_1(z)(\Pmodel^{(0)})_{12}(z) + U^{\sol}_2(z)(\Pmodel^{(0)})_{22}(z), & z \in D(0, \epsilon) \setminus \Sigma, \\
			R^{\sol}_2(z), & z \in \paraP \setminus (\overline{D(b, \epsilon)} \cup \overline{D(0, \epsilon)} \cup \Sigma),
		\end{cases}
	\end{align}
	We find that $Q^{\sol}_1(z)$ and $Q^{\sol}_2(z)$ can be defined analytically on $\compC \setminus \Sigma$ and $\paraP \setminus \Sigma$, respectively, and find that $Q^{\sol}(z)$ satisfies a variation of RH problem \ref{RHP:Q}, such that in Item \ref{enu:RHP:Q_zero}, the limit behaviour of $Q_1(z)$ as $z \to 0$ from outside of the lens is changed to $Q_1(z) = \bigO(z^{1/4} \log z)$, and in Item \ref{enu:RHP:Q_b}, the occurrences of $(z - b)^{\frac{1}{4}}$ in \eqref{eq:RHP:Q_b} are replaced by those of $(z - b)^{-\frac{1}{4}}$.
	
	Furthermore, we do the transforms $Y \to T \to S \to Q$ backwardly, and find that from $Q^{\sol}$ we can construct $Y^{\sol}(z) = (Y^{\sol}_1(z), Y^{\sol}_2(z))$ that satisfies a variation of RH problem \ref{RHP:Y} such that Item \ref{enu:RHP:Y_zero} is replaced by Item \ref{enu:RHP:Y_zero_alt} in Section \ref{subsubsec:uniqueness}. By the argument in Section \ref{subsubsec:uniqueness}, we have that $Y^{\sol}(z)$ is unique, and then $\R^{\sol}(s)$ is unique.
\end{proof}

Finally, we show that
\begin{lem}\label{lem:tRest}
	For all $s \in \compC \setminus \SigmaR$, we have the uniform convergence
	\begin{equation}\label{eq:esttR}
		\R(s)=1+\bigO(n^{-1}).
	\end{equation}
\end{lem}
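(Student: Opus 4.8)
The plan is to treat RH problem \ref{RHP:R_scalar} as a small-norm Riemann--Hilbert problem and solve it by a Neumann series, exactly as in the closing step of the Deift--Zhou method (compare the analogous arguments in \cite{Claeys-Wang11} and \cite{Deift99}). First I would reformulate RH problem \ref{RHP:R_scalar} as a singular integral equation on $L^2(\SigmaR)$. Since $\R$ is analytic off $\SigmaR$ with $\R(s) = 1 + \bigO(s^{-1})$ at $\infty$, the jump relation $\R_+ - \R_- = \Delta_{\SigmaR}\R_-$ and the Sokhotski--Plemelj formula give the additive representation
\begin{equation} \label{eq:R_scalar_additive}
  \R(s) = 1 + \frac{1}{2\pi i}\int_{\SigmaR}\frac{(\Delta_{\SigmaR}\R_-)(\xi)}{\xi - s}\,d\xi,
\end{equation}
and, taking the lower boundary value, $\R_- = 1 + \mathcal{C}_-(\Delta_{\SigmaR}\R_-)$ with $\mathcal{C}_-$ the lower Cauchy projection on $\SigmaR$; writing $u = \R_- - 1$ this is
\begin{equation} \label{eq:R_scalar_SIE}
  (I - \mathcal{C}_-\Delta_{\SigmaR})\,u = \mathcal{C}_-\Delta_{\SigmaR}1 \quad \text{on } L^2(\SigmaR).
\end{equation}
The right-hand side lies in $L^2(\SigmaR)$ with norm $\bigO(n^{-1})$: by \eqref{eq:defn_Delta_SigmaR}, $\Delta_{\SigmaR}$ annihilates $\SigmaR^{(1')}\cup\SigmaR^{(2)}$ and sends every function into one supported on the bounded arcs $\SigmaR^{(1)},\SigmaR^{(2')},\SigmaR^{(3)},\SigmaR^{(3')},\SigmaR^{(4)},\SigmaR^{(4')}$, so $\Delta_{\SigmaR}1$ (with $1$ the constant function) belongs to $L^2(\SigmaR)$, and the bound follows from \eqref{eq:estOperator}.

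Next, since $\SigmaR$ is a piecewise-analytic contour with finitely many self-intersections, $\mathcal{C}_-$ is bounded on $L^2(\SigmaR)$; together with \eqref{eq:estOperator} this yields $\lVert \mathcal{C}_-\Delta_{\SigmaR}\rVert_{L^2(\SigmaR)} < 1$ for all large $n$, so $I - \mathcal{C}_-\Delta_{\SigmaR}$ is invertible with uniformly bounded inverse, and \eqref{eq:R_scalar_SIE} gives $\lVert \R_- - 1\rVert_{L^2(\SigmaR)} = \bigO(n^{-1})$. Substituting back into \eqref{eq:R_scalar_additive} and using $\lVert \Delta_{\SigmaR}\R_-\rVert_{L^2(\SigmaR)} \le \lVert\Delta_{\SigmaR}\rVert_{L^2(\SigmaR)}\bigl(\lVert \R_- - 1\rVert_{L^2(\SigmaR)} + \lVert 1\rVert_{L^2(\supp\Delta_{\SigmaR})}\bigr) = \bigO(n^{-1})$, the Cauchy--Schwarz inequality gives $\lvert\R(s) - 1\rvert \le C\,n^{-1}/\dist(s,\SigmaR)$, which is \eqref{eq:esttR} once $s$ stays a fixed distance from $\SigmaR$. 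For $s$ near $\SigmaR$ one uses that each component of $\SigmaR$ is the $\Iinv_1$- or $\Iinv_2$-image of a contour across which the relevant data ($J_Q$ on the lens arcs, $P^{(b)},P^{(0)}$ on the circles) is analytic: the lens arcs $\Sigma^R_1,\Sigma^R_2$ may be shifted while preserving $\Re\phi < 0$ (Remark \ref{rmk:contour_deform}), and the circles $\partial D(0,\epsilon),\partial D(b,\epsilon)$ may be slightly resized, so one deforms $\SigmaR$ locally away from $s$ while keeping the operator norm $\bigO(n^{-1})$ and repeats the estimate. This proves \eqref{eq:esttR} uniformly on $\compC\setminus\SigmaR$ for the $\R$ built from \eqref{eq:R_scalar_additive}; by Lemma \ref{lem:uniqueness_of_R} this $\R$ coincides with the one defined in \eqref{eq:defn_R_by_IJ}.

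The step requiring care rather than bookkeeping is the passage from the $L^2$ estimate to the uniform estimate near the self-intersection points of $\SigmaR$ and near the images of $0$ and $b$ under $\Iinv_1,\Iinv_2$ (namely near $s_1$, $s_2$ and $0$), where several arcs meet; the local-deformation argument above is the standard remedy, and one checks the residual jumps remain uniformly $\bigO(n^{-1})$ over the resulting family of contours. A minor preliminary point is that the shift built into $\Delta_{\SigmaR}$ causes no difficulty: once $\R_-$ is taken as the unknown, $\Delta_{\SigmaR}\R_-$ is simply an $L^2(\SigmaR)$ function of $s$, so \eqref{eq:R_scalar_SIE} is a genuine small-norm Cauchy singular integral equation to which the Neumann-series argument applies verbatim.
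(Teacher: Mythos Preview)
Your approach is essentially the paper's: set up the additive Cauchy representation, derive the singular integral equation $(I-\mathcal{C}_-\Delta_{\SigmaR})u=\mathcal{C}_-\Delta_{\SigmaR}1$, invert by Neumann series using \eqref{eq:estOperator}, get the $L^2$ bound on $\R_--1$, then upgrade to a uniform bound via Cauchy--Schwarz plus contour deformation. This is exactly what the paper does.

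There is, however, one genuine gap in your uniformity argument. You correctly note that the circles $\partial D(0,\epsilon)$, $\partial D(b,\epsilon)$ can be resized and the lens arcs shifted, which takes care of neighbourhoods of $s_1$ and $s_2$ (these are the images of $0$ and $b$; the extra ``$0$'' you list is not relevant). But you miss the point $s=1$. The arc $\SigmaR^{(2')}=\Iinv_2((b+\epsilon,+\infty))$ terminates at $s=1$, because $\Iinv_2(z)\to 1$ as $f(z)\to\infty$; this vertex is fixed and cannot be moved by any of the deformations you invoke. Near $s=1$ the local-deformation remedy fails, and the paper instead handles a small disc $D(1,\epsilon')$ by a separate direct argument: on $\partial D(1,\epsilon')$ one already knows $\R=1+\bigO(n^{-1})$, the only jump inside is along $\SigmaR^{(2')}\cap D(1,\epsilon')$ where $J_{\SigmaR^{(2')}}$ is exponentially small, and $\R$ is bounded at $1$, so $\R=1+\bigO(n^{-1})$ there as well. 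Adding this local step completes your proof.
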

This lemma immediately yields that
\begin{align} \label{eq:P_1P_2_bounded}
	R_1(z) = {}& 1 + \bigO(n^{-1}) \quad \text{uniformly in $\compC \setminus \Sigma^R$}, & R_2(z) = {}& 1 + \bigO(n^{-1}) \quad \text{uniformly in $\paraP \setminus \Sigma^R$}.
\end{align}

\begin{proof}[Proof of Lemma \ref{lem:tRest}]
	We use the strategy proposed in \cite{Claeys-Wang11}, and start with the claim that $\R$ satisfies the integral equation
	\begin{equation} \label{eq:constr_R}
		\R(s) = 1 + \mathcal{C}(\Delta_{\SigmaR} \R_-)(s),
	\end{equation}
	where $\mathcal{C}$ is the Cauchy transform on $\SigmaR$, such that for any $g(s)$ defined on $\SigmaR$,
	\begin{equation}
		\mathcal{C}g(s) = \frac{1}{2\pi i} \int_{\SigmaR} \frac{g(\xi)}{\xi - s} d\xi, \quad s\in\compC \setminus \SigmaR.
	\end{equation}
	To verify \eqref{eq:constr_R}, by the uniqueness of RH problem \ref{RHP:R_scalar}, it suffices to show that the right-hand side of \eqref{eq:constr_R} satisfies RH problem \ref{RHP:R_scalar}, and it is straightforward.
	
	\eqref{eq:constr_R} can be written as
	\begin{equation}\label{eq:tRs-1split}
		\R(s) - 1 = \frac{1}{2\pi i} \int_{\SigmaR} \frac{\Delta_{\SigmaR}(\R_- - 1)(\xi)}{\xi - s} d\xi + \frac{1}{2\pi i} \int_{\SigmaR} \frac{\Delta_{\SigmaR}(1)(\xi)}{\xi - s} d\xi,\quad s\in\compC \setminus \SigmaR.
	\end{equation}
	Below we estimate the two terms on the right-hand side of the above formula.
	
	By taking the limit where $s$ approaches the minus side of $\SigmaR$, we obtain from \eqref{eq:tRs-1split} that
	\begin{equation}\label{eq:tR-}
		\R_-(s) - 1 = \mathcal{C}_{\Delta_{\SigmaR}}(\R_- - 1)(s) + \mathcal{C}_-(\Delta_{\SigmaR}(1))(s),
	\end{equation}
	where $\mathcal{C}_-$ is the Hilbert-like transform defined on $\SigmaR$,
	\begin{align}
		\mathcal{C}_-g(s) = {}& \frac{1}{2\pi i} \lim_{s' \to s_-} \int_{\SigmaR} \frac{g(\xi)}{\xi - s'} d\xi, && \text{and} && \mathcal{C}_{\Delta_{\SigmaR}}f(s) = \mathcal{C}_-(\Delta_{\SigmaR}(f))(s), 
	\end{align}
	such that the limit $s' \to s_-$ is taken when approaching the contour from the minus side. Since the Hilbert-like operator $\mathcal{C}_-$ is bounded on $L^2(\SigmaR)$, we see from estimate \eqref{eq:estOperator} that the operator norm of $\mathcal{C}_{\Delta_{\SigmaR}}$ is also uniformly $\bigO(n^{-1})$ as $n\to \infty$. Hence, if $n$ is large enough, the operator $1-\mathcal{C}_{\Delta_{\SigmaR}}$ is invertible, and we could rewrite \eqref{eq:tR-} as
	\begin{equation}
		\R_-(s) - 1 = (1 - \mathcal{C}_{\Delta_{\SigmaR}})^{-1}( \mathcal{C}_-(\Delta_{\SigmaR}(1)))(s).
	\end{equation}
	As one can check directly that
	\begin{equation}\label{eq:DeltaR1norm}
		\lVert \Delta_{\SigmaR}(1) \rVert_{L^2(\SigmaR)} = \bigO(n^{-1}),
	\end{equation}
	combining the above two formulas gives us
	\begin{equation}\label{eq:tR-1norm}
		\lVert \R_- - 1 \rVert_{L^2(\SigmaR)} = \bigO(n^{-1}).
	\end{equation}
	By \eqref{eq:tRs-1split}, we have that for any fixed $\delta > 0$, if $\dist(s, \SigmaR) > \delta$, then
	\begin{equation}
		\begin{split}
			\lvert \R(s) - 1 \rvert & \leq  \frac{1}{2\pi} \left( \lVert \Delta_{\SigmaR}(\R_- - 1) \rVert_{L^2(\SigmaR)} + \lVert \Delta_{\SigmaR}(1) \rVert_{L^2(\SigmaR)} \right) \cdot \lVert \frac{1}{\xi - s} \rVert_{L^2(\SigmaR)} = \bigO(n^{-1}).
		\end{split}
	\end{equation}
	As a consequence, we conclude \eqref{eq:esttR} holds uniformly in $\{ s \in \compC : \dist(s, \SigmaR) > \delta \}$. Since we can deform the contour $\Sigma$ outside a neighbourhood of $1$, say $D(1, \epsilon')$, by varying the value of $\epsilon$ in \eqref{eq:R_1_defn} and \eqref{eq:R_2_defn}, choosing different shapes of $\Sigma^R_1$ and $\Sigma^R_2$ in \eqref{eq:shape_Sigma^R_i}, and deforming the jump contour $(b, \infty)$ as in Remark \ref{rmk:contour_deform}, we can then show that \eqref{eq:esttR} holds uniformly in $\{ s \in \compC : s \notin \SigmaR \text{ and } \lvert s - 1 \rvert \geq \epsilon' \}$. (We cannot freely deform $\SigmaR$ around $1$, because $\SigmaR$ needs to connect to $1$ as a vertex.) At last, in $D(1, \epsilon')$, $\R(s)$ satisfies a simple RH problem: its value on $\partial D(1, \epsilon')$ is uniformly $1 + \bigO(n^{-1})$, its limit at $1$ is $1$, and it has a jump along $[1, 1 + \epsilon')$, where the jump is given by $J_{\SigmaR^{(2')}}(s)$ that is exponentially small. Hence we also conclude that \eqref{eq:esttR} holds uniformly in $\{ s \in D(1, \epsilon') : s \notin \SigmaR \}$.
\end{proof}

\section{Asymptotic analysis for $q^{(n)}_{n + k}(f(z))$} \label{sec:asy_q}

In this section, we analyze $q^{(n)}_{n + k}(f(z))$ in the same way as we analyzed $p^{(n)}_{n + k}(z)$ in Section \ref{sec:asy_p}. Since the method is parallel, we omit some details. It is worth noting that the jump contours in this section can be taken to be the same as those in Section \ref{sec:asy_p}.

\subsection{RH problem for the polynomials}

Consider the following Cauchy transform of $q_j$:
\begin{equation}\label{eq:cauchyqj}
	\tilde{C} q_j(z) := \frac{1}{2\pi i} \int_{\realR_+} \frac{q_j(f(x))}{x - z} W^{(n)}_{\alpha}(x) dx,
\end{equation}
which is well-defined for $z \in \paraP \setminus \realR_+$. Since $W^{(n)}_{\alpha}(x)$ is real analytic and vanishes rapidly as $x \to +\infty$, we have the following asymptotic expansion for $\C q_j(z)$ as $z \in \paraP \setminus \realR_+$ and $\Re z \to +\infty$:
\begin{equation}
	\begin{split}
		\C q_j(z) = {}& \frac{-1}{2\pi i z} \int_{\realR_+} \frac{q_j(f(x))}{1 - x/z} W^{(n)}_{\alpha}(x) dx \\
		= {}& \frac{-1}{2\pi i z} \sum^M_{k = 0} \left( \int_{\realR_+} q_j(f(x)) x^k W^{(n)}_{\alpha}(x) dx \right) z^{-(k + 1)} + \bigO( z^{-(M + 2)}),
	\end{split}
\end{equation}
for any $M \in \natN$, uniformly in $\Im z$. Thus, due to the orthogonality,
\begin{equation}
	\C q_j(z) = \frac{-h^{(n)}_j}{2\pi i} z^{-(j + 1)} + \bigO(z^{-(j + 2)}),
\end{equation}
where $h^{(n)}_j$ is given in \eqref{eq:defn_h}.

Hence, we conclude that if we define the array
\begin{equation} \label{eq:defn_Y_tilde}
	\tilde{Y}(z) = \tilde{Y}^{(j, n)}(z) := (q_j(f(z)), C q_j(z)),
\end{equation}
then it satisfies the following conditions:

\begin{RHP} \hfill \label{RHP:Y_tilde}
	\begin{enumerate}
		\item
		$\tilde{Y} = (\tilde{Y}_1, \tilde{Y}_2)$, where $\tilde{Y}_1$ is analytic on $\paraP$, and $\tilde{Y}_2$ is analytic on $\compC \setminus \realR_+$.
		\item
		With the standard orientation of $\realR_+$,
		\begin{equation}
			\tilde{Y}_+(x) = \tilde{Y}_-(x)
			\begin{pmatrix}
				1 & W^{(n)}_{\alpha}(x) \\
				0 & 1
			\end{pmatrix},
			\quad \text{for $x \in \realR_+$}.
		\end{equation}
		\item
		As $f(z) \to \infty$ in $\paraP$ (i.e., $\Re z \to +\infty$), $\tilde{Y}_1(z) = f(z)^j + \bigO(f(z)^{j - 1})$.
		\item
		As $z \to \infty$ in $\compC$, $\tilde{Y}_2(z) = \bigO(z^{-(j + 1)})$.
		\item
		As $z \to 0$ in $\paraP$ or $\compC$,
		\begin{align} \label{eq:Y_bd_0_tilde}
			\tilde{Y}_1(z) = {}& \bigO(1), & \tilde{Y}_2(z) = {}&
			\begin{cases}
				\bigO(1), & \alpha > 0, \\
				\bigO(\log z), & \alpha = 0, \\
				\bigO(z^{\alpha}), & \alpha \in (-1, 0).
			\end{cases}
		\end{align}
		\item
		At $z \in \rho \cup \{ -\pi^2/4 \} \cup \bar{\rho}$, the limit $\tilde{Y}_1(z) := \lim_{w \to z \text{ in } \paraP} \tilde{Y}_1(w)$ exists and is continuous, and
		\begin{equation} \label{eq:tilde_Y_bd_cond_tilde}
			\tilde{Y}_1(z) = \tilde{Y}_1(\bar{z}).
		\end{equation}
	\end{enumerate}
\end{RHP}

Conversely, the RH problem for $\tilde{Y}$ has a unique solution given by \eqref{eq:defn_Y}. We omit the proof since it is analogous to that of RH problem \ref{RHP:Y} given in Section \ref{subsubsec:uniqueness}.

Below, we take $j = n + k$, where $k$ is a fixed integer, and our goal is to obtain the asymptotics for $\tilde{Y} = \tilde{Y}^{(n + k, n)}$ as $n \to \infty$.

\subsection{First transformation $\tilde{Y} \mapsto \tilde{T}$}

Analogous to \eqref{eq:trans_Y_to_T}, we denote $\tilde{Y} = \tilde{Y}^{(n + k, n)}$ and define $\tilde{T}$ as
\begin{equation}
	\tilde{T}(z) = e^{-\frac{n \ell}{2}} \tilde{Y}(z)
	\begin{pmatrix}
		e^{-n\tilde{\gfn}(z)} & 0 \\
		0 & e^{n\gfn(z)}
	\end{pmatrix}
	e^{\frac{n\ell}{2} \sigma_3}.
\end{equation}
Then $\tilde{T}$ satisfies an RH problem with the same domain of analyticity as $\tilde{Y}$, but with different asymptotic behavior and a different jump relation.

\begin{RHP} \hfill \label{RHP:T_tilde}
	\begin{enumerate}
		\item
		$\tilde{T} = (\tilde{T}_1, \tilde{T}_2)$, where $\tilde{T}_1$ is analytic in $\paraP \setminus \realR_+$, and $\tilde{T}_2$ is analytic in $\compC \setminus \realR_+$.
		\item
		$\tilde{T}$ satisfies the jump relation
		\begin{equation}
			\tilde{T}_+(x) = \tilde{T}_-(x) J_{\tilde{T}}(x), \quad \text{for $x \in \realR_+$},
		\end{equation}
		where
		\begin{equation}
			J_{\tilde{T}}(x) =
			\begin{pmatrix}
				e^{n(\tilde{\gfn}_-(x) - \tilde{\gfn}_+(x))} & x^{\alpha} h(x) e^{n(\gfn_-(x) + \gfntilde_+(x) - V(x) - \ell)} \\
				0 & e^{n(\gfn_+(x) - \gfn_-(x))}
			\end{pmatrix}.
		\end{equation}
		\item
		\begin{align} \label{eq:T_at_infty_tilde}
			\tilde{T}_1(z) = {}& f(z)^k + \bigO(f(z)^{k - 1}) \quad \text{as $f(z) \to \infty$ in $\paraP$}, & \tilde{T}_2(z) = {}& \bigO(z^{-(k + 1)}) \quad \text{as $z \to \infty$ in $\compC$}.
		\end{align}
		\item
		As $z \to 0$ in $\paraP$ or $\compC$, $\tilde{T}(z)$ has the same limiting behavior as $\tilde{Y}(z)$ in \eqref{eq:Y_bd_0_tilde}.
		\item
		\begin{align} \label{eq:T_at_b_tilde}
			\tilde{T}_1(z) = {}& \bigO(1), & \tilde{T}_2(z) = {}& \bigO(1), & \text{as $z \to b$}.
		\end{align}
		\item
		At $z \in \rho \cup \{ -\pi^2/4 \} \cup \bar{\rho}$, $\tilde{T}_1(z)$ satisfies the same boundary condition as $\tilde{Y}_1(z)$ in \eqref{eq:tilde_Y_bd_cond_tilde}.
	\end{enumerate}
\end{RHP}

\subsection{Second transformation $\tilde{T} \mapsto \tilde{S}$} \label{sec:T_to_S_tilde}

Analogous to \eqref{eq:J_T_decomposition}, for $x \in (0, b)$, we decompose the jump matrix $J_{\tilde{T}}(x)$ as
\begin{multline} \label{eq:J_T_decomposition_tilde}
	\begin{pmatrix}
		1 & 0 \\
		\frac{1}{x^{\alpha} h(x)} e^{-n \phi_-(x)} & 1
	\end{pmatrix}
	\begin{pmatrix}
		0 & x^{\alpha} h(x) e^{n(\gfntilde_-(x) + \gfn_+(x) - V(x) - \ell)} \\
		-\dfrac{1}{x^{\alpha} h(x) e^{n(\gfntilde_-(x) + \gfn_+(x) - V(x) - \ell)}} & 0
	\end{pmatrix} \\
	\times
	\begin{pmatrix}
		1 & 0 \\
		\frac{1}{x^{\alpha} h(x)} e^{-n \phi_+(x)} & 1
	\end{pmatrix}.
\end{multline}

Then, analogous to \eqref{eq:trans_S_in_T}, define
\begin{equation}
	\tilde{S}(z) :=
	\begin{cases}
		\tilde{T}(z), & \text{outside the lens}, \\
		\tilde{T}(z)
		\begin{pmatrix}
			1 & 0 \\
			z^{-\alpha} h(z)^{-1} e^{-n \phi(z)} & 1
		\end{pmatrix},
		& \text{in the lower part of the lens}, \\
		\tilde{T}(z)
		\begin{pmatrix}
			1 & 0 \\
			-z^{-\alpha} h(z)^{-1} e^{-n \phi(z)} & 1
		\end{pmatrix},
		& \text{in the upper part of the lens}.
	\end{cases}
\end{equation}
From the definition of $\tilde{S}$ and the decomposition of $J_{\tilde{T}}(x)$ in \eqref{eq:J_T_decomposition}, we have, analogous to RH problem \ref{RHP:S}, that

\begin{RHP} \hfill \label{RHP:S_tilde}
	\begin{enumerate}
		\item
		$\tilde{S} = (\tilde{S}_1, \tilde{S}_2)$, where $\tilde{S}_1$ is analytic in $\paraP \setminus \Sigma$, and $\tilde{S}_2$ is analytic in $\compC \setminus \Sigma$.
		\item
		We have
		\begin{equation}
			\tilde{S}_+(z) = \tilde{S}_-(z) J_{\tilde{S}}(z), \quad \text{for $z \in \Sigma$},
		\end{equation}
		where
		\begin{equation}
			J_{\tilde{S}}(z) =
			\begin{cases}
				\begin{pmatrix}
					1 & 0 \\
					z^{-\alpha} h(z)^{-1} e^{-n \phi(z)} & 1
				\end{pmatrix},
				& \text{for $z \in \Sigma_1 \cup \Sigma_2$}, \\
				\begin{pmatrix}
					0 & z^{\alpha} h(z) \\
					-z^{-\alpha} h(z)^{-1} & 0
				\end{pmatrix},
				& \text{for $z \in (0, b)$}, \\
				\begin{pmatrix}
					1 & z^{\alpha} h(z) e^{n \phi(z)} \\
					0 & 1
				\end{pmatrix},
				& \text{for $z \in (b, \infty)$}.
			\end{cases}
		\end{equation}
		\item
		As $z \to \infty$ in $\paraP$ or $\compC$, $\tilde{S}(z)$ has the same limiting behavior as $\tilde{T}(z)$ in \eqref{eq:T_at_infty_tilde}.
		\item
		As $z \to 0$ in $\paraP \setminus \Sigma$, we have
		\begin{equation}
			\tilde{S}_1(z) =
			\begin{cases}
				\bigO(z^{-\alpha}), & \text{$\alpha > 0$ and $z$ inside the lens}, \\
				\bigO(\log z), & \text{$\alpha = 0$ and $z$ inside the lens}, \\
				\bigO(1), & \text{$z$ outside the lens or $-1 < \alpha < 0$}.
			\end{cases}
		\end{equation}
		\item
		As $z \to 0$ in $\compC$, $\tilde{S}_2$ has the same behavior as $\tilde{Y}_2(z)$ in \eqref{eq:Y_bd_0_tilde}.
		\item
		As $z \to b$, $\tilde{S}(z)$ has the same limiting behavior as $\tilde{T}(z)$ in \eqref{eq:T_at_b_tilde}.
		\item
		At $z \in \rho \cup \{ -\pi^2/4 \} \cup \bar{\rho}$, $\tilde{S}_1(z)$ satisfies the same boundary condition as $\tilde{Y}_1(z)$ in \eqref{eq:tilde_Y_bd_cond_tilde}.
	\end{enumerate}
\end{RHP}

\subsection{Construction of the global parametrix}

Analogous to RH problem \ref{RHP:Pinfinity}, we construct the following:
\begin{RHP} \hfill \label{RHP:Pinfinity_tilde}
	\begin{enumerate}
		\item
		$\tilde{P}^{(\infty)} = (\tilde{P}^{(\infty)}_1, \tilde{P}^{(\infty)}_2)$, where $\tilde{P}^{(\infty)}_1$ is analytic in $\paraP \setminus [0, b]$, and $\tilde{P}^{(\infty)}_2$ is analytic in $\compC \setminus [0, b]$.
		\item
		For $x \in (0, b)$, we have
		\begin{equation}
			\tilde{P}^{(\infty)}_+(x) = \tilde{P}^{(\infty)}_-(x)
			\begin{pmatrix}
				0 & x^{\alpha} h(x) \\
				-x^{-\alpha} h(x)^{-1} & 0
			\end{pmatrix}.
		\end{equation}
		\item
		As $z \to \infty$ in $\paraP$ or $\compC$, $\tilde{P}^{(\infty)}(z)$ has the same limiting behavior as $\tilde{T}(z)$ in \eqref{eq:T_at_infty_tilde}.
		\item
		At $z \in \rho \cup \{ -\pi^2/4 \} \cup \bar{\rho}$, $\tilde{P}^{(\infty)}_1(z)$ satisfies the same boundary condition as $\tilde{Y}_1(z)$ in \eqref{eq:tilde_Y_bd_cond_tilde}.
	\end{enumerate}
\end{RHP}

To construct a solution to the above RH problem, we set, analogous to \eqref{eq:P_scalar_defn},
\begin{equation}
	\tilde{\P}(s) :=
	\begin{cases}
		\tilde{P}^{(\infty)}_2(\J(s)), & s \in \compC \setminus \overline{D}, \\
		\tilde{P}^{(\infty)}_1(\J(s)), & s \in D \setminus [0, 1].
	\end{cases}
\end{equation}
Like $\P(s)$ in \eqref{eq:P_scalar_defn}, $\tilde{\P}$ is well-defined on $[0, 1)$ by continuation. Analogous to RH problem \ref{RHP:P_scalar} for $\P$, $\tilde{\P}$ satisfies the following:

\begin{RHP} \hfill
	\begin{enumerate}
		\item
		$\tilde{\P}$ is analytic in $\compC \setminus (\gamma_1 \cup \gamma_2 \cup \{ 1 \})$.
		\item
		For $s \in \gamma_1 \cup \gamma_2$, $\tilde{\P}_+(s) = \tilde{\P}_-(s) J_{\tilde{\P}}(s)$, where
		\begin{equation}
			J_{\tilde{\P}}(s) =
			\begin{cases}
				\J^{\alpha}(s) h(\J(s)), & s \in \gamma_1, \\
				-\J^{-\alpha}(s) h(\J(s))^{-1}, & s \in \gamma_2.
			\end{cases}
		\end{equation}

		\item
		As $s \to \infty$, $\tilde{\P}(s) = \bigO(s^{-(k + 1)})$.
		\item
		As $s \to 1$, $\tilde{\P}(s) = e^{kc} (s - 1)^{-k} + \bigO((s - 1)^{-(k + 1)})$.
	\end{enumerate}
\end{RHP}

A solution $\tilde{\P}$ to the above RH problem is explicitly given by
\begin{equation} \label{eq:defn_P_scalar_tilde}
	\tilde{\P}(s) =
	\begin{cases}
		\tilde{G}_k(s), & s \in D \setminus \{ 1 \}, \\
		\frac{(1 - s_1)^{\alpha + \frac{1}{2}} \sqrt{s_2 - 1} i \tilde{D}(1)^{-1} e^{kc} \J(s)^{\alpha}}{(s - s_1)^{\alpha} (s - 1)^k \sqrt{(s - s_1)(s - s_2)}} D(s)^{-1}, & s \in \compC \setminus \overline{D},
	\end{cases}
\end{equation}
where $D(s)$ is defined in \eqref{eq:defn_D(s)}, the square root is taken in $\compC \setminus \gamma_2$ with $\sqrt{(s - s_1)(s - s_2)} \sim s$ as $s \to \infty$. Later in this paper, we take \eqref{eq:defn_P_scalar_tilde} as the definition of $\tilde{\P}(s)$.

Based on this solution, we construct the solution to RH problem \ref{RHP:Pinfinity_tilde} as
\begin{align}
	\tilde{P}^{(\infty)}_2(z) = {}& \tilde{\P}(\Iinv_1(z)), & z \in {}& \compC \setminus [0, b], \label{eq:Pinfty_1_tilde_in_scalar} \\
	\tilde{P}^{(\infty)}_1(z) = {}& \tilde{\P}(\Iinv_2(z)) = \tilde{G}_k(\Iinv_2(z)), & z \in {}& \paraP \setminus [0, b]. \label{eq:Pinfty_2_tilde_in_scalar}
\end{align}

By direct calculation in Section \ref{sec:J_x_prop}, we have, analogous to \eqref{eq:Pinfty_at_0} and \eqref{eq:Pinfty_at_b},
\begin{align}
	\tilde{P}^{(\infty)}_1(z) = {}& \bigO(z^{-\frac{\alpha}{2} - \frac{1}{4}}), & \tilde{P}^{(\infty)}_2(z) = {}& \bigO(z^{\frac{\alpha}{2} - \frac{1}{4}}), & \text{as } z \to 0, \\
	\tilde{P}^{(\infty)}_1(z) = {}& \bigO(z^{-\frac{1}{4}}), & \tilde{P}^{(\infty)}_2(z) = {}& \bigO(z^{-\frac{1}{4}}), & \text{as } z \to b.
\end{align}

\subsection{Third transformation $\tilde{S} \mapsto \tilde{Q}$}

Noting that $\tilde{P}^{(\infty)}_1(z) \neq 0$ for all $z \in \paraP \setminus [0, b]$ and $\tilde{P}^{(\infty)}_2(z) \neq 0$ for all $z \in \compC \setminus [0, b]$, we define, analogous to \eqref{eq:defn_Q},
\begin{equation} \label{eq:defn_Q_tilde}
	\tilde{Q}(z) = (\tilde{Q}_1(z), \tilde{Q}_2(z)) = \left( \frac{\tilde{S}_1(z)}{\tilde{P}^{(\infty)}_1(z)}, \frac{\tilde{S}_2(z)}{\tilde{P}^{(\infty)}_2(z)} \right).
\end{equation}
Analogous to RH problem \ref{RHP:Q}, $\tilde{Q}$ satisfies the following RH problem:

\begin{RHP} \hfill \label{RHP:Q_tilde}
	\begin{enumerate}
		\item
		$\tilde{Q} = (\tilde{Q}_1, \tilde{Q}_2)$, where $\tilde{Q}_1$ is analytic in $\paraP \setminus \Sigma$, and $\tilde{Q}_2$ is analytic in $\compC \setminus \Sigma$.
		\item
		For $z \in \Sigma$, we have
		\begin{equation} \label{eq:jump_Q_tilde}
			\tilde{Q}_+(z) = \tilde{Q}_-(z) J_{\tilde{Q}}(z),
		\end{equation}
		where
		\begin{equation} \label{eq:defn_J_Q_tilde}
			J_{\tilde{Q}}(z) =
			\begin{cases}
				\begin{pmatrix}
					1 & 0 \\
					z^{-\alpha} h(z)^{-1} \frac{\tilde{P}^{(\infty)}_2(z)}{\tilde{P}^{(\infty)}_1(z)} e^{-n \phi(z)} & 1
				\end{pmatrix},
				& z \in \Sigma_1 \cup \Sigma_2, \\
				\begin{pmatrix}
					0 & 1 \\
					1 & 0
				\end{pmatrix},
				& z \in (0, b), \\
				\begin{pmatrix}
					1 & z^{\alpha} h(z) \frac{\tilde{P}^{(\infty)}_1(z)}{\tilde{P}^{(\infty)}_2(z)} e^{-n \phi(z)} \\
					0 & 1
				\end{pmatrix},
				& z \in (b, \infty).
			\end{cases}
		\end{equation}
		\item
		\begin{align}
			\tilde{Q}_1(z) = {}& 1 + \bigO(f(z)^{-1}), \quad \text{as $f(z) \to \infty$ in $\paraP$}, & \tilde{Q}_2(z) = {}& \bigO(z^{-1}), \quad \text{as $z \to \infty$ in $\compC$}.
		\end{align}
		\item \label{enu:RHP:Q_zero_tilde}
		As $z \to 0$ in $\paraP \setminus \Sigma$, we have
		\begin{equation}
			\tilde{Q}_1(z) =
			\begin{cases}
				\bigO(z^{-\frac{\alpha}{2} + \frac{1}{4}}), & \text{$\alpha > 0$ and $z$ inside the lens}, \\
				\bigO(z^{\frac{1}{4}} \log z), & \text{$\alpha = 0$ and $z$ inside the lens}, \\
				\bigO(z^{\frac{\alpha}{2} + \frac{1}{4}}), & \text{$z$ outside the lens or $-1 < \alpha < 0$}.
			\end{cases}
		\end{equation}
		\item
		As $z \to 0$ in $\compC$, we have
		\begin{equation}
			\tilde{Q}_2(z) =
			\begin{cases}
				\bigO(z^{-\frac{\alpha}{2} + \frac{1}{4}}), & \alpha > 0, \\
				\bigO(z^{\frac{1}{4}} \log z), & \alpha = 0, \\
				\bigO(z^{\frac{\alpha}{2} + \frac{1}{4}}), & \alpha \in (-1, 0).
			\end{cases}
		\end{equation}
		\item
		\begin{align}
			\tilde{Q}_1(z) = {}& \bigO((z - b)^{\frac{1}{4}}), & \tilde{Q}_2(z) = {}& \bigO((z - b)^{\frac{1}{4}}), & \text{as $z \to b$}.
		\end{align}
		\item
		At $z \in \rho \cup \{ -\pi^2/4 \} \cup \bar{\rho}$, $\tilde{Q}_1(z)$ satisfies the same boundary condition as $\tilde{Y}_1(z)$ in \eqref{eq:tilde_Y_bd_cond_tilde}.
	\end{enumerate}
\end{RHP}

\subsection{Construction of local parametrix near $b$} \label{subsec:Airy_para:tilde}

Let, analogous to \eqref{eq:defn_g^(b)_i},
\begin{align} \label{eq:defn_g^(b)_i_tilde}
	\tilde{g}^{(b)}_1(z) = {}& \frac{h(z)^{-1}}{\tilde{P}^{(\infty)}_1(z)}, & \tilde{g}^{(b)}_2(z) = {}& \frac{z^{\alpha}}{\tilde{P}^{(\infty)}_2(z)},
\end{align}
and define, analogous to \eqref{eq:defn_Pmodel},
\begin{equation} \label{eq:defn_Pmodel_tilde}
	\tilde{\Pmodel}^{(b)}(z) := \Psi^{(\Ai)}(n^{\frac{2}{3}} f_b(z))
	\begin{pmatrix}
		e^{-\frac{n}{2} \phi(z)} \tilde{g}^{(b)}_1(z) & 0 \\
		0 & e^{\frac{n}{2} \phi(z)} \tilde{g}^{(b)}_2(z)
	\end{pmatrix},
	\quad z \in D(b, \epsilon) \setminus \Sigma.
\end{equation}
From \eqref{eq:defn_g^(b)_i_tilde} and the RH problem \ref{RHP:Pinfinity_tilde} satisfied by $\tilde{P}^{(\infty)}(z)$, we have, analogous to \eqref{eq:gb_jump} and \eqref{eq:gb_bd},
\begin{align}
	\tilde{g}^{(b)}_{1, +}(x) = {}& -\tilde{g}^{(b)}_{2, -}(x), & \tilde{g}^{(b)}_{2, +}(x) = {}& \tilde{g}^{(b)}_{1, -}(x), && \text{for $x \in (b - \epsilon, b)$}, \\
	\tilde{g}^{(b)}_1(z) = {}& \bigO((z - b)^{\frac{1}{4}}), & \tilde{g}^{(b)}_2(z) = {}& \bigO((z - b)^{\frac{1}{4}}), && \text{as $z \to b$}.
\end{align}
Then, analogous to RH problem \ref{RHP:Pmodel^b}, we have the following RH problem satisfied by $\tilde{\Pmodel}^{(b)}(z)$:
\begin{RHP} \hfill
	\begin{enumerate}
		\item
		$\tilde{\Pmodel}^{(b)}(z)$ is a $2 \times 2$ matrix-valued function analytic for $z \in D(b, \epsilon) \setminus \Sigma$.
		\item
		For $z \in \Sigma \cap D(b, \epsilon)$, we have
		\begin{equation} \label{eq:jump_Pmodel_tilde}
			\tilde{\Pmodel}^{(b)}_+(z) = \tilde{\Pmodel}^{(b)}_-(z) J_{\tilde{Q}} (z),
		\end{equation}
		where $J_{\tilde{Q}}(z)$ is defined in \eqref{eq:defn_J_Q_tilde}.
		\item
		As $z \to b$, the limiting behavior of $\tilde{\Pmodel}^{(b)}(z)$ and $(\tilde{\Pmodel}^{(b)})^{-1}(z)$ is the same as that of $\Pmodel^{(b)}(z)$ and $(\Pmodel^{(b)})^{-1}(z)$ in \eqref{eq:Pmodel^b_asy}.
		\item
		For $z \in \partial D(b, \epsilon)$, we have, as $n \to \infty$,
		\begin{equation}
			\tilde{E}^{(b)}(z) \tilde{\Pmodel}^{(b)}(z) = I + \bigO(n^{-1}),
		\end{equation}
		where
		\begin{equation} \label{eq:defn_E^b_tilde}
			\begin{split}
				\tilde{E}^{(b)}(z) = {}& \frac{1}{\sqrt{2}}
				\begin{pmatrix}
					\tilde{g}^{(b)}_1(z) & 0 \\
					0 & \tilde{g}^{(b)}_2(z)
				\end{pmatrix}^{-1}
				e^{\frac{\pi i}{4} \sigma_3}
				\begin{pmatrix}
					1 & -1 \\
					1 & 1
				\end{pmatrix}
				\begin{pmatrix}
					n^{\frac{1}{6}} f_b(z)^{\frac{1}{4}} & 0 \\
					0 & n^{-\frac{1}{6}} f_b(z)^{-\frac{1}{4}}
				\end{pmatrix}
				\tilde{\Pmodel}^{(b)}(z) \\
				= {}& I + \bigO(n^{-1}).
			\end{split}
		\end{equation}
	\end{enumerate}
\end{RHP}

We now define a $2 \times 2$ matrix-valued function
\begin{equation} \label{eq:P^b_in_E_tilde}
	\tilde{P}^{(b)}(z) = \tilde{E}^{(b)}(z) \tilde{\Pmodel}^{(b)}(z), \quad z \in D(b, \epsilon) \setminus \Sigma,
\end{equation}
where $\tilde{\Pmodel}^{(b)}$ is given in \eqref{eq:defn_Pmodel_tilde} and $\tilde{E}^{(b)}(z)$ is defined in \eqref{eq:defn_E^b_tilde}. Like $E^{(b)}(z)$ defined in \eqref{eq:defn_E^b}, it is straightforward to see that $\tilde{E}^{(b)}(z)$ is analytic on $D(b, \epsilon) \setminus (b - \epsilon, b]$, and for $x \in (b - \epsilon, b)$
\begin{equation}
	\tilde{E}^{(b)}_+(x) \tilde{E}^{(b)}_-(x)^{-1} =
	\begin{pmatrix}
		0 & 1 \\
		1 & 0
	\end{pmatrix},
\end{equation}
and as $z \to b$,
\begin{align}
	\tilde{E}^{(b)}(z) = {}&
	\begin{pmatrix}
		\bigO(1) & \bigO(z^{-\frac{1}{2}}) \\
		\bigO(1) & \bigO(z^{-\frac{1}{2}})
	\end{pmatrix}, &
	\tilde{E}^{(b)}(z)^{-1} = {}&
	\begin{pmatrix}
		\bigO(1) & \bigO(1) \\
		\bigO(z^{\frac{1}{2}}) & \bigO(z^{\frac{1}{2}})
	\end{pmatrix}.
\end{align}
Hence, analogous to RH problem \ref{RHP:P^b_actual}, we have the following RH problem satisfied by $\tilde{P}^{(b)}(z)$:
\begin{RHP} \hfill
	\begin{enumerate}
		\item
		$\tilde{P}^{(b)}(z)$ is analytic in $D(b, \epsilon) \setminus \Sigma$.
		\item
		For $z \in \Sigma \cap D(b, \epsilon)$, we have
		\begin{equation}
			\tilde{P}^{(b)}_+(z) =
			\begin{cases}
				\tilde{P}^{(b)}_-(z) J_{\tilde{Q}}(z), & z \in \Sigma \cap D(b, \epsilon) \setminus (b - \epsilon, b], \\
				\begin{pmatrix}
					0 & 1 \\
					1 & 0
				\end{pmatrix}
				\tilde{P}^{(b)}_-(z) J_{\tilde{Q}}(z), & z \in (b - \epsilon, b).
			\end{cases}
		\end{equation}
		\item
		As $z \to b$, the limiting behavior of $\tilde{P}^{(b)}(z)$ and $(\tilde{P}^{(b)})^{-1}(z)$ is the same as that of $P^{(b)}(z)$ and $(P^{(b)})^{-1}(z)$ in \eqref{eq:P^b_asy}.
		\item
		For $z$ on the boundary $\partial D(b, \epsilon)$, we have, as $n \to \infty$, $\tilde{P}^{(b)}(z) = I + \bigO(n^{-1})$.
	\end{enumerate}
\end{RHP}

Finally, analogous to \eqref{eq:V^b_in_Q_P}, we define a vector-valued function $\tilde{V}^{(b)}$ by
\begin{equation} \label{eq:V^b_in_Q_P_tilde}
	\tilde{V}^{(b)}(z) = \tilde{Q}(z) \tilde{P}^{(b)}(z)^{-1}, \quad z \in D(b, \epsilon) \setminus \Sigma,
\end{equation}
where $\tilde{Q}(z)$ is defined in \eqref{eq:defn_Q_tilde}. It satisfies the following RH problem, which is analogous to RH problem \ref{RHP:Vb}:
\begin{RHP} \hfill
	\begin{enumerate}
		\item
		$\tilde{V}^{(b)} = (\tilde{V}^{(b)}_1, \tilde{V}^{(b)}_2)$ is analytic in $D(b, \epsilon) \setminus (b - \epsilon, b]$.
		\item
		For $x \in (b - \epsilon, b)$, we have
		\begin{equation}
			\tilde{V}^{(b)}_+(x) = \tilde{V}^{(b)}_-(x)
			\begin{pmatrix}
				0 & 1 \\
				1 & 0
			\end{pmatrix}.
		\end{equation}
		\item
		\begin{align}
			\tilde{V}^{(b)}_1(z) = {}& \bigO(1), & \tilde{V}^{(b)}_2(z) = {}& \bigO(1), & \text{as $z \to b$}.
		\end{align}
		\item
		For $z \in \partial D(b, \epsilon)$, we have, as $n \to \infty$, $\tilde{V}^{(b)}(z) = \tilde{Q}(z) (I + \bigO(n^{-1}))$.
	\end{enumerate}
\end{RHP}

\subsection{Construction of local parametrix near $0$}

Let, analogous to \eqref{eq:defn_g^(0)_i},
\begin{align}
	\tilde{g}^{(0)}_1(z) = {}& \frac{(-z)^{-\alpha/2}/h(z)}{\tilde{P}^{(\infty)}_1(z)}, & \tilde{g}^{(0)}_2(z) = {}& \frac{(-z)^{\alpha/2}}{\tilde{P}^{(\infty)}_2(z)},
\end{align}
where $(-z)^{\pm \alpha/2}$ takes the principal branch, and define, analogous to \eqref{eq:defn_Pmodel_0},
\begin{equation} \label{eq:defn_Pmodel_0_tilde}
	\tilde{\Pmodel}^{(0)}(z) = \Psi^{(\Be)}_{\alpha}(n^2 f_0(z))
	\begin{pmatrix}
		e^{-\frac{n}{2} \phi(z)} \tilde{g}^{(0)}_1(z) & 0 \\
		0 & e^{\frac{n}{2} \phi(z)} \tilde{g}^{(0)}_2(z)
	\end{pmatrix},
	\quad z \in D(0, \epsilon) \setminus \Sigma.
\end{equation}
Analogous to \eqref{eq:g0_jump} and \eqref{eq:g0_bd}, we have
\begin{align}
	\tilde{g}^{(0)}_{1, +}(x) = {}& -\tilde{g}^{(0)}_{2, -}(x), & \tilde{g}^{(0)}_{2, +}(x) = {}& \tilde{g}^{(0)}_{1, -}(x), && \text{for $x \in (0, \epsilon)$}, \\
	\tilde{g}^{(0)}_1(z) = {}& \bigO(z^{\frac{1}{4}}), & \tilde{g}^{(0)}_2(z) = {}& \bigO(z^{\frac{1}{4}}), && \text{as $z \to 0$}.
\end{align}
Analogous to RH problem \ref{RHP:Pmodel}, we have the following RH problem satisfied by $\tilde{\Pmodel}^{(0)}(z)$:
\begin{RHP} \hfill
	\begin{enumerate}
		\item
		$\tilde{\Pmodel}^{(0)}(z)$ is a $2 \times 2$ matrix-valued function analytic for $z \in D(0, \epsilon) \setminus \Sigma$.
		\item
		For $z \in \Sigma \cap D(0, \epsilon)$, we have
		\begin{equation}
			\tilde{\Pmodel}^{(0)}_+(z) = \tilde{\Pmodel}^{(0)}_-(z) J_{\tilde{Q}}(z),
		\end{equation}
		where $J_{\tilde{Q}}(z)$ is defined in \eqref{eq:defn_J_Q_tilde}.
		\item
		As $z \in \partial D(0, \epsilon)$, we have
		\begin{equation}
			\tilde{E}^{(0)}(z) \tilde{\Pmodel}^{(0)}(z) = (I + \bigO(n^{-1})),
		\end{equation}
		where
		\begin{equation}
			\tilde{E}^{(0)}(z) = \frac{1}{\sqrt{2}}
			\begin{pmatrix}
				\tilde{g}^{(0)}_1(z) & 0 \\
				0 & \tilde{g}^{(0)}_2(z)
			\end{pmatrix}^{-1}
			\begin{pmatrix}
				1 & i \\
				i & 1
			\end{pmatrix}
			\begin{pmatrix}
				n^{\frac{1}{2}} f_0(z)^{\frac{1}{4}} & 0 \\
				0 & n^{-\frac{1}{2}} f_0(z)^{-\frac{1}{4}}
			\end{pmatrix}
			(2\pi)^{\frac{1}{2} \sigma_3}.
		\end{equation}
		\item
		As $z \to 0$, the limiting behavior of $\tilde{\Pmodel}^{(0)}(z)$ and $(\tilde{\Pmodel}^{(0)})^{-1}(z)$ is the same as that of $\Pmodel^{(0)}(z)$ and $(\Pmodel^{(0)})^{-1}(z)$ in \eqref{eq:Pmodel^0_asy_1}--\eqref{eq:Pmodel^0_asy_4}.
	\end{enumerate}
\end{RHP}
Like $E^{(0)}(z)$, it is straightforward to see that $\tilde{E}^{(0)}(z)$ is analytic on $D(0, \epsilon) \setminus [0, \epsilon)$, and for $x \in (0, \epsilon)$, like \eqref{eq:E0_jump},
\begin{equation}
	\tilde{E}^{(0)}_+(x) \tilde{E}^{(0)}_-(x)^{-1} =
	\begin{pmatrix}
		0 & 1 \\
		1 & 0
	\end{pmatrix},
\end{equation}
and as $z \to 0$, like \eqref{eq:E^0_defn},
\begin{align}
	\tilde{E}^{(0)}(z) = {}&
	\begin{pmatrix}
		\bigO(1) & \bigO(z^{-\frac{1}{2}}) \\
		\bigO(1) & \bigO(z^{-\frac{1}{2}})
	\end{pmatrix}, &
	\tilde{E}^{(0)}(z)^{-1} = {}&
	\begin{pmatrix}
		\bigO(1) & \bigO(1) \\
		\bigO(z^{\frac{1}{2}}) & \bigO(z^{\frac{1}{2}})
	\end{pmatrix}.
\end{align}
Then we define a $2 \times 2$ matrix-valued function
\begin{equation}
	\tilde{P}^{(0)}(z) = \tilde{E}^{(0)}(z) \tilde{\Pmodel}^{(0)}(z), \quad z \in D(0, \epsilon) \setminus \Sigma,
\end{equation}
where $\tilde{\Pmodel}^{(0)}$ is given in \eqref{eq:defn_Pmodel_0_tilde}. Hence, we have the following RH problem satisfied by $\tilde{P}^{(0)}(z)$:
\begin{RHP} \hfill
	\begin{enumerate}
		\item
		$\tilde{P}^{(0)}(z)$ is analytic in $D(0, \epsilon) \setminus \Sigma$.
		\item
		For $z \in \Sigma \cap D(0, \epsilon)$, we have
		\begin{equation}
			\tilde{P}^{(0)}_+(z) =
			\begin{cases}
				\tilde{P}^{(0)}_-(z) J_{\tilde{Q}}(z), & z \in \Sigma \cap D(0, \epsilon) \setminus [0, \epsilon), \\
				\begin{pmatrix}
					0 & 1 \\
					1 & 0
				\end{pmatrix}
				\tilde{P}^{(0)}_-(z) J_{\tilde{Q}}(z), & z \in (0, \epsilon).
			\end{cases}
		\end{equation}
		\item
		As $z \to 0$, the limiting behavior of $\tilde{P}^{(0)}(z)$ and $(\tilde{P}^{(0)})^{-1}(z)$ is the same as that of $P^{(0)}(z)$ and $(P^{(0)})^{-1}(z)$ in \eqref{eq:P^0_asy_1}--\eqref{eq:P^0_asy_4}.
		\item
		For $z$ on the boundary $\partial D(0, \epsilon)$, we have, as $n \to \infty$, $\tilde{P}^{(0)}(z) = I + \bigO(n^{-1})$.
	\end{enumerate}
\end{RHP}

Consider the vector-valued function
\begin{equation}
	\tilde{U}(z) = (\tilde{U}_1(z), \tilde{U}_2(z)) := \tilde{Q}(z) (\tilde{\Pmodel}^{(0)}(z))^{-1}, \quad z \in D(0, \epsilon) \setminus \Sigma,
\end{equation}
where $\tilde{Q}(z)$ is defined in \eqref{eq:defn_Q_tilde}, and then define the vector-valued function $\tilde{V}^{(0)}$ on $D(0, \epsilon) \setminus \Sigma$ by
\begin{equation}
	\tilde{V}^{(0)}(z) = (\tilde{V}^{(0)}_1(z), \tilde{V}^{(0)}_2(z)) := \tilde{Q}(z) (\tilde{P}^{(0)}(z))^{-1} = \tilde{U}(z) (\tilde{E}^{(0)}(z))^{-1}.
\end{equation}
Like $U(z)$ defined in \eqref{eq:U_from_Q}, we can show that $0$ is a removable singular point of $\tilde{U}_1(z)$ and $\tilde{U}_2(z)$. Hence, like RH problem \ref{RHP:V^0}, $\tilde{V}^{(0)}(z)$ satisfies the following RH problem:
\begin{RHP} \hfill
	\begin{enumerate}
		\item
		$\tilde{V}^{(0)}(z) = (\tilde{V}^{(0)}_1(z), \tilde{V}^{(0)}_2(z))$ is analytic in $D(0, \epsilon) \setminus [0, \epsilon)$.
		\item
		For $x \in (0, \epsilon)$, we have
		\begin{equation}
			\tilde{V}^{(0)}_+(x) = \tilde{V}^{(0)}_-(x)
			\begin{pmatrix}
				0 & 1 \\
				1 & 0
			\end{pmatrix}.
		\end{equation}
		\item
		As $z \to 0$, we have $\tilde{V}^{(0)}(z) = (\bigO(1), \bigO(1))$.
		\item
		As $z \in \partial D(0, \epsilon)$, $\tilde{V}^{(0)}(z) = \tilde{Q}(z)(I + \bigO(n^{-1}))$.
	\end{enumerate}
\end{RHP}

\subsection{Final transformation} \label{subsec:final_trans_tilde}

Analogous to \eqref{eq:R_1_defn} and \eqref{eq:R_2_defn}, we define $\tilde{R}(z) = (\tilde{R}_1(z), \tilde{R}_2(z))$ as
\begin{align}
	\tilde{R}_1(z) = {}&
	\begin{cases}
		\tilde{V}^{(b)}_1(z), & z \in D(b, \epsilon) \setminus (b - \epsilon, b], \\
		\tilde{V}^{(0)}_1(z), & z \in D(0, \epsilon) \setminus [0, \epsilon), \\
		\tilde{Q}_1(z), & z \in \paraP \setminus (\overline{D(b, \epsilon)} \cup \overline{D(0, \epsilon)} \cup \Sigma),
	\end{cases} \label{eq:R_1_defn_tilde} \\
	\tilde{R}_2(z) = {}&
	\begin{cases}
		\tilde{V}^{(b)}_2(z), & z \in D(b, \epsilon) \setminus (b - \epsilon, b], \\
		\tilde{V}^{(0)}_2(z), & z \in D(0, \epsilon) \setminus [0, \epsilon), \\
		\tilde{Q}_2(z), & z \in \compC \setminus (\overline{D(b, \epsilon)} \cup \overline{D(0, \epsilon)} \cup \Sigma).
	\end{cases} \label{eq:R_2_defn_tilde}
\end{align}
Then $\tilde{R}$ satisfies the following RH problem, which is analogous to RH problem \ref{RHP:R}:
\begin{RHP} \hfill \label{RHP:R_tilde}
	\begin{enumerate}
		\item
		$\tilde{R}(z) = (\tilde{R}_1(z), \tilde{R}_2(z))$, where $\tilde{R}_1(z)$ is analytic in $\paraP \setminus \Sigma^R$, and $\tilde{R}_2(z)$ is analytic in $\compC \setminus \Sigma^R$.
		\item
		$\tilde{R}(z)$ satisfies the following jump conditions:
		\begin{equation}
			\tilde{R}_+(z) = \tilde{R}_-(z)
			\begin{cases}
				J_{\tilde{Q}}(z), & z \in \Sigma^R_1 \cup \Sigma^R_2 \cup (b + \epsilon, +\infty), \\
				\tilde{P}^{(b)}(z), & z \in \partial D(b, \epsilon), \\
				\tilde{P}^{(0)}(z), & z \in \partial D(0, \epsilon), \\
				\begin{pmatrix}
					0 & 1 \\
					1 & 0
				\end{pmatrix},
				& z \in (0, b) \setminus \{ \epsilon, b - \epsilon \}.
			\end{cases}
		\end{equation}
		\item
		\begin{align}
			\tilde{R}_1(z) = {}& 1 + \bigO(f(z)^{-1}) \quad \text{as $f(z) \to \infty$ in $\paraP$}, & \tilde{R}_2(z) = {}& \bigO(1) \quad \text{as $z \to \infty$ in $\compC$}.
		\end{align}
		\item
		\begin{align}
			\tilde{R}_1(z) = {}& \bigO(1), & \tilde{R}_2(z) = {}& \bigO(1), & \text{as $z \to 0$}, \\
			\tilde{R}_1(z) = {}& \bigO(1), & \tilde{R}_2(z) = {}& \bigO(1), & \text{as $z \to b$}.
		\end{align}
		\item
		At $z \in \rho \cup \{ -\pi^2/4 \} \cup \bar{\rho}$, $\tilde{R}_1(z)$ satisfies the same boundary condition as $\tilde{Y}_1(z)$ in \eqref{eq:tilde_Y_bd_cond_tilde}.
	\end{enumerate}
\end{RHP}

Similar to the idea used in the construction of the global parametrix, to estimate $\tilde{R}$ for large $n$, we now transform the RH problem for $\tilde{R}$ to a scalar one on the complex $s$-plane. Let $\omega: z \to \omega(z)$ be a mapping from $\compC \setminus \{ 1 \}$ to $\compC \setminus \{ 1 \}$ defined by $\omega(z) = 1 + 1/(z - 1)$. Then let $\tildeJ(s)$ be the mapping
\begin{equation}
	\tildeJ(s) = \J(\omega(s)).
\end{equation}
By defining
\begin{equation} \label{eq:defn_R_by_IJ_tilde}
	\tilde{\R}(s) =
	\begin{cases}
		\tilde{R}_1(\tildeJ(s)), & s \in \omega(\compC \setminus \overline{D}) \text{ and } s \notin \omega(\Iinv_1(\Sigma^R)), \\
		\tilde{R}_2(\tildeJ(s)), & s \in \omega(D \setminus [0, 1]) \text{ and } s \notin \omega(\Iinv_2(\Sigma^R)),
	\end{cases}
\end{equation}
where we recall that $D$ is the region bounded by the curves $\gamma_1$ and $\gamma_2$, $\Iinv_1: \compC \setminus [0, b] \to \compC \setminus \overline{D}$ and $\Iinv_2: \paraP \setminus [0, b]$ are defined in \eqref{eq:defn_I1} and \eqref{eq:defn_I2}, respectively.

We are now at the stage of describing the RH problem for $\tilde{\R}$. For this purpose, we define, analogous to \eqref{eq:defn_SigmaR}, and set
\begin{equation} \label{eq:defn_SigmaR_tilde}
	\tilde{\SigmaR} := \tilde{\SigmaR}^{(1)} \cup \tilde{\SigmaR}^{(1')} \cup \tilde{\SigmaR}^{(2)} \cup \tilde{\SigmaR}^{(2')} \cup \tilde{\SigmaR}^{(3)} \cup \tilde{\SigmaR}^{(3')} \cup \tilde{\SigmaR}^{(4)} \cup \tilde{\SigmaR}^{(4')}, \quad \text{where} \quad \tilde{\SigmaR}^{(*)} = \omega(\SigmaR^{(*)}),
\end{equation}
for $* = 1, 1', 2, 2', 3, 3', 4, 4'$ respectively.

We also define the following functions on each curve constituting $\tilde{\SigmaR}$: (Below $z = \tildeJ(s)$)

\begin{align}
	J_{\tilde{\SigmaR}^{(1')}}(s) = {}& (J_{\tilde{Q}})_{21}(z), & s \in {}& \tilde{\SigmaR}^{(1')}, \label{eq:J(s)_1_tilde} \\
	J_{\tilde{\SigmaR}^{(2)}}(s) = {}& (J_{\tilde{Q}})_{12}(z), & s \in {}& \tilde{\SigmaR}^{(2)}, \label{eq:J(s)_2_tilde} \\
	J^1_{\tilde{\SigmaR}^{(3)}}(s) = {}& (\tilde{P}^{(b)})_{22}(z) - 1, \quad J^2_{\tilde{\SigmaR}^{(3)}}(s) = (\tilde{P}^{(b)})_{12}(z), & s \in {}& \tilde{\SigmaR}^{(3)}, \label{eq:J(s)_3_tilde} \\
	J^1_{\tilde{\SigmaR}^{(3')}}(s) = {}& (\tilde{P}^{(b)})_{11}(z) - 1, \quad J^2_{\tilde{\SigmaR}^{(3')}}(s) = (\tilde{P}^{(b)})_{21}(z), & s \in {}& \tilde{\SigmaR}^{(3')}, \label{eq:J(s)_3_2_tilde} \\
	J^1_{\tilde{\SigmaR}^{(4)}}(s) = {}& (\tilde{P}^{(0)})_{22}(z) - 1, \quad J^2_{\tilde{\SigmaR}^{(4)}}(s) = (\tilde{P}^{(0)})_{12}(z), & s \in {}& \tilde{\SigmaR}^{(4)}, \label{eq:J(s)_4_tilde} \\
	J^1_{\tilde{\SigmaR}^{(4')}}(s) = {}& (\tilde{P}^{(0)})_{11}(z) - 1, \quad J^2_{\tilde{\SigmaR}^{(4')}}(s) = (\tilde{P}^{(0)})_{21}(z), & s \in {}& \tilde{\SigmaR}^{(4')}, \label{eq:J(s)_4_2_tilde}
\end{align}
where $z = \tildeJ(s)$ is in $\Sigma^R_1 \cup \Sigma^R_2$ in \eqref{eq:J(s)_1_tilde}; in $(b + \epsilon, +\infty)$ in \eqref{eq:J(s)_2_tilde}; in $\partial D(b, \epsilon)$ in \eqref{eq:J(s)_3_tilde} and \eqref{eq:J(s)_3_2_tilde}; in $\partial D(0, \epsilon)$ in \eqref{eq:J(s)_4_tilde} and \eqref{eq:J(s)_4_2_tilde}. With the aid of these functions, we further define an operator $\Delta_{\tilde{\SigmaR}}$ such that for any complex-valued function $f(s)$ defined on $\tilde{\SigmaR}$, $\Delta_{\tilde{\SigmaR}}$ transforms it linearly into a function $\Delta_{\tilde{\SigmaR}}f$ that is also a complex-valued function defined on $\tilde{\SigmaR}$, with expression
\begin{equation} \label{eq:defn_Delta_SigmaR_tilde}
	(\Delta_{\tilde{\SigmaR}} f)(s) =
	\begin{cases}
		J_{\tilde{\SigmaR}^{(1')}}(s) f(\tilde{s}), & s \in \tilde{\SigmaR}^{(1)} \text{ and } \tilde{s} = \omega(\Iinv_1(\J(s))) \in \tilde{\SigmaR}^{(1')}, \\
		J_{\tilde{\SigmaR}^{(2)}}(s) f(\tilde{s}), & s \in \tilde{\SigmaR}^{(2)} \text{ and } \tilde{s} = \omega(\Iinv_2(\J(s))) \in \tilde{\SigmaR}^{(2')}, \\
		0, & s \in \tilde{\SigmaR}^{(1)} \cup \tilde{\SigmaR}^{(2')}, \\
		J^1_{\tilde{\SigmaR}^{(3)}}(s) f(s) + J^2_{\tilde{\SigmaR}^{(3)}}(s) f(\tilde{s}), & s \in \tilde{\SigmaR}^{(3)} \text{ and } \tilde{s} = \omega(\Iinv_2(\J(s))) \in \tilde{\SigmaR}^{(3')}, \\
		J^1_{\tilde{\SigmaR}^{(3')}}(s) f(s) + J^2_{\tilde{\SigmaR}^{(3')}}(s) f(\tilde{s}), & s \in \tilde{\SigmaR}^{(3')} \text{ and } \tilde{s} = \omega(\Iinv_1(\J(s))) \in \tilde{\SigmaR}^{(3)}, \\
		J^1_{\tilde{\SigmaR}^{(4)}}(s) f(s) + J^2_{\tilde{\SigmaR}^{(4)}}(s) f(\tilde{s}), & s \in \tilde{\SigmaR}^{(4)} \text{ and } \tilde{s} = \omega(\Iinv_2(\J(s))) \in \tilde{\SigmaR}^{(4')}, \\
		J^1_{\tilde{\SigmaR}^{(4')}}(s) f(s) + J^2_{\tilde{\SigmaR}^{(4')}}(s) f(\tilde{s}), & s \in \tilde{\SigmaR}^{(4')} \text{ and } \tilde{s} = \omega(\Iinv_1(\J(s))) \in \tilde{\SigmaR}^{(4)}.
	\end{cases}
\end{equation}
We note that all the functions $J_{\tilde{\SigmaR}^{(1')}}(s)$, \dots, $J^2_{\tilde{\SigmaR}^{(4')}}(s)$ that define $\Delta_{\tilde{\SigmaR}}$ in \eqref{eq:defn_Delta_SigmaR_tilde} are uniformly $\bigO(n^{-1})$. If we view $\Delta_{\tilde{\SigmaR}}$ as an operator from $L^2(\tilde{\SigmaR})$ to $L^2(\tilde{\SigmaR})$, then we have the estimate that for all large enough $n$, there is a constant $M_{\tilde{\SigmaR}}>0$ such that
\begin{equation}
	\lVert \Delta_{\tilde{\SigmaR}} \rVert_{L^2(\tilde{\SigmaR})} \leq M_{\tilde{\SigmaR}} n^{-1}.
\end{equation}

RH problem \ref{RHP:R_tilde} entails a scalar shifted RH problem for $\tilde{\R}$:

\begin{RHP} \hfill \label{RHP:R_scalar_tilde}
	\begin{enumerate}
		\item
		$\tilde{\R}(s)$ is analytic in $\compC \setminus \tilde{\SigmaR}$, where the contour $\tilde{\SigmaR}$ is defined in \eqref{eq:defn_SigmaR_tilde}.
		\item
		For $s \in \tilde{\SigmaR}$, we have
		\begin{equation}
			\tilde{\R}_+(s) - \tilde{\R}_-(s) = (\Delta_{\tilde{\SigmaR}} \tilde{\R}_-)(s),
		\end{equation}
		where $\Delta_{\tilde{\SigmaR}}$ is the operator defined in \eqref{eq:defn_Delta_SigmaR_tilde}.
		\item
		As $s \to \infty$, we have
		\begin{equation}
			\tilde{\R}(s) = 1 + \bigO(s^{-1}).
		\end{equation}
		\item
		As $s \to 0$, we have $\tilde{\R}(s) = \bigO(1)$.
	\end{enumerate}
\end{RHP}

Like Lemma \ref{lem:uniqueness_of_R}, we have

\begin{lem} \label{lem:uniqueness_of_R_tilde}
	The function $\tilde{\R}(s)$ defined in \eqref{eq:defn_R_by_IJ_tilde} is the unique solution of RH problem \ref{RHP:R_scalar_tilde} after trivial analytical extension.
\end{lem}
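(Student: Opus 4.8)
\emph{Proof proposal.} The plan is to mirror, almost verbatim, the proof of Lemma \ref{lem:uniqueness_of_R}, running the whole chain of transformations of Section \ref{sec:asy_q} backwards. Suppose $\tilde{\R}^{\sol}(s)$ is any solution of RH problem \ref{RHP:R_scalar_tilde}. Using \eqref{eq:defn_R_by_IJ_tilde} in reverse together with $\tildeJ = \J \circ \omega$, I would first recover a vector-valued $\tilde{R}^{\sol}(z) = (\tilde{R}^{\sol}_1(z), \tilde{R}^{\sol}_2(z))$: on $\paraP \setminus (\overline{D(b,\epsilon)} \cup \overline{D(0,\epsilon)} \cup \Sigma)$ set $\tilde{R}^{\sol}_1(z) = \tilde{\R}^{\sol}(\omega(\Iinv_2(z)))$, and on $\compC \setminus (\overline{D(b,\epsilon)} \cup \overline{D(0,\epsilon)} \cup \Sigma)$ set $\tilde{R}^{\sol}_2(z) = \tilde{\R}^{\sol}(\omega(\Iinv_1(z)))$. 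The matching of the two branches of $\tilde{\R}^{\sol}$ across $\gamma_1, \gamma_2$ (which is built into the construction of $\tilde{\P}$ through Item 4 of RH problem \ref{RHP:Pinfinity_tilde}) is exactly what yields the boundary condition for $\tilde{R}^{\sol}_1$ on $\rho \cup \{ -\pi^2/4 \} \cup \bar\rho$, and the continuation of $\tilde{\R}^{\sol}$ onto $[0,1)$ --- the ``trivial analytical extension'' --- is what makes $\tilde{R}^{\sol}$ well defined near $0$. One then checks directly that $\tilde{R}^{\sol}$ satisfies RH problem \ref{RHP:R_tilde}.

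Next I would undo the two local parametrices exactly as in the proof of Lemma \ref{lem:uniqueness_of_R}. Near $b$, set $\tilde{V}^{(b),\sol}(z) = \tilde{R}^{\sol}(z)$ and observe it has only the flip jump on $(b-\epsilon,b)$, so $\tilde{R}^{\sol}_1(z)(\tilde{P}^{(b)})_{1j}(z) + \tilde{R}^{\sol}_2(z)(\tilde{P}^{(b)})_{2j}(z)$ is analytic across that segment; the $(z-b)^{\pm 1/4}$ bounds on $\tilde{P}^{(b)}$ and its inverse then force the resulting $\tilde{Q}^{\sol}$ to blow up at $b$ at most like $(z-b)^{-1/4}$. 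Near $0$, set $\tilde{U}^{\sol}(z) = \tilde{R}^{\sol}(z) \tilde{E}^{(0)}(z)$; the entries extend analytically to $D(0,\epsilon) \setminus \{ 0 \}$ and, by the bounds on $\tilde{E}^{(0)}$ together with $\tilde{R}^{\sol} = \bigO(1)$, can blow up at $0$ at most like an inverse square root, hence are in fact analytic on all of $D(0,\epsilon)$. Recombining $\tilde{U}^{\sol}$ with $\tilde{\Pmodel}^{(0)}$ and $\tilde{R}^{\sol}$ with $\tilde{P}^{(b)}$ yields $\tilde{Q}^{\sol}$, a solution of the weakened version of RH problem \ref{RHP:Q_tilde} in which the $\bigO((z-b)^{1/4})$ behaviour is replaced by $\bigO((z-b)^{-1/4})$ and, outside the lens, $\tilde{Q}^{\sol}_1(z) = \bigO(z^{1/4} \log z)$.

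Then I would run the transforms $\tilde{Y} \mapsto \tilde{T} \mapsto \tilde{S} \mapsto \tilde{Q}$ backwards on $\tilde{Q}^{\sol}$, arriving at $\tilde{Y}^{\sol}(z) = (\tilde{Y}^{\sol}_1(z), \tilde{Y}^{\sol}_2(z))$ which solves the variant of RH problem \ref{RHP:Y_tilde} whose local condition at $0$ is weakened in the same spirit as Item \ref{enu:RHP:Y_zero_alt} of Section \ref{subsubsec:uniqueness} (with a mild inverse-square-root blow-up allowed at $b$). The final step is to invoke the uniqueness argument for RH problem \ref{RHP:Y_tilde}, which the excerpt already announces is parallel to Section \ref{subsubsec:uniqueness}: the allowed singularities at $1$ and $b$ are too weak to be genuine, so $\tilde{Y}^{\sol}_1$ is an honest polynomial of degree $n+k$; subtracting the Cauchy transform removes the jump of $\tilde{Y}^{\sol}_2$ on $\realR_+$, the boundary condition on $\rho \cup \{ -\pi^2/4 \} \cup \bar\rho$ lets one push the remainder through $f^{-1}$ to an entire function vanishing at infinity, forcing it to be zero, and the growth condition at infinity then encodes the biorthogonality; Proposition \ref{prop:char} gives uniqueness of $\tilde{Y}^{\sol}$, hence of $\tilde{Q}^{\sol}, \tilde{R}^{\sol}$, hence of $\tilde{\R}^{\sol}$.

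The hard part will not be any single step --- each is a routine mirror of something already carried out in Sections \ref{sec:asy_p}--\ref{sec:asy_q} --- but the bookkeeping of the weakened edge behaviours through the whole chain, and in particular making sure that (i) the $\omega$-conjugation, which interchanges the roles of $\compC \setminus \overline{D}$ and $D$ and sends $1 \leftrightarrow \infty$, does not spoil the continuation of $\tilde{\R}^{\sol}$ across $\gamma_1$ and $\gamma_2$, and (ii) the inverse-square-root blow-ups at $b$ introduced when stripping off $\tilde{P}^{(b)}$ genuinely disappear once one returns to $\tilde{Y}^{\sol}$. I expect (ii) to be the only place where one must actually write something down, and it is handled exactly as in the proof of the uniqueness statement in Section \ref{subsubsec:uniqueness}.
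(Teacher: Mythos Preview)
Your approach is exactly the one the paper intends: it omits the proof entirely, saying only that it is the same as that of Lemma \ref{lem:uniqueness_of_R}, and your outline faithfully mirrors that argument (the backward chain $\tilde{\R}^{\sol} \to \tilde{R}^{\sol} \to \tilde{U}^{\sol} \to \tilde{Q}^{\sol} \to \tilde{Y}^{\sol}$, with the removable-singularity argument at $0$ via $\tilde{E}^{(0)}$ and the weakened edge behaviour at $b$).

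One small correction in your final paragraph: in the tilde problem the roles of the two components are swapped relative to Section \ref{subsubsec:uniqueness}. Here $\tilde{Y}^{\sol}_1$ lives on $\paraP$ and carries the boundary condition on $\rho \cup \{ -\pi^2/4 \} \cup \bar\rho$, so it is $\tilde{Y}^{\sol}_1$ (not the remainder of $\tilde{Y}^{\sol}_2$) that one pushes through $f$ to conclude it is a polynomial of degree $n+k$ in $f(z)$; conversely $\tilde{Y}^{\sol}_2$ lives on $\compC$, so after subtracting the ordinary Cauchy transform the remainder is already entire on $\compC$ and no $f^{-1}$ is needed. With that swap in place your sketch goes through unchanged.
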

The proof is the same as that of Lemma \ref{lem:uniqueness_of_R}, and we omit it.

Finally, we have, analogous to Lemma \ref{lem:tRest}, that
\begin{lem}
	For all $s \in \compC \setminus \tilde{\SigmaR}$, we have the uniform convergence
	\begin{equation}\label{eq:esttR_tilde}
		\tilde{\R}(s)=1+\bigO(n^{-1}).
	\end{equation}
\end{lem}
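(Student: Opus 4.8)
The plan is to repeat, \emph{mutatis mutandis}, the proof of Lemma \ref{lem:tRest}, with the contour $\SigmaR$, the operator $\Delta_{\SigmaR}$ and the scalar function $\R$ replaced by $\tilde{\SigmaR}$, $\Delta_{\tilde{\SigmaR}}$ and $\tilde{\R}$. First I would note that $\tilde{\R}$ satisfies the integral equation
\begin{equation}
  \tilde{\R}(s) = 1 + \mathcal{C}(\Delta_{\tilde{\SigmaR}} \tilde{\R}_-)(s), \quad s \in \compC \setminus \tilde{\SigmaR},
\end{equation}
where $\mathcal{C}$ is the Cauchy transform on $\tilde{\SigmaR}$. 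As in the proof of Lemma \ref{lem:tRest}, this is checked by verifying directly that the right-hand side solves RH problem \ref{RHP:R_scalar_tilde}, which has a unique solution by Lemma \ref{lem:uniqueness_of_R_tilde}, so it must coincide with $\tilde{\R}$.

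Next I would carry out the $L^2(\tilde{\SigmaR})$ estimates exactly as before. Writing
\begin{equation}
  \tilde{\R}(s) - 1 = \frac{1}{2\pi i} \int_{\tilde{\SigmaR}} \frac{\Delta_{\tilde{\SigmaR}}(\tilde{\R}_- - 1)(\xi)}{\xi - s} d\xi + \frac{1}{2\pi i} \int_{\tilde{\SigmaR}} \frac{\Delta_{\tilde{\SigmaR}}(1)(\xi)}{\xi - s} d\xi, \quad s \in \compC \setminus \tilde{\SigmaR},
\end{equation}
and taking the boundary value from the minus side, one gets $\tilde{\R}_- - 1 = \mathcal{C}_{\Delta_{\tilde{\SigmaR}}}(\tilde{\R}_- - 1) + \mathcal{C}_-(\Delta_{\tilde{\SigmaR}}(1))$. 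Since each of the jump functions $J_{\tilde{\SigmaR}^{(1')}}(s), \dots, J^2_{\tilde{\SigmaR}^{(4')}}(s)$ in \eqref{eq:J(s)_1_tilde}--\eqref{eq:J(s)_4_2_tilde} is uniformly $\bigO(n^{-1})$ — which in turn rests on $J_{\tilde{Q}}(z)$ being exponentially close to $I$ on the lens arcs and on $(b+\epsilon,+\infty)$, and on $\tilde{P}^{(b)}(z)$, $\tilde{P}^{(0)}(z)$ being $I + \bigO(n^{-1})$ on the circles $\partial D(b,\epsilon)$, $\partial D(0,\epsilon)$ — we obtain $\lVert \Delta_{\tilde{\SigmaR}} \rVert_{L^2(\tilde{\SigmaR})} = \bigO(n^{-1})$ and $\lVert \Delta_{\tilde{\SigmaR}}(1) \rVert_{L^2(\tilde{\SigmaR})} = \bigO(n^{-1})$. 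Boundedness of $\mathcal{C}_-$ on $L^2(\tilde{\SigmaR})$ then makes $1 - \mathcal{C}_{\Delta_{\tilde{\SigmaR}}}$ invertible for all large $n$, hence $\lVert \tilde{\R}_- - 1 \rVert_{L^2(\tilde{\SigmaR})} = \bigO(n^{-1})$, and the Cauchy integral representation yields $\tilde{\R}(s) = 1 + \bigO(n^{-1})$ uniformly on $\{ s : \dist(s, \tilde{\SigmaR}) > \delta \}$ for each fixed $\delta > 0$.

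To upgrade this to uniformity on all of $\compC \setminus \tilde{\SigmaR}$, I would argue as in the last part of the proof of Lemma \ref{lem:tRest}. Away from the single vertex of $\tilde{\SigmaR}$ — which is the point $s = 1$, namely the image under the involution $\omega$ of the point at infinity where $\SigmaR^{(2)}$ escapes — one can freely deform the relevant arcs, by varying $\epsilon$, the shapes of $\Sigma^R_1, \Sigma^R_2$, and the ray $(b,+\infty)$ as in Remark \ref{rmk:contour_deform}, so the estimate extends up to $\tilde{\SigmaR}$ outside a small disc $D(1,\epsilon')$. Inside $D(1,\epsilon')$ the function $\tilde{\R}$ solves an explicit scalar RH problem whose only jump is the exponentially small $J_{\tilde{\SigmaR}^{(2)}}(s)$, whose value on $\partial D(1,\epsilon')$ is $1 + \bigO(n^{-1})$, and whose limit at $1$ is $1$; hence $\tilde{\R}(s) = 1 + \bigO(n^{-1})$ there as well.

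The only genuinely new ingredient compared with Lemma \ref{lem:tRest} is the bookkeeping induced by the Möbius substitution $\tildeJ = \J \circ \omega$: one must verify that $\omega$ carries the neighbourhood structure around the distinguished vertex and around the end of $\SigmaR^{(2)}$ into the corresponding configuration for $\tilde{\SigmaR}$, and that exponential smallness on $\tilde{\SigmaR}^{(2)}$ is preserved under this reparametrization. Since $\omega$ is a biholomorphism of $\compC \setminus \{1\}$ onto itself and all the estimates above are invariant under an analytic change of variable, I expect this to be the main — but entirely routine — point, with no new analytic difficulty; for this reason the argument is suppressed in the text.
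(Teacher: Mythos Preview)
Your proposal is correct and follows exactly the approach the paper intends: the paper itself simply states that the proof is analogous to that of Lemma~\ref{lem:tRest} and omits it. Your outline fills in the routine adaptations accurately, including the correct identification of the fixed vertex at $s=1$ (arising as $\omega(\infty)$, the endpoint of $\tilde{\SigmaR}^{(2)}$) where the jump $J_{\tilde{\SigmaR}^{(2)}}$ is exponentially small.
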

The proof of this lemma is analogous to that of Lemma \ref{lem:tRest}, and we omit it. This lemma immediately yields that
\begin{align} \label{eq:P_1P_2_bounded_tilde}
	\tilde{R}_1(z) = {}& 1 + \bigO(n^{-1}) \quad \text{uniformly in $\paraP \setminus \Sigma^R$}, & \tilde{R}_2(z) = {}& 1 + \bigO(n^{-1}) \quad \text{uniformly in $\compC \setminus \Sigma^R$}.
\end{align}

\section{Proof of Theorems \ref{thm:main_summary} and \ref{thm:main}} \label{sec:proofs_main}
In this section, we will state and prove  Theorem \ref{thm:main}, which is a detailed version of Theorem \ref{thm:main_summary} stated in the introduction.

Since both $p^{(n)}_j(z)$ and $q^{(n)}_j(f(z))$ are real analytic functions and $p^{(n)}_j(\bar{z}) = \overline{p^{(n)}_j(z)}$, $q^{(n)}_j(f(\bar{z})) = \overline{q^{(n)}_j(f(z))}$, to prove their Plancherel-Rotach asymptotics, we only need to give their asymptotics in the upper half-plane and on the real line. To be precise, let $\delta > 0$ be a small enough constant, and let $C_{\delta} = \{ z \in \compC_+ \cup \realR \mid \lvert z \rvert \leq \delta \}$ and $D_{\delta} = \{ z \in \compC_+ \cup \realR \mid \lvert z - b \rvert \leq \delta \}$ be the two semicircles centered at $0$ and $b$ respectively, $B_{\delta} = \{ z \in \compC_+ \cup \realR \mid \Im z \leq \delta/2 \text{ and } \lvert z \rvert > \delta \text{ and } \lvert z - b \rvert > \delta \}$, and $A_{\delta} = (\compC_+ \cup \realR) \setminus (B_{\delta} \cup C_{\delta} \cup D_{\delta} )$. See Figure \ref{fig:four_regions} for the shapes of the regions. We assume that $V(z)$ and $h(z)$ are analytic in $B_{\delta} \cup C_{\delta} \cup D_{\delta}$.

\begin{figure}[htb]
	\centering
	\includegraphics{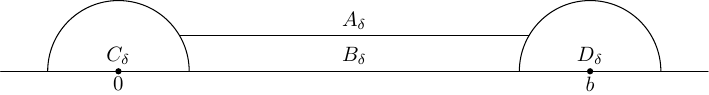}
	\caption[Four regions.]{The four regions in the upper complex plane where the asymptotics of $p^{(n)}_{n + k}(z)$ and $q^{(n)}_{n + k}(f(z))$ are given.}
	\label{fig:four_regions}
\end{figure}

Recall the functions $\gfn(z)$, $\tilde{\gfn}(z)$, $\phi(z)$, $f_b(z)$, $f_0(z)$ defined in \eqref{eq:defn_g_gtilde}, \eqref{eq:defn_phi}, \eqref{eq:f_b_prop}, \eqref{eq:f_0_prop}, respectively. Recall the contours $\gamma_1, \gamma_2$ defined in \eqref{eq:simplified}, $\gamma = \gamma_1 \cup \gamma_2$ with positive orientation, and $h(z)$ is a real analytic function on $[0, \infty)$. Let $\gamma'$ and $\gamma''$ be positively oriented contours such that $\gamma'$ encloses $\gamma$, and $\gamma''$ is enclosed by $\gamma$, such that $h(\J(s))$ is well-defined and analytic in the annular region between $\gamma'$ and $\gamma''$. We also assume that $\{ \Iinv_1(z), \Iinv_2(z) : z \in B_{\delta} \cup C_{\delta} \cup D_{\delta} \}$ lies inside the annular region. Then define
\begin{align}
	D(s) = {}& \exp \left( \frac{1}{2\pi i} \oint_{\gamma''} \log h(\J(\zeta)) \frac{d\zeta}{\zeta - s} \right), && \text{$s$ is outside $\gamma''$}, \label{eq:defn_D(s)} \\
	\tilde{D}(s) = {}& \exp \left( \frac{-1}{2\pi i} \oint_{\gamma'} \log h(\J(\zeta)) \frac{d\zeta}{\zeta - s} \right), && \text{$s$ is inside $\gamma'$}. \label{eq:defn_Dtilde(s)}
\end{align}
Between $\gamma'$ and $\gamma''$, both $D(s)$ and $\tilde{D}(s)$ are defined, and
\begin{equation} \label{eq:prod_DD}
	D(s) \tilde{D}(s) = h(\J(s))^{-1}, \quad \text{$s$ is between $\gamma'$ and $\gamma''$}.
\end{equation}
If $h(z)$ is defined in \eqref{eq:special_W}, then it can be verified that
\begin{align}
	D(s) = {}& \frac{1}{\sqrt{c}} \sqrt{\frac{s - 1}{s - s_1}} \frac{J_c(s)^{1/2}}{s^{1/4}}, & \tilde{D}(s) = {}& \sqrt{c} \sqrt{\frac{s - s_1}{s - 1}} \frac{s^{1/4}}{\sinh(J_c(s))^{1/2}}, &  \tilde{D}(1) = {}& \sqrt{\frac{c (1 - s_1)}{2 e^c}},
\end{align}
such that all power functions take the principal branch.

We define
\begin{equation} \label{eq:G_k_from_D_k}
	G_k(s) = \frac{(\frac{c^2}{4})^{\alpha + \frac{1}{2} + k} (s - s_1)^{\alpha + 1} s^{\frac{1}{2}} (s - 1)^k}{\J(s)^{\alpha + \frac{1}{2}} \sqrt{(s - s_1)(s - s_2)}} D(s), \quad \text{$s$ is outside $\gamma''$ and not on $\gamma_1$ or $[s_1, 1]$},
\end{equation}
where $\sqrt{(s - s_1)(s - s_2)}$ is analytic in $\compC \setminus \gamma_1$ and is $\sim s$ as $s \to \infty$, and $(s - s_1)^{\alpha + 1} s^{\frac{1}{2}}/\J(s)^{\alpha + \frac{1}{2}}$ is analytic in $\compC \setminus [s_1, 1]$, and is $\sim (4/c^2)^{\alpha + 1/2}$ as $s \to \infty$.
Similarly, we also define
\begin{equation} \label{eq:G_k_tilde_from_D_k}
	\tilde{G}_k(s) = \frac{(1 - s_1)^{\alpha + \frac{1}{2}} \sqrt{s_2 - 1} i \tilde{D}(1)^{-1} e^{kc}}{(s - s_1)^{\alpha} (s - 1)^k \sqrt{(s - s_1)(s - s_2)}} \tilde{D}(s), \quad \text{$s$ is inside $\gamma'$ and not on $\gamma_2$ or $(-\infty, s_1]$},
\end{equation}
where $\sqrt{(s - s_1)(s - s_2)}$ is analytic in $\compC \setminus \gamma_2$ and is $\sim s$ as $s \to \infty$, and $(s - s_1)^{\alpha}$ is analytic in $\compC \setminus (-\infty, s_1]$, and takes the principal branch.


Based on $G_k$ and $\tilde{G}_k$, we then define
\begin{align}
	r_k(x) = {}& 2\lvert G_{k, +}(\Iinv_+(x)) \rvert, & \theta_k(x) = {}& \arg(G_{k, +}(\Iinv_+(x))), \\
	\tilde{r}_k(x) = {}& 2\lvert \tilde{G}_{k, +}(\Iinv_-(x)) \rvert, & \tilde{\theta}_k(x) = {}& \arg(\tilde{G}_{k, +}(\Iinv_-(x))),
\end{align}
where $G_{k, +}(\Iinv_+(x))$ is the limit of $G_k(z)$ as $z$ approaches $\Iinv_+(x) \in \gamma_1$ in $\compC \setminus \overline{D}$, and $\tilde{G}_{k, +}(\Iinv_-(x))$ is the limit of $\tilde{G}_k(z)$ as $z$ approaches $\Iinv_-(x) \in \gamma_2$ in $D$.

\begin{thm} \label{thm:main}
	Suppose $V$ satisfies the same condition as in Theorem \ref{thm:main_summary}. As $n \to \infty$, we have the following asymptotics of $p^{(n)}_{n + k}(z)$ and $q^{(n)}_{n + k}(f(z))$, $k \in \intZ$, uniformly for $z$ in regions $A_{\delta}, B_{\delta}, C_{\delta}$ and $D_{\delta}$ (see Figure \ref{fig:four_regions}), if $\delta > 0$ is small enough.
	\begin{enumerate}
		\item
		In region $A_{\delta}$ we have
		\begin{align}
			p^{(n)}_{n + k}(z) = {}& (1 + \bigO(n^{-1})) G_k(\Iinv_1(z)) e^{n\gfn(z)}, && z \in A_{\delta}, \label{eq:p_A} \\
			q^{(n)}_{n + k}(f(z)) = {}& (1 + \bigO(n^{-1})) \tilde{G}_k(\Iinv_2(z)) e^{n\tilde{\gfn}(z)}, && z \in A_{\delta} \cap \paraP. \label{eq:q_A}
		\end{align}
		\item
		In region $B_{\delta}$ we have
		\begin{align}
			p^{(n)}_{n + k}(z) = {}& (1 + \bigO(n^{-1})) G_k(\Iinv_1(z)) e^{n\gfn(z)} + (1 + \bigO(n^{-1})) G_k(\Iinv_2(z)) e^{n(V(z) - \tilde{\gfn}(z) + \ell)}, \label{eq:p_B} \\
			q^{(n)}_{n + k}(f(z)) = {}& (1 + \bigO(n^{-1})) \tilde{G}_k(\Iinv_2(z)) e^{n\tilde{\gfn}(z)} + (1 + \bigO(n^{-1})) \tilde{G}_k(\Iinv_1(z)) e^{n(V(z) - \gfn(z) + \ell)}. \label{eq:q_B}
		\end{align}
		In particular, if $x \in (\delta, b - \delta)$, we have
		\begin{align}
			p^{(n)}_{n + k}(x) = {}& r_k(x) e^{n \int \log \lvert x - y \rvert d\mu(y)} \left[ \cos \left( n\pi \mu([x, b]) + \theta_k(x) \right) + \bigO(n^{-1}) \right], \label{eq:p_bulk_real_line} \\
			q^{(n)}_{n + k}(f(x)) = {}& \tilde{r}_k(x) e^{n \int \log \lvert f(x) - f(y) \rvert d\mu(y)} \left[ \cos \left( n\pi \mu([x, b]) + \tilde{\theta}_k(x) \right) + \bigO(n^{-1}) \right].
		\end{align}
		\item
		In region $D_{\delta}$ we have
		\begin{multline} \label{eq:p^n_n+k_soft}
			e^{-\frac{n}{2}(\gfn(z) - \tilde{\gfn}(z) + V(z) + \ell)} p^{(n)}_{n + k}(z) = \\
			\begin{aligned}
				\sqrt{\pi} & \left[ n^{\frac{1}{6}} f^{\frac{1}{4}}_b(z) \left( (1 + \bigO(n^{-1})) G_k(\Iinv_1(z)) - (1 + \bigO(1))i G_k(\Iinv_2(z)) \right) \Ai(n^{\frac{2}{3}} f_b(z)) \right. \\
				& - \left. n^{-\frac{1}{6}} f^{-\frac{1}{4}}_b(z) \left( (1 + \bigO(n^{-1})) G_k(\Iinv_1(z)) + (1 + \bigO(1))i G_k(\Iinv_2(z)) \right) \Ai'(n^{\frac{2}{3}} f_b(z)) \right], \\
			\end{aligned}
		\end{multline}
		\begin{multline} \label{eq:q^n_n+k_soft}
			e^{-\frac{n}{2}(\tilde{\gfn}(z) - \gfn(z) + V(z) + \ell)} q^{(n)}_{n + k}(f(z)) = \\
			\begin{aligned}
				\sqrt{\pi} & \left[ n^{\frac{1}{6}} f^{\frac{1}{4}}_b(z) \left( (1 + \bigO(n^{-1})) \tilde{G}_k(\Iinv_2(z)) - (1 + \bigO(1))i \tilde{G}_k(\Iinv_1(z)) \right) \Ai(n^{\frac{2}{3}} f_b(z)) \right. \\
				& - \left. n^{-\frac{1}{6}} f^{-\frac{1}{4}}_b(z) \left( (1 + \bigO(n^{-1})) \tilde{G}_k(\Iinv_2(z)) + (1 + \bigO(1))i \tilde{G}_k(\Iinv_1(z)) \right) \Ai'(n^{\frac{2}{3}} f_b(z)) \right].
			\end{aligned}
		\end{multline}
		In particular, if $z = b + f'_b(b)^{-1} n^{-2/3} t$ with $t$ bounded, then
		\begin{multline} \label{eq:asy_Airy_p}
			n^{-\frac{1}{6}} e^{-\frac{n}{2}(\gfn(z) - \tilde{\gfn}(z) + V(z) + \ell)} p^{(n)}_{n + k}(z) = \\
			2^{\frac{1}{4}} \sqrt{\pi} b^{\frac{1}{8}} s^{\frac{3}{8}}_2 c^{\frac{1}{2}} \left( \frac{c^2 (s_2 - s_1)}{4b} \right)^{\alpha + \frac{1}{2}} \left( \frac{c}{2} \right)^k f'_b(b)^{\frac{1}{4}} D(s_2) \left( \Ai(t) + \bigO(n^{-\frac{1}{3}}) \right),
		\end{multline}
		\begin{multline} \label{eq:asy_Airy_q}
			n^{-\frac{1}{6}} e^{-\frac{n}{2}(\tilde{\gfn}(z) - \gfn(z) + V(z) + \ell)} q^{(n)}_{n + k}(f(z)) = \\
			2^{\frac{3}{4}} \sqrt{\pi} \left( \frac{1 - s_1}{s_2 - s_1} \right)^{\alpha + \frac{1}{2}} \left( \frac{c}{2} \right)^k e^{kc} \left( \frac{b}{s_2} \right)^{\frac{1}{8}} f'_b(b)^{\frac{1}{4}} \frac{\tilde{D}(s_2)}{\tilde{D}(1)} \left( \Ai(t) + \bigO(n^{-\frac{1}{3}}) \right).
		\end{multline}
		Here $\Ai$ is the Airy function.
		\item \label{enu:thm:main_C}
		In region $C_{\delta}$ we have
		\begin{multline} \label{eq:p^n_n+k_hard}
			e^{-\frac{n}{2}(\gfn(z) - \tilde{\gfn}(z) + V(z) + \ell)} p^{(n)}_{n + k}(z) = \\
			\begin{aligned}
				\sqrt{\pi} & \left[ n^{\frac{1}{2}} f^{\frac{1}{4}}_0(z) \left( (1 + \bigO(n^{-1})) G_k(\Iinv_1(z)) - (1 + \bigO(1))i e^{\alpha \pi i} G_k(\Iinv_2(z)) \right) I_{\alpha}(2n \sqrt{f_0(z)}) \right. \\
				& + \left. n^{-\frac{1}{2}} f^{-\frac{1}{4}}_0(z) \left( (1 + \bigO(n^{-1})) G_k(\Iinv_1(z)) + (1 + \bigO(1))i e^{\alpha \pi i} G_k(\Iinv_2(z)) \right) I'_{\alpha}(2n \sqrt{f_0(z)}) \right],
			\end{aligned}
		\end{multline}
		\begin{multline} \label{eq:q^n_n+k_hard}
			e^{-\frac{n}{2}(\tilde{\gfn}(z) - \gfn(z) + V(z) + \ell)} q^{(n)}_{n + k}(f(z)) = \\
			\begin{aligned}
				& \sqrt{\pi} \left[ n^{\frac{1}{2}} f^{\frac{1}{4}}_0(z) \left( (1 + \bigO(n^{-1})) \tilde{G}_k(\Iinv_2(z)) - (1 + \bigO(1))i e^{-\alpha \pi i} \tilde{G}_k(\Iinv_1(z)) \right) I_{\alpha}(2n \sqrt{f_0(z)}) \right. \\
				& + \left. n^{-\frac{1}{2}} f^{-\frac{1}{4}}_0(z) \left( (1 + \bigO(n^{-1})) \tilde{G}_k(\Iinv_2(z)) + (1 + \bigO(1))i e^{-\alpha \pi i} \tilde{G}_k(\Iinv_1(z)) \right) I'_{\alpha}(2n \sqrt{f_0(z)}) \right].
			\end{aligned}
		\end{multline}
		In particular, if $z = -f'_0(0)^{-1} n^{-2} t$, with $t$ bounded, then
		\begin{multline} \label{eq:asy_Bessel_p}
			n^{-\frac{1}{2}} e^{-\frac{n}{2}(\gfn(z) - \tilde{\gfn}(z) + V(z) + \ell)} z^{\frac{\alpha}{2}} p^{(n)}_{n + k}(z) = 2\sqrt{\pi} \sqrt{\frac{-s_1}{s_2 - s_1}} \\
			\times \left( \frac{c(1 - s_1) \sqrt{-s_1}}{s_2 - s_1} \right)^{\alpha + \frac{1}{2}} \left( \frac{c^2}{4} (s_1 - 1) \right)^k D(s_1) (-f_0(0))^{\frac{1}{4}} \left( J_{\alpha}(2 \sqrt{t}) + \bigO(n^{-1}) \right),
		\end{multline}

		\begin{multline} \label{eq:asy_Bessel_q}
			n^{-\frac{1}{2}} e^{-\frac{n}{2}(\tilde{\gfn}(z) - \gfn(z) + V(z) + \ell)} z^{\frac{\alpha}{2}} q^{(n)}_{n + k}(f(z)) = 2\sqrt{\pi} \sqrt{\frac{s_2 - 1}{s_2 - s_1}} \\
			\times \left( \frac{c(s_2 - s_1)}{4 \sqrt{-s_1}} \right)^{\alpha + \frac{1}{2}} (s_1 - 1)^{-k} e^{kc} \frac{\tilde{D}(s_1)}{\tilde{D}(1)} (-f_0(z))^{\frac{1}{4}} \left( J_{\alpha}(2 \sqrt{t}) + \bigO(n^{-1}) \right).
		\end{multline}
		Here $J_{\alpha}$ is the Bessel function of the first kind, and $I_{\alpha}$ is the modified Bessel function of the first kind.
		\item
		\begin{equation} \label{eq:asy_h}
			e^{-n\ell} h^{(n)}_{n + k} = \frac{2\pi}{\tilde{D}(1)} \left( \frac{c^2(1 - s_1)}{4} \right)^{\alpha + \frac{1}{2}} \left( \frac{c^2}{4} \right)^{k + \frac{1}{4}} e^{kc} + \bigO(n^{-1}).
		\end{equation}
	\end{enumerate}
\end{thm}
In the rest of this section, we prove Theorem \ref{thm:main}. In the statement of Theorem \ref{thm:main}, $\compC_+ \cup \realR$ is divided into $A_{\delta}$, $B_{\delta}$, $C_{\delta}$ and $D_{\delta}$. We can consider the limit of $p^{(n)}_{n + k}(z)$ and $q^{(n)}_{n + k}(z)$ in the hard edge region $D(0, \epsilon)$, the soft edge region $D(b, \epsilon)$, the bulk region enclosed by $\Sigma^R_1, \Sigma^R_2, \partial D(0, \epsilon), \partial D(b, \epsilon)$ (see Figure \ref{fig:Sigma_R} and equation \eqref{eq:shape_Sigma^R_i}), and the outside region that is the complement of these three, since by deforming the shape of the contour $\Sigma^R$, these four regions cover $C_{\delta}, D_{\delta}, B_{\delta}, A_{\delta}$ respectively.

\subsection{Outside region}

For $z \in \compC$ that is outside the lens and outside $D(0, \epsilon)$ and $D(b, \epsilon)$, we have, by RH problems \ref{RHP:Y}, \ref{RHP:T}, \ref{RHP:S}, \ref{RHP:Q} and \ref{RHP:R},
\begin{equation} \label{eq: outer_region_proof}
	\begin{split}
		p^{(n)}_{n + k}(z) = {}& Y^{(n + k, n)}_1(z) = T_1(z) e^{n\gfn(z)} = S_1(z) e^{n\gfn(z)} = Q_1(z) P^{(\infty)}_1(z) e^{n\gfn(z)} \\
		= {}& R_1(z) P^{(\infty)}_1(z) e^{n\gfn(z)}.
	\end{split}
\end{equation}
Since $R_1(z) = 1 + \bigO(n^{-1})$ uniformly by \eqref{eq:P_1P_2_bounded} and $P^{(\infty)}_1(z) = G_k(\Iinv_1(z))$ by \eqref{eq:Pinfty_1_in_scalar}, we prove \eqref{eq:p_A}.

Similarly, for $z \in \paraP$ that is outside the lens and outside $D(0, \epsilon)$ and $D(b, \epsilon)$, we consider the counterpart of \eqref{eq: outer_region_proof} for $q^{(n)}_{n + k}(f(z))$. For later use in Section \ref{subsec:h^n_n+k}, we also consider $\tilde{C} q^{(n)}_{n + k}(f(z))$ for $z \in \compC$ that is outside the lens and outside $D(0, \epsilon)$ and $D(b, \epsilon)$. We have, by RH problems \ref{RHP:Y_tilde}, \ref{RHP:T_tilde}, \ref{RHP:S_tilde}, \ref{RHP:Q_tilde} and \ref{RHP:R_tilde}, as in \eqref{eq: outer_region_proof},
\begin{align} \label{eq: outer_region_proof_tilde}
	q^{(n)}_{n + k}(f(z)) = {}& \tilde{R}_1(z) \tilde{P}^{(\infty)}_1(z) e^{n\tilde{\gfn}(z)}, & \tilde{C} q^{(n)}_{n + k}(z) = {}& e^{n\ell} \tilde{R}_2(z) \tilde{P}^{(\infty)}_2(z) e^{-n\gfn(z)}.
\end{align}
Since $\tilde{R}_1(z) = 1 + \bigO(n^{-1})$ and $\tilde{R}_2(z) = 1 + \bigO(n^{-1})$ uniformly by \eqref{eq:P_1P_2_bounded_tilde}, $\tilde{P}^{(\infty)}_1(z) = \tilde{G}_k(\Iinv_2(z))$ by \eqref{eq:Pinfty_2_tilde_in_scalar}, and $\tilde{P}^{(\infty)}_2(z)$ has the expression given by \eqref{eq:Pinfty_1_tilde_in_scalar} and \eqref{eq:defn_P_scalar_tilde}, we prove \eqref{eq:q_A} and obtain

\begin{equation} \label{eq:asy_Cq}
	\tilde{C} q^{(n)}_{n + k}(z) = \frac{(1 - s_1)^{\alpha + \frac{1}{2}} \sqrt{s_2 - 1} i \tilde{D}(1)^{-1} e^{kc} z^{\alpha} e^{n\ell} e^{-n\gfn(z)}}{(\Iinv_1(z) - s_1)^{\alpha} (\Iinv_1(z) - 1)^k \sqrt{(\Iinv_1(z) - s_1)(\Iinv_1(z) - s_2)} D(\Iinv_1(z))} (1 + \bigO(n^{-1})).
\end{equation}

\subsection{Bulk region}

Similar to \eqref{eq: outer_region_proof}, for $z$ in the upper lens and outside $D(0, \epsilon)$ and $D(b, \epsilon)$,
\begin{equation} \label{eq:bulk_region_proof}
	\begin{split}
		p^{(n)}_{n + k}(z) = {}& Y^{(n + k, n)}_1(z) = T_1(z) e^{n\gfn(z)} = S_1(z) e^{n\gfn(z)} + z^{-\alpha} h(z)^{-1} f'(z) S_2(z) e^{n(V(z) - \tilde{\gfn}(z) + \ell)} \\
		= {}& Q_1(z) P^{(\infty)}_1(z) e^{n\gfn(z)} + z^{-\alpha} h(z)^{-1} f'(z) Q_2(z) P^{(\infty)}_2(z) e^{n(V(z) - \tilde{\gfn}(z) + \ell)} \\
		= {}& R_1(z) P^{(\infty)}_1(z) e^{n\gfn(z)} + z^{-\alpha} h(z)^{-1} f'(z) R_2(z) P^{(\infty)}_2(z) e^{n(V(z) - \tilde{\gfn}(z) + \ell)}.
	\end{split}
\end{equation}
As in \eqref{eq: outer_region_proof}, $R_1(z) = 1 + \bigO(n^{-1})$ and $R_2(z) = 1 + \bigO(n^{-1})$ uniformly by \eqref{eq:P_1P_2_bounded}. By \eqref{eq:Pinfty_1_in_scalar}, \eqref{eq:Pinfty_2_in_scalar}, \eqref{eq:defn_P_scalar}, \eqref{eq:G_k_from_D_k}, \eqref{eq:G_k_tilde_from_D_k} and \eqref{eq:prod_DD}, we have $P^{(\infty)}_1(z) = G_k(\Iinv_1(z))$ as in the outside region and
\begin{equation} \label{eq:P^infty_in_Gk}
	z^{-\alpha} h(z)^{-1} f'(z) P^{(\infty)}_2(z) = G_k(\Iinv_2(z)).
\end{equation}
Hence, we prove \eqref{eq:p_B}.

In particular, if $x \in (\epsilon, b - \epsilon)$ and $z \to x$ from above, we have by \eqref{eq:Euler-Lagrange} that $\lim_{z \to x} \gfn(z) = \gfn_+(x)$ and $\lim_{z \to x} V(z) - \tilde{\gfn}(z) + \ell = V(x) - \tilde{\gfn}_+(x) + \ell =  \gfn_-(x) = \overline{\gfn_+(x)}$. From the definition \eqref{eq:defn_g_gtilde} of $\gfn(z)$, we have $\gfn_{\pm}(x) = \int \log \lvert x - y \rvert d\mu(y) \pm \mu([x, b]) i$. On the other hand, as $z \to x$ from above, by \eqref{eq:Iinv_+} and \eqref{eq:Iinv_-}, $\Iinv_1(z)$ and $\Iinv_2(z)$ converge to $\Iinv_+(x)$ and $\Iinv_-(x)$ respectively. Noting that $\Iinv_-(x) = \overline{\Iinv_+(x)}$, and then $\lim_{z \to x} G_k(\Iinv_2(z)) = G_{k, +}(\Iinv_-(x)) = \overline{G_{k, +}(\Iinv_+(x))} = \overline{\lim_{z \to x} G_k(\Iinv_1(z))}$ and $\lim_{z \to x} R_2(z) = \R(\Iinv_-(x)) = \overline{\R(\Iinv_+(x))} = \overline{\lim_{z \to x} R_1(z)}$, we have
\begin{equation} \label{eq:p_n+k_on_R}
	\begin{split}
		p^{(n)}_{n + k}(x) = \lim_{z \to x \text{ in } \compC_+} p^{(n)}_{n + k}(z) = {}& 2\Re \left[ \R(\Iinv_+(x)) G_{k, +}(\Iinv_+(x)) e^{n\gfn_+(x)} \right] \\
		= {}& 2\Re \left((1 + \bigO(n^{-1})) G_{k, +}(\Iinv_+(x)) e^{n\gfn_+(x)} \right),
	\end{split}
\end{equation}
which implies \eqref{eq:p_bulk_real_line}.

Analogous to \eqref{eq:bulk_region_proof}, for $z$ in the upper lens and outside $D(0, \epsilon)$ and $D(b, \epsilon)$,
\begin{equation}
	q^{(n)}_{n + k}(f(z)) = \tilde{R}_1(z) \tilde{P}^{(\infty)}_1(z) e^{n\tilde{\gfn}(z)} + z^{-\alpha} h(z)^{-1} \tilde{R}_2(z) \tilde{P}^{(\infty)}_2(z) e^{n(V(z) - \gfn(z) + \ell)}.
\end{equation}
As in \eqref{eq: outer_region_proof_tilde}, $\tilde{R}_1(z) = 1 + \bigO(n^{-1})$ and $\tilde{R}_2(z) = 1 + \bigO(n^{-1})$ uniformly by \eqref{eq:P_1P_2_bounded_tilde}. By \eqref{eq:Pinfty_2_tilde_in_scalar}, \eqref{eq:Pinfty_1_tilde_in_scalar}, \eqref{eq:defn_P_scalar_tilde}, \eqref{eq:G_k_tilde_from_D_k} and \eqref{eq:prod_DD}, we have $\tilde{P}^{(\infty)}_1(z) = \tilde{G}_k(\Iinv_2(z))$ and
\begin{equation} \label{eq:P^infty_in_Gk_tilde}
	z^{-\alpha} h(z)^{-1} \tilde{P}^{(\infty)}_2(z) = \tilde{G}_k(\Iinv_1(z)).
\end{equation}
Hence, we prove \eqref{eq:q_B}.

In particular, if $x \in (\epsilon, b - \epsilon)$ and $z \to x$ from above, we have by \eqref{eq:Euler-Lagrange} that $\lim_{z \to x} \tilde{\gfn}(z) = \tilde{\gfn}_+(x)$ and $\lim_{z \to x} V(z) - \gfn(z) + \ell = V(x) - \gfn_+(x) + \ell =  \tilde{\gfn}_-(x) = \overline{\tilde{\gfn}_+(x)}$. From the definition \eqref{eq:defn_g_gtilde} of $\tilde{\gfn}(z)$, we have $\tilde{\gfn}_{\pm}(x) = \int \log \lvert f(x) - f(y) \rvert d\mu(y) \pm \mu([x, b]) i$. As in \eqref{eq:p_n+k_on_R}, we have
\begin{equation}
	q^{(n)}_{n + k}(f(x)) = \lim_{z \to x \text{ in } \compC_+} q^{(n)}_{n + k}(f(z)) = 2\Re \left((1 + \bigO(n^{-1})) \tilde{G}_{k, +}(\Iinv_+(x)) e^{n\tilde{\gfn}_+(x)} \right),
\end{equation}
which implies \eqref{eq:p_bulk_real_line}.

\subsection{Soft edge region}

As in \eqref{eq: outer_region_proof} and \eqref{eq:bulk_region_proof}, for $z \in D(b, \epsilon) \cap \compC_+$,
\begin{multline} \label{eq:P_in_Q_outside}
	p^{(n)}_{n + k}(z) = Y^{(n + k, n)}_1(z) = T_1(z) e^{n\gfn(z)} = S_1(z) e^{n\gfn(z)} = P^{(\infty)}_1(z) Q_1(z) e^{n\gfn(z)} \\
	+
	\begin{cases}
		0, & \text{outside the lens}, \\
		z^{-\alpha} h(z)^{-1} f'(z) P^{(\infty)}_2(z) Q_2(z) e^{n(V(z) - \tilde{\gfn}(z) + \ell}, & \text{inside the upper lens}.
	\end{cases}
\end{multline}
and then by \eqref{eq:V^b_in_Q_P}, \eqref{eq:P^b_in_E}, \eqref{eq:defn_E^b}, \eqref{eq:defn_Pmodel}, \eqref{eq:defn_g^(b)_i} and \eqref{eq:defn_phi} (also \eqref{eq:Airy_sum_zero} if $z$ is inside the upper lens)
\begin{multline} \label{eq:P_soft_asy}
	p^{(n)}_{n + k}(z) =
	\sqrt{\pi} \left[ n^{\frac{1}{6}} f^{\frac{1}{4}}_b(z) \left( P^{(\infty)}_1(z) V^{(b)}_1(z) - i \frac{f'(z)}{z^{\alpha} h(z)} P^{(\infty)}_2(z) V^{(b)}_2(z) \right) \Ai(n^{\frac{2}{3}} f_b(z)) \right. \\
	\left. - n^{-\frac{1}{6}} f^{-\frac{1}{4}}_b(z) \left( P^{(\infty)}_1(z) V^{(b)}_1(z) + i \frac{f'(z)}{z^{\alpha} h(z)} P^{(\infty)}_2(z) V^{(b)}_2(z) \right) \Ai'(n^{\frac{2}{3}} f_b(z)) \right] e^{\frac{n}{2}(\gfn(z) - \tilde{\gfn}(z) + V(z) + \ell)},
\end{multline}
By \eqref{eq:R_1_defn} and \eqref{eq:R_2_defn}, we have that $V_1(z) = R_1(z)$ and $V_2(z) = R_2(z)$, so they are both uniformly $1 + \bigO(n^{-1})$ by \eqref{eq:P_1P_2_bounded}. Also, we still have \eqref{eq:Pinfty_1_in_scalar} and \eqref{eq:P^infty_in_Gk} for $P^{(\infty)}_1, P^{(\infty)}_2$ as in the outside and bulk regions. Hence, we have \eqref{eq:p^n_n+k_soft}.

For $z \in \compC_+$ in the vicinity of $b$, we have the limits of $\Iinv_1(z)$ and $\Iinv_2(z)$ by \eqref{eq:I_1_at_b} and \eqref{eq:I_2_at_b} respectively,
and for $s$ in the vicinity of $s_2$, we have by \eqref{eq:G_k_from_D_k},
\begin{equation} \label{eq:G_k_soft_edge}
	G_k(s) = \left( \frac{c^2(s_2 - s_1)}{4b} \right)^{\alpha + \frac{1}{2}} \left( \frac{c}{2} \right)^k s^{\frac{1}{2}}_2 D(s_2) (s - s_2)^{-\frac{1}{2}} (1 + \bigO(s - s_2)).
\end{equation}
where $(s - s_2)^{-1/2}$ is positive on $(s_2, +\infty)$ and has the branch along $\gamma_1$. Hence, we derive \eqref{eq:asy_Airy_p}.

As in \eqref{eq:P_in_Q_outside}, for $z \in D(b, \epsilon) \cap \compC_+$,
\begin{multline} \label{eq:P_in_Q_outside_tilde}
	q^{(n)}_{n + k}(f(z)) = \tilde{P}^{(\infty)}_1(z) \tilde{Q}_1(z) e^{n\tilde{\gfn}(z)} \\
	+
	\begin{cases}
		0, & \text{outside the lens}, \\
		z^{-\alpha} h(z)^{-1} \tilde{P}^{(\infty)}_2(z) \tilde{Q}_2(z) e^{n(V(z) - \gfn(z) + \ell}, & \text{inside the upper lens}.
	\end{cases}
\end{multline}
and then by \eqref{eq:V^b_in_Q_P_tilde}, \eqref{eq:P^b_in_E_tilde}, \eqref{eq:defn_E^b_tilde}, \eqref{eq:defn_Pmodel_tilde}, \eqref{eq:defn_g^(b)_i_tilde} and \eqref{eq:defn_phi} (also \eqref{eq:Airy_sum_zero} if $z$ is inside the upper lens)
\begin{multline} \label{eq:P_soft_asy_tilde}
	q^{(n)}_{n + k}(f(z)) =
	\sqrt{\pi} \left[ n^{\frac{1}{6}} f^{\frac{1}{4}}_b(z) \left( \tilde{P}^{(\infty)}_1(z) \tilde{V}^{(b)}_1(z) - i \frac{1}{z^{\alpha} h(z)} \tilde{P}^{(\infty)}_2(z) \tilde{V}^{(b)}_2(z) \right) \Ai(n^{\frac{2}{3}} f_b(z)) \right. \\
	\left. - n^{-\frac{1}{6}} f^{-\frac{1}{4}}_b(z) \left( \tilde{P}^{(\infty)}_1(z) \tilde{V}^{(b)}_1(z) + i \frac{1}{z^{\alpha} h(z)} \tilde{P}^{(\infty)}_2(z) \tilde{V}^{(b)}_2(z) \right) \Ai'(n^{\frac{2}{3}} f_b(z)) \right] e^{\frac{n}{2}(\tilde{\gfn}(z) - \gfn(z) + V(z) + \ell)},
\end{multline}
By \eqref{eq:R_1_defn_tilde} and \eqref{eq:R_2_defn_tilde}, we have that $\tilde{V}_1(z) = \tilde{R}_1(z)$ and $\tilde{V}_2(z) = \tilde{R}_2(z)$, so they are both uniformly $1 + \bigO(n^{-1})$ by \eqref{eq:P_1P_2_bounded_tilde}. Also, we still have \eqref{eq:Pinfty_2_tilde_in_scalar} and \eqref{eq:P^infty_in_Gk_tilde} for $\tilde{P}^{(\infty)}_1, \tilde{P}^{(\infty)}_2$ as in the outside and bulk regions. Hence, we have \eqref{eq:q^n_n+k_soft}.

Similar to \eqref{eq:G_k_soft_edge}, we use the limits of $\Iinv_1(z)$ and $\Iinv_2(z)$, and have for $s$ in the vicinity of $s_2$
\begin{equation}
	\tilde{G}_k(s) = \left( \frac{1 - s_1}{s_2 - s_1} \right)^{\alpha + \frac{1}{2}} \left( \frac{c}{2} \right)^{k - \frac{1}{2}} e^{kc} \frac{\tilde{D}(s_2)}{\tilde{D}(1)} i (s - s_2)^{-\frac{1}{2}} (1 + \bigO(s - s_2)),
\end{equation}
where $(s - s_2)^{-1/2}$ is positive on $(s_2, +\infty)$ and has the branch along $\gamma_2$. Hence, we prove \eqref{eq:asy_Airy_q} in the same way as \eqref{eq:asy_Airy_p}.

\subsection{Hard edge region}

For $z \in D(0, \epsilon)$, \eqref{eq:P_in_Q_outside} still holds, and then by \eqref{eq:V_in_U}, \eqref{eq:P^0_in_E}, \eqref{eq:defn_E^0}, \eqref{eq:defn_Pmodel_0}, \eqref{eq:defn_g^(0)_i}, and \eqref{eq:defn_phi} (also \eqref{eq:Bessel_sum_zero} if $z$ is inside the upper lens)
\begin{multline} \label{eq:p^n_n+k_in_Bessel}
	p^{(n)}_{n + k}(z) = \sqrt{\pi} \left[ n^{\frac{1}{2}} f^{\frac{1}{4}}_0(z) \left( P^{(\infty)}_1(z) V^{(0)}_1(z) + i\frac{f'(z)}{(-z)^{\alpha} h(z)} P^{(\infty)}_2(z) V^{(0)}_2(z) \right) I_{\alpha}(2n \sqrt{f_0(z)}) \right. \\
	+ \left. n^{-\frac{1}{2}} f^{-\frac{1}{4}}_0(z) \left( P^{(\infty)}_1(z) V^{(0)}_1(z) - i\frac{f'(z)}{(-z)^{\alpha} h(z)} P^{(\infty)}_2(z) V^{(0)}_2(z) \right) I'_{\alpha}(2n \sqrt{f_0(z)}) \right] \\
	\times e^{\frac{n}{2}(\gfn(z) - \tilde{\gfn}(z) + V(z) + \ell)},
\end{multline}
where the $(-z)^{\alpha}$ factor takes the principal branch as $\arg (-z) \in (-\pi, \pi)$. As in the soft edge region, $V^{(0)}_1(z)$ and $V^{(0)}_2(z)$ are both uniformly $1 + \bigO(n^{-1})$ by \eqref{eq:P_1P_2_bounded}, and $P^{(\infty)}_1(z)$ and $P^{(\infty)}_2(z)$ are still given by \eqref{eq:Pinfty_1_in_scalar}, \eqref{eq:Pinfty_2_in_scalar}, \eqref{eq:defn_P_scalar}, \eqref{eq:G_k_from_D_k} and \eqref{eq:prod_DD}. Then we have \eqref{eq:p^n_n+k_hard}.

For $z \in \compC_+$ in the vicinity of $0$, we have the limits of $\Iinv_1(z)$ and $\Iinv_2(z)$ by \eqref{eq:I_1_at_0} and \eqref{eq:I_2_at_0} respectively,
and for $s$ in the vicinity of $s_1$, we have by \eqref{eq:G_k_from_D_k},
\begin{equation} \label{eq:G_k_hard_edge}
	G_k(s) = \left( \frac{4(1 - s_1)^2 (-s_1)}{(s_2 - s_1)^2} \right)^{\alpha + \frac{1}{2}} \left( \frac{c^2}{4} (s_1 - 1) \right)^k \sqrt{\frac{-s_1}{s_2 - s_1}} D(s_1) (s_1 - s)^{-\alpha - \frac{1}{2}} (1 + \bigO(s - s_1)),
\end{equation}
where $(s_1 - s)^{-\alpha - 1/2}$ is positive on $(-\infty, s_1)$ and has the branch cut along $\gamma_1$. Hence, we derive, if $z \in D(0, \epsilon) \cap \compC_+$ and $z = f'_0(0)^{-1} n^{-2} t$, with $t$ bounded, then uniformly
\begin{multline}
	n^{-\frac{1}{2}} e^{-\frac{n}{2}(\gfn(z) - \tilde{\gfn}(z) + V(z) + \ell)} (-z)^{\frac{\alpha}{2}} p^{(n)}_{n + k}(z) = \\
	2\sqrt{\pi} \sqrt{\frac{-s_1}{s_2 - s_1}} \left( \frac{c(1 - s_1) \sqrt{-s_1}}{s_2 - s_1} \right)^{\alpha + \frac{1}{2}} \left( \frac{c^2}{4} (s_1 - 1) \right)^k D(s_1) (-f_0(0))^{\frac{1}{4}} \left( I_{\alpha}(2 \sqrt{t}) + \bigO(n^{-1}) \right),
\end{multline}
which implies \eqref{eq:asy_Bessel_p} on $D(0, \epsilon) \cap (\compC_+ \cup \realR)$ by changing $I_{\alpha}$ into $J_{\alpha}$ as in \cite[10.27.6]{Boisvert-Clark-Lozier-Olver10}.

As in \eqref{eq:p^n_n+k_in_Bessel}, for $z \in D(0, \epsilon) \cap \compC_+$,
\begin{multline} \label{eq:P_hard_asy_tilde}
	q^{(n)}_{n + k}(f(z)) =
	\sqrt{\pi} \left[ n^{\frac{1}{2}} f^{\frac{1}{4}}_0(z) \left( \tilde{P}^{(\infty)}_1(z) \tilde{V}^{(0)}_1(z) + i \frac{1}{(-z)^{\alpha} h(z)} \tilde{P}^{(\infty)}_2(z) \tilde{V}^{(0)}_2(z) \right) I_{\alpha}(2n \sqrt{f_0(z)}) \right. \\
	\left. + n^{-\frac{1}{2}} f^{-\frac{1}{4}}_0(z) \left( \tilde{P}^{(\infty)}_1(z) \tilde{V}^{(0)}_1(z) - i \frac{1}{(-z)^{\alpha} h(z)} \tilde{P}^{(\infty)}_2(z) \tilde{V}^{(0)}_2(z) \right) I'_{\alpha}(2n \sqrt{f_0(z)}) \right] \\
	\times e^{\frac{n}{2}(\tilde{\gfn}(z) - \gfn(z) + V(z) + \ell)}.
\end{multline}
By \eqref{eq:R_1_defn_tilde} and \eqref{eq:R_2_defn_tilde}, we have that $\tilde{V}_1(z) = \tilde{R}_1(z)$ and $\tilde{V}_2(z) = \tilde{R}_2(z)$, so they are both uniformly $1 + \bigO(n^{-1})$ by \eqref{eq:P_1P_2_bounded_tilde}. Also, we still have \eqref{eq:P^infty_in_Gk_tilde} for $\tilde{P}^{(\infty)}_1, \tilde{P}^{(\infty)}_2$ as in the bulk region and the soft edge region. Hence, we have \eqref{eq:q^n_n+k_hard}.

Similar to \eqref{eq:G_k_hard_edge}, we also have for $s$ in the vicinity of $s_1$
\begin{equation}
	\tilde{G}_k(s) = (1 - s_1)^{\alpha + \frac{1}{2}} (s_1 - 1)^{-k} e^{kc} \sqrt{\frac{s_2 - 1}{s_2 - s_1}} \frac{\tilde{D}(s_1)}{\tilde{D}(1)} (s - s_1)^{-\alpha - \frac{1}{2}} (1 + \bigO(s - s_1)).
\end{equation}
Hence, we prove \eqref{eq:asy_Bessel_q} in the same way as \eqref{eq:asy_Bessel_p}.

\subsection{Computation of $h^{(n)}_{n + k}$} \label{subsec:h^n_n+k}
Recall that
$h^{(n)}_{n + k}$ is defined in \eqref{eq:defn_h}. By the limiting formulas \eqref{eq:p_A} for $p^{(n)}_{n + k}(z)$ and \eqref{eq:q_A} for $q^{(n)}_{n + k}(f(z))$, and the regularity condition \eqref{eq:gfn_outer} as well as the growth condition \eqref{eq:growth_at_infty}, there exists $\epsilon,\delta > 0$ such that for all $R > b + \delta$,
\begin{equation}
	\left\lvert h^{(n)}_{n + k} - h^{(n)}_{n + k}(R) \right\rvert = \bigO( e^{n \ell-\epsilon n}), \,\, \text{where} \,\, h^{(n)}_{n + k}(R) = \int^R_0 p^{(n)}_{n + k}(x) q^{(n)}_{n + k}(f(x)) W^{(n)}_{\alpha}(x) dx.
\end{equation}
Set $C_R$ to be the circular contour with positive orientation, centered at $0$ with radius $R$, and let $\hat{C}_R$ be its image under $\Iinv_1$, i.e.,
\begin{equation}\label{eq:defcrcr'}
	C_R: = \{ w \in \mathbb{C} : \lvert w \rvert = R \},\quad \hat{C}_R=\Iinv_1(C_R).
\end{equation}
If $R$ is large enough, then $\hat{C}_R$ is uniquely determined.
To prove \eqref{eq:asy_h}, we only need to compute
\begin{equation}
	h^{(n)}_{n + k}(R) = -\oint_{C_R} p^{(n)}_{n + k}(z) \tilde{C} q^{(n)}_{n + k}(z) dz,
\end{equation}
where $\tilde{C}q^{(n)}_{n+k}$ is the Cauchy transform of $q^{(n)}_{n+k}$ as defined in \eqref{eq:cauchyqj}.

With the help of \eqref{eq:p_A} and \eqref{eq:asy_Cq}, we have
\begin{equation}\label{int-1}
	\begin{split}
		h^{(n)}_{n + k}(R) = {}& e^{n\ell} \left( -\oint_{C_R} \frac{G_k(\Iinv_1(z)) (1 - s_1)^{\alpha + \frac{1}{2}} \sqrt{s_2 - 1} i \tilde{D}(1)^{-1} e^{kc} z^{\alpha} D(\Iinv_1(z))^{-1}}{(\Iinv_1(z) - s_1)^{\alpha} (\Iinv_1(z) - 1)^k \sqrt{(\Iinv_1(z) - s_1)(\Iinv_1(z) - s_2)}} dz + \bigO(n^{-1}) \right) \\
		= {}& e^{n\ell} \left( -\frac{\sqrt{s_2 - 1} i}{\tilde{D}(1)} \left( \frac{c^2(1 - s_1)}{4} \right)^{\alpha + \frac{1}{2}} \left( \frac{c^2}{4} \right)^k e^{kc} \oint_{C_R} \frac{\Iinv_1(z)^{1/2}}{z^{1/2} (\Iinv_1(z) - s_2)} dz + \bigO(n^{-1}) \right).
	\end{split}
\end{equation}
To compute the integral, we make a change of variables
$w = \Iinv_1(z)$. Then
$z^{1/2}=J_c(w)/2$, and
\begin{equation}\label{int0}
	\oint_{C_R} \frac{\Iinv_1(z)^{\frac{1}{2}}  }{z^{\frac{1}{2}} (\Iinv_1(z) - s_2)} dz = 2\oint_{\hat{C}_R} \frac{w^{\frac{1}{2}}}{J_c(w)(w - s_2) }  d \J(w).
\end{equation}
Using
\begin{equation}\label{eq:JJdev}
	\frac{d \J(w)}{dw} = \frac{J_c(w)}{2} \frac{d J_c(w)}{dw}
\end{equation}
and
\begin{equation}\label{eq:Jdev}
	\frac{d J_c(w)}{dw} = \frac{d}{dw} \left( c\sqrt{w} + \arcosh \frac{w + 1}{w - 1} \right) = \frac{c}{2} \frac{w - s_2}{\sqrt{w} (w - 1)},
\end{equation}
we see that
\begin{equation}\label{int1}
	\oint_{C_R} \frac{\Iinv_1(z)^{\frac{1}{2}}  }{z^{\frac{1}{2}} (\Iinv_1(z) - s_2)} dz
	=\frac{c}{2}\oint_{\hat{C}_R} (w-1)^{-1}dw.
\end{equation}
Substituting \eqref{int1} into the second line of \eqref{int-1}, we get
\begin{equation}
	\begin{split}
		e^{-n\ell}	h^{(n)}_{n + k}(R) &=   -\frac{\sqrt{s_2 - 1} c\pi }{\tilde{D}(1)} \left( \frac{c^2(1 - s_1)}{4} \right)^{\alpha + \frac{1}{2}} \left( \frac{c^2}{4} \right)^k e^{kc}  + \bigO(n^{-1}) \\
		&=\frac{2\pi}{\tilde{D}(1)} \left( \frac{c^2(1 - s_1)}{4} \right)^{\alpha + \frac{1}{2}} \left( \frac{c^2}{4} \right)^{k + \frac{1}{4}} e^{kc} + \bigO(n^{-1}) ,
	\end{split}
\end{equation}
where we have used the fact that $s_2=1+2/c$. This completes the proof of Theorem \ref{thm:main}.


\section{Bulk universality and limiting density}
\label{sec:bulk universality}
In this section, we first give pointwise and integral bounds for $p_{j}^{(n)}$ and $q_{j}^{(n)}$. Then we combine these bounds, Theorem \ref{thm:main}, and Proposition \ref{prop:DPP} to prove Theorem \ref{thm:sinekernel}. Throughout this section, we let
\begin{equation}\label{eq:uvdef}
	u = {} x^* + \frac{\xi}{\pi \psi(x^*) n}, \quad  v = {} x^* + \frac{\eta}{\pi \psi(x^*) n},
\end{equation}
where $x^* \in (0, b)$, and $\xi, \eta$ are in a compact $\Xi$ subset of $\compC$ ($x^*$ and $\Xi$ are both fixed). 

\subsection{Upper bounds on $p_{j}^{(n)}$ and $q^{(n)}_{j}$}\label{subsec:upperbds}
Recall that in Theorem \ref{thm:main}, we have asymptotic formulas for $p^{(n)}_j(x)$, $q^{(n)}_j(f(y))$, and $h^{(n)}_j$ if $j = n + \bigO(1)$. By a scaling argument, we also have the asymptotics for $p^{(n)}_j(x)$, $q^{(n)}_j(f(y))$, and $h^{(n)}_j$ if $j$ is of order $n$ (to be explained in detail shortly). Our asymptotic analysis based on Riemann-Hilbert problems is not convenient for estimating $p_j^{(n)}$ and $q^{(n)}_j$ for $j=\mathfrak{o}(n)$. However, a very crude estimate for small $j$ suffices in the proof of the bulk universality.

As mentioned below Proposition \ref{prop:char},
$p_j^{(n)}$ can be analyzed by transforming the function $V$ to $V_t(x) := t^{-1} V(x)$, where $t\in (0,\infty)$ is given by $n/j$ (both cases of $j\leq n$ and $j\geq n$ will be needed so that $t$ can be any positive real number).
For each $t \in (0, +\infty)$, we also let $\mu_t := \mu_{V, t}$ be the minimizer of $I_{V_t}$, that is, the functional defined by the right-hand side of \eqref{eq:energy_functional} with $V$ replaced by $V_t$. If $V$ satisfies assumption \eqref{eq:one-cut_regular_cond}, then all $V_t$ do so as well, and the argument in Section \ref{sec:constr_eq_measure} yields that $\mu_t$ is supported on an interval $[0, b_t]$ with a density function $\psi_t(x)$ that satisfies the one-cut regular with a hard edge condition stated in Requirement \ref{req:one-cut_reg}.
Then we can define the constant $\ell_t$ by \eqref{eq:defn_ell} with $V$ replaced by $V_t$, and functions $\gfn_t(z), \gfntilde_t(z)$ by \eqref{eq:defn_g_gtilde}, with $\psi(x)$ replaced by $\psi_t(x)$.
Theorem \ref{thm:main} (with $k$ set to 0,
$\gfntilde, \gfn$ replaced by $\gfntilde_t, \gfn_t$, and $n$ replaced by $tn$) now gives asymptotics of $p^{(n)}_j$.
(This process works well when $j$ is of order $n$, or equivalently, when $t$ is away from $0$ or $\infty$.)

For $z\in \mathbb{P}$ (the region shown in Figure \ref{fig:paraP}), we set
\begin{equation} \label{eq:Lambdat}
  \Lambda_t(z) =  \frac{t}{2} \left( \gfntilde_t(z) - \gfn_t(z) \right)=\frac{t}{2} \int_0^{b_t} \log \left( \frac{f(z)-f(y)}{z-y}\right)d\mu_t(y).
\end{equation}
Although $\gfn_t(z)$ and $\gfntilde_t(z)$ are discontinuous along the real axis, $\Lambda_t(z)$ is well defined there by analytic continuation. It is also clear that $\Lambda_1$ coincides with $\Lambda$ defined in \eqref{deflambda}.

We collect some properties of $\Lambda_t$ and $b_t$ in the following lemma.
\begin{lem}\label{lem:lip}
	Suppose that $s, t$ are in a compact subset of $(0, +\infty)$ and $u, v$ are given by \eqref{eq:uvdef}. There is a constant $C_0 > 0$ (depending on the compact subset), such that
	\begin{align} \label{eq:est_Ft}
		\lvert \Lambda_t(u) - \Lambda_t(v) \rvert < {}& C_0 \lvert u - v \rvert, & \lvert \Lambda_s(u) - \Lambda_t(u) \rvert < {}& C_0 \lvert s - t \rvert.
	\end{align}
	As $t \to 0_+$, we have the convergence $b_t \to 0_+$.
\end{lem}

\begin{proof}[Proof of Lemma \ref{lem:lip}]
	The definition of $u,v$ in \eqref{eq:uvdef} implies that they are away from $0$ and $b$. The Lipschitz continuity of $\Lambda_t(z)$ for $z$ in this regime then follows from
	 Lemma \ref{enu:lem:regular_g:1}. On the other hand,
	the Lipschitz continuity of $\Lambda_t(z)$ in $t$ stems from the continuity of $b_t$, which is a consequence of Theorem \ref{eq:find_b_psi} (and its proof in Lemma \ref{lem:determine_b}) through the implicit function theorem.

We now prove the second statement that $b_t\to 0$ as $t\to 0$.
Assume the contrary, i.e., there exists a sequence $t(n)\to 0$ such that $b_{t(n)}\geq \kappa>0$ for $n\geq 1$.
Applying \eqref{eq:defn_ell} to $x=\kappa/2$ and $x=\kappa$, we see that
\begin{equation}
	\begin{split}
&			\int \log|y-\kappa/2|^{-1}d\mu_{t(n)}(y)+\int \log|f(y)-f(\kappa/2)|^{-1}d\mu_{t(n)}(y)+ \frac{1}{t_n}V\left(\frac{\kappa}{2}\right) \\
=&
		\int \log|y-\kappa|^{-1}d\mu_{t(n)}(y)+\int \log|f(y)-f(\kappa)|^{-1}d\mu_{t(n)}(y)+ \frac{1}{t_n}V\left(\kappa \right).
	\end{split}
\end{equation}
Re-arranging this equality, we get
\begin{equation}\label{eq:vkappa}
	\begin{split}
			\frac{1}{t(n)} \left(V(\kappa)-V\left(\frac{\kappa}{2}\right)\right)=
 		\int  \left(\log \left| \frac{y-\kappa}{y-\kappa/2}\right|
 		+\log \left| \frac{f(y)-f(\kappa)}{f(y)-f(\kappa/2) } \right|\right)
 		d\mu_{t(n)}(y).
\end{split}
\end{equation}
By the condition \eqref{eq:one-cut_regular_cond} on $V$, we find that $V(\kappa)-V(\kappa/2)>0$. Thus, the left-hand side of \eqref{eq:vkappa} tends to $\infty$ as $t(n)\to 0$, while the right-hand side remains bounded. We arrive at a contradiction and deduce that $b_t\to 0 $ as $t\to 0$.
\end{proof}

	As mentioned at the beginning of this section, we will bound $p_{j}^{(n)}$ and $q_j^{(n)}$ according to
	the relative size of $j$ compared to $n$.
Fix a small $\epsilon \in (0, 1)$. The asymptotics result in Theorem \ref{thm:main} implies that all zeros of $p^{(n)}_{\lfloor \epsilon n \rfloor}(x)$ and $q^{(n)}_{\lfloor \epsilon n \rfloor}(f(x))$ are in the interval $[0, 1.1b_{\epsilon}]$, if $n$ is large enough. We note that functions $\{f(x)^i\}_{i\geq 0}$ form a Markov system\footnote{In other words, for any $N\geq 1$, any real linear combination $\sum_{i=0}^N a_i f(x)^i$ has at most $N$ zeros for $x\in \mathbb{R}_+$. This follows from the facts that any polynomial of degree $n$ has at most $n$ real roots and that $f$ is strictly monotone on $\mathbb{R}_+$. } on $\mathbb{R}_+$ in the sense of \cite[second definition in Section 4.4]{Nikishin-Sorokin91}. The result in \cite{Kershaw70} shows that the zeros of $p^{(n)}_k(x)$ and $p^{(n)}_{k - 1}(x)$ interlace. Similarly, the zeros of $q^{(n)}_k(f(x))$ and $q^{(n)}_{k - 1}(f(x))$ interlace. Hence, for all $j = 1, 2, \dotsc, \lfloor \epsilon n \rfloor$, the zeros of $p^{(n)}_j(x)$ and $q^{(n)}_j(f(x))$ are all in $[0, 1.1b_{\epsilon}]$.

To state the bound for $p_j^{(n)}$, it is more convenient to perform a transformation
\begin{align}
  \p_j(x) = {}& e^{-\frac{n \ell_{j/n}}{2}} \sqrt{W^{(n)}_{\alpha}(x)} p^{(n)}_j(x), & \q_j(x) = {}& \frac{1}{h^{(n)}_j} e^{\frac{n \ell_{j/n}}{2}} \sqrt{W^{(n)}_{\alpha}(x)} q^{(n)}_j(f(x)).
\end{align}

We have the following estimates that are analogous to \cite[Corollary 2.7]{Claeys-Wang22}. We emphasize that the same constant $C_1$ appears in both \eqref{eq:L2_p_and_q_outside} and \eqref{eq:p_k_q_k_estimate}.
We also remind the reader that the $u,v$ appearing in \eqref{eq:p_k_q_k_estimate} and \eqref{eq:p_kq_kpointbd} are given by \eqref{eq:uvdef} (so that \eqref{eq:p_k_q_k_estimate} and \eqref{eq:p_kq_kpointbd} hold uniformly over all $\xi,\eta\in \Xi$).
\begin{lem} \label{lem:est_p_q}
  Let $n$ be large enough and $\epsilon \in (0, 1)$ a constant. There exist $C_1, C_2, C_3, C_4 > 0$ such that the following inequalities \eqref{eq:L2_p_and_q_outside} - \eqref{eq:p_kq_kpointbd} hold true. For all $k = 0, 1, \dotsc, \lfloor 1.1n \rfloor$, we have
  \begin{align}
&    \int^{\infty}_{C_2} \lvert \p_k(x) \rvert^2 \cosh(2\sqrt{x}) dx < {} e^{-C_1 n}, \quad \int^{\infty}_{C_2} \lvert \q_k(x) \rvert^2 \cosh(2\sqrt{x}) dx < {} e^{-C_1 n}, \label{eq:L2_p_and_q_outside} \\
&\int^{C_2}_0 \lvert \p_k(x) \rvert^2 dx < {} e^{C_3 n}, \quad \  \int^{C_2}_0 \lvert \q_k(x) \rvert^2 dx < {} e^{C_3 n}, \label{eq:L2_p_and_q_inside} \\
& \lvert e^{n\Lambda(u)} \p_k(u) \rvert < {} e^{\frac{1}{3} C_1 n}, \quad \ \lvert e^{-n\Lambda(v)} \q_k(v) \rvert < {} e^{\frac{1}{3} C_1 n}.\label{eq:p_k_q_k_estimate}
  \end{align}
  Moreover, for all $ \lfloor \epsilon n \rfloor \leq k \leq \lfloor 1.1n \rfloor$, we have
  \begin{equation}
    \int^{C_2}_0 \lvert e^{n \Lambda_{k/n}(x)} \p_k(x) \rvert^2 dx < {} C_4, \quad \quad  \int^{C_2}_0 \lvert e^{-n \Lambda_{k/n}(x)} \q_k(x) \rvert^2 dx < {} C_4, \label{eq:p_k_q_k_integral_estimate}
  \end{equation}
and
\begin{equation}
	| e^{n \Lambda_{k/n}(u)} \p_k(u)| < C_4, \quad\quad | e^{-n \Lambda_{k/n}(v)} \q_k(v) |< C_4.
	\label{eq:p_kq_kpointbd}
\end{equation}
\end{lem}

\begin{proof}

  For $\lfloor \epsilon n \rfloor \leq k \leq \lfloor 1.1n \rfloor$, by substituting $k$ for $n$ and $V_{k/n}$ for $V$ in Theorem \ref{thm:main}, we find that $	| e^{n \Lambda_{k/n}(x)} \p_k(x)|$ and
  $ | e^{-n \Lambda_{k/n}(x)} \q_k(x) |$ are  upper bounded both pointwise (in the bulk) and in the $L^2$ norm.
  Combining this observation with the growth condition \eqref{eq:growth_at_infty} on $V$ and the continuity of $\Lambda_t$ in \eqref{eq:est_Ft}, we can first choose a large enough $C_1$ such that \eqref{eq:p_k_q_k_estimate} holds. Then we let $C_2$ be large enough to guarantee
  \eqref{eq:L2_p_and_q_outside}. Finally, we choose large enough $C_3$ and $C_4$ to validate
   \eqref{eq:L2_p_and_q_inside}, \eqref{eq:p_k_q_k_integral_estimate}, and \eqref{eq:p_kq_kpointbd}.

  For $0 \leq k \leq \lfloor \epsilon n \rfloor$, we can repeat the argument in \cite[Proof of Lemma 3.1]{Claeys-Wang22} (and modify the constants if necessary) to obtain the estimates \eqref{eq:L2_p_and_q_outside}, \eqref{eq:L2_p_and_q_inside}, \eqref{eq:p_k_q_k_estimate}. They essentially follow from the simple facts that $p_k(z)$ (resp.~$q_k(f(z))$) is a monic polynomial in $z$ (resp.~$f(z)$) of degree $k$, with all zeros on $[0, 1.1 b_{\epsilon}]$, and that $V$ attains its global minimum at $0$.

\end{proof}

Using the estimate \eqref{eq:p_k_q_k_estimate}, we have for all $j, k = 0, 1, \dotsc, \lfloor 1.1n \rfloor$
\begin{equation} \label{eq:est_pq_product}
  \lvert e^{n(\Lambda(u) - \Lambda(v))} \p_j(u) \q_k(v) \rvert = \lvert e^{n\Lambda(u)} \p_j(u) \rvert \lvert e^{-n\Lambda(v)} \q_k(v) \rvert <  e^{\frac{2}{3} C_1 n}.
\end{equation}

In addition, by \eqref{eq:est_Ft} and \eqref{eq:p_kq_kpointbd}, for $\epsilon n\leq j\leq k\leq 1.1n$, we get
\begin{multline} \label{eq:est_Pk_qj_around_n}
  \lvert e^{n(\Lambda(u) -\Lambda(v))} \p_k(u) \q_j(v) \rvert = \lvert e^{n(\Lambda_1(u) - \Lambda_{k/n}(u))} \rvert \lvert e^{-n(\Lambda_1(v) - \Lambda_{j/n}(v))} \rvert \\
  \times \lvert e^{n \Lambda_{k/n}(u)} \p_k(u) e^{-n \Lambda_{j/n}(v)} \q_j(v) \rvert \leq e^{C_0(k - j)} C'.
\end{multline}

\subsection{Sine limit of $K_n$ in the bulk}
We now use the bounds in the previous section and Theorem \ref{thm:main} to prove \eqref{eq:_sine_kernel_pre}, which would also complete the proof of Theorem \ref{thm:sinekernel}.
Recall the definition of $K_n(u, v)$ in \eqref{eq:corr_kernel}.
\subsubsection{An equivalent form of \eqref{eq:_sine_kernel_pre}}
Like the truncated exponential sum $\exp_k(x) = \sum^k_{i = 0} x^i/i!$ used in \cite[Equation (4.3)]{Claeys-Wang22},
we consider a truncated version $f_k$ of $f(x)=\sinh^2(\sqrt{x})$,
\begin{equation} \label{eq:f_exp_relation}
	f_k(x) = \sum^k_{i = 0} c_i x^i = \frac{1}{4}(\exp_{2k}(2\sqrt{x}) + \exp_{2k}(-2\sqrt{x})) - \frac{1}{2},
\end{equation}
where
\begin{equation}
\quad \sinh^2(\sqrt{x}) = \sum^{\infty}_{i = 0} c_i x^i,
\end{equation}
is the Taylor expansion of $f(x)$ at $0$. One can show that for $k, l \in \intZ_+$ and $z = (x + iy)^2$,
\begin{align} \label{eq:simple_est}
  \lvert f_k(z) \rvert \leq {}& \cosh(2x), & \lvert f(z) - f_k(z) \rvert \leq {}& \cosh(2x), & \lvert f_k(z) - f_l(z) \rvert \leq \cosh(2x),
\end{align}
and with $C_2$ defined in Lemma \ref{lem:est_p_q}, there exists a constant $C_5$, depending on $C_2$, such that for all $k < l \in \intZ_+$,
\begin{align} \label{eq:not_simple_est}
  \max_{\lvert z \rvert < C_2} \lvert f(z) - f_k(z) \rvert < {}& C_5 e^{-\frac{k}{2} \log(k)}, &  \max_{\lvert z \rvert < C_2} \lvert f_l(z) - f_k(z) \rvert < {}& C_5 e^{-\frac{k}{2} \log(k)}.
\end{align}
Equations \eqref{eq:simple_est} and \eqref{eq:not_simple_est} can be proved directly by Stirling's formula, or by the estimates \cite[Equations (4.4) and (4.5)]{Claeys-Wang22} of $\exp_k(z)$ together with the relation \eqref{eq:f_exp_relation}.

Let $\delta$ be a fixed small positive number. From estimates \eqref{eq:not_simple_est} for $f_{\lfloor \delta n \rfloor}(u)$, \eqref{eq:est_Ft} for $e^{n\Lambda(u)} e^{-n\Lambda(v)}$, and \eqref{eq:p_k_q_k_estimate} for $\p_j(u) \q_j(v)$, we find that \eqref{eq:_sine_kernel_pre} is equivalent to
\begin{equation} \label{eq:essential_sine_kernel}
  \lim_{n \to \infty} (f_{\lfloor \delta n \rfloor}(u) - f(v)) e^{n\Lambda(u)} e^{-n\Lambda(v)} \sum^{n - 1}_{k = 0} \p_k(u) \q_k(v) = \frac{f'(x^*)}{\pi} \sin (\xi - \eta).
\end{equation}
It remains to prove \eqref{eq:essential_sine_kernel}.
We define
\begin{align}
	\tilde{a}_{j, k} = {}& \int_0^{\infty} \p_k(y) \q_j(y) f(y) dy, & \tilde{b}_{j, k} = {}& \int_0^{\infty} \p_j(x) \q_k(x) f_{\lfloor \delta n \rfloor}(x) dx.
\end{align}
Since 	$f(y) \q_j(y)$ is a weighted polynomial of degree $j+1$ in $f(y)$ and  $f_{\lfloor \delta n \rfloor}(x) \p_j(x) $ is a weighted polynomial of degree $j+\lfloor \delta n \rfloor$ in $x$, we  can expand
\begin{align}
	f(y) \q_j(y) = {}& \sum^{j + 1}_{k = 0} \tilde{a}_{j, k} \q_k(y), & f_{\lfloor \delta n \rfloor}(x) \p_j(x) = \sum^{j + \lfloor \delta n \rfloor}_{k = 0} \tilde{b}_{j, k} \p_k(x).
\end{align}
Thus we see that $\tilde{a}_{j, k} = 0$ if $k > j + 1$ and $\tilde{b}_{j, k} = 0$ if $k > j + \lfloor \delta n \rfloor$. 
Then we express, with $M$ any positive integer,
\begin{equation}\label{eq:ess2}
	(f_{\lfloor \delta n \rfloor}(u) - f(v)) \sum^{n - 1}_{k = 0} \p_k(u) \q_k(v) = J_1(u, v) + J^{(M)}_2(u, v) + J^{(M)}_3(u, v),
\end{equation}
where
\begin{align}
	J_1(u, v) = {}& \sum^{n - 1}_{j, k = 0} (\tilde{b}_{k, j} - \tilde{a}_{j, k}) \p_j(u) \q_k(v), \\
	J^{(M)}_2(u, v) = {}& \sum^{n - 1 - M}_{j = n - \lfloor \delta n \rfloor} \sum^{j + \lfloor \delta n \rfloor}_{k = n} \tilde{b}_{j, k} \p_k(u) \q_j(v), \\
	J^{(M)}_3(u, v) = {}& \sum^{n - 1}_{j = n - M} \sum^{j + M}_{k = n} \tilde{b}_{j, k} \p_k(u) \q_j(v) - \tilde{a}_{n - 1, n} \p_{n - 1}(u) \q_n(v).
\end{align}

\subsubsection{Estimates of the error terms $J_1(u,v)$ and $J^{(M)}_2(u, v) $}
For all $j, k \leq \lfloor 1.1 n \rfloor$, using \eqref{eq:simple_est} for $f(x) - f_{\lfloor \delta n \rfloor}(x)$ and \eqref{eq:L2_p_and_q_outside} for the $L^2$ norm estimate of $\p_k$ and $\q_j$, we find that
\begin{equation} \label{eq:est_I_1}
  \begin{split}
    I^{(j, k)}_1 := {}& \left\lvert \int^{\infty}_{C_2} (f(x) - f_{\lfloor \delta n \rfloor}(x)) \p_k(x) \q_j(x) dx \right\rvert \\
    \leq {}& \left\lvert \int^{\infty}_{C_2} \cosh(2\sqrt{x}) \p_k(x) \q_j(x) dx \right\rvert \\
    \leq {}& e^{-C_1 n}.
  \end{split}
\end{equation}
In addition, using \eqref{eq:not_simple_est} for $f(x) - f_{\lfloor \delta n \rfloor}(x)$ and \eqref{eq:L2_p_and_q_inside} for the $L^2$ norm estimate of $\p_k$ and $\q_j$,
\begin{equation} \label{eq:est_I_2}
  \begin{split}
    I^{(j, k)}_2 :={}& \left\lvert \int^{C_2}_0 (f(x) - f_{\lfloor \delta n \rfloor}(x)) \p_k(x) \q_j(x) dx \right\rvert \\
    \leq {}& C_5 e^{-\frac{\lfloor \delta n \rfloor}{2} \log(\lfloor \delta n \rfloor)} \left\lvert \int^{C_2}_0 \p_k(x) \q_j(x) dx \right\rvert \\
    \leq {}& C_5 e^{-\frac{\lfloor \delta n \rfloor}{2} \log(\lfloor \delta n \rfloor) + C_3 n}.
  \end{split}
\end{equation}
Hence, we have
\begin{equation} \label{eq:b-a_in_I1_I2}
  \begin{split}
    \lvert \tilde{b}_{k, j} - \tilde{a}_{j, k} \rvert = \left\lvert \int^{\infty}_0 (f(x) - f_{\lfloor \delta n \rfloor}(x)) \p_k(x) \q_j(x) dx \right\rvert \leq {}& I^{(j, k)}_1 + I^{(j, k)}_2 \\
   \leq  {}& e^{-C_1 n} + C_5 e^{-\frac{\lfloor \delta n \rfloor}{2} \log(\lfloor \delta n \rfloor) + C_3 n}.
  \end{split}
\end{equation}
We have the following lemma showing that $J_1(u, v)$ and $J_2^{(M)}(u, v)$ are negligible on the right-hand side of \eqref{eq:ess2}. Again, the $u,v$ here are given by \eqref{eq:uvdef} and they differ from $x^*$ by $C n^{-1/3}$.
\begin{lem} \label{lem:sine_pre_est}
  \begin{enumerate}
  \item \label{enu:lem:sine_pre_est:1}
    \begin{equation}
      \lim_{n \to \infty} e^{n(\Lambda(u) - \Lambda(v))} J_1(u, v) = 0.
    \end{equation}
  \item \label{enu:lem:sine_pre_est:2}
    For any $\epsilon > 0$, there exists $M_0$ such that for all large enough $n$, if $M > M_0$, then
    \begin{equation}
      \lvert e^{n(\Lambda(u) - \Lambda(v))} J^{(M)}_2(u, v) \rvert < \epsilon.
    \end{equation}
  \end{enumerate}
\end{lem}
\begin{proof}
  Using \eqref{eq:b-a_in_I1_I2} and \eqref{eq:est_pq_product} for $j, k = 0, \dotsc, n - 1$, we derive
  \begin{equation}
    \lvert e^{n(\Lambda(u) - \Lambda(v))} J_1(u, v) \rvert \leq  n^2 \left( e^{-C_1 n} + C_5 e^{-\frac{\lfloor \delta n \rfloor}{2} \log(\lfloor \delta n \rfloor) + C_3 n} \right) e^{\frac{2}{3} C_1 n} = \mathfrak{o}(1).
  \end{equation}
  and prove Part \ref{enu:lem:sine_pre_est:1}.

  For Part \ref{enu:lem:sine_pre_est:2}, we first use the same argument to derive that
  \begin{equation}
    \left\lvert e^{n(\Lambda(u) - \Lambda(v))} J_2(u, v) - e^{n(\Lambda(u) - \Lambda(v))} \sum^{n - 1 - M}_{j = n - \lfloor \delta n \rfloor} \sum^{j + \lfloor \delta n \rfloor}_{k = n } \tilde{a}_{k, j} \p_j(u) \q_k(v) \right\rvert = \mathfrak{o}(1),
  \end{equation}
  Hence, we only need to show that if $M$ is large enough, then
  \begin{equation} \label{eq:J_2_approx_ineq}
    \left\lvert e^{n(\Lambda(u) - \Lambda(v))} \sum^{n - 1 - M}_{j = n - \lfloor \delta n \rfloor} \sum^{j + \lfloor \delta n \rfloor}_{k = n } \tilde{a}_{k, j} \p_j(u) \q_k(v) \right\rvert < \epsilon.
  \end{equation}
  To prove \eqref{eq:J_2_approx_ineq}, we estimate $\tilde{a}_{k, j}$ and $e^{n(\Lambda(u) - \Lambda(v))} \p_j(u) \q_k(v)$ separately. First consider $\tilde{a}_{k, j}$. For $k > j$, we denote
  \begin{align}
    I^{(j, k)}_3 = {}& \left\lvert \int^{\infty}_{C_2} (f(x) - f_{k - j - 1}(x)) \p_j(x) \q_k(x) dx \right\rvert, \\
    I^{(j, k)}_4 = {}& \left\lvert \int^{C_2}_0 (f(x) - f_{k - j - 1}(x)) \p_j(x) \q_k(x) dx \right\rvert.
  \end{align}
  By the estimate \eqref{eq:simple_est} of $f(x) - f_{k - j - 1}(x)$ and the $L^2$ norm estimate \eqref{eq:L2_p_and_q_outside}, similarly to the estimate of $I^{(j, k)}_1$ in \eqref{eq:est_I_1}, we have
  \begin{equation} \label{eq:est_I3}
    I^{(j, k)}_3 \leq \left\lvert \int^{\infty}_{C_2} \p_j(x) \q_k(x) \cosh(2\sqrt{x}) dx \right\rvert \leq e^{-C_1 n}.
  \end{equation}
  Using \eqref{eq:est_Ft} for the estimate of $\Lambda_{k/n}(x) - \Lambda_{j/n}(x)$, the estimate \eqref{eq:not_simple_est} for $f(x) - f_{k - j - 1}(x)$, and the $L^2$ norm estimate \eqref{eq:p_kq_kpointbd},
  \begin{equation} \label{eq:est_I4}
    \begin{split}
      I^{(j, k)}_4 = {}& \left\lvert \int^{C_2}_0 e^{n \Lambda_{j/n}(x)} \p_j(x) e^{-n\Lambda_{k/n}(x)} \q_k(x) e^{n(\Lambda_{k/n}(x) - \Lambda_{j/n}(x))} (f(x) - f_{k - j - 1}(x)) dx \right\rvert \\
      \leq {}& C_4 C_5 e^{-\frac{k - j - 1}{2} \log(k - j - 1) + C_0(k - j)}.
    \end{split}
  \end{equation}
  We note that if $k > j$, then $\int^{\infty}_0 f_{k - j - 1}(x) \p_j(x) \q_k(x) dx = 0$, implying
  \begin{equation} \label{eq:a_in_I3_I4}
    \lvert \tilde{a}_{k, j} \rvert = \left\lvert \int^{\infty}_0 (f(x) - f_{k - j - 1}(x)) \p_j(x) \q_k(x) dx \right\rvert \leq I^{(j, k)}_3 + I^{(j, k)}_4.
  \end{equation}

We deduce from \eqref{eq:est_Pk_qj_around_n}, \eqref{eq:a_in_I3_I4}, \eqref{eq:est_I3}, and \eqref{eq:est_I4} that
  \begin{equation}
    \lvert \tilde{a}_{k, j} e^{n(\Lambda(u) - \Lambda(v))} \p_j(u) \q_k(v) \rvert \leq \left( e^{-C_1 n} + C_4 C_5 e^{-\frac{k - j - 1}{2} \log(k - j - 1) + C_0(k - j)} \right) e^{C_0(k - j)} C'.
  \end{equation}
  Thus we derive \eqref{eq:J_2_approx_ineq} for large enough $M$.
\end{proof}

\subsubsection{Estimates of the main term $J^{(M)}_3(u, v)$}
By Lemma \ref{lem:sine_pre_est}, it
 remains to find the approximation of $e^{n(\Lambda(u) - \Lambda(v))} J^{(M)}_3(u, v)$ to prove \eqref{eq:essential_sine_kernel}. From the estimate \eqref{eq:b-a_in_I1_I2} for $\lvert \tilde{b}_{k, j} - \tilde{a}_{j, k} \rvert$, and \eqref{eq:est_Pk_qj_around_n} for $e^{n(\Lambda(u) - \Lambda(v))} \p_k(u) \q_j(v)$, we find that for any fixed $M$,
\begin{equation}\label{eq:j3ap1}
  \lvert e^{n(\Lambda(u) - \Lambda(v))} J^{(M)}_3(u, v) - \Jhat^{(M)}_3(u, v) \rvert = \mathfrak{o}(1),
\end{equation}
where
\begin{equation}\label{eq:j3ap2}
  \begin{split}
    \Jhat^{(M)}_3(u, v) = {}& e^{n(\Lambda(u) - \Lambda(v))} \sum^{n - 1}_{j = n - M} \sum^{j + M}_{k = n} \tilde{a}_{k, j} \p_k(u) \q_j(v) - \tilde{a}_{n - 1, n} \p_{n - 1}(u) \q_n(v) \\
    = {}& \sum^{n - 1}_{j = n - M} \sum^{j + M}_{k = n} a_{k, j} \phat_k(u) \qhat_j(v) - a_{n - 1, n} \phat_{n - 1}(u) \qhat_n(v).
  \end{split}
\end{equation}
Here we have set
\begin{align}
  \phat_k(u) := {}& e^{n\Lambda(u)} e^{-\frac{n \ell_{k/n}}{2}} \sqrt{W^{(n)}_{\alpha}(x)} p^{(n)}_j(x), & \qhat_j(v) := {}& e^{-n\Lambda(v)} \frac{1}{h^{(n)}_j} e^{\frac{n \ell_{j/n}}{2}} \sqrt{W^{(n)}_{\alpha}(x)} q^{(n)}_j(f(x)),
\end{align}
and
\begin{equation}
  a_{k, j} := \int_0^{\infty} \phat_j(x) \qhat_k(x) f(x) dx = \frac{1}{h^{(n)}_k} \int_{0}^{\infty} p^{(n)}_j(x) q^{(n)}_k(x) W^{(n)}_{\alpha}(x) f(x) dx.
\end{equation}

We now find the limit of $a_{n+k,n+j}$ as $n\to\infty$ for fixed $j,k$. Consider the function
	\begin{equation}
f(\J(w))=		 \sinh^2\left(\frac{1}{2} J_c(w) \right)  = \sinh^2 \left( \frac{1}{2} \left( c\sqrt{w} + \arcosh \frac{w + 1}{w - 1} \right) \right).
	\end{equation}
One can show that $ \sinh^2(\frac{1}{2} J_c(w))$ is a meromorphic function, with only one pole of order $1$ at $w = 1$. Thus, we may expand the function $f(\J(w))$ around $1$ as follows,
\begin{equation}\label{Fcoeff}
f(\J(w))=  \frac{1}{4} \left( e^{c\sqrt{w}} \frac{\sqrt{w} + 1}{\sqrt{w} - 1} - 2 + e^{-c\sqrt{w}} \frac{\sqrt{w} - 1}{\sqrt{w} + 1} \right) = \sum^{\infty}_{k = -1} d_k (w - 1)^k,
\end{equation}
where $d_k,k\geq -1$ are the Laurent coefficients of $f(\J(w))$ around $1$. In particular, $d_{-1} = e^c$.

\begin{lem}\label{lem:j3ap3}
  We have the following convergence for fixed integers $j,k$,
  \begin{equation} \label{eq:asy_of_a_jk}
   \mathfrak{a}_{k,j}:= 	\lim_{n\to\infty}a_{n+k, n+j} =
    \begin{cases}
      0, & k < j - 1, \\
      \left(\frac{c^2}{4}\right)^{j-k}d_{k-j}, & k \geq j - 1.
    \end{cases}
  \end{equation}
 We also have the following asymptotics for fixed $j,k$ (recall that $u=x^*+\xi/(n\pi \psi(x^*))$ and $v=x^*+\eta/(n\pi \psi(x^*))$),
  \begin{align}
    \phat_k(u) = {}& \sqrt{(x^*)^{\alpha} h(x^*)} \left( G_k(\Iinv_+(x^*)) e^{-i\xi} e^{n\pi i \mu([x^*, b])} (1 + \bigO(n^{-1}) \right. \notag \\
                   & \left. + G_k(\Iinv_-(x^*)) e^{i\xi} e^{-n\pi i \mu([x^*, b])} (1 + \bigO(n^{-1}) \right), \label{eq:asy_of_phat_k} \\
    \qhat_j(v) = {}& \frac{\sqrt{(x^*)^{\alpha} h(x^*)}}{\frac{2\pi}{\tilde{D}(1)} \left( \frac{c^2(1 - s_1)}{4} \right)^{\alpha + \frac{1}{2}} \left( \frac{c^2}{4} \right)^{j + \frac{1}{4}} e^{jc}} \left( \tilde{G}_j(\Iinv_-(x^*)) e^{-i\eta} e^{n\pi i \mu([x^*, b])} (1 + \bigO(n^{-1}) \right. \notag \\
                   & \left. + \tilde{G}_j(\Iinv_+(x^*)) e^{i\eta} e^{-n\pi i \mu([x^*, b])} (1 + \bigO(n^{-1}) \right). \label{eq:asy_of_qhat_j}
  \end{align}
\end{lem}
\begin{proof}
  The fact that $\mathfrak{a}_{k,j}\equiv 0$ for $k<j-1$ follows from the biorthogonality condition \eqref{eq:biorthogonality}. As in the computation of $h_{n+k}^{(n)}$ in Section \ref{subsec:h^n_n+k}, there exists $\epsilon>0$ such that for all $R$ large enough,
  \begin{equation}
    \int^{\infty}_0 f(x)p^{(n)}_{n + j}(x) q^{(n)}_{n + k}(f(x)) W^{(n)}_{\alpha}(x) dx= 	\int^{R}_0 f(x)p^{(n)}_{n + j}(x) q^{(n)}_{n + k}(f(x)) W^{(n)}_{\alpha}(x) dx+\bigO(e^{\ell n-\epsilon n}).
  \end{equation}

  Dividing the right-hand side of the above equation by $e^{n\ell}$ and sending $n\to\infty$, we get
  \begin{equation}\label{akj-1}
    \begin{split}
      &  -\oint_{C_R} \frac{G_j(\Iinv_1(z)) (1 - s_1)^{\alpha + \frac{1}{2}} \sqrt{s_2 - 1} i \tilde{D}(1)^{-1} e^{kc} z^{\alpha} D(\Iinv_1(z))^{-1}}{(\Iinv_1(z) - s_1)^{\alpha} (\Iinv_1(z) - 1)^k \sqrt{(\Iinv_1(z) - s_1)(\Iinv_1(z) - s_2)}} f(z) dz  \\
      = {}&  -\frac{\sqrt{s_2 - 1} i}{\tilde{D}(1)} \left( \frac{c^2(1 - s_1)}{4} \right)^{\alpha + \frac{1}{2}} \left( \frac{c^2}{4} \right)^j e^{kc} \oint_{C_R} \frac{\Iinv_1(z)^{1/2}}{z^{1/2} (\Iinv_1(z) - s_2)} f(z)dz\\
      = {}& -\frac{\sqrt{s_2 - 1} i}{\tilde{D}(1)} \left( \frac{c^2(1 - s_1)}{4} \right)^{\alpha + \frac{1}{2}} \left( \frac{c^2}{4} \right)^j e^{kc} \oint_{\hat{C}_R}
            \frac{c}{2} (w-1)^{j-k-1} \sinh^2 \left(\frac{1}{2}J_c(w)\right)dw
    .
    \end{split}
  \end{equation}
Here the contours $C_R$ and $\hat{C}_R$ are the same as \eqref{eq:defcrcr'}. Combining \eqref{eq:asy_h}, \eqref{Fcoeff}, and \eqref{akj-1}, we obtain, for $k\geq j-1$,
  \begin{equation}
    \lim_{n\to\infty} a_{n+k,n+j}= \left(\frac{c^2}{4}\right)^{j-k}	 \frac{1}{2\pi \i}	\oint_{C'_R} (w-1)^{j-k-1} \sinh^2 \left(\frac{1}{2}J_c(w)\right)dw=
    \left(\frac{c^2}{4} \right)^j  d_{k - j}.
  \end{equation}
  This proves \eqref{eq:asy_of_a_jk}. Equations	\eqref{eq:asy_of_phat_k} and \eqref{eq:asy_of_qhat_j} are direct consequences of Theorem \ref{thm:main}.
\end{proof}

The final ingredient for proving \eqref{eq:essential_sine_kernel} is the following identity.
\begin{lem}\label{lem:cancel}
	We have
	\begin{equation}\label{eq:cancel1}
		\sum_{k\geq 0,j<0}
\mathfrak{a}_{k,j}		 \left( \frac{c^2}{4}\right)^{-j} e^{-jc}G_k(\Iinv_+(x^*)) \tilde{G}_j(\Iinv_-(x^*))-\mathfrak{a}_{-1,0}
G_{-1}(\Iinv_+(x^*)) \tilde{G}_0(\Iinv_-(x^*)) =0,
	\end{equation}
and
	\begin{equation}\label{eq:cancel2}
	\sum_{k\geq 0,j<0}
\mathfrak{a}_{k,j}		 \left( \frac{c^2}{4}\right)^{-j} e^{-jc}G_k(\Iinv_-(x^*)) \tilde{G}_j(\Iinv_+(x^*))-\mathfrak{a}_{-1,0}
G_{-1}(\Iinv_-(x^*)) \tilde{G}_0(\Iinv_+(x^*)) =0.
\end{equation}
\end{lem}
\begin{proof}
	We only prove \eqref{eq:cancel1} since \eqref{eq:cancel2} can be obtained by complex conjugation.
	By Theorem \ref{thm:main} and the expressions of $G$ and $\tilde{G}$ in \eqref{eq:G_k_from_D_k} and \eqref{eq:G_k_tilde_from_D_k}, we find that the value of the expression on the left-hand side of \eqref{eq:cancel1} is proportional to
	\begin{equation}
		\begin{split}
			 &	\sum_{k\geq 0,j<0}d_{k-j}
			(\Iinv_{+}(x^*)-1)^{k}	(\Iinv_{-}(x^*)-1)^{-j}-d_{-1}(\Iinv_{+}(x^*)-1)^{-1}
			\\
			= & \frac{\Iinv_{-}(x^*)-1}{\Iinv_{+}(x^*)-\Iinv_{-}(x^*)} \sum_{\ell=-1}^{\infty} d_{\ell} \left(
			(\Iinv_{+}(x^*)-1)^{\ell}-	(\Iinv_{-}(x^*)-1)^{\ell}
			\right)\\
			= & \frac{\Iinv_{-}(x^*)-1}{\Iinv_{+}(x^*)-\Iinv_{-}(x^*)}\left(f(J_c(\Iinv_{+}(x^*)))-f(J_c(\Iinv_{-}(x^*)))\right)=0,
		\end{split}
	\end{equation}
where we have used the fact that $d_{\ell},\ell\geq -1$ are the Laurent coefficients of $f(\J(\cdot))$ around $1$ (see \eqref{Fcoeff}) in the last equality.
\end{proof}

Combining Lemmas \ref{lem:sine_pre_est}, \ref{lem:j3ap3} and \ref{lem:cancel}, and using
equations \eqref{eq:j3ap1}, \eqref{eq:j3ap2}, we obtain
\begin{equation}\label{eq:final1}
	\begin{split}
&		  \lim_{n \to \infty} (f_{\lfloor \delta n \rfloor}(u) - f(v)) e^{n\Lambda(u)} e^{-n\Lambda(v)} \sum^{n - 1}_{k = 0} \p_k(u) \q_k(v) \\
=& \lim_{M\to\infty} 	\lim_{n\to\infty}
 	  \hat{J}_3^{(M)}(u,v)\\
 	  :=&  \exp(i (\eta-\xi))J_4+\exp(i (\xi-\eta)) J_5,
	\end{split}
\end{equation}
where
\begin{equation}
	J_4=\frac{i\Iinv_+(x^*)^{1/2} }{(\Iinv_+(x^*)-s_2)c\pi \sqrt{x^*} }
	\left(-d_{-1} (\Iinv_+(x^*)-1)^{-1}+
	\sum_{k\geq 0,j<0}  d_{k-j} (\Iinv_+(x^*)-1)^{k-j}\right),
	\end{equation}
and $J_5$ is the complex conjugate of $J_4$.
We now calculate
\begin{equation}
  \begin{split}
   J_4 = {}& \frac{i}{c\pi \sqrt{x^*}} \frac{\Iinv_{+}(x^*)^{1/2} (\Iinv_{+}(x^*)-1) }{\Iinv_{\pm}(x^*)-s_2 } \sum_{\ell=-1}^{\infty}  \ell d_{\ell}  (\Iinv_{+}(x^*)-1)^{\ell-1}  \\
    = {}& \frac{i}{c\pi \sqrt{x^*}} \frac{\Iinv_{+}(x^*)^{1/2} (\Iinv_{+}(x^*)-1) }{\Iinv_{+}(x^*)-s_2 } f'(\J(\Iinv_{+}(x^*))	\J'(\Iinv_{+}(x^*))\\
    = {}& \frac{i}{c\pi \sqrt{x^*}} \frac{\Iinv_{+}(x^*)^{1/2} (\Iinv_{+}(x^*)-1) }{\Iinv_{+}(x^*)-s_2 }
          f'(x^*) \frac{J_c(\Iinv_{+}(x^*))}{2}
          \frac{c}{2} \frac{\Iinv_{+}(x^*)-s_2}
          {\Iinv_{+}(x^*)^{1/2} (\Iinv_{+}(x^*)-1) } \\
    = {}& \frac{if'(x^*)}{2\pi }.
  \end{split}
\end{equation}
Thus the limit in \eqref{eq:final1} is equal to
\begin{equation}
\frac{f'(x^*)}{2\pi} \left( i\exp(i (\eta-\xi)) -i\exp(i (\xi-\eta))    \right)=\frac{f'(x^*)}{\pi} \sin(\xi-\eta),
\end{equation}
proving \eqref{eq:essential_sine_kernel} and therefore also Theorem \ref{thm:sinekernel}.
	
\section{LLN and CLT for linear statistics and applications to the conductance of the disordered wire model \eqref{eq:special_W}}

In this section, we first give a law of large numbers and a central limit theorem for holomorphic linear statistics of biorthogonal ensembles of the type \eqref{eq:jpdf}. In particular, we will elaborate on the third part of Theorem \ref{eq:find_b_psi} and prove Theorem \ref{thm:variance_general}. Then we apply these results to the special case of \eqref{eq:special_W} with test function $\cosh^{-2}(\sqrt{x})$, and prove \eqref{eq:meanconduc} of Theorem \ref{thm:eq_measure_linear} (Ohm's law) and Theorem \ref{thm:UCF1} (universal conductance fluctuations). As mentioned in Section \ref{sec:Intro}, we will apply existing results on limit theories of biorthogonal ensembles from \cite{Borot-Guionnet-Kozlowski13, Breuer-Duits13, Lambert18}, which we also include here as Propositions \ref{thm:mean}, \ref{prop:clt1}, and \ref{thm:var} below.

\subsection{Limit results for the mean and variance of linear statistics}\label{subsec:formula1}

\paragraph{Results from \cite{Borot-Guionnet-Kozlowski13}}

Note that our biorthogonal ensemble falls within the framework of interacting particles with Coulomb gas interaction studied in \cite{Borot-Guionnet-Kozlowski13}, under the category of `unconstrained model' with one cut, i.e., the case of $g=0$ in \cite{Borot-Guionnet-Kozlowski13}. See the first example in \cite[Page 10457]{Borot-Guionnet-Kozlowski13}. The interaction function $T$ specified in  \cite[Equation (1.2)]{Borot-Guionnet-Kozlowski13}  is now given by a linear combination of $\log W_n^{\alpha}(x)$ and  $\log ((f(x)-f(y)/(x-y))$, where $W_n^{\alpha}$ is the weight given in \eqref{eq:general_weight},

A standing assumption throughout \cite{Borot-Guionnet-Kozlowski13} 
is Hypothesis 2.1 therein, which requires that the equilibrium measure $\mu=\mu_V$ is the unique minimizer of the functional $I_V$ defined in \eqref{eq:energy_functional}. This condition is guaranteed by our Proposition \ref{thm:potential_theory}. 
Consequently, by \cite[Theorem 2.2]{Borot-Guionnet-Kozlowski13}, we have the following law of large numbers for linear statistics.



\begin{prop}\label{thm:mean}
  For any continuous function $\phi: \mathbb{R}\to \mathbb{R}$, we have
  \begin{equation}
    \frac{1}{n}\sum_{i=1}^n \phi(\lambda_i) \to  \int_0^b \phi(x)\psi(x)dx  \,\,\mbox{  in probability}, \quad \text{as } n\to\infty,
  \end{equation}  
  where $\psi(x)$ is the density of the equilibrium measure given in \eqref{eq:density_formula}. 
\end{prop}
\begin{rmk}\label{rmk:llnalpha}
	It is also assumed in \cite[Hypothesis 2.1]{Borot-Guionnet-Kozlowski13} that the weight $W_n^{\alpha}$, or rather $\log W_n^{\alpha}$, is continuous,  which would require $\alpha=0$. However, the large deviation result \cite[Theorem 2.1]{Borot-Guionnet-Kozlowski13} (which is the key ingredient of \cite[Theorem 2.2]{Borot-Guionnet-Kozlowski13})  is robust and can be extended to general $\alpha > -1$, thus establishing Proposition \ref{thm:mean} for all $\alpha>-1$.
\end{rmk}

For the fluctuations of linear statistics,  \cite[Proposition 8.2]{Borot-Guionnet-Kozlowski13} gives a central limit for any holomorphic test function $\phi$, provides that  the equilibrium measure  $\mu$ is off-critical, i.e., the density of $\mu$ is strictly positive in the interior of its support. This condition is also satisfied, according to Theorem \ref{thm:one-cut_rgular_w_hard_edge}  and Requirement \ref{req:one-cut_reg}. The following proposition is thus a direct consequence of \cite[Proposition 8.2]{Borot-Guionnet-Kozlowski13}, where we  denote $M_1[\varphi]$ and $M_2[\varphi]$ appearing  there as $m(\phi)$ and $\sigma^2(\phi)$. (The condition $w(\phi)=0$ there holds trivially since we are in the one-cut regime $g=0$ here).

\begin{prop}\label{prop:clt1}
  Assume that $\alpha=0$ and that $\phi$ is a holomorphic function in a neighborhood of $[0,b]$ in $\mathbb{C}$. Then there exists $m(\phi)\in \mathbb{R}$ and $\sigma^2(\phi)\geq 0$ such that  
  \begin{equation} \label{eq:mean_variance_BGK}
    \sum_{i=1}^n \phi(\lambda_i) -n \int_0^b \phi(x)\psi(x)dx \to \N(m(\phi),\sigma^2(\phi)) \mbox{ in distribution}, \mbox{ as }n\to\infty,  
  \end{equation}
  such that $m(\cdot)$ and $\sigma^2(\cdot)$ are continuous\footnote{
  	The continuity here follows from the contour integral representations in \cite[Proposition 8.2]{Borot-Guionnet-Kozlowski13}: $m(\phi)$ and 
$\sigma^2(\phi)$ can be obtained by integrating $\phi$ against certain fixed functions (arising from Schwinger-Dyson equations) that are analytic off the support of the equilibrium measure.  
} with respect to the uniform norm,
  \begin{equation}
  \forall\, \phi_n \to \phi \mbox{ uniformly in a neighborhood of }[0,b], \quad
  \lim_{n\to\infty}m(\phi_n)=m(\phi)\mbox{ and }
   \lim_{n\to\infty}\sigma^2(\phi_n)=\sigma^2(\phi).
  \end{equation}
\end{prop}

\begin{rmk}\label{rmk:a=0}
	We only state the case of $\alpha=0$ in the above proposition, since  \cite[Proposition 8.2]{Borot-Guionnet-Kozlowski13} needs the analyticity of the interaction function (which    only allows for $\alpha = 0$ so that $\log W_n^{\alpha}$ is analytic) and the universal conductance fluctuation \eqref{eq:UCF} is also exactly concerned with this class.  But  Proposition \ref{prop:clt1} is actually valid for all $\alpha>-1$. Indeed, by carefully checking the proofs, the Schwinger-Dyson equation argument for \cite[Proposition 8.2]{Borot-Guionnet-Kozlowski13} works for the weight \eqref{eq:general_weight} with general $\alpha>-1$.  
\end{rmk}

\paragraph{Proof of Theorem \ref{thm:variance_general}}

When $\phi$ is a polynomial, there is a more explicit formula in \cite{Breuer-Duits13, Lambert18} for $\sigma^2(\phi)$ utilizing the right limit of the recurrence matrix for the biorthogonal polynomials $\{ p^{(n)}_j \}_{j\geq 0}$ and $\{ q^{(n)}_j \}_{j\geq 0}$ defined in \eqref{eq:biorthogonality}, and it is valid for all $\alpha > -1$. As will be clear soon, it would be more convenient to perform a rescaling: $p^{(n)}_j \to \bar p^{(n)}_j:= (4/c^2)^j p^{(n)}_j, q^{(n)}_j \to \bar q^{(n)}_j:=(c^2/4)^j q^{(n)}_j$. For any $n$, we define the (infinite) recurrence matrix $\mathcal{R}^{(n)}$ by the relation
\begin{equation}
  x(\bar p^{(n)}_1(x), \bar p^{(n)}_2(x), \bar p^{(n)}_3(x),\ldots)^T= \mathcal{R}^{(n)}(\bar p^{(n)}_1(x), \bar p^{(n)}_2(x), \bar p^{(n)}_3(x),\ldots)^T, \,\, \forall x\geq 0. 
\end{equation}
Let $r_k,k\geq -1$ be the Laurent coefficients of $\J(w)$ in the annulus $\vert w-1\vert \gg 1$, 
\begin{equation}
\J(w)=\sum_{k=-1}^{\infty} r_k(w-1)^{-k}.
\end{equation}
\begin{lem}\label{lem:recu}
  The recurrence matrix $\mathcal{R}^{(n)}$ satisfies $\mathcal{R}^{(n)}_{j, k}=0$ for all $j<k-1$. In addition, $\mathcal{R}^{(n)}$ has a right limit\footnote{An infinite matrix $\mathcal{M}$ is said to be a right limit of a sequence of infinite matrices $\mathcal{M}^{(n)}$, if there exists a subsequence $n_{\ell} \to \infty$, such that for all $j,k \in \mathbb{Z}$, we have 
$$
\lim_{\ell\to\infty} \mathcal{M}^{(n)}_{n_{\ell}+j, n_{\ell}+k}=\mathcal{M}_{j,k}.
$$  
} $\mathcal{R}$ with $\mathcal{R}_{j,k}=r_{j-k}$ for all $j,k\in \mathbb{Z}$. More precisely, we have
  \begin{equation}
    \lim_{n\to\infty}  \mathcal{R}^{(n)}_{n+j,n+k}=r_{j-k},  \quad \forall j,k\in \mathbb{Z}.
  \end{equation}
\end{lem}

\begin{proof}
  By definition, we have
  \begin{equation}
    \begin{split}
      \mathcal{R}^{(n)}_{j, k} = {}& \left(\int^{\infty}_0 x\bar p^{(n)}_{n + j}(x) \bar q^{(n)}_{n + k}(f(x)) W^{(n)}_{\alpha}(x) dx\right) 	\left( \int^{\infty}_0 \bar p^{(n)}_{n + k}(x) \bar q^{(n)}_{n + k}(f(x)) W^{(n)}_{\alpha}(x) dx \right)^{-1}\\
      = {}& \left(\frac{4}{c^2}\right)^{j-k} \left(\int^{\infty}_0 x p^{(n)}_{n + j}(x)  q^{(n)}_{n + k}(f(x)) W^{(n)}_{\alpha}(x) dx\right) 	\left( \int^{\infty}_0  p^{(n)}_{n + k}(x)  q^{(n)}_{n + k}(f(x)) W^{(n)}_{\alpha}(x) dx \right)^{-1}.
    \end{split}
  \end{equation}
  The fact that $\mathcal{R}^{(n)}_{j, k}=0$ for all $j<k-1$ follows directly from \eqref{eq:biorthogonality}. Repeating the arguments leading to \eqref{akj-1} (and recalling the contour $\hat{C}_R$ in \eqref{eq:defcrcr'}), we find that
  \begin{equation}\label{eq:xrecu}
    \begin{split}
      &	\lim_{n\to\infty}
        e^{-n\ell}	 \int^{\infty}_0 xp^{(n)}_{n + j}(x) q^{(n)}_{n + k}(f(x)) W^{(n)}_{\alpha}(x) dx\\
      = {}& 	-\frac{\sqrt{s_2 - 1} \i}{\tilde{D}(1)} \left( \frac{c^2(1 - s_1)}{4} \right)^{\alpha + \frac{1}{2}} \left( \frac{c^2}{4} \right)^j e^{kc} \oint_{\hat{C}_R}
         \frac{c}{2} (w-1)^{j-k-1} \J(w)dw\\
      = {}& \frac{\sqrt{s_2 - 1} }{\tilde{D}(1)} \left( \frac{c^2(1 - s_1)}{4} \right)^{\alpha + \frac{1}{2}} \left( \frac{c^2}{4} \right)^j e^{kc} c\pi r_{j-k}.
    \end{split}
  \end{equation}
  Lemma \ref{lem:recu} now follows by dividing \eqref{eq:xrecu} by $e^{-n\ell}h^{(n)}_{n+k}$ and applying \eqref{eq:asy_h}.
\end{proof}

Lemma \ref{lem:recu} implies that we can identify the Laurent matrix $\mathcal{R}$ with $\J(w)$ (in the sense of \cite[Section 2.1]{Breuer-Duits13} and \cite[Definition 2.5]{Lambert18}). (The function $\J(w)$ itself corresponds to the notion of Laurent polynomials \cite[Definition 2.5]{Breuer-Duits13}.) Then \cite[Theorem 2.1]{Breuer-Duits13} (see also \cite[Theorem 2.6]{Lambert18} for an alternative proof) yields the following central limit theorem with an explicit expression for the variance. \footnote{The results in \cite{Breuer-Duits13, Lambert18} are originally stated for the case when $\mathcal{R}^{(n)}$ has a uniformly bounded number of nonzero diagonals. A closer inspection of the proofs in \cite{Lambert18} reveals that the result still holds if the number of nonzero upper diagonals or lower diagonals is bounded, which is the case for the $\mathcal{R}^{(n)}$ studied here.}
\begin{prop}\label{thm:var}
  Given any polynomial $P$ with real coefficients, the centered linear statistic converges in distribution to a Gaussian limit as $n\to\infty$,
  \begin{equation} \label{eq:variance_poly}
    \sum_{i=1}^n P(\lambda_i) -\mathbb{E} \left(\sum_{i=1}^n P(\lambda_i)\right)
    \to \N (0, \sigma^2(P) ) \quad \mbox{ with } \sigma^2(P) = \sum_{k=1}^{\infty}kP(\J(w))_k P(\J(w))_{-k},
  \end{equation}
  where $P(\J(w))_k$ is the coefficient in front of $(w-1)^{k}$ in the expansion of $P(\J(w))$ for $\vert w\vert \gg 1$.  
\end{prop}

 The original formula \ref{eq:variance_poly} is less convenient for the purpose of taking the limit $\sigma^2(P_n)$ for a sequence of polynomials $P_n$. To solve this issue, we now consider an alternative representation.  Let $\mathcal{C}(D)$ and $\mathcal{C}'(D)$ be the contours specified in Theorem \ref{thm:variance_general}. Then
\begin{equation}
  P(\J(w))_k= {} \frac{1}{2\pi \i}\oint_{\mathcal{C}'(D)} \frac{P(\J(u))}{(u-1)^k} \frac{du}{u - 1},\quad  P(\J(w))_{-k} = {} \frac{1}{2\pi \i}\oint_{\mathcal{C}(D)} P(\J(v)) (v - 1)^k \frac{dv}{v - 1},
\end{equation}
and thus  
\begin{equation}\label{eq:sigp2}
  \begin{split}
    \sigma^2(P) = {}& \frac{1}{(2\pi i)^2} \oint_{\mathcal{C}'(D)} \frac{du}{u - 1} \oint_{\mathcal{C}(D)} \frac{dv}{v - 1} P(\J(u)) P(\J(v)) \sum^{\infty}_{k = 1} k \left( \frac{v - 1}{u - 1} \right)^k \\
    = {}& \frac{1}{(2\pi i)^2} \oint_{\mathcal{C}'(D)} du \oint_{\mathcal{C}(D)} dv P(\J(u)) P(\J(v)) \frac{1}{(u - v)^2}.
  \end{split}
\end{equation}
Hence, the variance function $\sigma^2(\phi)$ in \eqref{eq:mean_variance_BGK} has a double contour integral formula if $\phi$ is a polynomial, and Theorem \ref{thm:variance_general} is proved for a polynomial $\phi$ without the restriction $\alpha = 0$ (since Proposition \ref{thm:var} is valid for all $\alpha>-1$). Since it is known that $\sigma^2(\phi)$ depends on $\phi$ continuously in its uniform norm (Proposition \ref{prop:clt1}), this contour integral formula extends to all real analytic functions on $[0, b]$. By this extension, we establish Theorem \ref{thm:variance_general}.

\begin{rmk}\label{rmk:congen}
	From the proof we can see that
	the double contour integral formula \eqref{eq:sigform}  is valid  for general  biorthogonal ensembles satisfying the conditions of \cite[Proposition 8.2]{Borot-Guionnet-Kozlowski13}, with $\J$ replaced by corresponding Laurent polynomials. 
\end{rmk}

\subsection{Applications to the conductance of disordered wires}

In this section, we shall use Proposition \ref{thm:mean} (or alternatively Part \ref{enu:eq:find_b_psi_3} of Theorem \ref{eq:find_b_psi}) and Theorem \ref{thm:variance_general} to rigorously establish Ohm's law \eqref{eq:Ohm_law} and universal conductance fluctuation \eqref{eq:UCF}, respectively. 
We will take the test function to be $\cosh^{-2}(\sqrt{x})$, which is holomorphic in a neighborhood of the positive real axis. 

Recall that with the specialization \eqref{eq:special_W}, the parameter $c=2\Mconst$ and the equilibrium measure is given in \eqref{eq:b_linear_case} and \eqref{eq:psi_linear_case} of Theorem \ref{thm:eq_measure_linear}.

\subsubsection{Proof of \eqref{eq:meanconduc} of Theorem \ref{thm:eq_measure_linear}: verifying Ohm's law \eqref{eq:Ohm_law}}\label{subsubsec:meanconduc}
By Proposition \ref{thm:mean} and the boundedness of $\cosh^{-2}(\sqrt{x})$ in $[0,\infty)$, we see that
\begin{equation}\label{eq:integral_eq}
 \lim_{n \to \infty} \frac{1}{n} \E[C_n(\Mconst)]=
		\int^b_0 \frac{1}{\cosh^2(\sqrt{x})} \frac{1}{\pi} \Im\sqrt{\frac{\Iinv_+(x)}{x}}dx.
\end{equation} 
To evaluate this integral, we let $\gamma_3$ be a flat contour surrounding $[0, b]$ and set $\tilde{\gamma}_3=\Iinv_1(\gamma_3)$ to be the image of $\gamma_3$ under the mapping $\Iinv_1$. The integral in \eqref{eq:integral_eq} is equal to

\begin{equation} \label{eq:int_form_mean}
		\begin{split}
			 \frac{-1}{2\pi i} \oint_{\gamma_3} \frac{\sqrt{\Iinv(z)}}{\sqrt{z} \cosh^2 \sqrt{z}} dz 
			= {}& \frac{-1}{2\pi i} \oint_{\tilde{\gamma}_3} \frac{\sqrt{w}}{\sqrt{\J(w)}\cosh^2 \sqrt{\J(w)}} d\J(w) \\
			= {}& \frac{-1}{2\pi i} \oint_{\tilde{\gamma}_3} \frac{\sqrt{w}}{\frac{1}{2} J_c(w)}\cdot  \frac{\frac{1}{2} J_c(w)}{\cosh^2(\frac{1}{2} J_c(w))} \cdot \frac{c}{2} \frac{w-(1+2/c)}{\sqrt{w}(w-1)} dw \\
			= {}& \frac{-1}{2\pi i} \oint_{\tilde{\gamma}_3} \frac{cw - (c + 2)}{(w - 1)(\cosh(J_c(w)) + 1)} dw,
		\end{split}
	\end{equation}
where we have used \eqref{eq:Jdev} and \eqref{eq:JJdev} in the second equality. From \eqref{Fcoeff}, we see 
\begin{equation}\label{eq:formulacoshjc}
	\cosh(J_c(w))+1= \frac{1}{2} \left( e^{c\sqrt{w}} \frac{\sqrt{w} + 1}{\sqrt{w} - 1} + 2 + e^{-c\sqrt{w}} \frac{\sqrt{w} - 1}{\sqrt{w} + 1} \right).
\end{equation}
Inside $\tilde{\gamma}_3$, there is only one zero for $\cosh(J_c(w))+1$ at $w=0$. Using 
\begin{equation}
e^{c\sqrt{w}}=1+c\sqrt{w}+\frac{c^2w}{2}+\bigO(w^{3/2}),\,\,\frac{\sqrt{w}+1}{\sqrt{w}-1}=-(1+2\sqrt{w}+2w+\bigO(w^{3/2})),
\end{equation}
one sees that 
\begin{equation}\label{eq:jcw1}
	\cosh(J_c(w))+1 =-(c+2)^2 w/2+\bigO(w^2) \mbox{ as } w\to 0.
\end{equation} On the other hand, for $w\to 1$, $(w-1)(\cosh(J_c(w))+1)$ is bounded away from $0$, as is clear from \eqref{Fcoeff}. Thus the last line of \eqref{eq:int_form_mean}
is equal to $2(c+2)^{-1}$ and we get (using $c=2\Mconst$)
\begin{equation}
 \lim_{n \to \infty} \frac{1}{n} \E[C_n(\Mconst)]=\frac{2}{c+2}=\frac{2}{2\Mconst+2}=\frac{1}{\Mconst+1},
\end{equation}
proving \eqref{eq:meanconduc} of Theorem \ref{thm:eq_measure_linear}. Ohm's law
  \eqref{eq:Ohm_law} now follows immediately.

\subsubsection{Proof of Theorem \ref{thm:UCF1}: verifying universal conductance fluctuation \eqref{eq:UCF}}\label{subsubsec:pfucf}

We let
\begin{equation}
	T(w) = 	\frac{2}{\cosh(J_c(w)) +1}.
\end{equation}
By Theorem \ref{thm:variance_general}, we need to compute the limiting variance
  \begin{equation}\label{eq:sigma1}
	\sigma^2(T) = \frac{1}{(2\pi i)^2} \oint_{\gamma_4 } du \oint_{\gamma_5} dv T(u) T(v) \frac{1}{(u - v)^2},
\end{equation}
where $\gamma_4$ is a positively oriented smooth Jordan curve encircling another smooth Jordan curve $\gamma_5$, which in turn encircles $\partial D$. 

Within $\gamma_4$, $T(w)$ has a zero at $w =1$, and has only one pole at $w = 0$ with order $1$. Using \eqref{eq:jcw1}, the residue at $0$ is
\begin{equation}
	\Res_{w = 0} T(w) = -\left( \frac{c}{2} + 1 \right)^{-2}.
\end{equation}
Thus, the integration with respect to $v$ in \eqref{eq:sigma1} gives
\begin{equation}\label{eq:sigmaT1}
	\sigma^2(T)=-\left(\frac{c}{2}+1\right)^{-2} \frac{1}{ 2\pi i } \oint_{\gamma_4 }   T(u)  \frac{1}{u^2}du.
\end{equation}
To evaluate this integral, we note that,
outside of $\gamma_4$, by Lemma \ref{lem:J_c}, $T(w)$ has infinitely many poles for $w\leq s_1$, in the form of
\begin{equation}
	t^*_k = \Iinv_1(-(k + \frac{1}{2})^2 \pi^2)=J_c^{-1}( (2k+1)\pi i ) , \quad k = 0, 1, 2, \dotsc.
\end{equation}
It follows that 
the function
\begin{equation}
	\tilde{T}(w) := T(w) + \left( \frac{c}{2} + 1 \right)^{-2} \frac{1}{w }
\end{equation}
is holomorphic for $|w-1|<1-t_1^*$. Moreover, combining \eqref{eq:sigma1} and the fact
\begin{equation}
\oint_{\gamma_4} \frac{1}{w^3} dw=0,
\end{equation}
we see that
\begin{equation}\label{eq:varphifin}
	\sigma^2(T) = -\left(\frac{c}{2}+1\right)^{-2}  
	\frac{1}{2\pi i} \oint_{\gamma_4} \tilde{T}(u) \frac{1}{u^2} du = -\left(\frac{c}{2}+1\right)^{-2} \tilde{T}'(0).
\end{equation}


By \eqref{eq:formulacoshjc}, we can	write
	\begin{equation}
		\begin{split}
			\tilde{T}(w) = {}& \frac{2}{\cosh(J_c(w))+1 } + \left( \frac{c}{2} + 1 \right)^{-2} \frac{1}{w} \\
			= {}& \frac{2(w-1)}{(w + 1)\cosh(c\sqrt{w}) + 2\sqrt{w}\sinh(c\sqrt{w}) + w - 1}  + \left( \frac{c}{2} + 1 \right)^{-2} \frac{1}{w}.
		\end{split}
	\end{equation}
	Using the expansion
        \begin{equation}
	\cosh(c\sqrt{w})= 1+\frac{c^2 w}{2}+\frac{c^4 w^2}{24}+\frac{c^6 w^3}{720}+\bigO(w^4),\,
	\sqrt{w}\sinh(c\sqrt{w})=cw+\frac{c^3 w^{2}}{6}+ \frac{c^5 w^3}{120}+\bigO(w^{4}),
	\end{equation}
	we have the following asymptotic for the denominator
\begin{equation}
	\begin{split}
&		(w + 1)\cosh(c\sqrt{w}) + 2\sqrt{w}\sinh(c\sqrt{w}) + w - 1\\
		= & \frac{1}{2}(c+2)^2 w+\left( \frac{c^2}{2}+\frac{c^3}{3}+\frac{c^4}{24}\right)w^2+\left(\frac{c^4}{24}+\frac{c^5}{60}+\frac{c^6}{720}\right)w^3 +\bigO(w^4).
	\end{split}
\end{equation}
After some calculations, one finds that
\begin{equation}
\begin{split}
			\tilde{T}(w) =&\left(\frac{c}{2}+1 \right)^{-3} \left(\frac{c^3}{8}+\frac{c^2}{4}+\frac{c}{2}+1\right)
			-\left(\frac{c}{2}+1\right)^{-4}
			\left(\frac{c^6}{960} + \frac{c^5}{80} + \frac{c^4}{16} + \frac{c^3}{6} + \frac{c^2}{4}  \right)w+\bigO(w^2)\\
			=&\left(\frac{c}{2}+1 \right)^{-3} \left(\frac{c^3}{8}+\frac{c^2}{4}+\frac{c}{2}+1\right)- \frac{1}{15} \left(\frac{c}{2}+1\right)^2 \left( 1 - \frac{3c + 1}{(\frac{c}{2} + 1)^6} \right)w+\bigO(w^2).
		\end{split}
\end{equation}
Using \eqref{eq:varphifin} and inserting the value $c=2\Mconst$, we get
\begin{equation}
\sigma^2(T)=
	\frac{1}{15} \left( 1 - \frac{3c + 1}{(\frac{c}{2} + 1)^6} \right)=	\frac{1}{15}\left(1-\frac{6\Mconst+1}{(\Mconst+1)^6}\right), 
\end{equation}
proving \eqref{eq:wireclt1}. The asymptotic expansion of the partition function for the biorthogonal ensemble (\cite[Theorem 1.1]{Borot-Guionnet-Kozlowski13}) implies\footnote{The exponential moment of linear statistics can be written as the ratio of two partition functions with different potentials, which can then be computed using the asymptotic expansion for partition functions.} that 
the exponential moment of $C_n(\Mconst)-\mathbb{E}(C_n(\Mconst))$ is uniformly bounded, and thus 
 the sequence
$[C_n(\Mconst)-\mathbb{E}(C_n(\Mconst))]^2$ is uniformly integrable. Consequently, we also have 
\begin{equation}
\lim_{n\to\infty}\Var(C_n(\Mconst))=
\sigma^2(T)=\frac{1}{15}\left(1-\frac{6\Mconst+1}{(\Mconst+1)^6}\right),
\end{equation}
proving \eqref{eq:wireclt2}. The universal conductance fluctuation \eqref{eq:UCF} directly follows by
taking $\Mconst\to\infty$.

	\appendix
	\section{Properties of $\J[x](s)$ and related functions} \label{sec:J_x_prop}

\paragraph{Proof of Part \ref{enu:lem:J_c:4} of Lemma \ref{lem:J_c}}

Below we give a constructive description of $\gamma_1(x)$. Any $s \in \compC_+$ can be represented as
\begin{equation} \label{eq:s_in_uv}
	s = \frac{\cosh(u + iv) + 1}{\cosh(u + iv) - 1}, \quad u \in (0, \infty) \text{ and } v \in (-\pi, 0).
\end{equation}
Then the condition $s \in \gamma_1(x) \subseteq \compC_+$ is equivalent to $\Im J_x(s) = 0$, which can be expressed as $x\sin v/(\cosh u - \cos v) - v = 0$, or equivalently,
\begin{equation} \label{eq:gamma_1_eq}
	\cosh u = x \frac{\sin v}{v} + \cos v.
\end{equation}
By direct computation,
we see that the right-hand side of \eqref{eq:gamma_1_eq} is an increasing function on $(-\pi, 0)$, and its limits at $-\pi$ and $0$ are $-1$ and $x + 1$ respectively. Hence, \eqref{eq:gamma_1_eq} has a solution only if $v \in (v^*, 0)$, where $v^* \in (-\pi, 0)$ is the solution to $1 = x\sin(v)/v + \cos v$. We then construct the curve
\begin{equation} \label{eq:gamma'_1x}
	\gamma'_1(x) = \{ \frac{\cosh(u(v) + iv) + 1}{\cosh(u(v) + iv) - 1} : v \in [v^*, 0] \}, \quad \text{where} \quad u(v) = \arcosh \left( x \frac{\sin v}{v} + \cos v \right).
\end{equation}
which lies in $\compC_+$ and connects $\gamma'_1(v^*) = (\cosh(iv^*) + 1)/(\cosh(iv^*) - 1) = s_1(x)$ and $\gamma'_1(0) = 1 + 2/x = s_2(x)$.

By expressing $s$ in $u, v$ in \eqref{eq:s_in_uv} and parametrizing $u$ by $v$ as in \eqref{eq:gamma'_1x}, we have that $J_x(s) \mid_{s \in \gamma'_1(x)}$ is parametrized by $v \in (v^*, 0)$. Then we can compute
\begin{equation} \label{eq:derivative_on_gamma_1}
	\frac{d}{dv} J_x(s(u(v), v)) = -\frac{x + 1 - \cosh(u(v) + iv)}{\cosh(u(v) + iv) - 1} (u'(v) + i).
\end{equation}
Since by the construction, we know that $J_x(s) \in \realR$ if $s \in \gamma'_1(x)$, we know that the derivative above is real valued wherever it is well defined on $(v^*, 0)$. On the other hand, we have that $x + 1 - \cosh(u(v) + iv) \neq 0$ and $u'(v) + i \neq 0$ on $(v^*, 0)$. Hence the derivative in \eqref{eq:derivative_on_gamma_1} is real, non-vanishing, and continuous on $(v^*, 0)$, and then it has to be always positive or always negative there. By comparing $J_x(s_1(x))$ and $J_x(s_2(x))$, we conclude that the derivative is always positive on $(v^*, 0)$.

Now we see that the curve $\gamma'_1(x)$ satisfies the properties of $\gamma_1(x)$ described in Part \ref{enu:lem:J_c:4} of Lemma \ref{lem:J_c}, and it is the only candidate for $\gamma_1(x)$. So we let $\gamma_1(x)$ be $\gamma'_1(x)$ constructed in \eqref{eq:gamma'_1x}, and prove constructively Part \ref{enu:lem:J_c:4} of Lemma \ref{lem:J_c}.

\paragraph{Proof of Lemma \ref{lem:conformal_mapping}}

Both $\compC_+$ and $\compC_+ \setminus D$ are simply connected regions. From Parts \ref{enu:lem:J_c:1}, \ref{enu:lem:J_c:2} and \ref{enu:lem:J_c:4} of Lemma \ref{lem:J_c}, we have that $\J[x](s)$ maps $(-\infty, s_1(x)] \cup \gamma_1(x) \cup [s_2(x), \infty)$, the boundary of $\compC_+ \setminus D$, homeomorphically to $\realR$, the boundary of $\compC_+$. Also we have $J[x](s) = (x^2/4)s + \bigO(1)$. For any large enough $R \in \realR_+$, we let $C_R$ be the region enclosed by the semicircle $S_R := \{ R e^{i\theta} : \theta \in (0, \pi) \}$ and the lower boundary $(-R, s_1(s)] \cup \gamma_1(x) \cup [s_2(x), R)$. We then have that $\J[x](s)$ maps the boundary of $C_R$ homeomorphically to its image lying in $\compC_+ \cup \realR$. Hence by standard arguments in complex analysis, $\J[x](s)$ maps $C_R$ analytically and bijectively to its image lying in $\compC_+$. Letting $R \to \infty$, we have that $\J[x](s)$ maps $\compC_+ \setminus D$ analytically and bijectively to $\compC_+$.

Similarly, we can prove that $\J[x](s)$ maps $\compC_+ \cap D$ bijectively to $\compC_- \cap \paraP$.

At last, since $\J[x](\bar{s}) = \overline{\J[x](s)}$, the results above can be extended to $\compC \setminus \bar{D}$ and $\compC \cap D$. Hence we finish the proof of Lemma \ref{lem:conformal_mapping}.

\paragraph{Limiting shape of $\gamma_1(x)$ as $x \to \infty$ and $x \to 0$}

With the help of expression \eqref{eq:gamma'_1x} of $\gamma_1(x) = \gamma'_1(x)$, we have the following results by direct calculation.
\begin{enumerate}
	\item 
	As $x \to \infty$, $s_1(x) = -(\frac{\pi}{x})^2 (1 + \bigO(x^{-1}))$ and $\mathfrak{b}(x) = \frac{x^2}{4}(1 + \bigO(x^{-1}))$. 
	Given $\epsilon > 0$, we have
	\begin{equation}
		\Iinv_{x, +}(y) = \frac{4}{x^2} \left( y(1 + \bigO(x^{-1})) + i \pi \sqrt{y} (1 + \bigO(x^{-1})) \right), \quad y \in \left( \frac{\epsilon x^2}{4}, \frac{(1 - \epsilon) x^2}{4} \right),
	\end{equation}
	where the two $\bigO(x^{-1})$ terms are uniform for $y \in (\epsilon x^2/4, (1 - \epsilon) x^2/4)$.
	\item 
	As $x \to 0_+$, $s_1(x) = -\frac{2}{x} (1 + \bigO(x))$ and $\mathfrak{b}(x) = 2x(1 + \bigO(x))$. 
	Given $\epsilon > 0$, we have
	\begin{equation}
		\Iinv_{x, +}(y) = \frac{2}{x^2} \left( (y - x) + \sqrt{y(2x - y)} i \right) (1 + \bigO(x)), \quad y \in (\epsilon x, (2 - \epsilon)x),
	\end{equation}
	where the $1 + \bigO(x)$ term is uniform for $y \in (\epsilon x, (2 - \epsilon)x)$.
\end{enumerate}

\paragraph{Limit behaviour of $\J[x](s)$ around $s_1(x)$ and $s_2(x)$, and its inverse functions around $0$ and $\mathfrak{b}(x)$}

By direct computation, we have the following.
\begin{enumerate}
	\item 
	In the vicinity of $s_1(x)$
	\begin{equation}
		\J[x](s) = \frac{x^2 (s_2(x) - s_1(x))^2}{16 s_1(x) (1 - s_1(x))^2} (s - s_1(x))^2 + \bigO((s - s_1(x))^3), 
	\end{equation}
	and around $0$, for a small enough $\epsilon > 0$,
	\begin{align}
		\Iinv_{x, \pm}(y) = {}& s_1(x) \pm e_1(x) \sqrt{y} i + \bigO(y), && y \in (0, \epsilon), \\
		\Iinv_{x, 1}(z) = {}& s_1(x) - e_1(x) (-z)^{\frac{1}{2}} + \bigO(z), && z \in N(0, \epsilon) \setminus [0, \epsilon), \label{eq:I_1_at_0} \\
		\Iinv_{x, 2}(z) = {}& s_1(x) + e_1(x) (-z)^{\frac{1}{2}} + \bigO(z), && z \in N(0, \epsilon) \setminus [0, \epsilon). \label{eq:I_2_at_0}
	\end{align}
	where
	\begin{equation} \label{eq:defn_e_1}
		e_1(x) = \frac{4\sqrt{-s_1(x)}(1 - s_1(x))}{x(s_2(x) - s_1(x))}.
	\end{equation}
	
	\item 
	In the vicinity of $s_2(x)$
	\begin{equation}
		\J[x](s) = \mathfrak{b}(x) + \frac{x^2 \mathfrak{b}(x)^{1/2}}{8 s_2(x)^{1/2}} (s - s_2(x))^2 + \bigO((s - s_2(x))^3), 
	\end{equation}
	and around $\mathfrak{b}(x)$, for a small enough $\epsilon > 0$,
	\begin{align}
		\Iinv_{x, \pm}(y) = {}& s_2(x) \pm d_1(x) \sqrt{\mathfrak{b}(x) - y} i + \bigO(\mathfrak{b}(x) - y), && y \in (\mathfrak{b}(x) - \epsilon, \mathfrak{b}(x)), \\
		\Iinv_{x, 1}(z) = {}& s_2(x) + d_1(x) \sqrt{z - \mathfrak{b}(x)} + \bigO(\mathfrak{b}(x) - z), && z \in N(\mathfrak{b}(x), \epsilon) \setminus [\mathfrak{b}(x) - \epsilon, \mathfrak{b}(x)), \label{eq:I_1_at_b} \\
		\Iinv_{x, 2}(z) = {}& s_2(x) - d_1(x) \sqrt{z - \mathfrak{b}(x)} + \bigO(\mathfrak{b}(x) - z), && z \in N(\mathfrak{b}(x), \epsilon) \setminus [\mathfrak{b}(x) - \epsilon, \mathfrak{b}(x)). \label{eq:I_2_at_b}
	\end{align}
	where
	\begin{equation} \label{eq:defn_d_1}
		d_1(x) = \frac{2\sqrt{2} s_2(x)^{1/4}}{x \mathfrak{b}(x)^{1/4}}.
	\end{equation}
	
\end{enumerate}

\section{Local universal parametrices}

\subsection{The Airy parametrix}\label{app:Airy}

In this subsection, let $y_0$, $y_1$ and $y_2$ be the functions defined by
\begin{equation}
	y_0(\zeta) = \sqrt{2\pi}e^{-\frac{\pi i}{4}} \Ai(\zeta), \quad y_1(\zeta) = \sqrt{2\pi}e^{-\frac{\pi i}{4}} \omega\Ai(\omega \zeta), \quad y_2(\zeta) = \sqrt{2\pi}e^{-\frac{\pi i}{4}} \omega^2\Ai(\omega^2 \zeta),
\end{equation}
where $\Ai$ is the usual Airy function (cf. \cite[Chapter 9]{Boisvert-Clark-Lozier-Olver10}) and $\omega=e^{2\pi i/3}$. We then define a $2\times 2$ matrix-valued function $\Psi^{(\Ai)}$ by
\begin{equation} \label{eq:defn_Psi_Ai}
	\Psi^{(\Ai)}(\zeta)
	= \left\{
	\begin{array}{ll}
		\begin{pmatrix}
			y_0(\zeta) &  -y_2(\zeta) \\
			y_0'(\zeta) & -y_2'(\zeta)
		\end{pmatrix}, & \hbox{$\arg \zeta \in (0,\frac{2\pi}{3})$,} \\
		\begin{pmatrix}
			-y_1(\zeta) &  -y_2(\zeta) \\
			-y_1'(\zeta) & -y_2'(\zeta)
		\end{pmatrix}, & \hbox{$\arg \zeta \in (\frac{2\pi}{3},\pi)$,} \\
		\begin{pmatrix}
			-y_2(\zeta) &  y_1(\zeta) \\
			-y_2'(\zeta) & y_1'(\zeta)
		\end{pmatrix}, & \hbox{$\arg \zeta \in (-\pi,-\frac{2\pi}{3})$,} \\
		\begin{pmatrix}
			y_0(\zeta) &  y_1(\zeta) \\
			y_0'(\zeta) & y_1'(\zeta)
		\end{pmatrix}, & \hbox{$\arg \zeta \in  (-\frac{2\pi}{3},0)$.}
	\end{array}
	\right.
\end{equation}
It is well-known that $\det (\Psi^{(\Ai)}(z))=1$ and $\Psi^{(\Ai)}(\zeta)$ is the unique solution of the following $2 \times 2$ RH problem; cf. \cite[Section 7.6]{Deift99}.
\begin{RHP} \hfill\label{rhp:Ai}
	\begin{enumerate}
		\item
		$ \Psi(\zeta)$ is analytic in $\mathbb{C} \setminus \Gamma_{\Ai}$, where the contour $\Gamma_{\Ai}$ is defined in
		\begin{equation} \label{def:AiryContour}
			\Gamma_{\Ai}:=e^{-\frac{2\pi i}{3}}[0,+\infty) \cup \mathbb{R} \cup e^{\frac{2\pi i}{3}}[0,+\infty)
		\end{equation}
		with the orientation shown in Figure \ref{fig:jumps-Psi-A}.
		\item
		For $z \in \Gamma_{\Ai}$, we have
		\begin{equation} \label{eq:Ai_jump}
			\Psi_+(\zeta) =  \Psi_-(\zeta)
			\begin{cases}
				\begin{pmatrix}
					1 & 1 \\
					0 & 1
				\end{pmatrix},
				& \arg \zeta =0, \\
				\begin{pmatrix}
					1 & 0 \\
					1 & 1
				\end{pmatrix},
				& \arg \zeta = \pm \frac{2\pi }{3}, \\
				\begin{pmatrix}
					0 & 1  \\
					-1 & 0
				\end{pmatrix},
				& \arg \zeta = \pi.
			\end{cases}
		\end{equation}
		\item
		As $\zeta \to \infty$, we have
		\begin{equation}
			\Psi(\zeta) = \zeta^{-\frac{1}{4} \sigma_3} \frac{1}{\sqrt{2}}
			\begin{pmatrix}
				1 & 1 \\
				-1 & 1
			\end{pmatrix}
			e^{-\frac{\pi i}{4} \sigma_3}(I+\bigO(\zeta^{-\frac32}))e^{-\frac23 \zeta^{\frac32}\sigma_3}.
		\end{equation}
		\item
		As $\zeta \to 0$, we have $\Psi_{i, j}(\zeta) = \bigO(1)$, where $i, j = 1, 2$.
	\end{enumerate}
\end{RHP}
We note that the jump condition \eqref{eq:Ai_jump} can be derived from the identity \cite[Equation (7.116)]{Deift99}
\begin{equation} \label{eq:Airy_sum_zero}
	\Ai(\zeta) + \omega \Ai(\omega \zeta) + \omega^2 \Ai(\omega^2 \zeta) = 0.
\end{equation}

\begin{figure}[htb]
	\begin{minipage}{0.55\linewidth}
		\centering
		\includegraphics{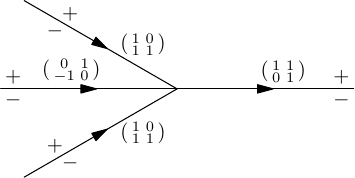}
		\caption{The jump contour $\Gamma_{\Ai}$ for the RH problem \ref{rhp:Ai} for $\Psi^{(\Ai)}$.}
		\label{fig:jumps-Psi-A}
	\end{minipage}
	\hspace{\stretch{1}}
	\begin{minipage}{0.4\linewidth}
		\centering
		\includegraphics{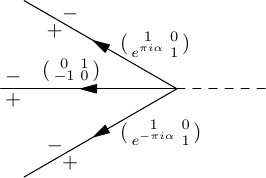}
		\caption{The jump contour $\Gamma_{\Be}$ for the RH problem \ref{RHP:Bessel} for $\Psi^{(\Be)}_{\alpha}$.}
		\label{fig:jumps-Psi-B}
	\end{minipage}
\end{figure}

\subsection{The Bessel parametrix}

The Bessel kernel used in our paper is essentially the $\tilde{\Psi}$ constructed in \cite[Equation (6.51)]{Kuijlaars-McLaughlin-Van_Assche-Vanlessen04}. In this subsection, let $w_0$, $w_1$, $w_2$ and $w_3$ be the functions defined by
\begin{align}
	w_0(\zeta) = {}& I_{\alpha}(2\zeta^{\frac{1}{2}}), & w_1(\zeta) = {}& \frac{-i}{\pi} K_{\alpha}(2\zeta^{\frac{1}{2}}), & w_2(\zeta) = {}& \frac{1}{2} H^{(1)}_{\alpha}(2(-\zeta)^{\frac{1}{2}}), & w_3(\zeta) = {}& \frac{1}{2} H^{(2)}_{\alpha}(2(-\zeta)^{\frac{1}{2}}),
\end{align}
where all $\zeta^{1/2}$ are defined by the principal branch on the sector $\arg \zeta \in (-\pi, \pi)$, $I_{\alpha}$ and $K_{\alpha}$ are modified Bessel functions of order $\alpha$, and $H^{(1)}_{\alpha}$ and $H^{(2)}_{\alpha}$ are Hankel functions of order $\alpha$ of the first and second kind, respectively. We then define a $2 \times 2$ matrix-valued function $\Psi^{(\Be)}_{\alpha}$ by
\begin{equation}\label{eq:psibeal}
	\Psi^{(\Be)}_{\alpha}(\zeta) =
	\begin{pmatrix}
		1 & 0 \\
		0 & -2\pi i \zeta
	\end{pmatrix}
	\times
	\begin{cases}
		\begin{pmatrix}
			w_0(\zeta) & w_1(\zeta) \\
			w'_0(\zeta) & w'_1(\zeta)
		\end{pmatrix},
		& \arg \zeta \in (-\frac{2\pi}{3}, \frac{2\pi}{3}), \\
		\begin{pmatrix}
			w_2(\zeta) & -w_3(\zeta) \\
			w'_2(\zeta) & -w'_3(\zeta)
		\end{pmatrix}
		e^{\frac{\alpha}{2} \pi i \sigma_3}, & \arg \zeta \in (\frac{2\pi}{3}, \pi), \\
		\begin{pmatrix}
			w_3(\zeta) & w_2(\zeta) \\
			w'_3(\zeta) & w'_2(\zeta)
		\end{pmatrix}
		e^{-\frac{\alpha}{2} \pi i \sigma_3}, & \arg \zeta \in (-\pi, -\frac{2\pi}{3}).
	\end{cases}
\end{equation}
It is well known that $\det(\Psi^{(\Be)}_{\alpha}) = 1$ and $\Psi^{(\Be)}_{\alpha}$ is the unique solution of the following RH problem:

\begin{RHP} \hfill \label{RHP:Bessel}
	\begin{enumerate}
		\item
		$\Psi(\zeta)$ is analytic in $\compC \setminus \Gamma_{\Be}$, where the contour $\Gamma_{\Be}$ is defined in 
		\begin{equation} \label{def:BesselContour}
			\Gamma_{\Be}:=e^{-\frac{2\pi i}{3}}[0,+\infty) \cup \mathbb{R_-} \cup e^{\frac{2\pi i}{3}}[0,+\infty)
		\end{equation}
		with the orientation shown in Figure \ref{fig:jumps-Psi-B}.
		\item
		For $z \in \Gamma_{\Be}$, we have
		\begin{equation} \label{eq:Bessel_jump}
			\Psi_+(\zeta) =  \Psi_-(\zeta)
			\begin{cases}
				\begin{pmatrix}
					1 & 0 \\
					e^{\pi i \alpha} & 1
				\end{pmatrix},
				& \arg \zeta = \frac{2\pi }{3}, \\
				\begin{pmatrix}
					1 & 0 \\
					e^{-\pi i \alpha} & 1
				\end{pmatrix},
				& \arg \zeta = -\frac{2\pi }{3}, \\
				\begin{pmatrix}
					0 & 1  \\
					-1 & 0
				\end{pmatrix},
				& \arg \zeta = \pi.
			\end{cases}
		\end{equation}
		\item
		As $\zeta \to \infty$, we have
		\begin{equation}
			\Psi(\zeta) = (2\pi)^{-\frac{1}{2} \sigma_3} \zeta^{-\frac{1}{4} \sigma_3} \frac{1}{\sqrt{2}}
			\begin{pmatrix}
				1 & -i \\
				-i & 1
			\end{pmatrix}
			(I+\bigO(\zeta^{-\frac{1}{2}}))e^{2 \zeta^{\frac{1}{2}} \sigma_3}.
		\end{equation}
		\item
		As $\zeta \to 0$, if $\alpha \in (-1, 0)$, then
		\begin{align}
			\Psi(\zeta) = {}& 
			\begin{pmatrix}
				\bigO(\zeta^{\alpha/2}) & \bigO(\zeta^{\alpha/2}) \\
				\bigO(\zeta^{\alpha/2}) & \bigO(\zeta^{\alpha/2})
			\end{pmatrix}, &
			\Psi(\zeta)^{-1} = {}& 
			\begin{pmatrix}
				\bigO(\zeta^{\alpha/2}) & \bigO(\zeta^{\alpha/2}) \\
				\bigO(\zeta^{\alpha/2}) & \bigO(\zeta^{\alpha/2})
			\end{pmatrix},
		\end{align}
		if $\alpha = 0$, then
		\begin{align}
			\Psi(\zeta) = {}&
			\begin{pmatrix}
				\bigO(\log \zeta) & \bigO(\log \zeta) \\
				\bigO(\log \zeta) & \bigO(\log \zeta)
			\end{pmatrix}, &
			\Psi(\zeta)^{-1} = {}&
			\begin{pmatrix}
				\bigO(\log \zeta) & \bigO(\log \zeta) \\
				\bigO(\log \zeta) & \bigO(\log \zeta)
			\end{pmatrix}, &
		\end{align}
		and if $\alpha > 0$, then in the sector $\arg \zeta \in (-2\pi/3, 2\pi/3)$,
		\begin{align}
			\Psi(\zeta) = {}&
			\begin{pmatrix}
				\bigO(\zeta^{\alpha/2}) & \bigO(\zeta^{-\alpha/2}) \\
				\bigO(\zeta^{\alpha/2}) & \bigO(\zeta^{-\alpha/2})
			\end{pmatrix}, &
			\Psi(\zeta)^{-1} = {}&
			\begin{pmatrix}
				\bigO(\zeta^{-\alpha/2}) & \bigO(\zeta^{-\alpha/2}) \\
				\bigO(\zeta^{\alpha/2}) & \bigO(\zeta^{\alpha/2})
			\end{pmatrix},
		\end{align}
		and in the sector $\arg \zeta \in (2\pi/3, \pi)$ or in the sector $\arg \zeta \in (-\pi, -2\pi/3)$,
		\begin{align}
			\Psi(\zeta) = {}&
			\begin{pmatrix}
				\bigO(\zeta^{-\alpha/2}) & \bigO(\zeta^{-\alpha/2}) \\
				\bigO(\zeta^{-\alpha/2}) & \bigO(\zeta^{-\alpha/2})
			\end{pmatrix}, &
			\Psi(\zeta)^{-1} = {}&
			\begin{pmatrix}
				\bigO(\zeta^{-\alpha/2}) & \bigO(\zeta^{-\alpha/2}) \\
				\bigO(\zeta^{-\alpha/2}) & \bigO(\zeta^{-\alpha/2})
			\end{pmatrix}.
		\end{align}
	\end{enumerate}
\end{RHP}
Like \eqref{eq:Airy_sum_zero}, the jump condition \eqref{eq:Bessel_jump} can be derived from the identity \cite[Proof of Theorem 6.3]{Kuijlaars-McLaughlin-Van_Assche-Vanlessen04}
\begin{equation} \label{eq:Bessel_sum_zero}
	w_2(\zeta) + w_3(\zeta) = w_0(\zeta).
\end{equation}
	
	{\small \textbf{Declaration of interests} The authors declare that they have no known competing financial interests or personal relationships that could appears to influence the work reported in this paper.
		
		\textbf{Data availability} Data sharing not applicable to this article as no datasets were generated or analyzed during the current study.}
	
	\bibliographystyle{plain}
	\bibliography{../bibliography/bibliography.bib}

\def\cydot{\leavevmode\raise.4ex\hbox{.}}
\begin{thebibliography}{10}

\bibitem{Bai-Silverstein10}
Zhidong Bai and Jack~W. Silverstein.
\newblock {\em Spectral analysis of large dimensional random matrices}.
\newblock Springer Series in Statistics. Springer, New York, second edition,
  2010.

\bibitem{Baranger-Mello94}
Harold~U Baranger and Pier~A Mello.
\newblock Mesoscopic transport through chaotic cavities: {A} random {S}-matrix
  theory approach.
\newblock {\em Phys. Rev. Lett.}, 73(1):142, 1994.

\bibitem{Beenakker97}
C.~W.~J. Beenakker.
\newblock Random-matrix theory of quantum transport.
\newblock {\em Rev. Mod. Phys.}, 69:731--808, Jul 1997.

\bibitem{Beenakker-Rejaei93}
C.~W.~J. Beenakker and B.~Rejaei.
\newblock Nonlogarithmic repulsion of transmission eigenvalues in a disordered
  wire.
\newblock {\em Phys. Rev. Lett.}, 71:3689--3692, Nov 1993.

\bibitem{Beenakker-Rejaei94}
C.~W.~J. Beenakker and B.~Rejaei.
\newblock Exact solution for the distribution of transmission eigenvalues in a
  disordered wire and comparison with random-matrix theory.
\newblock {\em Phys. Rev. B}, 49(11):7499, 1994.

\bibitem{Bleher-Kuijlaars04a}
P.~M. Bleher and A.~B.~J. Kuijlaars.
\newblock Random matrices with external source and multiple orthogonal
  polynomials.
\newblock {\em Int. Math. Res. Not.}, (3):109--129, 2004.

\bibitem{Borodin99}
Alexei Borodin.
\newblock Biorthogonal ensembles.
\newblock {\em Nuclear Phys. B}, 536(3):704--732, 1999.

\bibitem{Borot-Guionnet-Kozlowski13}
Ga{\"e}tan Borot, Alice Guionnet, and Karol~K. Kozlowski.
\newblock Large-{$N$} asymptotic expansion for mean field models with {C}oulomb
  gas interaction.
\newblock {\em Int. Math. Res. Not. IMRN}, (20):10451--10524, 2015.

\bibitem{Breuer-Duits13}
Jonathan Breuer and Maurice Duits.
\newblock Central limit theorems for biorthogonal ensembles and asymptotics of
  recurrence coefficients.
\newblock {\em J. Amer. Math. Soc.}, 30(1):27--66, 2017.

\bibitem{Charlier21}
Christophe Charlier.
\newblock Asymptotics of {M}uttalib-{B}orodin determinants with
  {F}isher-{H}artwig singularities.
\newblock {\em Selecta Math. (N.S.)}, 28(3):Paper No. 50, 60, 2022.

\bibitem{Cheliotis14}
Dimitris Cheliotis.
\newblock Triangular random matrices and biorthogonal ensembles.
\newblock {\em Statist. Probab. Lett.}, 134:36--44, 2018.

\bibitem{Claeys-Romano14}
Tom Claeys and Stefano Romano.
\newblock Biorthogonal ensembles with two-particle interactions.
\newblock {\em Nonlinearity}, 27(10):2419--2444, 2014.

\bibitem{Claeys-Wang11}
Tom Claeys and Dong Wang.
\newblock Random matrices with equispaced external source.
\newblock {\em Comm. Math. Phys.}, 328(3):1023--1077, 2014.

\bibitem{Claeys-Wang22}
Tom Claeys and Dong Wang.
\newblock Universality for random matrices with equi-spaced external source: a
  case study of a biorthogonal ensemble.
\newblock {\em J. Stat. Phys.}, 188(2):Paper No. 11, 28, 2022.

\bibitem{Deift-Kriecherbauer-McLaughlin-Venakides-Zhou99a}
P.~Deift, T.~Kriecherbauer, K.~T-R McLaughlin, S.~Venakides, and X.~Zhou.
\newblock Strong asymptotics of orthogonal polynomials with respect to
  exponential weights.
\newblock {\em Comm. Pure Appl. Math.}, 52(12):1491--1552, 1999.

\bibitem{Deift-Kriecherbauer-McLaughlin-Venakides-Zhou99}
P.~Deift, T.~Kriecherbauer, K.~T.-R. McLaughlin, S.~Venakides, and X.~Zhou.
\newblock Uniform asymptotics for polynomials orthogonal with respect to
  varying exponential weights and applications to universality questions in
  random matrix theory.
\newblock {\em Comm. Pure Appl. Math.}, 52(11):1335--1425, 1999.

\bibitem{Deift99}
P.~A. Deift.
\newblock {\em Orthogonal polynomials and random matrices: a
  {R}iemann-{H}ilbert approach}, volume~3 of {\em Courant Lecture Notes in
  Mathematics}.
\newblock New York University Courant Institute of Mathematical Sciences, New
  York, 1999.

\bibitem{Fokas-Its-Kitaev91}
A.~S. Fokas, A.~R. Its, and A.~V. Kitaev.
\newblock Discrete {P}ainlev\'{e} equations and their appearance in quantum
  gravity.
\newblock {\em Comm. Math. Phys.}, 142(2):313--344, 1991.

\bibitem{Fokas-Its-Kitaev92}
A.~S. Fokas, A.~R. Its, and A.~V. Kitaev.
\newblock The isomonodromy approach to matrix models in {$2$}{D} quantum
  gravity.
\newblock {\em Comm. Math. Phys.}, 147(2):395--430, 1992.

\bibitem{Forrester-Wang15}
Peter~J. Forrester and Dong Wang.
\newblock Muttalib-{B}orodin ensembles in random matrix theory---realisations
  and correlation functions.
\newblock {\em Electron. J. Probab.}, 22:Paper No. 54, 43, 2017.

\bibitem{Husson-Mazzuca-Occelli25}
Jonathan Husson, Guido Mazzuca, and Alessandra Occelli.
\newblock Discrete and continuous {M}uttalib--{B}orodin process: Large
  deviations and riemann--hilbert analysis, 2025.
\newblock arXiv:2505.23164.

\bibitem{Iida-Weidenmuller-Zuk90}
S~Iida, HA~Weidenm{\"u}ller, and JA~Zuk.
\newblock Statistical scattering theory, the supersymmetry method and universal
  conductance fluctuations.
\newblock {\em Ann. Physics}, 200(2):219--270, 1990.

\bibitem{Jalabert-Pichard-Beenakker94}
RA~Jalabert, J-L Pichard, and CWJ Beenakker.
\newblock Universal quantum signatures of chaos in ballistic transport.
\newblock {\em Europhys. Lett. {EPL}}, 27(4):255, 1994.

\bibitem{Jiang09a}
Tiefeng Jiang.
\newblock A variance formula related to a quantum conductance problem.
\newblock {\em Phys. Lett. {A}}, 373(25):2117--2121, 2009.

\bibitem{Johansson98}
Kurt Johansson.
\newblock On fluctuations of eigenvalues of random {H}ermitian matrices.
\newblock {\em Duke Math. J.}, 91(1):151--204, 1998.

\bibitem{Kershaw70}
D.~Kershaw.
\newblock A note on orthogonal polynomials.
\newblock {\em Proc. Edinburgh Math. Soc. (2)}, 17:83--93, 1970.

\bibitem{Kuijlaars-McLaughlin-Van_Assche-Vanlessen04}
A.~B.~J. Kuijlaars, K.~T.-R. McLaughlin, W.~Van~Assche, and M.~Vanlessen.
\newblock The {R}iemann-{H}ilbert approach to strong asymptotics for orthogonal
  polynomials on {$[-1,1]$}.
\newblock {\em Adv. Math.}, 188(2):337--398, 2004.

\bibitem{Kuijlaars-Molag19}
A.~B.~J. Kuijlaars and L.~D. Molag.
\newblock The local universality of {M}uttalib-{B}orodin biorthogonal ensembles
  with parameter {$\theta=\frac 12$}.
\newblock {\em Nonlinearity}, 32(8):3023--3081, 2019.

\bibitem{Kuijlaars10}
Arno B.~J. Kuijlaars.
\newblock Multiple orthogonal polynomials in random matrix theory.
\newblock In {\em Proceedings of the {I}nternational {C}ongress of
  {M}athematicians. {V}olume {III}}, pages 1417--1432, New Delhi, 2010.
  Hindustan Book Agency.

\bibitem{Kuijlaars16}
Arno B.~J. Kuijlaars.
\newblock A vector equilibrium problem for {M}uttalib-{B}orodin biorthogonal
  ensembles.
\newblock {\em SIGMA Symmetry Integrability Geom. Methods Appl.}, 12:Paper No.
  065, 15, 2016.

\bibitem{Kuijlaars-Stivigny14}
Arno B.~J. Kuijlaars and Dries Stivigny.
\newblock Singular values of products of random matrices and polynomial
  ensembles.
\newblock {\em Random Matrices Theory Appl.}, 3(3):1450011, 22, 2014.

\bibitem{Lambert18}
Gaultier Lambert.
\newblock Limit theorems for biorthogonal ensembles and related combinatorial
  identities.
\newblock {\em Adv. Math.}, 329:590--648, 2018.

\bibitem{Lee-Stone85}
P.~A. Lee and A.~Douglas Stone.
\newblock Universal conductance fluctuations in metals.
\newblock {\em Phys. Rev. Lett.}, 55:1622--1625, Oct 1985.

\bibitem{Lubinsky-Sidi-Stahl15}
D.~S. Lubinsky, A.~Sidi, and H.~Stahl.
\newblock Asymptotic zero distribution of biorthogonal polynomials.
\newblock {\em J. Approx. Theory}, 190:26--49, 2015.

\bibitem{Mello88}
Pier~A. Mello.
\newblock Macroscopic approach to universal conductance fluctuations in
  disordered metals.
\newblock {\em Phys. Rev. Lett.}, 60:1089--1092, Mar 1988.

\bibitem{Mello-Stone91}
Pier~A. Mello and A.~Douglas Stone.
\newblock Maximum-entropy model for quantum-mechanical interference effects in
  metallic conductors.
\newblock {\em Phys. Rev. B}, 44:3559--3576, Aug 1991.

\bibitem{Molag20}
L.~D. Molag.
\newblock The local universality of {M}uttalib-{B}orodin ensembles when the
  parameter {$\theta$} is the reciprocal of an integer.
\newblock {\em Nonlinearity}, 34(5):3485--3564, 2021.

\bibitem{Muttalib95}
K.~A. Muttalib.
\newblock Random matrix models with additional interactions.
\newblock {\em J. Phys. A}, 28(5):L159--L164, 1995.

\bibitem{Nikishin-Sorokin91}
E.~M. Nikishin and V.~N. Sorokin.
\newblock {\em Rational approximations and orthogonality}, volume~92 of {\em
  Translations of Mathematical Monographs}.
\newblock American Mathematical Society, Providence, RI, 1991.
\newblock Translated from the Russian by Ralph P. Boas.

\bibitem{Boisvert-Clark-Lozier-Olver10}
Frank W.~J. Olver, Daniel~W. Lozier, Ronald~F. Boisvert, and Charles~W. Clark,
  editors.
\newblock {\em N{IST} handbook of mathematical functions}.
\newblock U.S. Department of Commerce, National Institute of Standards and
  Technology, Washington, DC; Cambridge University Press, Cambridge, 2010.
\newblock With 1 CD-ROM (Windows, Macintosh and UNIX).

\bibitem{Savin-Sommers06}
DV~Savin and H-J Sommers.
\newblock Shot noise in chaotic cavities with an arbitrary number of open
  channels.
\newblock {\em Phys. Rev. B}, 73(8):081307, 2006.

\bibitem{Tracy-Widom98}
Craig~A. Tracy and Harold Widom.
\newblock Correlation functions, cluster functions, and spacing distributions
  for random matrices.
\newblock {\em J. Statist. Phys.}, 92(5-6):809--835, 1998.

\bibitem{Vanlessen07}
M.~Vanlessen.
\newblock Strong asymptotics of {L}aguerre-type orthogonal polynomials and
  applications in random matrix theory.
\newblock {\em Constr. Approx.}, 25(2):125--175, 2007.

\bibitem{Wang-Xu25}
Dong Wang and Shuai-Xia Xu.
\newblock Hard to soft edge transition for the {M}uttalib-{B}orodin ensembles
  with integer parameter $\theta$, 2025.
\newblock arXiv:2501.16746.

\bibitem{Wang-Zhang21}
Dong Wang and Lun Zhang.
\newblock A vector {R}iemann-{H}ilbert approach to the {M}uttalib-{B}orodin
  ensembles.
\newblock {\em J. Funct. Anal.}, 282(7):Paper No. 109380, 84, 2022.

\bibitem{Washburn-Webb86}
Sean Washburn and Richard~A. Webb.
\newblock {A}haronov-{B}ohm effect in normal metal quantum coherence and
  transport.
\newblock {\em Advances in Physics}, 35(4):375--422, 1986.

\bibitem{Zhang15}
Lun Zhang.
\newblock Local universality in biorthogonal {L}aguerre ensembles.
\newblock {\em J. Stat. Phys.}, 161(3):688--711, 2015.

\end{thebibliography}

\end{document}